\setlist{  
  listparindent=\parindent,
  parsep=0pt,
}
\theoremstyle{plain}
\newtheorem{thm}{Theorem}[section]
\newtheorem{prop}[thm]{Proposition}
\newtheorem{lemma}[thm]{Lemma}
\newtheorem{cor}[thm]{Corollary}
\theoremstyle{definition}
\newtheorem{mydef}[thm]{Definition}
\newtheorem{ex}[thm]{Example}
\newtheorem{remark}[thm]{Remark}
\newtheorem*{sum-invol}{Summary of result 1 - $\mathcal{H}_{n}$ is involution}
\newtheorem*{sum-GP_ham}{Summary of result 2 - GP Hamiltonian flows}
\numberwithin{equation}{section} 
\DeclarePairedDelimiter\ipp{\langle}{\rangle}
\DeclarePairedDelimiter{\paren}{\lparen}{\rparen}
\DeclarePairedDelimiter{\jp}{\langle}{\rangle}
\DeclareMathOperator{\supp}{supp}
\newcommand{\M}{{\mathcal{M}}}
\newcommand{\p}{{\partial}}
\renewcommand{\d}{\delta}
\newcommand{\R}{{\mathbb{R}}}
\newcommand{\C}{{\mathbb{C}}}
\newcommand{\N}{{\mathbb{N}}}
\newcommand{\Z}{{\mathbb{Z}}}
\renewcommand{\P}{{\mathcal{P}}}
\newcommand{\g}{{\mathfrak{g}}}
\newcommand{\f}{\mathfrak{f}}
\newcommand{\Fr}{{\mathfrak{F}}}
\newcommand{\Fbr}{{\bar{\mathfrak{F}}}}
\newcommand{\J}{{\mathbf{J}}}
\newcommand{\I}{{\mathcal{I}}}
\newcommand{\Sc}{{\mathcal{S}}}
\renewcommand{\M}{{\mathcal{M}}}
\newcommand{\tl}{\tilde}
\newcommand{\D}{\Delta}
\newcommand{\ph}{\phantom{=}}
\newcommand{\nn}{\nonumber}
\newcommand{\ol}{\overline}
\newcommand{\ul}{\underline}
\newcommand{\ux}{\underline{x}}
\newcommand{\uy}{\underline{y}}
\newcommand{\ua}{\underline{\alpha}}
\newcommand{\ue}{\underline{\eta}}
\newcommand{\ur}{\underline{r}}
\newcommand{\ep}{\epsilon}
\newcommand{\vep}{\varepsilon}
\newcommand{\al}{\alpha}
\newcommand{\et}{\eta}
\newcommand{\diag}{{\mathrm{diag}}}
\newcommand{\wh}{\widehat}
\let\oldtocsection=\tocsection
\let\oldtocsubsection=\tocsubsection
\let\oldtocsubsubsection=\tocsubsubsection
\renewcommand{\tocsection}[2]{\hspace{0em}\oldtocsection{#1}{#2}}
\renewcommand{\tocsubsection}[2]{\hspace{1em}\oldtocsubsection{#1}{#2}}
\renewcommand{\tocsubsubsection}[2]{\hspace{2em}\oldtocsubsubsection{#1}{#2}}
\begin{document}

\title[Mean-Field Convergence of Point Vortices without Regularity]{Mean-Field Convergence of Point Vortices without Regularity}

\author[M. Rosenzweig]{Matthew Rosenzweig}
\address{  
Massachusetts Institute of Technology\\ 
Department of Mathematics\\
Simons Building, Room 106\\
77 Massachusetts Avenue\\
Cambridge, MA 02139-4307}
\email{mrosenzw@mit.edu}
\begin{abstract}
We consider the classical point vortex model in the mean-field scaling regime, in which the velocity field experienced by a single point vortex is proportional to the average of the velocity fields generated by the remaining point vortices. We show that if at some time, the associated sequence of empirical measures converges in a suitable sense to a probability measure with density $\omega^0\in L^\infty(\R^2)$ and having finite energy, as the number of point vortices $N\rightarrow\infty$, then the sequence converges in the weak-* topology for measures to the unique solution $\omega$ of the 2D incompressible Euler equation with initial datum $\omega^0$, locally uniformly in time. In contrast to previous results \cite{Schochet1996, Serfaty2018mean}, our theorem requires no regularity assumptions on the limiting vorticity $\omega$, is at the level of conservation laws for the 2D Euler equation, and provides a quantitative rate of convergence. Our proof is based on a combination of the modulated-energy method of Serfaty \cite{Serfaty2017} and a novel mollification argument. We contend that our result is a mean-field convergence analogue of the famous theorem of Yudovich \cite{Yudovich1963} for global well-posedness of 2D Euler in $L^1(\R^2)\cap L^\infty(\R^2)$.
\end{abstract}

\maketitle

\section{Introduction}\label{sec:intro}
\subsection{The Point Vortex Model and the 2D Euler Equation}\label{ssec:intro_PVM_Eul}
The classical \emph{point vortex model} is the system of $N$ ordinary differential equations (ODEs)
\begin{equation}
\label{eq:PVM}
\begin{split}
\dot{x}_i(t) &= \sum_{{1\leq j\leq N}\atop {j\neq i}} a_j(\nabla^\perp\g)(x_i(t)-x_j(t))\\
x_i(0) &= x_i^0
\end{split}
\qquad \forall i\in \{1,\ldots,N\},
\end{equation}
where $N$ is the number of vortices, $a_1,\ldots,a_N\in \R\setminus\{0\}$ are the intensities of the vortices, $x_1^0,\ldots,x_N^0\in \R^2$ are the pairwise distinct initial positions, $\g(x) \coloneqq -\frac{1}{2\pi}\ln|x|$ is the two-dimensional (2D) Coulomb potential, and $\nabla^\perp \coloneqq (-\p_{x_2},\p_{x_1})$ is the perpendicular gradient. This model, which may be seen to be a finite-dimensional Hamiltonian system, goes back to work of Helmholtz \cite{Helmholtz1858} and Kirchoff \cite{Kirchoff1876}. In addition to being an interesting dynamical system in its own right (see \cite{MP1984ln, MP2012book} for a survey of results), the model \eqref{eq:PVM} is an idealization of  2D incompressible, inviscid fluid flow. Indeed, such a fluid is described in vorticity form by the \emph{2D incompressible Euler equation}
\begin{equation}
\label{eq:Eul}
\begin{split}
&\p_t\omega + u\cdot \nabla\omega = 0 \\
&u = (\nabla^\perp\g) \ast \omega \\
&\omega(0,x) = \omega^0(x)
\end{split}
\qquad (t,x)\in [0,\infty)\times\R^2.
\end{equation}
As the vorticity $\omega$ is simply advected by the velocity field of the fluid, one may informally view the flow as a continuum superposition of dynamical point vortices. In fact, the \emph{empirical measure} of the system \eqref{eq:PVM}, defined by
\begin{equation}
\label{eq:EM}
\omega_N(t,x) \coloneqq \sum_{i=1}^N a_i\d_{x_i(t)}(x),
\end{equation}
is a solution to the so-called \emph{weak vorticity formulation} of equation \eqref{eq:Eul}:
\begin{equation}
\label{eq:wvf}
\begin{split}
\int_{\R^2}\omega(t,x)f(t,x)dx &= \int_{\R^2}\omega^0(x)f(0,x)dx +\int_0^t\int_{\R^2}\omega(s,x)(\p_t f)(s,x)dxds\\
&\ph + \frac{1}{2}\int_0^t \int_{(\R^2)^2\setminus\D_2}\omega(t,x)\omega(t,y) (\nabla^\perp\g)(x-y)\cdot [(\nabla f)(s,x) - (\nabla f)(s,y)]dxdyds,
\end{split}
\end{equation}
for any spacetime test function $f$, where $\D_2\coloneqq \{(x,y)\in(\R^2)^2: x=y\}$. Due to the exclusion of the diagonal $\D_2$ in defining the singular integral in the second line, it is evident that the formulation \eqref{eq:wvf} is well-defined for measure-valued solutions of the form \eqref{eq:EM}.

In order to make sense of a solution to the system \eqref{eq:PVM}, the point vortices must not collide. In the repulsive case, the case considered in this article, where $a_1,\ldots,a_N$ are all of the same sign which we take to be $+1$ for simplicity, the system \eqref{eq:PVM} admits (see \cref{thm:PVM_GWP} below) a unique global solution $\ux_N(t)=(x_1(t),\ldots,x_N(t))$ in $C^\infty([0,\infty); (\R^2)^N\setminus \D_N)$, where
\begin{equation}
\label{eq:D_N_def}
\D_N \coloneqq \{(y_1,\ldots,y_N) \in (\R^2)^N : \exists \ 1\leq i\neq j\leq N \ \text{s.t.} \ y_i = y_j \}.
\end{equation}
Similarly, classical solutions to the Euler equation exist globally in time and are unique by work of Wolibner \cite{Wolibner1933}, and weak solutions (see \cref{def:Eul_ws}) exist globally and are unique for initial data in $L^1(\R^2)\cap L^\infty(\R^2)$ by a theorem of Yudovich \cite{Yudovich1963}, which we review in \cref{ssec:pre_WS} with \cref{thm:Yud}. Thus, the dynamics are globally well-posed at both the microscopic level of the point vortices and the macroscopic level of the Euler equation.

In this article, we are interested in the degree to which the ODE system \eqref{eq:PVM} is a particle approximation to the Euler equation, or conversely the degree to which the partial differential equation (PDE) \eqref{eq:Eul} is a continuum approximation to the point vortex model \eqref{eq:PVM}, when the number of vortices $N$ is very large. To mathematically make sense of this question, we consider the so-called \emph{mean-field} scaling regime where the intensities $a_i=1/N$, for every $1\leq i\leq N$, so that the velocity field experienced by a single point vortex is the average of the velocity fields generated by the other point vortices. Based on the observation that the empirical measure $\omega_N$ is a solution to the weak vorticity formulation \eqref{eq:wvf} of the Euler equation, we expect that $\omega_N$ converges to a solution of the Euler equation \eqref{eq:Eul}, as $N\rightarrow\infty$, in the weak-* topology for measures, point-wise in time:
\begin{equation}
\label{eq:MFC}
\forall t\geq 0, \qquad \omega_N(t) \xrightharpoonup[N\rightarrow \infty]{*} \omega(t).
\end{equation}

\subsection{Prior Results}
\label{ssec:intro_PR}
The subject of approximating the solution to the Euler equation \eqref{eq:Eul} using the point vortex system has a long history in mathematical fluid dynamics going under the banner of \emph{point-vortex methods}. These methods go back to at least work of Rosenhead \cite{Rosenhead1931} and Westwater \cite{Westwater1935}, which used the system \eqref{eq:PVM} to study vortex sheet roll-up. Later works \cite{GHL1990,GH1991} by Goodman, Hou, and Lowengrub focused on approximating the characteristics of classical solutions to the Euler equation \eqref{eq:Eul} using a grid-like discretization of the initial data and the system \eqref{eq:PVM}. We mention that these results very strongly rely on the choice of initial data and therefore are not applicable to treating random initial configurations. As it is not our intention to survey  the literature on this topic, we limit our following remarks to those works specifically treating \emph{mean-field convergence} \eqref{eq:MFC}. We refer the reader more broadly interested in point-vortex methods to the books of Marchioro and Pulvirenti \cite{MP2012book} and Newton \cite{Newton2001}, as well as the relatively recent survey of Jabin \cite{Jab2014}. 

To the best of our knowledge, the first work on establishing the mean-field convergence is that of Schochet \cite{Schochet1996}. He showed that for any sequence of initial data $(\ux_N^0)_{N\in\N}$, such that the $\ux_N^0$ have uniform-in-$N$ compact support, mass, and energy and such that the associated sequence of empirical measures $(\omega_N^0)_{N\in\N}$ converges in the weak-* topology for the space $\M(\R^2)$ of signed measures to some $\omega^0 \in \M(\R^2)$, there exists a subsequence $(\omega_{N_k})_{k\in\N}$ which weak-* converges to \emph{some} solution $\omega$ to the formulation \eqref{eq:wvf} with initial datum $\omega^0$, point-wise in time. His proof is based on his refinement \cite{Schochet1995} of Delort's famous result \cite{Delort1991} for the existence--but not uniqueness--of Radon-measure-valued solutions of a definite sign in $H^{-1}(\R^2)$. Schochet's argument exploits a logarithmic gain of integrability to control the number of close point vortices and compactness in time in order to prove convergence of the nonlinear term along a subsequence. We note that due to a lack of uniqueness for the class of measure-valued solutions to equation \eqref{eq:Eul} under consideration, one does not know from Schochet's result that if $\omega^0$ has a bounded density, then $\omega_{N_k}$ converges to the unique solution $\omega\in L^\infty([0,\infty); L^1(\R^2)\cap L^\infty(\R^2))$ with initial datum $\omega^0$.

A number of years later, Serfaty \cite{Serfaty2018mean} provided the first complete result establishing the mean-field convergence of the point vortex model to the Euler equation, which also has the advantage of being quantitative. We mention her result is a special case of a more general result treating mean-field limits of variants of the system \eqref{eq:PVM}, where the Coulomb potential is replaced by a Riesz potential (see \cref{def:RP} below). Serfaty's proof is based on the so-called \emph{modulated-energy method}, which she originally introduced \cite{Serfaty2017} in the context of Ginzburg-Landau vortices. We defer a discussion of this method until \cref{ssec:intro_road}. In contrast to the partial result of Schochet, which only assumes initial weak-* convergence to a measure $\omega^0$, Serfaty's proof requires that the limiting vorticity $\omega$ belong to $L^\infty([0,\infty);\P(\R^2)\cap L^\infty(\R^2))$, where $\P(\R^2)$ denotes the space of probability measures, and satisfy the bound
\begin{equation}
\label{eq:S_assmp}
\sup_{0\leq t\leq T} \|\nabla^2(\g\ast \omega(t))\|_{L^\infty(\R^2)} < \infty, \qquad \forall T>0.
\end{equation}
For prescribed initial datum, such a solution is global and unique. Note that the assumption \eqref{eq:S_assmp} implies that the velocity field $u$ is spatially Lipschitz continuous. Although we have the operator identity
\begin{equation}
\nabla^2(\g\ast) = \frac{\nabla^2}{\D},
\end{equation}
and therefore the left-hand side defines an order-zero Fourier multiplier, it is not bounded on $L^\infty(\R^2)$, as is generally the case for such operators. Thus, the condition \eqref{eq:S_assmp} cannot be ensured by finiteness of $\|\omega(t)\|_{L^\infty(\R^2)}$,\footnote{Even if $\omega$ is compactly supported, the velocity field $u$ need not be Lipschitz.} which is a conserved quantity for sufficiently nice weak solutions to \eqref{eq:Eul}.

Finally, we also mention the work of Duerinckx \cite{Duerinckx2016}, which was published in between Schochet's and Serfaty's respective results. His work treats the mean-field limit of the gradient-flow analogue of \eqref{eq:PVM}, where the $\nabla^\perp$ is replaced by $\nabla$, as well as mixed flows. While the argument breaks down for the point vortex model, several of the ideas behind the analysis in \cite{Duerinckx2016} are relevant to this work.

\subsection{Overview of Main Results}
As we saw in this last subsection, the only complete results proving the desired mean-field convergence of the point vortex system \eqref{eq:PVM} to the Euler equation \eqref{eq:Eul} require the existence of a limiting vorticity distribution $\omega$ satisfying assumptions which are not at the level of conserved quantities (e.g. the $L^p$ norms of $\omega$). Moreover, the Lipschitz norm of the velocity field $u$ in \eqref{eq:Eul}, and by implication the $C^\alpha$ norm of the vorticity $\omega$, in general grows in time.\footnote{See \cite{BC1994} for an example, and see \cite{KS2014} and references therein for more discussion on this point.} Given that the Euler equation is known to be globally well-posed for initial data in $L^1(\R^2)\cap L^\infty(\R^2)$, in particular requiring no regularity on the initial vorticity and at the level of conservation laws, thanks to the aforementioned theorem of Yudovich \cite{Yudovich1963}, a natural question is if one can show mean-field convergence under the same assumptions for the limiting initial vorticity distribution. Thus, we are led to our main results, which affirmatively answer this question.

\begin{thm}[Main result I]
\label{thm:main}
There exists an absolute constant $C>0$ such that the following holds. Let $\omega\in L^\infty([0,\infty);\P(\R^2)\cap L^\infty(\R^2))$ be a weak solution to the Euler equation \eqref{eq:Eul} with initial datum $\omega^0$,\footnote{See \cref{def:Eul_ws} for the notion of weak solution intended here.} such that
\begin{equation}
\label{eq:main_id_con}
\int_{\R^2}\ln\jp{x}\omega^0(x)dx + \int_{(\R^2)^2}\ln\jp{x-y}\omega^0(x)\omega^0(y)dxdy<\infty.
\end{equation}
Let $N\in\N$, and let $\ux_N\in C^\infty([0,\infty);(\R^2)^N\setminus\D_N)$ be a solution to the point vortex system \eqref{eq:PVM}.
Define the functional
\begin{equation}
\label{eq:FN_avg_def}
\Fr_N^{avg}: [0,\infty)\rightarrow\R, \qquad \Fr_N^{avg}(\ux_N(t), \omega(t)) \coloneqq \int_{(\R^2)^2\setminus\D_2} \g(x-y)d(\omega_N-\omega)(t,x)d(\omega_N-\omega)(t,y),
\end{equation}
where $\omega_N$ is the empirical measure defined in \eqref{eq:EM}. If for given $t>0$, $N\in\N$ is sufficiently large so that
\begin{equation}
\label{eq:N_cond}
\begin{split}
&\frac{Ct(\|\omega^0\|_{L^\infty(\R^2)}^{1/2} + \|\omega^0\|_{L^\infty(\R^2)}^{3/2})|\ln N|^2}{N} + |\Fr_N^{avg}(\ux_N(0),\omega(0))| < \exp\paren*{-e^{Ct(\|\omega_0||_{L^\infty(\R^2)}^{1/2}+\|\omega^0\|_{L^\infty(\R^2)}^{3/2}}},
\end{split}
\end{equation}
then $\Fr_N^{avg}(\ux_N(t),\omega(t))$ satisfies the inequality
\begin{equation}
\begin{split}
&\left|\Fr_N^{avg}(\ux_N(t),\omega(t))\right|\\
&\leq \paren*{\left|\Fr_N^{avg}(\ux_N(0),\omega(0))\right| + \frac{Ct(\|\omega^0\|_{L^\infty(\R^2)}^{1/2} + \|\omega^0\|_{L^\infty(\R^2)}^{3/2})|\ln N|^2}{N} }^{e^{-Ct(\|\omega^0\|_{L^\infty(\R^2)}^{1/2}+\|\omega^0\|_{L^\infty(\R^2)}^{3/2})}}.
\end{split}
\end{equation}
\end{thm}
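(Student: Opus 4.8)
The plan is to run a Gr\"onwall-type argument on the modulated energy $\Fr_N^{avg}(\ux_N(t),\omega(t))$, but with a mollification of the limiting vorticity inserted to compensate for the fact that $\omega$ is only $L^\infty$ and hence the velocity field $u=(\nabla^\perp\g)\ast\omega$ is merely log-Lipschitz rather than Lipschitz. Concretely, for a scale parameter $\eta>0$ to be optimized, I would replace $\omega$ by a mollified vorticity $\omega_\eta \coloneqq \omega\ast\chi_\eta$ (or regularize the Coulomb kernel $\g$ at scale $\eta$, as in Serfaty's work), introduce the corresponding smeared modulated energy, and control the difference between it and $\Fr_N^{avg}$ by a term of order $\eta^2\|\omega\|_{L^\infty}$-type quantities plus logarithmic corrections coming from the self-interaction of the point masses at scale $\eta$. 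The main structural input is the time-derivative computation for the modulated energy along the coupled flow of \eqref{eq:PVM} and \eqref{eq:Eul}: differentiating under the sum and using the ODE together with the weak vorticity formulation \eqref{eq:wvf} produces the ``commutator'' functional
\[
\frac{d}{dt}\Fr_N^{avg} = \int_{(\R^2)^2\setminus\D_2}\bigl[(u(t,x)-u(t,y))\cdot(\nabla^\perp\g)(x-y)\bigr]\,d(\omega_N-\omega)(t,x)\,d(\omega_N-\omega)(t,y) + (\text{lower order}),
\]
and the heart of the matter is to bound the right-hand side by $\lesssim (\|\omega^0\|_{L^\infty}^{1/2}+\|\omega^0\|_{L^\infty}^{3/2})\bigl(|\Fr_N^{avg}|+\tfrac{|\ln N|^2}{N}\bigr)\bigl(1+|\ln|\Fr_N^{avg}||\bigr)$ after the mollification trade-off is optimized in $\eta$.

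The steps, in order, would be: (i) record the exact time-derivative identity for $\Fr_N^{avg}$, being careful about the diagonal excision and the contribution of the boundary of $(\R^2)^2\setminus\D_2$; (ii) insert the $\eta$-regularized kernel and estimate the mollification error, which splits into a ``macroscopic'' piece controlled using $\|\omega\|_{L^\infty}$ (this is where $\eta^2$ appears) and a ``microscopic'' piece measuring how much smearing the point masses changes the energy, which produces the $\tfrac{|\ln N|^2}{N}$ term once one uses that no more than $O(|\ln N|)$ vortices can cluster at the relevant scale; (iii) apply Serfaty's modulated-energy commutator estimate (the ``stress-energy / renormalized energy'' bound) to the regularized functional to dominate the velocity-difference integral by the regularized modulated energy itself, up to the additive error from (ii); (iv) optimize $\eta$ as a function of $N$ and of the current size of the modulated energy — this is what turns a naive Lipschitz-type bound into the double-exponential structure, since the log-Lipschitz modulus forces $\eta$ to depend on $|\Fr_N^{avg}|$; (v) feed the resulting differential inequality of Osgood type, $\tfrac{d}{dt}|\Fr_N^{avg}| \lesssim K\,(|\Fr_N^{avg}|+\tfrac{|\ln N|^2}{N})(1+|\ln(|\Fr_N^{avg}|+\tfrac{|\ln N|^2}{N})|)$ with $K = C(\|\omega^0\|_{L^\infty}^{1/2}+\|\omega^0\|_{L^\infty}^{3/2})$, into an Osgood lemma; integrating $\int ds/(s(1+|\ln s|)) \sim |\ln|\ln s||$ yields exactly the exponent $e^{-Kt}$ on $\bigl(|\Fr_N^{avg}(0)| + \tfrac{Ct(\ldots)|\ln N|^2}{N}\bigr)$, and the smallness hypothesis \eqref{eq:N_cond} is precisely the condition ensuring the Osgood integral does not blow up before time $t$, keeping the solution in the regime where the logarithm is well-behaved.

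I expect the main obstacle to be step (ii)–(iv): quantifying the mollification error against the \emph{unregularized} modulated energy and then optimizing $\eta$ so that the microscopic error stays at $O(|\ln N|^2/N)$ while the macroscopic error is absorbable. The delicate point is that $\Fr_N^{avg}$ is a \emph{signed} quantity that need not be comparable to a positive ``distance,'' so one cannot simply treat $|\Fr_N^{avg}|$ as controlling the relevant spatial scales; one needs the lower-bound/coercivity estimates for the modulated energy (controlling how close the empirical measure is to $\omega$ at a given scale in terms of $\Fr_N^{avg}$ plus the $1/N$ self-energy correction) that make Serfaty's method work, and here they must be pushed through with only $\omega\in L^\infty$ rather than $\nabla u\in L^\infty$. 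Getting the log-Lipschitz bound on $u$ to interact correctly with those coercivity estimates — so that the final differential inequality is genuinely of Osgood type with the stated constant $K$ — is the crux; everything downstream (the Osgood integration giving the double exponential and the translation of \eqref{eq:N_cond} into a valid-time condition) is then essentially bookkeeping.
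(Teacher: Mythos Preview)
Your high-level strategy---compute the exact time derivative of $\Fr_N^{avg}$, mollify to compensate for the merely log-Lipschitz velocity, let the mollification scale depend on the current size of the modulated energy, and close via the Osgood lemma---is precisely the paper's approach. A few points of comparison and one correction are worth recording.

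First, a slip: the time-derivative identity has $(\nabla\g)(x-y)$, not $(\nabla^\perp\g)(x-y)$, in the integrand (see \cref{lem:en_ineq}); there are no ``lower order'' terms, the identity is exact.

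Second, the paper does \emph{not} mollify $\omega$ or introduce a smeared modulated energy as the primary object. Instead it works with the unmollified $\Fr_N^{avg}$ throughout and pushes all the regularization into the estimate of the commutator expression on the right-hand side: the workhorse is \cref{prop:key}, which bounds $\int_{(\R^2)^2\setminus\D_2}(v(x)-v(y))\cdot(\nabla\g)(x-y)\,d(\omega_N-\omega)^{\otimes 2}$ for an arbitrary bounded log-Lipschitz field $v$, and its proof mollifies $v$ (not $\omega$) at a scale $\ep_2$ and combines this with Serfaty's truncation/renormalization machinery at two further scales $\ep_1<\ep_2<\ep_3$. Your single-parameter $\eta$ would need to be split into these three scales: $\ep_1$ is the point-mass smearing radius, $\ep_2$ the velocity mollification scale, and $\ep_3$ the near/far threshold for pairs of vortices in the counting lemma. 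The paper then sets $\ep_1=\ep_3^3$, $\ep_2=\ep_3^2$, and takes $\ep_3$ to be (essentially) $\max\{\Fbr_N^{avg},(\ln N)/N\}$.

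Third, your ``no more than $O(|\ln N|)$ vortices can cluster'' is not quite the mechanism; the relevant counting input (\cref{lem:count}) bounds the \emph{number of close pairs} by $\Fr_N(\ux_N,\mu)+N|\ln\ep_3|+C_pN^2\ep_3^{2(p-1)/p}$, and it is this that feeds into the $|\ln N|^2/N$ additive error after dividing by $N^2$ and choosing $\ep_3\sim 1/N$. Otherwise your identification of the crux---that coercivity must be pushed through with only $\omega\in L^\infty$ and that the log-Lipschitz modulus forces the Osgood structure---is exactly right.
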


As a corollary of \cref{thm:main}, we obtain that the empirical measures $\omega_N$ converge to $\omega$, as $N\rightarrow\infty$, in the weak-* topology for the Besov space $B_{2,\infty}^{-1}(\R^2)$ (see \cref{def:Bes} for the definition of this function space). Our estimate for this Besov convergence implies norm convergence in the Sobolev space $H^{s}(\R^2)$, for any $s<-1$, and weak-* convergence in the space of signed measures $\M(\R^2)$, locally uniformly in time.

\begin{cor}[Main Result II]
\label{cor:main}
Given $s<-1$, there exist absolute constants $C,C_s>0$ such that the following holds. If $\ux_N$ and $\omega$ satisfy the assumptions of \cref{thm:main} and $N$ is sufficiently large so as to satisfy the condition \eqref{eq:N_cond} with $t=T>0$, then
\begin{equation}
\label{eq:cor_main_est}
\begin{split}
&\|\omega_N-\omega\|_{L^\infty([0,T]; H^{s}(\R^2))} \\
&\leq C_s\paren*{|\Fr_N^{avg}(\ux_N^0,\omega^0)|+\frac{CT(\|\omega^0\|_{L^\infty(\R^2)}^{1/2}+\|\omega^0\|_{L^\infty(\R^2)}^{3/2})|\ln N|^2}{N}}^{e^{-CT(\|\omega^0\|_{L^\infty(\R^2)}^{1/2}+\|\omega^0\|_{L^\infty(\R^2)}^{3/2})}}\\
&\ph + \frac{C_s(|\ln N|^{1/2}+\|\omega^0\|_{L^\infty(\R^2)})}{N^{1/2}}.
\end{split}
\end{equation}
Consequently, if $\Fr_N^{avg}(\ux_N^0,\omega^0)\xrightarrow[N\rightarrow\infty]{}0$, then $\omega_N\xrightharpoonup[N\rightarrow\infty]{*} \omega$ in $\M(\R^2)$, locally uniformly in time.
\end{cor}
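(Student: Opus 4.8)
The plan is to deduce the quantitative bound~\eqref{eq:cor_main_est} from \cref{thm:main} by a smearing (mollification) argument, and then to pass from norm convergence in $H^{s}$ to weak-$*$ convergence of measures. Fix $t\in[0,T]$ and a scale $\eta\in(0,1)$ to be optimized, and let $\tl\omega_N^{\eta}(t)\coloneqq\frac1N\sum_{i=1}^{N}\d_{x_i(t)}^{\eta}$ be the smeared empirical measure, $\d_x^{\eta}$ denoting normalized arclength on $\partial B(x,\eta)$. The planar Newton theorem --- $\g\ast\d_0^{\eta}$ equals $\g$ on $\R^2\setminus B(0,\eta)$ and the constant $\g(\eta)=\tfrac{|\ln\eta|}{2\pi}$ on $B(0,\eta)$, so that $\g-\g\ast\d_0^{\eta}\ge0$ is supported in $B(0,\eta)$ with integral $\tfrac{\eta^{2}}{4}$ --- lets one compare the smeared and unsmeared self- and cross-interactions to obtain
\begin{equation*}
\iint_{(\R^2)^2}\g(x-y)\,d(\tl\omega_N^{\eta}-\omega)(t,x)\,d(\tl\omega_N^{\eta}-\omega)(t,y)\;\le\;\Fr_N^{avg}(\ux_N(t),\omega(t))+\frac{|\ln\eta|}{2\pi N}+\frac{\eta^{2}}{2}\|\omega(t)\|_{L^\infty(\R^2)}.
\end{equation*}
Since $\tl\omega_N^{\eta}-\omega$ has zero total mass, the left-hand side is a fixed positive multiple of $\|\tl\omega_N^{\eta}(t)-\omega(t)\|_{\dot H^{-1}(\R^2)}^{2}$ via $\wh\g(\xi)=|\xi|^{-2}$, and it is finite by the finite-energy hypothesis~\eqref{eq:main_id_con}; being nonnegative and using the conservation $\|\omega(t)\|_{L^\infty}=\|\omega^0\|_{L^\infty}$ along~\eqref{eq:Eul}, this yields $\|\tl\omega_N^{\eta}(t)-\omega(t)\|_{\dot H^{-1}}^{2}\lesssim|\Fr_N^{avg}(\ux_N(t),\omega(t))|+\tfrac{|\ln\eta|}{N}+\eta^{2}\|\omega^0\|_{L^\infty}$.

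Next I would pass to $H^{s}$. Write $\omega_N-\omega=(\omega_N-\tl\omega_N^{\eta})+(\tl\omega_N^{\eta}-\omega)$. Since $\jp{\xi}^{2s}\le|\xi|^{-2}$ for $s\le-1$ we have $\|\tl\omega_N^{\eta}(t)-\omega(t)\|_{H^{s}}\le\|\tl\omega_N^{\eta}(t)-\omega(t)\|_{\dot H^{-1}}$, while by the triangle inequality, translation invariance, and the elementary bound $\wh{\d_0-\d_0^{\eta}}(\xi)=1-J_0(\eta|\xi|)=O(\min(1,\eta^{2}|\xi|^{2}))$ the mollification error obeys $\|\omega_N(t)-\tl\omega_N^{\eta}(t)\|_{H^{s}}\le\|\d_0-\d_0^{\eta}\|_{H^{s}}\le C_s\,\eta^{\gamma(s)}$ for some $\gamma(s)>0$. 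Choosing $\eta\coloneqq N^{-\max(1,\,1/(2\gamma(s)))}$ makes $\eta^{\gamma(s)}\le N^{-1/2}$, $\eta\le N^{-1/2}$ and $|\ln\eta|\le C_s|\ln N|$, so that for every $t\in[0,T]$
\begin{equation*}
\|\omega_N(t)-\omega(t)\|_{H^{s}(\R^2)}\;\le\;C_s\big|\Fr_N^{avg}(\ux_N(t),\omega(t))\big|^{1/2}+\frac{C_s\big(|\ln N|^{1/2}+\|\omega^0\|_{L^\infty(\R^2)}\big)}{N^{1/2}}.
\end{equation*}
Inserting the estimate of \cref{thm:main} for $|\Fr_N^{avg}(\ux_N(t),\omega(t))|$ --- whose right-hand side is nondecreasing in $t$ on $[0,T]$ (increasing base, decreasing exponent, base $<1$ by~\eqref{eq:N_cond}), hence may be taken at $t=T$ --- and then taking $\sup_{t\in[0,T]}$ gives~\eqref{eq:cor_main_est}, the doubly-exponential-in-$T$ structure being inherited up to the square root and a readjustment of the absolute constants in the exponent. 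The weak-$*$ convergence in $B_{2,\infty}^{-1}(\R^2)$ follows from the same two bounds: for a fixed test function $g$ in the predual $B_{2,1}^{1}(\R^2)$ --- necessarily uniformly continuous --- the pairing of $\omega_N-\tl\omega_N^{\eta}$ with $g$ is bounded by the modulus of continuity of $g$ at scale $\eta$, which tends to $0$, while the pairing of $\tl\omega_N^{\eta}-\omega$ with $g$ is at most $\|\tl\omega_N^{\eta}-\omega\|_{\dot H^{-1}}\|g\|_{\dot H^{1}}$; with $\eta=\eta(N)\to0$ both tend to $0$.

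For the last assertion, suppose $\Fr_N^{avg}(\ux_N^0,\omega^0)\to0$. Then for each fixed $s<-1$ and $T>0$ the right-hand side of~\eqref{eq:cor_main_est} tends to $0$, i.e.\ $\|\omega_N-\omega\|_{L^\infty([0,T];H^{s})}\to0$. Taking $s\in(-2,-1)$, so that $C_c^\infty(\R^2)\subset H^{-s}(\R^2)$, we get for every $\varphi\in C_c^\infty(\R^2)$ that $\sup_{t\in[0,T]}\big|\int_{\R^2}\varphi\,d(\omega_N-\omega)(t,x)\big|\le\|\varphi\|_{H^{-s}(\R^2)}\,\|\omega_N-\omega\|_{L^\infty([0,T];H^{s}(\R^2))}\to0$. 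As each $\omega_N(t)$ and $\omega(t)$ is a probability measure and the family $\{\omega_N(t):N\in\N,\ t\in[0,T]\}$ is tight --- uniform mass, plus uniform smallness outside large balls coming from the preserved finiteness of the logarithmic moment in~\eqref{eq:main_id_con} --- a routine approximation of bounded continuous functions by elements of $C_c^\infty(\R^2)$ upgrades this to $\omega_N(t)\xrightharpoonup[N\to\infty]{*}\omega(t)$ in $\M(\R^2)$, uniformly for $t\in[0,T]$; since $T>0$ is arbitrary, the convergence is locally uniform in time.

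The main obstacle lies in the first two steps. The point-vortex empirical measure is purely atomic, hence lies in $H^{s}(\R^2)$ only for $s<-1$ and never in $\dot H^{-1}(\R^2)$ --- the space in which the modulated energy $\Fr_N^{avg}$ naturally operates --- so $\Fr_N^{avg}$ cannot control $\omega_N-\omega$ directly, and one must interpose the scale $\eta$ and balance the smeared self-energy cost $|\ln\eta|/N$ against the mollification error $\eta^{\gamma(s)}$; obtaining the clean $N^{-1/2}$ rate uniformly on $[0,T]$ hinges on the choice of $\eta$ above together with careful tracking of how the Yudovich-type doubly-exponential factors of \cref{thm:main} survive the square root. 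The coercivity inequality of the first step --- the genuinely delicate analytic input --- is one of the main technical estimates already developed for the proof of \cref{thm:main}.
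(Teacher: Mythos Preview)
Your proposal is correct and follows the same overall strategy as the paper: establish a coercivity inequality showing that $\Fr_N^{avg}$ controls $\|\omega_N-\omega\|_{H^s}$ up to an additive $O(N^{-1/2}|\ln N|^{1/2})$ error, then insert the bound from \cref{thm:main}. The difference lies in how the coercivity step is implemented. The paper invokes its \cref{prop:bes_conv}, whose proof works on the Fourier side via a Littlewood--Paley splitting and uses the particle-dependent truncation scales $r_{i,\ep_1}$ from \cref{cor:grad_H}; you instead smear every Dirac at a single uniform scale $\eta$ and appeal directly to Newton's theorem ($\g\ast\d_0^{(\eta)}=\g_\eta\le\g$) to compare the smeared and unsmeared energies. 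Your route is more elementary---it bypasses the Littlewood--Paley machinery and the $\ux_N$-dependent truncation---at the cost of being slightly less sharp in the Besov endpoint; for the $H^s$ bound needed here the two give the same $N^{-1/2}|\ln N|^{1/2}$ scale.

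One small point: in the weak-$*$ argument you invoke tightness of $\{\omega_N(t)\}$ via a ``preserved logarithmic moment,'' but no uniform-in-$N$ moment bound on the initial configurations is assumed, so this is not justified as stated. It is also unnecessary: the paper treats $\M(\R^2)$ as the dual of $C_0(\R^2)$, and since $C_c^\infty(\R^2)$ is dense in $C_0(\R^2)$ in sup norm while $\|\omega_N(t)-\omega(t)\|_{TV}\le 2$, convergence against $C_c^\infty$ test functions already implies weak-$*$ convergence in $\M(\R^2)$ without any tightness.
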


Given the requirement that limiting vorticity distribution $\omega$ only belongs to $L^\infty([0,\infty)\times\R^2)$ and due to the similarity between our estimate for $\Fr_N^{avg}(\ux_N(t),\omega(t))$ and the estimate \eqref{eq:Hold} for the H\"older continuity of the flow map associated to $\omega$, we contend that \cref{thm:main} and \cref{cor:main} constitute a mean-field-convergence analogue to the work of Yudovich \cite{Yudovich1963} on global well-posedness for the Euler equation \eqref{eq:Eul}.

Before commenting on the proofs of our two main results, we record some remarks on the assumptions in the statements of the theorem and corollary, respectively, and some of the implications of our results.

\begin{remark}
\label{rem:main_assmps}
The assumption on the initial vorticity $\omega^0$ in \eqref{eq:main_id_con} is two-fold. The finiteness of the first term, which is propagated by the flow (see \cref{lem:log_grow}), is a qualitative, technical condition, not unique to our work, ensuring that the stream  function $\g\ast\omega$ is well-defined. Similarly, the finiteness of the second term ensures by conservation of energy (see \cref{lem:e_con}) that $\omega(t)$ has finite energy, so that $\Fr_N^{avg}(\ux_N(t),\omega(t))$ is well-defined. Additionally, an examination of the proof of \cref{thm:main}, in particular the use of \cref{prop:key}, shows that we do not need to assume that $\omega^0\in L^\infty(\R^2)$. It suffices to have a weak solution in $L^\infty([0,T]; L^1(\R^2)\cap L^p(\R^2))$, for some $p>2$, provided that we have an \emph{a priori} bound on the log-Lipschitz norm of the velocity field $u(t)$ uniformly in $t\in [0,T]$. Of course, this bound is provided if $\omega^0\in L^\infty(\R^2)$ (see \cref{lem:pot_bnds}).
\end{remark}

\begin{remark}
It is possible to obtain from our proof control on $\Fr_N^{avg}(\ux_N(t),\omega(t))$ beyond the time at which this quantity is of size $~1$. However, since such timescales are beyond the mean-field regime, we have not attempted to do so in this article.
\end{remark}

\begin{remark}
\label{rem:marg_pc}
The convergence given by \cref{cor:main} implies \emph{propagation of chaos}: if $f_N^0\in \P_{sym}((\R^2)^N)$ is a symmetric $N$-body probability density function for the initial positions of the point vortices satisfying some technical conditions and if $f_N$ is the solution to the Liouville equation associated to \eqref{eq:PVM} with initial datum $f_N^0= (\omega^0)^{\otimes N}$, then for every $k\in\N$, the $k$-particle marginal $f_N^{(k)}$ of $f_N$ converges to $\omega^{\otimes k}$ locally uniformly in time. Indeed, this convergence follows from the Grunbaum lemma and arguing similarly as to in \cite[Lemma 8.4]{RS2016}.
\end{remark}

\subsection{Road Map of Proofs of Main Results}
\label{ssec:intro_road}
Our proof of \cref{thm:main} is inspired by the modulated-energy method as developed by Duerinckx and Serfaty in the aforementioned works \cite{Duerinckx2016} and \cite{Serfaty2018mean}, respectively, on mean-field limits of Hamiltonian systems with Coulomb and super-Coulombic (i.e. Riesz) potential interactions. The latter work is particularly inspirational to us. This method relies on a weak-strong stability principle for the Euler equation \eqref{eq:Eul}. Its advantages are that it is quantitative and that it avoids the need for an understanding of the microscopic dynamics in terms of point-vortex trajectories. Until now, its drawback has been that it requires some regularity for the limiting vorticity or an assumption on the velocity field (e.g. Lipschitz continuous) for which one does not have a global-in-time bound.

The idea behind approaches based on the modulated-energy method is to consider as modulated energy the quantity
\begin{equation}
\Fr_N^{avg}(\ux_N(t),\omega(t)) \coloneqq \int_{(\R^2)^2\setminus\D_2}\g(x-y)d(\omega_{N}-\omega)(t,x)d(\omega_N-\omega)(t,y),
\end{equation}
previously introduced in the statement of \cref{thm:main}, which may be regarded as a renormalization of the quantity
\begin{equation}
\|\omega_N(t)-\omega(t)\|_{\dot{H}^{-1}(\R^2)}^2,
\end{equation}
so as to remove the infinite self-interaction between the point vortices. One then proceeds by the energy method. Formally (see \cref{lem:en_ineq} for the rigorous computation), the time derivative of $\Fr_N^{avg}$ is given by
\begin{equation}
\label{eq:intro_ME_deriv}
\frac{d}{dt}\Fr_N^{avg}(\ux_N(t),\omega(t)) = \int_{(\R^2)^2\setminus\D_2} (\nabla\g)(x-y)\cdot\paren*{u(t,x)-u(t,y)}d(\omega_N-\omega)(t,x)d(\omega_N-\omega)(t,y)
\end{equation}
for a.e. $t\in [0,\infty)$. Thus, the key challenge is to control the symmetrized expression appearing in the right-hand side, point-wise in time. Supposing for a moment that $u$ is spatially Lipschitz, locally uniformly in time, one can bound with a bit of hard work (see \cite[Proposition 2.3]{Serfaty2018mean}) the absolute value of the right-hand side of \eqref{eq:intro_ME_deriv} to obtain that
\begin{equation}
\begin{split}
\left|\Fr_N^{avg}(\ux_N(t),\omega(t))\right| &\leq \left|\Fr_N^{avg}(\ux_N(0),\omega(0))\right| + C\paren*{\|\omega^0\|_{L^\infty(\R^2)}, \|u\|_{L^\infty([0,T];W^{1,\infty}(\R^2))}} N^{-1/2} \\
&\ph +C\paren*{\|\omega^0\|_{L^\infty(\R^2)}, \|u\|_{L^\infty([0,T];W^{1,\infty}(\R^2))}}\int_0^t |\Fr_N^{avg}(\ux_N(s),\omega(s))|ds,
\end{split}
\end{equation}
for every $t\in [0,T]$, where $C(\cdot,\cdot)$ denotes a constant depending on its two inputs. One then concludes the proof by an application of the Gronwall-Bellman inequality.

Removing the Lipschitz assumption on the velocity field $u$ requires our introduction of several new ideas in order to prove \cref{thm:main}. The overall workhorse of our proof is \cref{prop:key} stated below.
\begin{restatable}{prop}{key}
\label{prop:key}
Assume that $\mu\in\P(\R^2)\cap L^p(\R^2)$ for some $2<p\leq \infty$. Then for any bounded, log-Lipschitz vector field $v:\R^2\rightarrow\R^2$ and vector $\ux_N\in (\R^2)^N\setminus\D_N$, we have the estimate
\begin{equation}
\label{eq:prop_key}
\begin{split}
&\frac{1}{N^2}\left|\int_{(\R^2)^2\setminus\D_2} \paren*{v(x)-v(y)}\cdot(\nabla\g)(x-y)d(\sum_{i=1}^N\d_{x_i}-N\mu)(x)d(\sum_{i=1}^N\d_{x_i}-N\mu)(y)\right|\\
&\lesssim_p \|v\|_{LL(\R^2)}|\ln\ep_2| |\Fr_N^{avg}(\ux_N,\mu)| + \frac{\|v\|_{LL(\R^2)}|\ln\ep_1| |\ln\ep_2|}{N} + C_p\|v\|_{LL(\R^2)}\ep_3^{\frac{2(p-1)}{p}}|\ln\ep_3| \\
&\ph +\frac{\ep_1 \|v\|_{L^\infty(\R^2)}}{\ep_3^2}+ \frac{\ep_2|\ln\ep_2|\|v\|_{LL(\R^2)}}{\ep_3} + \ep_2|\ln\ep_2| C_p\|v\|_{LL(\R^2)}\|\mu\|_{L^p(\R^2)}^{\frac{p}{2(p-1)}} \\
&\ph + \|v\|_{L^\infty(\R^2)}\paren*{C_p\|\mu\|_{L^p(\R^2)}\ep_1^{\frac{p-2}{p}}+ C_\infty \|\mu\|_{L^\infty(\R^2)}\ep_1 |\ln\ep_1| 1_{\geq \infty}(p)}
\end{split}
\end{equation}
for all parameters $(\ep_1,\ep_2,\ep_3)\in (\R_+)^{3}$ satisfying $0<2\ep_1 < \ep_2 < \ep_3\ll 1$. Here, $\|\cdot\|_{LL(\R^2)}$ denotes the log-Lipschitz semi-norm defined in \cref{def:LL} and $C_p,C_\infty>0$ are constants.
\end{restatable}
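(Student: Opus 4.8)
The plan is to regard the left-hand side of \eqref{eq:prop_key} as a renormalized commutator bilinear form and estimate it through a three-scale regularization. Writing $\nu_N \coloneqq \sum_{i=1}^N\d_{x_i}-N\mu$, I want to bound $\frac{1}{N^2}\left|\int_{(\R^2)^2\setminus\D_2}\big(v(x)-v(y)\big)\cdot(\nabla\g)(x-y)\,d\nu_N(x)\,d\nu_N(y)\right|$. The three scales play the following roles: $\ep_3$ is the scale at which we mollify $\mu$ so as to make it bounded; $\ep_1$ is the scale at which we smear the point masses $\d_{x_i}$, so that the interaction energy becomes finite and comparable to $\Fr_N^{avg}$; and $\ep_2$ is the scale at which we regularize $v$ itself (equivalently, excise a neighborhood of the diagonal $\D_2$), which is forced on us because the kernel $\big(v(x)-v(y)\big)\cdot(\nabla\g)(x-y)$ is only $O\big(\|v\|_{LL(\R^2)}|\ln|x-y||\big)$ near the diagonal rather than bounded, as it would be were $v$ Lipschitz. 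The constraint $0<2\ep_1<\ep_2<\ep_3\ll1$ is exactly what is needed for the various near-diagonal remainders to be organized consistently: the smeared charges, of diameter $2\ep_1$, fit inside the diagonal-excision scale $\ep_2$, which in turn sits below the $\mu$-mollification scale $\ep_3$.

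First I would mollify: set $\mu_3\coloneqq\mu\ast\chi_{\ep_3}$ for a standard bump $\chi$, so that $\|\mu_3\|_{L^1(\R^2)}\le1$ and, by Young's inequality, $\|\mu_3\|_{L^\infty(\R^2)}\lesssim_p\|\mu\|_{L^p(\R^2)}\ep_3^{-2/p}$. Expanding $\nu_N=\big(\sum_i\d_{x_i}-N\mu_3\big)-N(\mu-\mu_3)$ splits the bilinear form into a regularized piece, in which $\mu$ has been replaced throughout by $\mu_3$, and cross/quadratic error pieces in $\mu-\mu_3$; using the local integrability of the kernel together with standard mollification bounds on $\mu-\mu_3$ produces the $C_p\|v\|_{LL(\R^2)}\ep_3^{\frac{2(p-1)}{p}}|\ln\ep_3|$-type terms. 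Next I would smear: replace each $\d_{x_i}$ by the uniform probability measure $\rho_i$ on $B(x_i,\ep_1)$, charging any overlapping pair of such balls to the modulated energy via the coercivity lower bound $\Fr_N^{avg}\gtrsim-\frac{|\ln N|}{N}$ and the attendant control on the number of points in small balls. The discrepancy between the original bilinear form with kernel $\nabla\g$ and the smeared one with the truncated kernel $\nabla\g_{\ep_1}$ is a near-diagonal integral of $\|v\|_{L^\infty(\R^2)}|\nabla\g-\nabla\g_{\ep_1}|$ against $\rho_i\otimes\rho_j$, $\rho_i\otimes\mu_3$, and $\mu_3\otimes\mu_3$; bounding $\int_{B(0,\ep_1)}|x-y|^{-1}\,d\mu(y)$ by H\"older with the conjugate exponent $p'=p/(p-1)<2$ gives the $\|v\|_{L^\infty(\R^2)}C_p\|\mu\|_{L^p(\R^2)}\ep_1^{\frac{p-2}{p}}$ term (with the separate $C_\infty\|\mu\|_{L^\infty(\R^2)}\ep_1|\ln\ep_1|$ correction in the endpoint case $p=\infty$), while the analogous computation against $\mu_3$ gives the $\ep_1\|v\|_{L^\infty(\R^2)}\ep_3^{-2}$ term.

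It then remains to estimate the smeared commutator $\int\big(v(x)-v(y)\big)\cdot(\nabla\g_{\ep_1})(x-y)\,d\wt\nu_N(x)\,d\wt\nu_N(y)$, where $\wt\nu_N\coloneqq\sum_i\rho_i-N\mu_3$. Here I would replace $v$ by its mollification $v_{\ep_2}$ at scale $\ep_2$, which is Lipschitz with $\|\nabla v_{\ep_2}\|_{L^\infty(\R^2)}\lesssim\|v\|_{LL(\R^2)}|\ln\ep_2|$ and obeys $\|v-v_{\ep_2}\|_{L^\infty(\R^2)}\lesssim\|v\|_{LL(\R^2)}\ep_2|\ln\ep_2|$; the error from this replacement is once more a near-diagonal integral, and optimizing a H\"older split of $\int_{B(0,\ep_2)}|x-y|^{-1}\,d\mu(y)$ against $\|\mu\|_{L^p(\R^2)}$ together with an interpolation bound yields the $\frac{\ep_2|\ln\ep_2|\|v\|_{LL(\R^2)}}{\ep_3}$ and $\ep_2|\ln\ep_2|C_p\|v\|_{LL(\R^2)}\|\mu\|_{L^p(\R^2)}^{\frac{p}{2(p-1)}}$ terms. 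For the now-Lipschitz field $v_{\ep_2}$ one runs the classical modulated-energy commutator estimate in the spirit of Serfaty: the bound $|v_{\ep_2}(x)-v_{\ep_2}(y)|\le\|\nabla v_{\ep_2}\|_{L^\infty(\R^2)}|x-y|$ compensates the $|x-y|^{-1}$ singularity of $\nabla\g_{\ep_1}$, and after a symmetrization/integration-by-parts manipulation the bilinear form is controlled by $\|\nabla v_{\ep_2}\|_{L^\infty(\R^2)}$ times the smeared interaction energy $\int\g_{\ep_1}\,d\wt\nu_N\,d\wt\nu_N$, which in turn, by the smearing comparison lemma, is $\lesssim N^2|\Fr_N^{avg}(\ux_N,\mu)|+N|\ln\ep_1|$. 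Dividing by $N^2$ produces the leading term $\|v\|_{LL(\R^2)}|\ln\ep_2|\,|\Fr_N^{avg}(\ux_N,\mu)|$ and the error $\frac{\|v\|_{LL(\R^2)}|\ln\ep_1||\ln\ep_2|}{N}$.

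The principal obstacle is precisely the loss of the Lipschitz bound on $v$: the commutator kernel ceases to be bounded, so one can neither estimate the bilinear form directly nor invoke $\|\nabla v\|_{L^\infty(\R^2)}$, and the standard argument does not apply as is. The remedy --- the intermediate scale $\ep_2$, paying a factor $|\ln\ep_2|$ wherever $\|\nabla v\|_{L^\infty(\R^2)}$ formerly appeared, and absorbing every near-diagonal remainder into $|\Fr_N^{avg}|$ plus $O(N^{-1})$ errors --- is the heart of the matter, and the accounting must be arranged so that the only relation imposed on the parameters is $0<2\ep_1<\ep_2<\ep_3\ll1$; this is what later permits choosing $\ep_1,\ep_2,\ep_3$ as suitable negative powers of $N$ in the proof of \cref{thm:main} and obtaining a Gr\"onwall inequality that closes with only double-exponential-in-time growth. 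A secondary difficulty is keeping the $\mu$-dependence sharp, so that only $\|\mu\|_{L^p(\R^2)}$ --- and, at the endpoint, $\|\mu\|_{L^\infty(\R^2)}$ --- enters the final estimate, and handling the possibly non-compact support of $\mu$, for which the finiteness of the logarithmic moments in \eqref{eq:main_id_con} is used at the level of \cref{thm:main}.
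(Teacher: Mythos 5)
Your proposal captures the central new device of the paper: replacing $v$ by $v_{\ep_2}=v\ast\chi_{\ep_2}$ so that $\|\nabla v_{\ep_2}\|_{L^\infty(\R^2)}\lesssim\|v\|_{LL(\R^2)}|\ln\ep_2|$, paying a factor $|\ln\ep_2|$ wherever $\|\nabla v\|_{L^\infty}$ would formerly have appeared, and controlling the replacement error through the log-Lipschitz modulus. The roles you assign to $\ep_1$ (smearing-radius cap) and $\ep_2$ ($v$-mollification) also match the paper's. Where your route diverges, and where there are genuine gaps, is in the remaining structural choices.

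First, the role of $\ep_3$ is misidentified, and the paper never mollifies $\mu$. In the paper $\ep_3$ is a \emph{distance threshold}: in both the $v$-mollification error (\cref{lem:kprop_error}) and the recombination step (\cref{lem:kprop_recomb}) the double sum over vortex pairs is partitioned into close pairs ($|x_i-x_j|\le\ep_3$) and far pairs ($|x_i-x_j|>\ep_3$). The term $C_p\|v\|_{LL}\ep_3^{2(p-1)/p}|\ln\ep_3|$ then arises from the counting lemma (\cref{cor:count}), which bounds $\sum_{|x_i-x_j|\le\ep_3}\g(x_i-x_j)$ by $\tl{\g}(2\ep_3)\bigl(\Fr_N+N\tl{\g}(2\ep_3)+C_pN^2\|\mu\|_{L^p}\ep_3^{2(p-1)/p}\bigr)$, while the $\ep_1\|v\|_{L^\infty}/\ep_3^2$ term comes from a second-difference bound on $\nabla\g_{\alpha_j}$ for far pairs, where the two factors of $1/\ep_3$ are purely geometric. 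Your attribution of these errors to $\mu$-mollification does not reproduce them: with $\mu_3=\mu\ast\chi_{\ep_3}$ one gets $\|\mu_3\|_{L^\infty}\lesssim\|\mu\|_{L^p}\ep_3^{-2/p}$ with exponent $-2/p\in(-1,0)$ for $p>2$, which cannot produce the $\ep_3^{-2}$ in the statement, and the natural modulus-of-continuity estimate for the potential gives $\|\nabla\g\ast(\mu-\mu_3)\|_{L^\infty}\lesssim\|\mu\|_{L^p}\ep_3^{1-2/p}$, not $\ep_3^{2-2/p}$. So this step, as stated, does not close the bookkeeping.

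Second, the paper's smearing is \emph{adaptive and sphere-based}, not fixed-radius balls: each $\d_{x_i}$ is replaced by the uniform probability measure on the sphere $\p B(x_i,r_{i,\ep_1})$ with $r_{i,\ep_1}=\min\{\tfrac14\min_{j\ne i}|x_i-x_j|,\ep_1\}$. The cap at one quarter of the nearest-neighbour distance makes the closed balls $\ol{B}(x_i,r_{i,\ep_1})$ pairwise disjoint, which is essential in \cref{lem:kprop_recomb} for killing the stress-energy cross-terms $\comm{\f_{\alpha_i,\eta_i}}{\f_{\alpha_j,\eta_j}}_{SE}$ between distinct points. Replacing this with fixed-radius balls and ``charging overlapping pairs to the modulated energy via $\Fr_N^{avg}\gtrsim-|\ln N|/N$'' gestures at the right idea (it is the counting lemma that does this job) but does not spell out how the per-pair contribution $\g(x_i-x_j)$, which is unbounded as pairs coalesce, is controlled; a lower bound on $\Fr_N^{avg}$ alone does not bound it. Finally, the paper's renormalization (\cref{lem:kprop_renorm}) and recombination run through the stress-energy tensor identity (\cref{lem:SE_div}); your ``symmetrization/integration-by-parts manipulation'' is in the right spirit, but without this identity and without the disjointness of the smearing balls the bookkeeping of diagonal and cross terms does not go through as stated.
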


At a high level, the proof of \cref{prop:key} is inspired by that of the aforementioned \cite[Proposition 2.3]{Serfaty2018mean} and is divided into five steps, the first of which is completely novel to our work and the fourth and fifth of which require significant new analysis compared to \cite{Duerinckx2016,Serfaty2018mean}:
\begin{enumerate}[(S1)]
\item\label{item:intro_moll}
Mollification,
\item\label{item:intro_renorm}
Renormalization,
\item
Analysis of Diagonal Terms,
\item\label{item:intro_recom}
Recombination,
\item\label{item:intro_conc}
Conclusion.
\end{enumerate}
The proof of \cref{prop:key} is quite long and technical, so as to maintain the accessibility of the introduction, we defer an in-depth overview of the proof to the beginning of \cref{sec:kprop}. Instead, we only comment that in our new \ref{item:intro_moll}, we replace $v$ by the mollified vector field $v_{\ep_2} \coloneqq v\ast \chi_{\ep_2}$, where $\chi_{\ep_2}(x)\coloneqq \ep_2^{-2}\chi(\ep_2^{-1}x)$ is a standard approximate identity. Using the log-Lipschitz regularity of $v$, we can get a quantitative bound for the error introduced in this replacement. We then proceed to \ref{item:intro_renorm}-\ref{item:intro_conc}, working with $v_{\ep_2}$. Throughout the analysis, we avoid any estimates that require us to put $\nabla v_{\ep_2}$ in $L^\infty(\R^2)$ (i.e. take the Lipschitz semi-norm of $v_{\ep_2})$; however, there is one point in \ref{item:intro_conc} (see \eqref{eq:v_ep_Lip}) where we cannot avoid doing so. It is as this point that the mollification is needed, as it allows us to estimate $\|\nabla v_{\ep_2}\|_{L^\infty(\R^2)}$ at the cost of a factor of $|\ln\ep_2|$. It is worth mentioning that the inspiration for this mollification step comes from previous work of the author on the mean-field limit of the Lieb-Liniger model \cite{Rosenzweig2019_LL},\footnote{As is perhaps well-known, the mean-field limit of the Lieb-Liniger model is the one-dimensional cubic nonlinear Schr\"odinger equation.} a one-dimensional model for bosons interacting via the $\delta$-potential, where an $N$-dependent mollification of the interaction potential was combined with the partial H\"older regularity of the $N$-body wave function. This mollification argument appears quite effective, and we expect it to have further application to problems of mean-field convergence.

With \cref{prop:key} in hand, we return to estimating the expression in the right-hand side of the identity \eqref{eq:intro_ME_deriv} for the derivative of the modulated energy. We apply \cref{prop:key} to the right-hand side point-wise in time, allowing $\ep_1(t),\ep_2(t),\ep_3(t)$ to depend on both $t$ and $N$ through $|\Fr_N^{avg}(\ux_N(t),\omega(t))|$ (see \eqref{eq:ep1_choice}, \eqref{eq:ep2_choice}, \eqref{eq:ep3_choice} for the precise choices). In particular, the rate of mollification is dictated by the size of the modulated energy. Thus, we find with a bit of work that for all $N$ sufficiently large depending on $(t,\Fr_N^{avg}(\ux_N(0),\omega(0)))$,
\begin{equation}
\begin{split}
\left|\Fr_N^{avg}(\ux_N(t),\omega(t))\right| &\leq \left|\Fr_N^{avg}(\ux_N(0),\omega(0))\right| + \frac{C(\|\omega^0\|_{L^\infty(\R^2)})t|\ln N|^2}{N} \\
&\ph + C(\|\omega^0||_{L^\infty(\R^2)})\int_0^t |\Fr_N^{avg}(\ux_N(s),\omega(s))| |\ln|\Fr_N^{avg}(\ux_N(s),\omega(s))||ds,
\end{split}
\end{equation}
where $C(\|\omega^0\|_{L^\infty(\R^2)})>0$ is a constant depending only $\|\omega^0\|_{L^\infty(\R^2)}$. Due to the logarithmic factor in the third term in the right-hand side of the preceding inequality, we cannot use the Gronwall-Bellman inequality to solve this integral inequality. Instead, we rely on the Osgood lemma (see \cref{lem:Os}), which allows us to complete the proof of \cref{thm:main}.

Obtaining \cref{cor:main} from \cref{thm:main} is comparatively straightforward using \cref{prop:bes_conv}. This latter proposition shows that the modulated energy $\Fr_N^{avg}(\ux_N(t),\omega(t))$ controls the Sobolev norm $\|\omega_N(t)-\omega(t)\|_{H^s(\R^2)}$, uniformly in time, up to an error which is $O((\ln N)^{1/2}/N^{1/2})$. The weak-* convergence $\omega_N(t)\xrightharpoonup[]{*}\omega(t)$, as $N\rightarrow\infty$, in $\M(\R^2)$ then follows from a standard approximation argument.

\subsection{Organization of the Article}
Having presented the main results of this paper and outlined a road map of their proofs, we now comment on the organization of the article.

\cref{sec:pre} is devoted to preliminary facts needed throughout the body of paper. \cref{ssec:pre_not} introduces the basic notation used in the article. \cref{ssec:pre_HA} collects technical facts from harmonic analysis, many of which are standard in the study of fluid equations, concerning potential and singular integral estimates used extensively in \cref{sec:kprop,sec:MR}. \cref{ssec:pre_Os} reviews Osgood moduli of continuity and the Osgood lemma, which we shall need to conclude the proof of \cref{thm:main} in \cref{ssec:MR_thm}. Finally, \cref{ssec:pre_WS} reviews basic properties of the point vortex model and Euler equation concerning notions of solutions and well-posedness. As \cref{sec:pre} is intended to aid in the reading of \Cref{sec:CE,sec:kprop,sec:MR}, the reader may wish to consult this section only as necessary.

\cref{sec:CE} is devoted to properties of the Coulomb potential $\g$ and the modulated energy $\Fr_N^{avg}(\cdot,\cdot)$ previously introduced in \eqref{eq:FN_avg_def}. There is substantial overlap in this section with results scattered throughout the articles \cite{Duerinckx2016,Serfaty2018mean}. We include the material again in this article as we have organized our presentation differently and in the interests of completeness of our exposition. \cref{ssec:CE_set} introduces the truncation procedure for the potential $\g$ and the associated smearing procedure for point masses. \cref{ssec:CE_EF} reviews basic properties of the modulated energy. \cref{ssec:CE_CL} gives the proof of \cref{lem:count}, important for the proof of \cref{prop:key}, which counts the number of ``close'' point vortices. Finally, \cref{ssec:CE_coer} gives the proof of \cref{prop:bes_conv}, which shows that the modulated energy controls the Sobolev norm $\|\cdot\|_{H^s(\R^2)}$, up to an additive error.

\cref{sec:kprop} gives the proof of \cref{prop:key}, the workhorse of this paper. We begin the section with an extended overview of the proof of the proposition, and we have divided the section into several subsections, corresponding to the main steps of the proof. \cref{ssec:kprop_SET} briefly reviews the stress-energy tensor associated to two functions. \cref{ssec:kprop_moll} carries out our novel mollification step, culminating in the proof of \cref{lem:kprop_error}. \cref{ssec:kprop_renorm,ssec:kprop_diag} perform the renormalization and analysis-of-diagonal-terms steps, respectively, in which we prove \Cref{lem:kprop_renorm,lem:kprop_diag}. These steps essentially carry over from \cite{Serfaty2018mean}, and therefore we have been parsimonious with the details. \cref{ssec:kprop_recomb} is the recombination step, which is the longest and most involved and where a large chunk of the analysis new to our work takes place. The main conclusion of this subsection is \cref{lem:kprop_recomb}. Lastly, \cref{ssec:kprop_conc} concludes the proof of \cref{prop:key}.

Finally, \cref{sec:MR} provides the proofs of our main results: \cref{thm:main} and \cref{cor:main}. In \cref{ssec:MR_EI}, we give the rigorous proof of the modulated energy derivative identity \eqref{eq:intro_ME_deriv} with \cref{lem:en_ineq}. In \cref{ssec:MR_thm} and \cref{ssec:MR_cor}, we prove \cref{thm:main} and \cref{cor:main}, respectively.

\subsection{Acknowledgments}
The author thanks Nata\v{s}a Pavlovi\'{c} for helpful comments on an earlier draft of the manuscript. The author also thanks Sylvia Serfaty for her feedback, which has improved the discussion in \cref{ssec:intro_PR} and corrected an earlier misattribution. The author acknowledges financial support from the Simons Foundation and from The University of Texas at Austin through a Provost Graduate Excellence Fellowship.

\section{Preliminaries}
\label{sec:pre}
In this section, we introduce the basic notation used throughout this article, review some facts from harmonic analysis and the theory of integral inequalities, and lastly review well-known results on the well-posedness of the point vortex model \eqref{eq:PVM} and the Euler equation \eqref{eq:Eul}. The reader may skip this section upon first reading and consult as necessary.

\subsection{Basic Notation}
\label{ssec:pre_not}

Given nonnegative quantities $A$ and $B$, we write $A\lesssim B$ if there exists a constant $C>0$, independent of $A$ and $B$, such that $A\leq CB$. If $A \lesssim B$ and $B\lesssim A$, we write $A\sim B$. To emphasize the dependence of the constant $C$ on some parameter $p$, we sometimes write $A\lesssim_p B$ or $A\sim_p B$.

We denote the natural numbers excluding zero by $\N$ and including zero by $\N_0$. Similarly, we denote the nonnegative real numbers by $\R_{\geq 0}$ and the positive real numbers by $\R_+$ or $\R_{>0}$.

Given $N\in\N$ and points $x_1,\ldots,x_N$ in some set $X$, we will write $\ux_N$ to denote the $N$-tuple $(x_1,\ldots,x_N)$. We define the generalized diagonal $\Delta_N$ of the Cartesian product $X^N$ to be the set
\begin{equation}
\Delta_N \coloneqq \{(x_1,\ldots,x_N) \in X^N : x_i=x_j \text{ for some $i\neq j$}\}.
\end{equation}
Given $x\in\R^n$ and $r>0$, we denote the ball and sphere centered at $x$ of radius $r$ by $B(x,r)$ and $\p B(x,r)$, respectively. We denote the uniform probability measure on the sphere $\p B(x,r)$ by $\sigma_{\p B(x,r)}$. Given a function $f$, we denote the support of $f$ by $\supp(f)$. On $\R^2$, we define the spatial gradient $\nabla=(\p_{x_1},\p_{x_2})$ and the perpendicular (spatial) gradient $\nabla^\perp = (-\p_{x_2},\p_{x_1})$, where $\p_{x_j}$ is the partial derivative in the $x_j$-direction. We use the notation $\jp{x}\coloneqq (1+|x|^2)^{1/2}$ to denote the Japanese bracket.

If $A=(A^{ij})_{i,j=1}^N$ and $B=(B^{ij})_{i,j=1}^N$ are two $N\times N$ matrices, with entries in $\C$, we denote their Frobenius inner product by
\begin{equation}
A : B \coloneqq \sum_{i,j=1}^N A^{ij}\ol{B^{ij}}.
\end{equation}

We denote the space of complex-valued Borel measures on $\R^n$ by $\M(\R^n)$. We denote the subspace of probability measures (i.e. elements $\mu\in\M(\R^n)$ with $\mu\geq 0$ and $\mu(\R^n)=1$) by $\P(\R^n)$. When $\mu$ is in fact absolutely continuous with respect to Lebesgue measure on $\R^n$, we shall abuse notation by writing $\mu$ for both the measure and its density function. 

We denote the Banach space of complex-valued continuous, bounded functions on $\R^n$ by $C(\R^n)$ equipped with the uniform norm $\|\cdot\|_{\infty}$. More generally, we denote the Banach space of $k$-times continuously differentiable functions with bounded derivatives up to order $k$ by $C^k(\R^n)$ equipped with the natural norm, and we define $C^\infty \coloneqq \bigcap_{k=1}^\infty C^k$. We denote the subspace of smooth functions with compact support by $C_c^\infty(\R^n)$, and use the subscript $0$ to indicate functions vanishing at infinity. We define the space of locally continuous functions by $C_{loc}(\R^n)$ and similarly for locally $C^k$ and locally smooth functions. We denote the Schwartz space of functions by $\Sc(\R^n)$ and the space of tempered distributions by $\Sc'(\R^n)$.

For $p\in [1,\infty]$ and $\Omega\subset\R^n$, we define $L^p(\Omega)$ to be the usual Banach space equipped with the norm
\begin{equation}
\|f\|_{L^p(\Omega)} \coloneqq \paren*{\int_\Omega |f(x)|^p dx}^{1/p}
\end{equation}
with the obvious modification if $p=\infty$. When $\Omega=\R^n$, we sometimes just write $\|f\|_{L^p}$. Lastly, we use the notation $\|\cdot\|_{\ell^p}$ in the conventional manner. In the case where $f: \Omega\rightarrow X$ takes values in some Banach space $(X,\|\cdot\|_{X})$ (e.g. $L^p(\Omega;L^q(\Omega))$, we shall write $\|f\|_{L^p(\Omega;X)}$. In the special case of $L^p([0,T]; L^q(\R^2))$, shall use the abbreviation $L_t^pL_x^q([0,T]\times\R^2)$, which is justified by the Fubini-Tonelli theorem.

Our conventions for the Fourier transform and inverse Fourier transform are respectively
\begin{align}
\mathcal{F}(f)(\xi) &\coloneqq \wh{f}(\xi) \coloneqq \int_{\R^n}f(x)e^{-i\xi\cdot x}dx \qquad \forall \xi\in\R^n,\\
\mathcal{F}^{-1}(f)(x) &\coloneqq f^{\vee}(x) \coloneqq \frac{1}{(2\pi)^n}\int_{\R^n}f(\xi)e^{i x\cdot\xi}d\xi \qquad \forall x\in\R^n.
\end{align}

For integer $k\in\N_0$ and $1\leq p\leq \infty$, we define the usual Sobolev spaces
\begin{equation}
\begin{split}
W^{k,p}(\R^n) \coloneqq \{\mu \in L^p(\R^n) : \nabla^k \mu \in L^p(\R^n;(\C^n)^{\otimes k}),\quad \|\mu\|_{W^{k,p}(\R^n)} &\coloneqq \sum_{k=0}^n \|\nabla^{k}\mu\|_{L^p(\R^n)}.
\end{split}
\end{equation}
For $s\in\R$, we define the inhomogeneous Sobolev space $H^s(\R^n)$ to be the space of $\mu\in\Sc'(\R^n)$ such that $\wh{\mu}$ is locally integrable and 
\begin{equation}
\label{eq:H^s_def}
\|\mu\|_{H^s(\R^n)} \coloneqq \paren*{\int_{\R^n} \jp{\xi}^{2s} |\wh{\mu}(\xi)|^2d\xi }^{1/2}<\infty,
\end{equation}
and we use the notation $\|\mu\|_{\dot{H}^s(\R^n)}$ to denote the semi-norm where $\jp{\xi}$ is replaced by $|\xi|$.\footnote{As is standard notation, a dot superscript indicates a homogeneous semi-norm for a function space in this paper.} As is well-known, $W^{k,2}(\R^n) = H^k(\R^n)$ for $k\in\N_0$.

\subsection{Harmonic Analysis}
\label{ssec:pre_HA}
In this subsection, we recall some basic facts from harmonic analysis concerning function spaces, Littlewood-Paley theory, and Riesz potential estimates. This material is standard in the field, and the reader can consult references such as \cite{Stein1970,Stein1993,grafakos2014c, grafakos2014m}.

We begin with the definition of the Riesz potential.

\begin{mydef}[Riesz potential]
\label{def:RP}
Let $n\in\N$. For $s>-n$, we define the Fourier multiplier $(-\Delta)^{s/2}$ by
\begin{equation}
((-\Delta)^{s/2}f)(x) \coloneqq (|\xi|^{s}\wh{f}(\xi))^\vee(x), \qquad x\in \R^n,
\end{equation}
for a Schwartz function $f\in \Sc(\R^n)$. Since, for $s\in (-n,0)$, the inverse Fourier transform of $|\xi|^s$ is the tempered distribution
\begin{equation}
\frac{2^s\Gamma(\frac{n+s}{2})}{\pi^{\frac{n}{2}}\Gamma(-\frac{s}{2})} |x|^{-s-n},
\end{equation}
it follows that
\begin{equation}
((-\Delta)^{s/2}f)(x) = \frac{2^s \Gamma(\frac{n+s}{2})}{\pi^{\frac{n}{2}}\Gamma(-\frac{s}{2})}\int_{\R^n}\frac{f(y)}{|x-y|^{s+n}}dy, \qquad x\in\R^n.
\end{equation}
For $s\in (0,n)$, we define the \emph{Riesz potential operator} of order $s$ by $\mathcal{I}_s \coloneqq (-\Delta)^{-s/2}$ on $\Sc(\R^n)$.
\end{mydef}

$\mathcal{I}_s$ extends to a well-defined operator on any $L^p$ space, the extension also denoted by $\mathcal{I}_s$ by an abuse notation, as a consequence of the \emph{Hardy-Littlewood-Sobolev lemma}.

\begin{prop}[Hardy-Littlewood-Sobolev]
\label{prop:HLS}
Let $n\in\N$, $s \in (0,n)$, and $1<p<q<\infty$ satisfy the relation
\begin{equation}
\frac{1}{p}-\frac{1}{q} = \frac{s}{n}.
\end{equation}
Then for all $f\in\Sc(\R^n)$,
\begin{align}
\|\mathcal{I}_s(f)\|_{L^q(\R^n)} &\lesssim_{n,s,p} \|f\|_{L^p(\R^n)},\\
\|\mathcal{I}_s(f)\|_{L^{\frac{n}{n-s},\infty}(\R^n)} &\lesssim_{n,s} \|f\|_{L^1(\R^n)},
\end{align}
where $L^{r,\infty}$ denotes the weak-$L^r$ space. Consequently, $\mathcal{I}_s$ has a unique extension to $L^p$, for all $1\leq p<\infty$.
\end{prop}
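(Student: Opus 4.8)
The plan is to prove the classical Hardy--Littlewood--Sobolev lemma by first establishing the weak-type bound for \emph{every} exponent $1\le p<n/s$ --- which in particular yields the stated endpoint estimate at $p=1$ --- and then deducing the strong-type bound by Marcinkiewicz interpolation. Write $\mathcal{I}_s f=c_{n,s}\,(K\ast f)$ with $K(x)=|x|^{s-n}$. For a cutoff radius $R>0$ I would split the kernel as $K=K_{R}^{0}+K_{R}^{\infty}$, where $K_R^0\coloneqq K\mathbf{1}_{B(0,R)}$ and $K_R^\infty\coloneqq K\mathbf{1}_{\R^n\setminus B(0,R)}$. A computation in polar coordinates gives $\|K_R^0\|_{L^1(\R^n)}\sim_{n,s}R^{s}$ (using $s>0$), and for any exponent $r$ with $r'>\tfrac{n}{n-s}$, equivalently $r<\tfrac ns$, the tail is integrable with $\|K_R^\infty\|_{L^{r'}(\R^n)}\sim_{n,s,r}R^{\,s-n/r}$ (using $s<n$).

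For the weak-type estimate, fix $f\in\Sc(\R^n)$, $\lambda>0$, and $1\le p<n/s$. Young's inequality gives $\|K_R^\infty\ast f\|_{L^\infty(\R^n)}\le\|K_R^\infty\|_{L^{p'}(\R^n)}\|f\|_{L^p(\R^n)}\lesssim R^{\,s-n/p}\|f\|_{L^p(\R^n)}$, so choosing $R=R(\lambda)$ so that the right side equals $\lambda/2$ makes $|K_R^\infty\ast f|\le\lambda/2$ pointwise and hence $\{|\mathcal{I}_s f|>\lambda\}\subseteq\{|K_R^0\ast f|>\lambda/2\}$. Then Young's inequality with $L^1$ and Chebyshev's inequality give $|\{|K_R^0\ast f|>\lambda/2\}|\le(2\lambda^{-1}\|K_R^0\|_{L^1}\|f\|_{L^p})^p\lesssim(\lambda^{-1}R^s\|f\|_{L^p})^p$. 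Substituting $R\sim(\|f\|_{L^p}/\lambda)^{p/(n-sp)}$ and simplifying the exponent with the aid of the relation $\tfrac1p-\tfrac1q=\tfrac sn$, equivalently $q=\tfrac{np}{n-sp}$, yields $|\{|\mathcal{I}_s f|>\lambda\}|\lesssim_{n,s,p}(\|f\|_{L^p}/\lambda)^{q}$, that is, $\|\mathcal{I}_s f\|_{L^{q,\infty}(\R^n)}\lesssim_{n,s,p}\|f\|_{L^p(\R^n)}$; taking $p=1$ is the second displayed inequality of the proposition.

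For the strong-type bound and the final conclusion, given the target pair $(p,q)$ with $p>1$ I would pick $p_0,p_1$ with $1\le p_0<p<p_1<n/s$ and define $q_i$ by $\tfrac1{q_i}=\tfrac1{p_i}-\tfrac sn$, so that $(\tfrac1p,\tfrac1q)$ lies strictly between $(\tfrac1{p_0},\tfrac1{q_0})$ and $(\tfrac1{p_1},\tfrac1{q_1})$; the weak-type bounds just obtained at these two endpoints together with the Marcinkiewicz interpolation theorem give $\|\mathcal{I}_s f\|_{L^q(\R^n)}\lesssim_{n,s,p}\|f\|_{L^p(\R^n)}$ for $f\in\Sc(\R^n)$. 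Since $\Sc(\R^n)$ is dense in $L^p(\R^n)$ for $p<\infty$ while $L^q(\R^n)$ and $L^{q,\infty}(\R^n)$ are complete, $\mathcal{I}_s$ then extends uniquely and continuously to all of $L^p(\R^n)$, $1\le p<\infty$. As an alternative to interpolation for the strong bound, one can instead run Hedberg's argument: split $K$ at a radius depending on $x$, bound the local part pointwise by the Hardy--Littlewood maximal function $Mf(x)$ and the global part by $\|f\|_{L^p(\R^n)}$ via H\"older, optimize in the radius to get $|\mathcal{I}_s f(x)|\lesssim_{n,s,p}\|f\|_{L^p(\R^n)}^{sp/n}(Mf(x))^{p/q}$ pointwise, and finish with the $L^p$ boundedness of $M$ for $p>1$.

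Regarding difficulty: there is no genuine obstacle, as this is a textbook result; the only thing demanding care is verifying that the exponent bookkeeping in the weak-type step closes, which is in the end forced by the fact that $\mathcal{I}_s$ has a definite homogeneity under dilations and is therefore bounded $L^p\to L^q$ only along the scaling line $\tfrac1p-\tfrac1q=\tfrac sn$. Thus the real content reduces to the two weak-type endpoint estimates, and everything else --- interpolation and the density/completeness extension --- is soft.
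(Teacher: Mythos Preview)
Your argument is correct and follows one of the standard textbook proofs (kernel splitting to get weak-type bounds at every admissible exponent, then Marcinkiewicz interpolation; you also correctly sketch Hedberg's pointwise alternative). The paper itself does not prove this proposition: it is stated as a classical result with the reader referred to standard references such as Stein and Grafakos, so there is no ``paper's own proof'' to compare against. Your write-up would serve perfectly well as a self-contained justification.
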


Although the Hardy-Littlewood-Sobolev lemma breaks down at the endpoint case $p=\infty$ (one has a $BMO$ substitute which is not useful for our purposes), the next lemma allows us to control the $L^\infty$ norm of $\mathcal{I}_s(f)$ in terms of the $L^1$ norm and $L^p$ norm, for some $p=p(s,n)$. Its usefulness stems from the fact that if $w$ solves the Poisson equation $-\D w = \varphi$ with zero boundary condition at infinity, then
\begin{equation}
|(\nabla w)(x)| \lesssim \I_1(|w|)(x) \qquad \forall x\in\R^2,
\end{equation}
as can be seen from using the Fourier transform.

\begin{lemma}[$L^{\infty}$ bound for Riesz potential]
\label{lem:Linf_RP}
For any $n\in\N$, $s\in (0,n)$, and $p\in (\frac{n}{s},\infty]$,
\begin{equation}
\|\mathcal{I}_s(f)\|_{L^\infty(\R^n)} \lesssim_{s,n,p} \|f\|_{L^{1}(\R^n)}^{1-\frac{n-s}{n(1-\frac{1}{p})}} \|f\|_{L^{p}(\R^n)}^{\frac{n-s}{n(1-\frac{1}{p})}}.
\end{equation}
\end{lemma}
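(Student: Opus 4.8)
The plan is the standard splitting-and-optimization argument for convolution against the Riesz kernel. By \cref{def:RP} we have $\I_s(f)(x) = c_{n,s}\int_{\R^n}|x-y|^{s-n}f(y)\,dy$ for an explicit positive constant $c_{n,s}$. Fix $x\in\R^n$ and a radius $R>0$ to be optimized at the end, and split the integral over $B(x,R)$ and its complement:
\begin{equation}
|\I_s(f)(x)| \leq c_{n,s}\int_{B(x,R)}\frac{|f(y)|}{|x-y|^{n-s}}\,dy \;+\; c_{n,s}\int_{\R^n\setminus B(x,R)}\frac{|f(y)|}{|x-y|^{n-s}}\,dy .
\end{equation}
The far piece is immediately controlled using $|x-y|^{-(n-s)}\leq R^{-(n-s)}$ on $\R^n\setminus B(x,R)$, giving the bound $R^{-(n-s)}\|f\|_{L^1(\R^n)}$. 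For the near piece, when $p<\infty$ I apply H\"older's inequality with exponents $p$ and $p'$; the key point is that the hypothesis $p>n/s$ is \emph{exactly equivalent} to $(n-s)p'<n$, which is precisely the condition for $|z|^{-(n-s)}$ to be locally $L^{p'}$-integrable, and a polar-coordinates computation then gives
\begin{equation}
\Big(\int_{B(0,R)}|z|^{-(n-s)p'}\,dz\Big)^{1/p'} = c_{n,s,p}\,R^{\frac{n}{p'}-(n-s)} ,
\end{equation}
so the near piece is at most $c_{n,s,p}\,R^{\frac{n}{p'}-(n-s)}\|f\|_{L^p(\R^n)}$. When $p=\infty$ one simply estimates the near piece by $\|f\|_{L^\infty(\R^n)}\int_{B(0,R)}|z|^{-(n-s)}\,dz = c_{n,s}\,R^{s}\|f\|_{L^\infty(\R^n)}$, which is the $p'=1$ instance of the previous line. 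Observe that $\frac{n}{p'}-(n-s) = s-\frac{n}{p}$ is strictly positive precisely because $p>n/s$, so the near piece is increasing in $R$ while the far piece is decreasing.

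Having reduced matters to the elementary inequality $|\I_s(f)(x)|\lesssim_{s,n,p} R^{\,s-n/p}\|f\|_{L^p(\R^n)} + R^{-(n-s)}\|f\|_{L^1(\R^n)}$, valid for every $R>0$ and every $x$, I optimize in $R$. Assuming $f\not\equiv 0$ (otherwise there is nothing to prove), choosing $R$ so that $R^{n/p'} = \|f\|_{L^1(\R^n)}/\|f\|_{L^p(\R^n)}$ makes the two terms comparable, and substituting gives
\begin{equation}
|\I_s(f)(x)| \lesssim_{s,n,p} \|f\|_{L^1(\R^n)}^{\,1-\frac{p'(n-s)}{n}}\,\|f\|_{L^p(\R^n)}^{\,\frac{p'(n-s)}{n}} .
\end{equation}
Since $p' = (1-1/p)^{-1}$, one has $\frac{p'(n-s)}{n} = \frac{n-s}{n(1-1/p)}$, and the two exponents sum to $1$; taking the supremum over $x\in\R^n$ yields the claim. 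The splitting above also shows that the defining integral converges absolutely for every $f\in L^1(\R^n)\cap L^p(\R^n)$, so the estimate holds for all such $f$ (in particular for Schwartz functions), consistent with the extension of $\I_s$ recorded after \cref{prop:HLS}.

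There is no genuine obstacle here: the only thing requiring care is the bookkeeping of exponents — verifying that the H\"older admissibility condition $(n-s)p'<n$ is the same as $p\in(n/s,\infty]$, and that after optimizing in $R$ the resulting powers of $\|f\|_{L^1(\R^n)}$ and $\|f\|_{L^p(\R^n)}$ reduce to $1-\frac{n-s}{n(1-1/p)}$ and $\frac{n-s}{n(1-1/p)}$ respectively, with the $p=\infty$ endpoint recovered via the convention $p'=1$.
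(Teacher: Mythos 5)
Your argument is correct, and it is the standard splitting-and-optimization proof (sometimes attributed to Hedberg) for this estimate. The paper does not actually supply a proof of \cref{lem:Linf_RP} — it is stated in the preliminaries with a pointer to standard references — so there is nothing to compare against; your exponent bookkeeping (in particular the equivalence of $p>n/s$ with $(n-s)p'<n$, the identity $n/p'-(n-s)=s-n/p>0$, and the reduction of $p'(n-s)/n$ to $\frac{n-s}{n(1-1/p)}$ after optimizing $R^{n/p'}=\|f\|_{L^1}/\|f\|_{L^p}$) all checks out, including the $p=\infty$ endpoint via $p'=1$.
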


We next define the Besov scale of function spaces, which first requires us to recall some basic facts from Littlewood-Paley theory. Let $\phi\in C_{c}^{\infty}(\R^n)$ be a radial, nonincreasing function, such that $0\leq \phi\leq 1$ and
\begin{equation}
\phi(x)=
\begin{cases}
1, & {|x|\leq 1}\\
0, & {|x|>2}
\end{cases}.
\end{equation}
Define the dyadic partitions of unity
\begin{align}
1&=\phi(x)+\sum_{j=1}^{\infty}[\phi(2^{-j}x)-\phi(2^{-j+1}x)] \eqqcolon \psi_{\leq 0}(x)+\sum_{j=1}^{\infty}\psi_{j}(x) \qquad \forall x\in\R^n\\
1&=\sum_{j\in\mathbb{Z}}[\phi(2^{-j}x)-\phi(2^{-j+1}x)] \eqqcolon \sum_{j\in\mathbb{Z}}{\psi}_{j}(x) \qquad \forall x\in\R^n\setminus\{0\}.
\end{align}
For any $j\in\Z$, we define the Littlewood-Paley projector $P_{j}, P_{\leq 0}$ by
\begin{align}
(P_jf)(x) &\coloneqq (\psi_{j}(D)f)(x) = \int_{\R^n}K_{j}(x-y)f(y)dy \qquad K_j \coloneqq \psi_j^{\vee},\\
(P_{\leq 0}f)(x) &\coloneqq (\psi_{\leq 0}(D)f)(x) = \int_{\R^n} K_{\leq 0}(x-y)f(y)dy \qquad K_{\leq 0} \coloneqq \psi_{\leq 0}^{\vee}.
\end{align}

\begin{mydef}[Besov space]
\label{def:Bes}
Let $s\in\R$ and $1\leq p,q\leq\infty$. We define the inhomogeneous Besov space $B_{p,q}^s(\R^n)$ to be the space of $\mu\in\Sc'(\R^n)$ such that
\begin{equation}
\|\mu\|_{B_{p,q}^s(\R^n)} \coloneqq \paren*{\|P_{\leq 0}\mu\|_{L^p(\R^n)} + \sum_{j=1}^\infty 2^{jqs}\|P_j\mu\|_{L^p(\R^n)}^q}^{1/q} < \infty.
\end{equation}
For $p,q,s$ as above, we also define the homogeneous Besov semi-norm
\begin{equation}
\|\mu\|_{\dot{B}_{p,q}^s(\R^n)} \coloneqq \paren*{\sum_{j\in\Z} 2^{jqs} \|P_j\mu\|_{L^p(\R^n)}^{q}}^{1/q}.
\end{equation}
\end{mydef}

\begin{remark}
Any two choices of Littlewood-Paley partitions of unity used to define $\|\cdot\|_{B_{p,q}^s(\R^n)}$ (resp. $\|\cdot\|_{\dot{B}_{p,q}^s(\R^n)}$) lead to equivalent norms (resp. semi-norms). 
\end{remark}

\begin{remark}
The space $B_{2,2}^s(\R^n)$ coincides with the Sobolev space $H^s(\R^n)$, as can be seen from Plancherel's theorem. For $s\in \R_+\setminus\N$, the space $B_{\infty,\infty}^s(\R^n)$ coincides with the H\"older space $C^{[s],s-[s]}(\R^n)$ of bounded functions $\mu:\R^n\rightarrow\C$ such that $\nabla^{k}\mu$ is bounded, for integers $0\leq k\leq [s]$ and
\begin{equation}
\|\nabla^{[s]}\mu\|_{\dot{C}^{s-[s]}(\R^n)} \coloneqq \sup_{{0<|x-y|\leq 1}} \frac{|(\nabla^{[s]}\mu)(x) - (\nabla^{[s]}\mu)(y)|}{|x-y|^{s-[s]}} <\infty. 
\end{equation}
For integer $s$, the space $B_{\infty,\infty}^s(\R^n)$, sometimes called the Zygmund space of order $s$, is strictly large than $C^s(\R^n)$.
\end{remark}

We next define the space of \emph{log-Lipschitz functions}, which are of central to importance to this work.

\begin{mydef}[Log-Lipschitz space]
\label{def:LL}
We define $LL(\R^n)$ to be the space of functions $\mu\in C(\R^n)$ such that
\begin{equation}
\|\mu\|_{LL(\R^n)} \coloneqq \sup_{0<|x-y|\leq e^{-1}} \frac{|\mu(x)-\mu(y)|}{|x-y||\ln|x-y||} <\infty.
\end{equation}
\end{mydef}

The next lemma shows that $B_{\infty,\infty}^1(\R^n)$ continuously embeds in $LL(\R^n)$.

\begin{lemma}
\label{lem:LL_embed}
It holds that
\begin{equation}
\|\mu\|_{LL(\R^n)} \lesssim_n \|\nabla \mu\|_{B_{\infty,\infty}^0(\R^n)}, \qquad \forall \mu\in B_{\infty,\infty}^1(\R^n).
\end{equation}
\end{lemma}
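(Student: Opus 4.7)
The plan is to prove the embedding via a standard Littlewood--Paley frequency decomposition, splitting the sum at a dyadic scale dictated by the spatial distance $|x-y|$. Set $M \coloneqq \|\nabla\mu\|_{B_{\infty,\infty}^0(\R^n)}$ and fix $x,y \in \R^n$ with $0 < |x-y| \leq e^{-1}$. First I would write
\begin{equation*}
\mu(x) - \mu(y) = \bigl(P_{\leq 0}\mu(x) - P_{\leq 0}\mu(y)\bigr) + \sum_{j=1}^\infty \bigl(P_j\mu(x) - P_j \mu(y)\bigr),
\end{equation*}
and estimate each piece using two complementary bounds: the trivial $\|\cdot\|_{L^\infty}$ bound and the mean-value-theorem bound involving $\|\nabla P_j\mu\|_{L^\infty}$.

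The key input is that since the Fourier multipliers commute, $\nabla P_j \mu = P_j \nabla \mu$, so $\|\nabla P_j \mu\|_{L^\infty} \leq M$ for every $j \geq 1$ (and likewise for $P_{\leq 0}$, which is a convolution with a fixed $L^1$ kernel). For the sup-norm bound on $P_j\mu$ itself, I would use Bernstein's inequality together with frequency localization: for $j \geq 1$ the multiplier $\psi_j(\xi)$ is supported in an annulus $\{|\xi|\sim 2^j\}$, so one can write $P_j \mu = 2^{-j}\, m_j(D)(\nabla \mu)$ for a bounded Fourier multiplier $m_j$ with uniformly bounded $L^1$ convolution kernel (e.g.\ $m_j(\xi) = 2^j \widetilde{\psi}_j(\xi) \cdot \xi / |\xi|^2$ with $\widetilde\psi_j$ a fattened cutoff), yielding $\|P_j\mu\|_{L^\infty} \lesssim 2^{-j}\|P_j \nabla\mu\|_{L^\infty} \lesssim 2^{-j} M$. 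This produces the two estimates
\begin{equation*}
|P_j\mu(x) - P_j\mu(y)| \leq \min\bigl(\, 2\|P_j\mu\|_{L^\infty},\, |x-y|\|\nabla P_j\mu\|_{L^\infty}\,\bigr) \lesssim \min\bigl(2^{-j} M,\, |x-y| M\bigr).
\end{equation*}

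Next, I would choose the dyadic threshold $J \in \N$ with $2^{-J} \sim |x-y|$, i.e.\ $J \sim \ln(1/|x-y|)$. Summing the Lipschitz-type bound for $1 \leq j \leq J$ gives $J |x-y| M \lesssim |x-y||\ln|x-y|| M$, while summing the $L^\infty$ bound for $j > J$ gives a geometric series of size $2^{-J} M \lesssim |x-y| M$. The low-frequency term is handled directly: $|P_{\leq 0}\mu(x) - P_{\leq 0}\mu(y)| \leq |x-y|\|\nabla P_{\leq 0}\mu\|_{L^\infty} \lesssim |x-y| M$. Combining all contributions yields
\begin{equation*}
|\mu(x) - \mu(y)| \lesssim_n |x-y|\bigl(1 + |\ln|x-y||\bigr) M \lesssim |x-y||\ln|x-y|| M
\end{equation*}
for $|x-y|\leq e^{-1}$, which is exactly $\|\mu\|_{LL(\R^n)} \lesssim_n M$ by the definition of the log-Lipschitz semi-norm.

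The only nonroutine step is justifying the frequency-localized inversion $\|P_j \mu\|_{L^\infty} \lesssim 2^{-j}\|P_j \nabla \mu\|_{L^\infty}$, since at $j = 0$ the symbol $1/|\xi|$ is singular; this is why the projector $P_{\leq 0}$ must be treated separately, and why the hypothesis is on $\|\nabla \mu\|_{B^0_{\infty,\infty}}$ rather than a purely homogeneous quantity. Otherwise the proof is a clean instance of the standard heuristic that $B^1_{\infty,\infty}$ (the Zygmund class) sits just slightly above Lipschitz, with the logarithmic loss coming precisely from summing $|\ln|x-y||$-many dyadic shells.
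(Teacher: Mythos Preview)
Your proof is correct. The paper does not supply a proof of this lemma, treating it as a standard fact from Littlewood--Paley theory; your argument is precisely the standard one, and all the steps (reverse Bernstein via a fattened cutoff to get $\|P_j\mu\|_{L^\infty}\lesssim 2^{-j}M$, the mean-value bound for low--mid frequencies, and the choice $J\sim|\ln|x-y||$) are sound.
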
 

Using \cref{lem:LL_embed}, Bernstein's lemma, and the isomorphism $B_{\infty,\infty}^s(\R^n)\cong C^{[s],s-[s]}(\R^n)$, for non-integer $s$, we obtain the following regularity estimates for Riesz potentials. In particular, we obtain that if the velocity field $u$ for the Euler equation \eqref{eq:Eul} belongs to $LL(\R^n)$, provided that the vorticity $\omega\in L^1(\R^n)\cap L^\infty(\R^n)$. 

\begin{lemma}
\label{lem:pot_bnds}
For $0<s<2$ and $n\in\N$, it holds that 
\begin{align}
\|\I_s(\mu)\|_{\dot{C}^{s}(\R^n)} &\lesssim_{s,n} \|\mu\|_{L^\infty(\R^n)} \qquad 0<s<1,\\
\|\I_s(\mu)\|_{LL(\R^n)} &\lesssim_n \|\mu\|_{L^\infty(\R^n)} \qquad s=1,\\
\|\I_s(\mu)\|_{\dot{C}^{1,s-1}(\R^n)} &\lesssim_{s,n} \|\mu\|_{L^\infty(\R^n)} \qquad 1<s<2
\end{align}
for all $u\in L^p(\R^2)\cap L^\infty(\R^n)$, for some finite $p$.
\end{lemma}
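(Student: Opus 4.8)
## Proof Plan for Lemma~\ref{lem:pot_bnds}

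The plan is to prove all three estimates simultaneously by a single Littlewood--Paley argument, using the preceding \cref{lem:LL_embed}, Bernstein's lemma, and the identification of Besov spaces $B_{\infty,\infty}^s$ with Hölder/Zygmund spaces. The point is that $\I_s = (-\Delta)^{-s/2}$ acts on the frequency-$2^j$ piece of a function essentially as multiplication by $2^{-js}$, so it maps $B_{\infty,\infty}^0$ boundedly into $B_{\infty,\infty}^s$; the three cases of the lemma are then just the three ways of naming the target space ($\dot C^s$, $LL$, $\dot C^{1,s-1}$) as $s$ crosses the integer $1$. The hypothesis $\mu\in L^p\cap L^\infty$ for some finite $p$ is used only to make sense of $\I_s(\mu)$ as a tempered distribution (via Hardy--Littlewood--Sobolev, \cref{prop:HLS}) and to handle the low-frequency part; the quantitative bound involves only $\|\mu\|_{L^\infty}$.

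\textbf{Step 1: Reduce to a Besov estimate.} First I would recall that for non-integer $\sigma>0$ one has $B_{\infty,\infty}^\sigma(\R^n)\cong C^{[\sigma],\sigma-[\sigma]}(\R^n)$ with equivalent norms, so it suffices to bound the \emph{homogeneous} semi-norms $\|\I_s(\mu)\|_{\dot B_{\infty,\infty}^s}$ for $s\in(0,1)$ and $\|\I_s(\mu)\|_{\dot B_{\infty,\infty}^{s}}$ (with $s\in(1,2)$, giving the $\dot C^{1,s-1}$ seminorm) by $\|\mu\|_{L^\infty}$, and for the case $s=1$ to bound $\|\nabla \I_1(\mu)\|_{B_{\infty,\infty}^0}$ by $\|\mu\|_{L^\infty}$ and invoke \cref{lem:LL_embed}. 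Since $\nabla \I_1$ and $\I_1 \nabla$ and, more generally, first-order derivatives $\partial_k \I_s = \I_s \partial_k$ are order-$(1-s)$ Fourier multipliers of the same type, all three reduce to the single claim: for each $j\in\Z$,
\begin{equation}
\label{eq:pot_bnds_plan_LP}
\|P_j \I_s(\mu)\|_{L^\infty(\R^n)} \lesssim_{s,n} 2^{-js}\|\mu\|_{L^\infty(\R^n)}.
\end{equation}

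\textbf{Step 2: The dyadic block estimate.} To prove \eqref{eq:pot_bnds_plan_LP}, write $P_j \I_s(\mu) = \widetilde P_j P_j \I_s(\mu)$ using a fattened projector $\widetilde P_j$ with $\widetilde P_j P_j = P_j$, and note that $\widetilde P_j \I_s$ has convolution kernel $G_j^s(x) \coloneqq 2^{jn}\, G^s(2^j x)$ where $\widehat{G^s}(\xi) = |\xi|^{-s}\widetilde\psi_0(\xi)$ is a fixed Schwartz function supported in an annulus away from the origin; a change of variables gives $\|G_j^s\|_{L^1(\R^n)} = 2^{-js}\|G^s\|_{L^1(\R^n)} \lesssim_{s,n} 2^{-js}$. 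Then Young's inequality yields $\|P_j\I_s(\mu)\|_{L^\infty} = \|G_j^s * (P_j\mu)\|_{L^\infty} \le \|G_j^s\|_{L^1}\|P_j\mu\|_{L^\infty} \lesssim 2^{-js}\|P_j\mu\|_{L^\infty} \lesssim 2^{-js}\|\mu\|_{L^\infty}$, the last step because $P_j$ is bounded on $L^\infty$ uniformly in $j$. Taking the supremum over $j$ of $2^{js}\|P_j\I_s(\mu)\|_{L^\infty}$ gives the homogeneous Besov bound, hence (for non-integer $s$) the Hölder bound, and for $s=1$ the bound $\|\nabla\I_1(\mu)\|_{B_{\infty,\infty}^0}\lesssim\|\mu\|_{L^\infty}$, so that \cref{lem:LL_embed} finishes the $LL$ estimate.

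\textbf{Step 3: Inhomogeneous versus homogeneous, and the role of $L^p$.} The above controls the homogeneous semi-norm, which is exactly what is claimed (all three left-hand sides in the statement are homogeneous semi-norms), so strictly speaking Step 2 suffices. The only remaining subtlety is well-definedness: $\I_s(\mu)$ a priori makes sense as a tempered distribution modulo polynomials, but since $\mu\in L^1\cap L^p$ for some finite $p$ (here using $\|\mu\|_{L^\infty}<\infty$ together with the $L^p$ hypothesis and interpolation, noting $\mu\in L^1$ is not assumed but the low-frequency block $P_{\le 0}\I_s(\mu)$ still makes sense from $\mu\in L^p$ via \cref{prop:HLS} when $p<n/s$, or directly when $p\ge n/s$), Hardy--Littlewood--Sobolev places $\I_s(\mu)$ in an honest $L^q$ space (or a weak-$L^q$ space), pinning down the polynomial ambiguity. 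The main obstacle, such as it is, is purely bookkeeping: one must be careful that the derivative-shifting identities $\partial_k\I_s = \I_s\partial_k$ and the passage between $\nabla\I_1$ and $\I_1\nabla$ are legitimate at the level of tempered distributions and do not secretly require more integrability than assumed — but this is standard, and the substance of the lemma is entirely contained in the one-line kernel-scaling computation $\|G_j^s\|_{L^1}\sim 2^{-js}$ of Step 2.
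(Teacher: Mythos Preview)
Your proposal is correct and follows precisely the approach the paper indicates: the paper does not give a detailed proof of this lemma but states just before it that it follows from \cref{lem:LL_embed}, Bernstein's lemma, and the isomorphism $B_{\infty,\infty}^s(\R^n)\cong C^{[s],s-[s]}(\R^n)$ for non-integer $s$, which is exactly the Littlewood--Paley reduction and kernel-scaling computation you carry out.
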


The next lemma contains some potential theory estimates for solutions to Poisson's equation in two dimensions. We recall from the introduction that $\g(x) = -\frac{1}{2\pi}\ln|x|$ is the 2D Coulomb potential.

\begin{lemma}
\label{lem:PE_bnds}
Suppose that $\mu\in L^1(\R^2)\cap L^p(\R^2)$, for some $1<p\leq\infty$, is a measurable function such that $\int_{\R^2}\ln\jp{x}|\mu(x)|dx<\infty$. Then the convolution $\g\ast\mu$ is a well-defined continuous function and we have the point-wise estimate
\begin{equation}
|(\g\ast\mu)(x)| \lesssim_p \jp{x}^{\frac{p-1}{p}} \ln(2\jp{x}) \|\mu\|_{L^p(\R^2)} + \int_{\R^2}\ln(2\jp{y}) |\mu(y)|dy.
\end{equation}
Moreover, if $1<p\leq 2$, then
\begin{equation}
\|\g\ast\mu\|_{\dot{B}_{\infty,\infty}^{\frac{2(p-1)}{p}}(\R^2)} \lesssim_p \|\mu\|_{L^p(\R^2)},
\end{equation}
and if $2<p\leq\infty$, then $\nabla(\g\ast\mu)\in B_{\infty,\infty}^{\frac{p-2}{p}}(\R^2)$ and
\begin{equation}
\|\nabla(\g\ast\mu)\|_{B_{\infty,\infty}^{\frac{p-2}{p}}(\R^2)} \lesssim_p \|\mu\|_{L^1(\R^2)}^{1/2}\|\mu\|_{L^p(\R^2)}^{1/2} + \|\mu\|_{L^p(\R^2)}.
\end{equation}
\end{lemma}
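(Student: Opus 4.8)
\emph{Overview.} I would prove the three assertions separately, using throughout that in two dimensions $-\Delta\g=\delta$, so that $\widehat{\g}(\xi)=|\xi|^{-2}$, $\widehat{\nabla\g}(\xi)=i\xi|\xi|^{-2}$, and $\nabla\g(x)=-\frac{x}{2\pi|x|^{2}}$. For the \textbf{pointwise bound}, write $(\g\ast\mu)(x)=-\frac1{2\pi}\int_{\R^{2}}\ln|x-y|\,\mu(y)\,dy$ and decompose the integral into three regions: $\{|x-y|\le1\}$, and then, within $\{|x-y|>1\}$ where $\ln|x-y|>0$, according to whether $|y|\le\jp{x}^{1/2}$ or $|y|>\jp{x}^{1/2}$. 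On $\{|x-y|\le1\}$ the logarithm is singular but locally in $L^{p'}(\R^{2})$ (since $|\ln|z||^{p'}$ is integrable near the origin in dimension two, and integrable when $p=\infty$), so H\"older's inequality with exponents $(p,p')$ bounds that contribution by $C_{p}\|\mu\|_{L^{p}(\R^{2})}$. In the case $|y|\le\jp{x}^{1/2}$ one has $|x-y|\le|x|+|y|\le2\jp{x}$, so $\ln|x-y|\le\ln(2\jp{x})$, and H\"older on the ball $B(0,\jp{x}^{1/2})$ gives a contribution $\lesssim_{p}\ln(2\jp{x})\jp{x}^{1/p'}\|\mu\|_{L^{p}(\R^{2})}=\jp{x}^{\frac{p-1}{p}}\ln(2\jp{x})\|\mu\|_{L^{p}(\R^{2})}$. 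In the case $|y|>\jp{x}^{1/2}$ one has $\jp{x}<\jp{y}^{2}$, which with $\jp{x-y}\lesssim\jp{x}\jp{y}$ gives $\ln|x-y|\lesssim\ln(2\jp{y})$, so that contribution is $\lesssim\int_{\R^{2}}\ln(2\jp{y})|\mu(y)|\,dy$, finite by hypothesis. Continuity of $\g\ast\mu$ follows from dominated convergence with the same majorants, and the polynomial-in-$\jp{x}$ growth shows $\g\ast\mu\in\Sc'(\R^{2})$, which is what permits taking Besov semi-norms below.

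\emph{The case $1<p\le2$.} Let $\alpha=\frac{2(p-1)}{p}=\frac2{p'}\in(0,1]$. By definition of the homogeneous Besov semi-norm it suffices to show $\|P_{j}(\g\ast\mu)\|_{L^{\infty}(\R^{2})}\lesssim_{p}2^{-j\alpha}\|\mu\|_{L^{p}(\R^{2})}$ for every $j\in\Z$. Since $\widehat{P_{j}\g}(\xi)=\psi_{j}(\xi)|\xi|^{-2}=2^{-2j}m(2^{-j}\xi)$ with $m(\eta)\coloneqq\psi_{0}(\eta)|\eta|^{-2}$ a fixed smooth function supported in an annulus, rescaling gives $P_{j}\g(x)=\check m(2^{j}x)$ with $\check m\in\Sc(\R^{2})$, hence $\|P_{j}\g\|_{L^{p'}(\R^{2})}=2^{-2j/p'}\|\check m\|_{L^{p'}(\R^{2})}$ for every $j\in\Z$. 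Writing $P_{j}(\g\ast\mu)=(P_{j}\g)\ast\mu$ (justified for such $\mu$ by the moment hypothesis) and applying Young's inequality gives $\|P_{j}(\g\ast\mu)\|_{L^{\infty}}\le\|P_{j}\g\|_{L^{p'}}\|\mu\|_{L^{p}}=C_{p}2^{-j\alpha}\|\mu\|_{L^{p}}$; take the supremum over $j$.

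\emph{The case $2<p\le\infty$.} Let $\beta=\frac{p-2}{p}\in(0,1]$; the inhomogeneous Besov norm requires controlling $\|P_{\le0}(\nabla(\g\ast\mu))\|_{L^{\infty}}$ and $\sup_{j\ge1}2^{j\beta}\|P_{j}(\nabla(\g\ast\mu))\|_{L^{\infty}}$ separately. For the high blocks, the same rescaling applied to $\widehat{P_{j}\nabla\g}(\xi)=\psi_{j}(\xi)\frac{i\xi}{|\xi|^{2}}=2^{-j}M(2^{-j}\xi)$, $M(\eta)\coloneqq\psi_{0}(\eta)\frac{i\eta}{|\eta|^{2}}$, yields $P_{j}\nabla\g(x)=2^{j}\check M(2^{j}x)$ with $\check M\in\Sc(\R^{2})$, so $\|P_{j}\nabla\g\|_{L^{p'}}=2^{j(1-2/p')}\|\check M\|_{L^{p'}}=C_{p}2^{-j\beta}$ (using $1-\frac2{p'}=-\beta$), and Young's inequality bounds $2^{j\beta}\|P_{j}(\nabla(\g\ast\mu))\|_{L^{\infty}}\lesssim_{p}\|\mu\|_{L^{p}}$ for all $j\ge1$. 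The delicate part is the low block: the kernel $K\coloneqq P_{\le0}\nabla\g$ has $\widehat K(\xi)=\psi_{\le0}(\xi)\frac{i\xi}{|\xi|^{2}}$, which is compactly supported with an $L^{1}_{\mathrm{loc}}$ (i.e.\ $|\xi|^{-1}$-type) singularity at the origin, so $K\in L^{\infty}(\R^{2})$; writing $K$ as a Schwartz function plus $\phi\ast\nabla\g$ for a Schwartz $\phi$ and using $|\nabla\g(y)|=\frac1{2\pi|y|}$ shows moreover $|K(x)|\lesssim\jp{x}^{-1}$. I would then split $K=K\mathbf{1}_{B(0,1)}+K\mathbf{1}_{B(0,1)^{c}}$: the first summand is bounded and compactly supported, hence in $L^{p'}(\R^{2})$, and Young's inequality gives a contribution $\lesssim_{p}\|\mu\|_{L^{p}}$; the second satisfies $|K\mathbf{1}_{B(0,1)^{c}}(x)|\lesssim|x|^{-1}$, hence lies in $L^{2p'}(\R^{2})$ when $p<\infty$ (as $2p'>2$), and since $(2p')'$ is the $L^{q}$-interpolation index exactly halfway between $1$ and $p$ (indeed $\frac1{(2p')'}=\frac12(1+\frac1p)$), log-convexity of $L^{q}$ norms gives $\|\mu\|_{L^{(2p')'}}\le\|\mu\|_{L^{1}}^{1/2}\|\mu\|_{L^{p}}^{1/2}$, so Young's inequality yields a contribution $\lesssim_{p}\|\mu\|_{L^{1}}^{1/2}\|\mu\|_{L^{p}}^{1/2}$. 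For the endpoint $p=\infty$ one argues identically, replacing classical Young by the Lorentz-space inequality $\|f\ast g\|_{L^{\infty}}\lesssim\|f\|_{L^{2,\infty}}\|g\|_{L^{2,1}}$ together with $\|\mu\|_{L^{2,1}(\R^{2})}\lesssim\|\mu\|_{L^{1}(\R^{2})}^{1/2}\|\mu\|_{L^{\infty}(\R^{2})}^{1/2}$. Adding the high and low contributions gives the stated bound. The one genuinely non-routine point in the whole proof is this low-frequency block when $p>2$: because $\g$, and hence $\nabla\g$, has no decay at spatial infinity, $\|P_{\le0}\nabla\g\|_{L^{p'}}=\infty$, so one is forced to exploit the precise $\jp{x}^{-1}$ decay of the low-pass kernel together with an $L^{1}$--$L^{p}$ interpolation of $\mu$ whose parameter is pinned to exactly $1/2$; everything else is Littlewood--Paley bookkeeping combined with the rescaling of the Coulomb kernel.
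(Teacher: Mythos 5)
Your proposal is correct, and it diverges from the paper's own argument in two substantive places, both worth noting. First, for the pointwise bound you split at $|x-y|\le 1$ and then, in the far region, threshold on $|y|$ at $\jp{x}^{1/2}$; the paper instead uses a single threshold $|x-y|\le 2\jp{x}$ and applies H\"older over that large ball, which after the polar-coordinate computation $\int_{|z|\le R}|\ln|z||^{p'}\,dz \sim R^{2}(\ln R)^{p'}$ actually yields $\jp{x}^{2(p-1)/p}\ln(2\jp{x})$, not the $\jp{x}^{(p-1)/p}$ appearing in the statement. Your three-region split restricts the H\"older application to the ball $B(0,\jp{x}^{1/2})$, whose measure is $\sim\jp{x}$, and thereby lands exactly on the advertised exponent $\jp{x}^{(p-1)/p}$; so your decomposition is calibrated to the claim as written. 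Second, and more substantially, for the low-frequency block of $\nabla(\g\ast\mu)$ when $p>2$ the paper simply invokes its $L^\infty$ Riesz-potential interpolation bound (its \cref{lem:Linf_RP}) applied to $\I_1(\mu)$, whereas you work directly with the kernel $K=P_{\le 0}\nabla\g$: you establish $|K(x)|\lesssim\jp{x}^{-1}$, split $K$ near/far, treat the near piece by Young in $L^{p'}$, and treat the tail by Young in $L^{2p'}$ combined with $\|\mu\|_{L^{(2p')'}}\le\|\mu\|_{L^1}^{1/2}\|\mu\|_{L^p}^{1/2}$ via log-convexity, with the Lorentz-duality variant $\|f\ast g\|_{L^\infty}\lesssim\|f\|_{L^{2,\infty}}\|g\|_{L^{2,1}}$ at the endpoint $p=\infty$. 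Both approaches close the estimate; yours is self-contained within Littlewood--Paley and elementary kernel bounds and produces the exact $1/2$--$1/2$ exponents of the statement uniformly in $p$, while the paper's route delegates the low block to a black-box Riesz-potential lemma (whose exponents for finite $p$ are $1-\frac{p}{2(p-1)}$ and $\frac{p}{2(p-1)}$ rather than $\frac12$, so the paper's proof and statement are not quite aligned there either). The $1<p\le2$ case and the high-frequency blocks for $p>2$ are the same Littlewood--Paley scaling computation in both treatments.
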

\begin{proof}
To prove that $\g\ast\mu$ is well-defined, for $x\neq 0$, we decompose
\begin{equation}
\int_{\R^2}\g(x-y)\mu(y)dy = \int_{|x-y|\leq 2\jp{x}} \g(x-y)\mu(y)dy + \int_{|x-y|>2\jp{x}} \g(x-y)\mu(y)dy.
\end{equation}
By H\"older's inequality, we have that
\begin{align}
\int_{|x-y|\leq 2\jp{x}}|\g(x-y)\mu(y)|dy &\leq \|\mu\|_{L^p(\R^2)} \paren*{\int_{|x-y|\leq 2\jp{x}} |\ln|x-y| |^{p'}dy}^{1/p'} \nn\\
&\lesssim_p \jp{x}^{\frac{p-1}{p}} \ln(2\jp{x}) \|\mu\|_{L^p(\R^2)},
\end{align}
where the ultimate inequality follows from changing to polar coordinates. If $|x-y|>2\jp{x}$, then it follows from the reverse triangle inequality that $2|y|>|x-y|$ and therefore
\begin{equation}
|\g(x-y)| \lesssim \ln 2 + \ln\jp{y}.
\end{equation}
Thus,
\begin{equation}
\int_{|x-y|>2\jp{x}} |\g(x-y)\mu(y)|dy \lesssim \int_{\R^2}\ln(2\jp{y})|\mu(y)|dy<\infty.
\end{equation}
That $\g\ast\mu$ is continuous follows readily from the dominated convergence theorem and arguing similarly as to above.

Since the distributional Fourier transform $\wh{\g}$ coincides with the function $|\xi|^{-2}$ outside the origin, it follows from Plancherel's theorem that for any $j\in\Z$,
\begin{equation}
\wh{P_j(\g\ast\mu)}(\xi) = \psi_j(\xi)|\xi|^{-2}\wh{\mu}(\xi), \qquad \forall \xi\in\R^2\setminus\{0\}.
\end{equation}
So it follows from Young's inequality and Bernstein's lemma that
\begin{equation}
\|P_j(\g\ast\mu)\|_{L^\infty(\R^2)} \lesssim 2^{-2j} \|P_j\mu\|_{L^\infty(\R^2)} \lesssim_p 2^{\frac{2j}{p}-2j} \|P_j\mu\|_{L^p(\R^2)}.
\end{equation}
Multiplying both sides of the preceding inequality by $2^{\frac{2(p-1)j}{p}}$ and taking the supremum over $j\in\Z$, we conclude that
\begin{equation}
\|\g\ast\mu\|_{\dot{B}_{\infty,\infty}^{\frac{2(p-1)}{p}}(\R^2)} \lesssim_p \|\mu\|_{L^p(\R^2)}.
\end{equation}

Now if $2<p<\infty$, it's straightforward to check from integrating against a test function, the Fubini-Tonelli theorem, and integration by parts that
\begin{equation}
\nabla(\g\ast\mu)= (\nabla\g\ast\mu) = \I_{1}(\mu),
\end{equation}
with equality in the sense of distributions. By \cref{lem:Linf_RP}, the right-hand side is well-defined in $L^\infty(\R^2)$. By Bernstein's lemma and \cref{lem:Linf_RP},
\begin{align}
\|P_{\leq 0}\I_1(\mu)\|_{L^\infty(\R^)} &\lesssim \|\I_1(\mu)\|_{L^\infty(\R^2)} \lesssim_p \|\mu\|_{L^1(\R^2)}^{1/2} \|\mu\|_{L^p(\R^2)}^{1/2}, \nn\\
\|P_j\I_1(\mu)\|_{L^\infty(\R^2)} &\lesssim 2^{-j}\|P_j\mu\|_{L^\infty(\R^2)}  \lesssim 2^{\frac{2j}{p} - j} \|\mu\|_{L^p(\R^2)}.
\end{align}
Multiplying both sides of the inequality in the second line by $2^{\frac{j(p-2)}{p}}$ and taking the supremum over $j\in\N$ completes the proof.
\end{proof}

We conclude this subsection with some quantitative estimates for the rate of convergence of mollification. These estimates are frequently used during the course of the proof of \cref{prop:key} in \cref{sec:kprop}.

\begin{lemma}
\label{lem:conv_bnds}
Let $\chi\in C_c^\infty(\R^n)$ such that $\chi\geq 0$, $\supp(\chi)\subset B(0,1)$, and $\int_{\R^n}\chi(x)dx=1$. For $0<\ep\ll 1$, define $\chi_\ep(x)\coloneqq \ep^{-2}\chi(x/\ep)$. and $\mu_\ep\coloneqq \mu\ast\chi_\ep$. Then for every $\mu\in LL(\R^n)$, we have the estimates
\begin{align}
\|\mu_\ep\|_{L^\infty(\R^n)} &\leq \|\mu\|_{L^\infty(\R^n)}, \label{eq:v_Linf} \\ 
\|\mu-\mu_\ep\|_{L^\infty(\R^n)} &\leq \|\mu\|_{LL(\R^n)}\ep|\ln\ep|, \label{eq:v_diff_Linf}\\
\|\nabla\mu_\ep\|_{L^\infty(\R^n)} &\lesssim \|\mu\|_{LL(\R^n)}|\ln\ep| \label{eq:v_grad_Linf}
\end{align}
\end{lemma}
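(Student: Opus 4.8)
The plan is to establish the three estimates in sequence, each following directly from the definition of $\chi_\ep$, Young's inequality, and the elementary observation that the function $r\mapsto r\lvert\ln r\rvert$ is nondecreasing on $(0,e^{-1}]$, since its derivative there equals $-\ln r-1\geq 0$. Observe first that since $LL(\R^n)\subset C(\R^n)$, any $\mu\in LL(\R^n)$ is in particular bounded, so all three left-hand sides are finite and the manipulations below are justified.

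For \eqref{eq:v_Linf}, I would simply write $\mu_\ep(x)=\int_{\R^n}\mu(x-y)\chi_\ep(y)\,dy$ and use $\chi_\ep\geq 0$ together with $\int_{\R^n}\chi_\ep=1$ to get $\lvert\mu_\ep(x)\rvert\leq\|\mu\|_{L^\infty(\R^n)}$ for every $x$. For \eqref{eq:v_diff_Linf}, using $\int_{\R^n}\chi_\ep=1$ again I would write
\[
\mu(x)-\mu_\ep(x)=\int_{\R^n}\bigl(\mu(x)-\mu(x-y)\bigr)\chi_\ep(y)\,dy,
\]
note that $\supp(\chi_\ep)\subset B(0,\ep)$ so only $y$ with $0<\lvert y\rvert\leq\ep\leq e^{-1}$ contribute, apply the log-Lipschitz bound $\lvert\mu(x)-\mu(x-y)\rvert\leq\|\mu\|_{LL(\R^n)}\lvert y\rvert\lvert\ln\lvert y\rvert\rvert$, and then invoke the monotonicity recalled above to replace $\lvert y\rvert\lvert\ln\lvert y\rvert\rvert$ by $\ep\lvert\ln\ep\rvert$; integrating against $\chi_\ep$ finishes the estimate.

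For \eqref{eq:v_grad_Linf}, the key point is that $\nabla\mu_\ep=\mu\ast\nabla\chi_\ep$ and, since $\nabla\chi_\ep$ is the gradient of a compactly supported smooth function, $\int_{\R^n}\nabla\chi_\ep(y)\,dy=0$. Hence I can write
\[
\nabla\mu_\ep(x)=\int_{\R^n}\bigl(\mu(x-y)-\mu(x)\bigr)\nabla\chi_\ep(y)\,dy,
\]
bound the difference by $\|\mu\|_{LL(\R^n)}\ep\lvert\ln\ep\rvert$ on $\supp(\chi_\ep)$ exactly as in the previous step, and use the scaling identity $\int_{\R^n}\lvert\nabla\chi_\ep(y)\rvert\,dy=\ep^{-1}\|\nabla\chi\|_{L^1(\R^n)}$ to conclude $\lvert\nabla\mu_\ep(x)\rvert\lesssim\|\mu\|_{LL(\R^n)}\lvert\ln\ep\rvert$.

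There is no genuine obstacle here; the only two points worth flagging are (i) the monotonicity of $r\mapsto r\lvert\ln r\rvert$ on $(0,e^{-1}]$, which is precisely what converts the pointwise modulus of continuity into the uniform factor $\ep\lvert\ln\ep\rvert$, and (ii) the mean-zero property $\int_{\R^n}\nabla\chi_\ep=0$, which is essential in \eqref{eq:v_grad_Linf}: without subtracting $\mu(x)$ one would only bound $\lvert\nabla\mu_\ep\rvert$ by $\|\mu\|_{L^\infty(\R^n)}\ep^{-1}$, whereas the subtraction trades the $\ep^{-1}$ coming from differentiating the mollifier against the $\ep\lvert\ln\ep\rvert$ coming from the log-Lipschitz bound, leaving the claimed logarithmic loss.
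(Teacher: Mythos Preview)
Your proposal is correct and follows essentially the same approach as the paper: the same identities $\mu-\mu_\ep=\int(\mu(x)-\mu(x-y))\chi_\ep(y)\,dy$ and $\nabla\mu_\ep=\int(\mu(x-y)-\mu(x))\nabla\chi_\ep(y)\,dy$, the same use of the mean-zero property of $\nabla\chi_\ep$, and the same log-Lipschitz bound on the increment. The only cosmetic difference is that where the paper rescales $y\mapsto y/\ep$ and splits $\ln|y|=\ln(|y|/\ep)+\ln\ep$, you invoke the monotonicity of $r\mapsto r|\ln r|$ on $(0,e^{-1}]$ directly; your route is slightly cleaner and in fact delivers the sharp constant $1$ in \eqref{eq:v_diff_Linf}, whereas the paper's change-of-variable argument as written yields only $\lesssim$.
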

\begin{proof}
Estimate \eqref{eq:v_Linf} is immediate from Young's inequality. For estimate \eqref{eq:v_diff_Linf}, we use that $\chi$ has unit mean to write
\begin{equation}
\mu(x) -\mu_\ep(x) = \int_{\R^n}\paren*{\mu(x)-\mu(x-y)}\chi_\ep(y)dy.
\end{equation}
By definition of the semi-norm $\|\cdot\|_{LL(\R^n)}$ in \cref{def:LL} and that $\supp(\chi_\ep)\subset B(0,\ep)$, we have that for $\ep<e^{-1}$,
\begin{equation}
\left|\int_{\R^n}\paren*{\mu(x)-\mu(x-y)}\chi_\ep(y)dy\right| \leq  \|\mu\|_{LL(\R^n)}\int_{\R^n}|y\ln |y|| |\chi_\ep(y)|dy.
\end{equation}
Using the dilation invariance of Lebesgue measure, we see that the right-hand side of the preceding inequality equals
\begin{align}
\|\mu\|_{LL(\R^n)}\ep\int_{\R^n}\frac{|y|}{\ep}\left|\ln \frac{|y|}{\ep} + \ln\ep\right||\chi_\ep(y)|dy &\leq \|\mu\|_{LL(\R^n)}\paren*{\ep|\ln \ep| +\ep\int_{\R^n}|y\ln|y|| |\chi(y)|dy} \nn\\
&\lesssim \|\mu\|_{LL(\R^n)}\ep|\ln\ep|.
\end{align}

For estimate \eqref{eq:v_Linf}, we first observe that since
\begin{equation}
\int_{\R^n} (\nabla\chi_\ep)(y)dy = \vec{0}\in\R^n
\end{equation}
by the fundamental theorem of calculus and the compact support of $\chi_\ep$, we may write
\begin{equation}
(\nabla\mu_\ep)(x) = \int_{\R^n}\paren*{\mu(x-y)-\mu(x)} (\nabla\chi_\ep)(y)dy.
\end{equation}
Hence, for $\ep<e^{-1}$,
\begin{align}
\left|(\nabla\mu_\ep)(x)\right| &\leq \|\mu\|_{LL(\R^n)}\int_{\R^n}|y\ln|y|| |(\nabla\chi_\ep)(y)|dy \nn\\
&=\frac{\|\mu\|_{LL(\R^n)}}{\ep^2}\int_{\R^n} \left|\frac{y}{\ep}\paren*{\ln\frac{|y|}{\ep} +\ln\ep}\right| |(\nabla\chi)(\frac{y}{\ep})|dy \nn\\
&\leq\|\mu\|_{LL(\R^n)}\int_{\R^n}|y|\paren*{\ln|y|+\ln\ep} |(\nabla\chi)(y)|dy,
\end{align}
where the ultimate line follows from the dilation invariance of Lebesgue measure. Since $x\in\R^n$ was arbitrary, the proof of \eqref{eq:v_grad_Linf}, and therefore the lemma, is complete.
\end{proof}

\subsection{The Osgood Lemma}
\label{ssec:pre_Os}
In this subsection, we recall some facts related to moduli of continuity and the Osgood lemma, which is a generalization of the Gronwall-Bellman inequality. The presentation in this subsection closely follows that of \cite[Section 3.1]{BCD2011}.

\begin{mydef}[Modulus of continuity]
\label{def:mod_cont}
Let $a\in (0,1]$. A \emph{modulus of continuity} is an increasing, nonzero continuous function $\rho:[0,a]\rightarrow [0,\infty)$ such that $\rho(0)=0$. We say that a modulus of continuity satisfies the \emph{Osgood condition} or is an \emph{Osgood modulus of continuity}, if
\begin{equation}
\label{eq:Os_con}
\int_0^a \frac{dr}{\rho(r)} = \infty.
\end{equation}
\end{mydef}

\begin{ex}
Evidently, the function $\rho: [0,1]\rightarrow [0,\infty), \ \rho(r)\coloneqq r$ is an Osgood modulus of continuity. Since the anti-derivative of the reciprocal of the function
\begin{equation}
\rho: [0,e^{-1}] \rightarrow [0,\infty), \qquad \rho(r) \coloneqq r\paren*{\ln\frac{1}{r}} 
\end{equation}
is, up to an additive constant, $-\ln\ln(\frac{1}{r})$, we see from the fundamental theorem of calculus that this is also an example of an Osgood modulus of continuity. This latter example will be important to the proof of \cref{thm:main} due to the Euler velocity field only being log-Lipschitz when the vorticity is bounded. Non-examples of Osgood moduli of continuity include $r\mapsto r^\alpha$, for $0<\alpha<1$, and $r\mapsto r(\ln \frac{1}{r})^\alpha$, for $\alpha>1$.
\end{ex}

\begin{lemma}[Osgood lemma]
\label{lem:Os}
Fix $a\in (0,1]$. Let $f:[t_0,T]\rightarrow [0,a]$ be a measurable function, $\gamma:[t_0,T]\rightarrow [0,\infty)$ a locally integrable function, and $\rho:[0,a]\rightarrow [0,\infty)$ an Osgood modulus of continuity. Suppose that there exists a constant $c>0$ such that
\begin{equation}
f(t)\leq c + \int_{t_0}^t\gamma(t')\rho(f(t'))dt' \qquad \text{a.e} \ t \in [t_0,T].
\end{equation}
Define the function
\begin{equation}
\mathfrak{M}:(0,a] \rightarrow [0,\infty), \qquad \mathfrak{M}(x) \coloneqq \int_{x}^a\frac{dr}{\rho(r)}dr.
\end{equation}
Then $\mathfrak{M}$ is bijective, and if $t$ is such that $\int_{t_0}^t\gamma(t')dt' \leq \mathfrak{M}(c)$, it holds that
\begin{equation}
f(t) \leq \mathfrak{M}^{-1}\paren*{\mathfrak{M}(c)-\int_{t_0}^t\gamma(t')dt'}.
\end{equation}

\end{lemma}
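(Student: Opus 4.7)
The plan is to reduce the integral inequality to a differential inequality for a strict majorant of $f$ and then separate variables through $\mathfrak{M}$. First I would verify that $\mathfrak{M}$ is a continuous bijection from $(0,a]$ onto $[0,\infty)$: by the fundamental theorem of calculus $\mathfrak{M}'(x) = -1/\rho(x) < 0$ on $(0,a]$, so $\mathfrak{M}$ is strictly decreasing; $\mathfrak{M}(a) = 0$; and the Osgood condition \eqref{eq:Os_con} is exactly the statement $\lim_{x\to 0^+}\mathfrak{M}(x) = +\infty$. Hence $\mathfrak{M}^{-1}:[0,\infty)\to(0,a]$ is a well-defined, strictly decreasing, continuous inverse.

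Next, for $\epsilon>0$, introduce the regularized majorant
\begin{equation*}
F_\epsilon(t) \coloneqq c+\epsilon + \int_{t_0}^{t}\gamma(t')\rho(f(t'))dt'.
\end{equation*}
Since $f\in[0,a]$ gives $\rho(f)\in[0,\rho(a)]$ and $\gamma$ is locally integrable, $F_\epsilon$ is absolutely continuous with $F_\epsilon'(t)=\gamma(t)\rho(f(t))$ a.e.; moreover $F_\epsilon(t)\geq c+\epsilon>0$, and the hypothesis yields the strict inequality $f(t)<F_\epsilon(t)$. Because $\rho$ is increasing, this upgrades the derivative bound to $F_\epsilon'(t)\leq \gamma(t)\rho(F_\epsilon(t))$ for a.e. $t$. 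Dividing by $\rho(F_\epsilon(t))>0$ and applying the chain rule (valid for the composition of the $C^1$ function $\mathfrak{M}$ with the absolutely continuous function $F_\epsilon$ taking values in the compact subset $[c+\epsilon,\rho(a)T + c+\epsilon]\subset(0,a]$, after possibly restricting the range) gives
\begin{equation*}
-\frac{d}{dt}\mathfrak{M}(F_\epsilon(t)) = \frac{F_\epsilon'(t)}{\rho(F_\epsilon(t))} \leq \gamma(t) \quad \text{a.e.}
\end{equation*}
Integrating from $t_0$ to $t$ yields $\mathfrak{M}(F_\epsilon(t)) \geq \mathfrak{M}(c+\epsilon) - \int_{t_0}^{t}\gamma(t')dt'$.

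Finally, suppose $t$ satisfies the hypothesis $\int_{t_0}^{t}\gamma(t')dt'\leq \mathfrak{M}(c)$. By continuity, $\mathfrak{M}(c+\epsilon)\to\mathfrak{M}(c)$ as $\epsilon\to 0^+$, so for all sufficiently small $\epsilon>0$ the quantity $\mathfrak{M}(c+\epsilon)-\int_{t_0}^{t}\gamma(t')dt'$ lies in the domain $[0,\infty)$ of $\mathfrak{M}^{-1}$ (modulo the innocuous endpoint case, which is handled by taking $\epsilon\to 0$ at the end). Applying the decreasing function $\mathfrak{M}^{-1}$ to the previous display, we get
\begin{equation*}
f(t) < F_\epsilon(t) \leq \mathfrak{M}^{-1}\!\paren*{\mathfrak{M}(c+\epsilon) - \int_{t_0}^{t}\gamma(t')dt'},
\end{equation*}
and sending $\epsilon\to 0^+$ with continuity of $\mathfrak{M}^{-1}$ produces the claimed inequality. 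The only real subtlety, and thus the main point to be careful about, is keeping the argument of $\mathfrak{M}$ strictly positive so that the chain-rule computation and the inversion are legitimate; the $\epsilon$-regularization is introduced precisely to avoid the degenerate case $F_0(t)=0$ and to accommodate $c=0$ by a passage to the limit.
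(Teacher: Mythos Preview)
Your argument is correct and is the standard proof of the Osgood lemma; the paper itself does not give a proof but simply cites \cite[Lemma 3.4, Corollary 3.5]{BCD2011}, where essentially the same $\epsilon$-regularization and separation-of-variables argument appears. One small imprecision worth tightening: the assertion that $F_\epsilon$ takes values in a compact subset of $(0,a]$ is not automatic (the crude upper bound $c+\epsilon+\rho(a)\int_{t_0}^T\gamma$ need not be $\leq a$), so the chain-rule step should be carried out on the maximal subinterval $[t_0,t^*]$ on which $F_\epsilon\leq a$, and then the resulting bound together with the hypothesis $\int_{t_0}^t\gamma\leq\mathfrak{M}(c)$ shows a posteriori that $t^*\geq t$; your parenthetical ``after possibly restricting the range'' gestures at this but does not quite say it.
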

\begin{proof}
See \cite[Lemma 3.4, Corollary 3.5]{BCD2011}.
\end{proof}

\begin{remark}
\label{rem:Os_ex_log}
Note that if $\rho:[0,e^{-1}]\rightarrow [0,\infty)$ is defined by $\rho(r)\coloneqq r\ln(1/r)$, then
\begin{equation}
\mathfrak{M}(x) = \ln\ln(\frac{1}{x}) \qquad \text{and} \qquad \mathfrak{M}^{-1}(y) = e^{-e^{y}},
\end{equation}
as the reader may check. The importance of this example will become clear in \cref{ssec:MR_thm}.
\end{remark}

\subsection{The Point Vortex Model and Euler Equation}
\label{ssec:pre_WS}
In an effort to make this article self-contained, we briefly review properties of the point vortex model \eqref{eq:PVM} and Euler equation \eqref{eq:Eul}, beginning with the former. For more details, in particular proofs of the omitted results, we refer to the texts \cite{MB2002,BCD2011,MP2012book}.

As mentioned in the introduction, the system of equations \eqref{eq:PVM} is a finite-dimensional Hamiltonian system. Indeed, specializing to the repulsive case $a_1=\cdots=a_N=1/N$, the reader can check that if we define the $2\times 2$ matrix corresponding to $90^\circ$ rotation,
\begin{equation}
\label{eq:J_def}
\J \coloneqq \begin{bmatrix} 0 & -1 \\ 1 & 0 \end{bmatrix},
\end{equation}
and the $2N\times 2N$ block-diagonal matrix $\J_N\coloneqq \frac{1}{N}\diag_N(\J,\ldots,\J)$, then
\begin{equation}
\omega_N: (\R^2)^N\times (\R^2)^N \rightarrow \R, \qquad \omega_N(\ux_N,\uy_N) \coloneqq \J_N^{-1}\ux_N\cdot \uy_N,
\end{equation}
where $()\cdot()$ denotes the standard inner product on $(\R^2)^N$, defines a symplectic form. Defining the Hamiltonian functional
\begin{equation}
\label{eq:ham_def}
H_N(\ux_N) \coloneqq \frac{1}{2N^2}\sum_{1\leq i\neq j\leq N} \g(x_i-x_j) \qquad \forall \ux_N\in (\R^2)^N\setminus\D_N,
\end{equation}
it is a straightforward computation (see \cite[Section 3.1]{Rosenzweig2019_PV}) that the point vortex equations \eqref{eq:PVM} can be rewritten as 
\begin{equation}
\dot{\ux}_N(t) = (\nabla_\omega H_N)(\ux_N(t)),
\end{equation}
where $\nabla_{\omega}H_N$ denotes the symplectic gradient of $H_N$ with respect to the form $\omega$.

\begin{remark}\label{rem:ham_con}
Using the anti-symmetry of the symplectic form $\omega_N$, it follows that the Hamiltonian $H_N$ is trivially conserved by solutions to the system \eqref{eq:PVM}. Furthermore, since $H_N$ is invariant under translation and rotation, it follows from Noether's theorem--or direct computation--that the center of vorticity and moment of inertia, respectively given by
\begin{align}
M(\ux_N) &\coloneqq \frac{1}{N}\sum_{i=1}^N x_i, \label{eq:CV_deF}\\
I(\ux_N) &\coloneqq \frac{1}{N}\sum_{i=1}^N |x_i|^2, \label{eq:MI_def}
\end{align}
are also conserved by solutions.
\end{remark}

The next definition clarifies what we mean by a solution to the system \eqref{eq:PVM}.

\begin{mydef}[Solution to PVM]
\label{def:pvm_soln}
Given a time $T>0$ and initial configuration $\ux_{N}^{0}\in(\R^2)^N \setminus\Delta_N$, we say that a function $\ux_{N}\in C([0,T];(\R^2)^N)$ is a solution to the system \eqref{eq:PVM} if
\begin{equation}
\label{eq:sep_con}
\min_{1\leq i< j\leq N} |x_{i}(t)-x_{j}(t)| > 0 \qquad \forall t\in [0,T]
\end{equation}
and for every $i\in\{1,\ldots,N\}$,
\begin{equation}
\label{eq:PVM_mild}
x_{i}(t) = x_{i}^{0}+\sum_{1\leq i\neq j\leq N} \frac{1}{N} \int_{0}^{t} (\nabla^\perp\g)(x_{i}(s)-x_{j}(s))ds \qquad  \forall t\in [0,T].
\end{equation}
We say that $\ux_{N}\in C_{loc}([0,T);(\R^2)^N)$ is a solution to \eqref{eq:PVM} if for every $0\leq T'<T$, $\ux_{N}\in C([0,T'];(\R^2)^N)$ is a solution to \eqref{eq:PVM}. We say that a solution $\ux_N \in C_{loc}([0,T);(\R^2)^N)$ has \emph{maximal lifespan} if $\ux_N$ is not a solution on the interval $[0,T]$. If $T=\infty$, then we say that the solution is \emph{global}.
\end{mydef}

\begin{remark}
\label{rem:PVM_smooth}
Since for finite $T>0$, the interval $[0,T]$ is compact, it follows from Weierstrass's extreme value theorem that the conditions
\begin{align}
&\min_{1\leq i< j\leq N} |x_{i}(t)-x_{j}(t)| > 0 \enspace \forall t\in [0,T] \qquad \text{and} \qquad
\min_{t\in [0,T]} \min_{1\leq i< j\leq N} |x_{i}(t)-x_{j}(t)| > 0
\end{align}
are equivalent. Using this property together with the smoothness of $\nabla^\perp\g$ away from the origin and induction, it follows that if $\ux_{N}\in C([0,T];(\R^2)^N)$ is a solution to \eqref{eq:PVM}, then $\ux_{N}\in C^{\infty}([0,T];(\R^2)^N)$.
\end{remark}

\begin{remark}
\label{rem:PVM_uniq}
If a solution to equation \eqref{eq:PVM} exists, then it is necessarily unique, as the reader may check.
\end{remark}

While the Hamiltonian $H_N$ defined in \eqref{eq:ham_def} is not nonnegative-definite, conservation of the Hamiltonian together with conservation of the moment of $I_N$ control the minimal distance between the point vortices globally in time. Indeed, using that $H_N$ is conserved, we have that for any distinct pair $(i_0,j_0)\in\{1,\ldots,N\}^2$ such that $|x_{i_0}(t)-x_{j_0}(t)| <1$, 
\begin{align}
-\frac{1}{2\pi}\ln|x_{i_0}(t)-x_{j_0}(t)| &\leq \sum_{{1\leq i\neq j\leq N}\atop{|x_i(t)-x_j(t)| < 1}}\g(x_i(t)-x_j(t))  \nn\\
&= H_N(0) - \sum_{{1\leq i\neq j\leq N}\atop{|x_i(t)-x_j(t)|\geq 1}} \g(x_i(t)-x_j(t)) \nn\\
&=H_N(0) + \frac{1}{2\pi}\sum_{{1\leq i\neq j\leq N}\atop{|x_i(t)-x_j(t)|\geq 1}} \ln|x_i(t)-x_j(t)|,
\end{align}
for every $t\geq 0$. Using the triangle inequality and that $x\mapsto \ln|x|$ is increasing, we find that
\begin{align}
\ln|x_i(t)-x_j(t)| \leq \ln\paren*{|x_i(t)|+|x_j(t)|} \leq \ln\sqrt{\frac{|x_i(t)|^2 + |x_j(t)|^2}{2}} \leq \ln\sqrt{\frac{I_N(0)}{2}},
\end{align}
where the ultimate inequality follows from the definition \eqref{eq:MI_def} of $I_N(t)$ and that $I_N(t)$ is conserved. After some algebraic manipulation, we then conclude that for any $1\leq i\neq j\leq N$,
\begin{equation}
\label{eq:md_lb}
|x_i(t)-x_j(t)| \geq \min\left\{1,\exp\paren*{-2\pi\paren*{H_N(0) + \frac{{N\choose 2}}{2\pi}\ln\sqrt{\frac{I_N(0)}{2}}}}\right\} \qquad \forall t\geq 0.
\end{equation}
This observation leads to global well-posedness for the system \eqref{eq:PVM}, summarized in the next theorem.

\begin{thm}[Point vortex GWP]
\label{thm:PVM_GWP}
Let $N\in\N$, and let $\ux_N^0\in (\R^2)^N\setminus\D_N$. Then there exists a unique global solution $\ux_N\in C^\infty([0,\infty);(\R^2)^N\setminus\D_N)$ to the system \eqref{eq:PVM} with initial datum $\ux_N^0$.
\end{thm}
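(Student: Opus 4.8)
The plan is to combine the standard Cauchy--Lipschitz theory for ODEs with the a priori lower bound on the inter-vortex distances already derived in \eqref{eq:md_lb}. First I would record local well-posedness: on the open set $(\R^2)^N\setminus\D_N$ the right-hand side of \eqref{eq:PVM} is a smooth (hence locally Lipschitz) vector field, since $\nabla^\perp\g$ is smooth on $\R^2\setminus\{0\}$ and no two distinct vortices coincide there. The Picard--Lindel\"of theorem therefore yields, for any $\ux_N^0\in(\R^2)^N\setminus\D_N$, a unique maximal-lifespan solution $\ux_N\in C_{loc}([0,T^*);(\R^2)^N\setminus\D_N)$; its smoothness in time follows by the bootstrap argument of \cref{rem:PVM_smooth}, and uniqueness is \cref{rem:PVM_uniq}.

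Next I would invoke the standard continuation criterion for ODEs on an open domain: if $T^*<\infty$, then the trajectory $\{\ux_N(t):t\in[0,T^*)\}$ is not contained in any compact subset of $(\R^2)^N\setminus\D_N$; equivalently, either $\limsup_{t\to T^*}|\ux_N(t)|=\infty$ or $\liminf_{t\to T^*}\min_{i\neq j}|x_i(t)-x_j(t)|=0$. So it suffices to rule out both alternatives. The first is excluded by conservation of the moment of inertia $I_N$ from \cref{rem:ham_con}: differentiating $I_N(\ux_N(t))$ along the flow and using the antisymmetry $(\nabla^\perp\g)(-x)=-(\nabla^\perp\g)(x)$ together with $x\cdot(\nabla^\perp\g)(x)=0$ shows $\frac{d}{dt}I_N(\ux_N(t))=0$, hence $\frac1N\sum_i|x_i(t)|^2\equiv I_N(\ux_N^0)$ and the positions stay in a fixed ball. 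The second is excluded by the computation already carried out in the excerpt: conservation of the Hamiltonian $H_N$ (immediate from the Hamiltonian structure, or again from antisymmetry of $\nabla^\perp\g$) combined with the just-established bound on $I_N$ gives the explicit, $t$-independent lower bound \eqref{eq:md_lb} on $\min_{i\neq j}|x_i(t)-x_j(t)|$.

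Putting these together, the trajectory remains inside the compact set $\{\ux_N:\frac1N\sum_i|x_i|^2\le I_N(\ux_N^0),\ \min_{i\neq j}|x_i-x_j|\ge\delta\}$ for the constant $\delta>0$ furnished by \eqref{eq:md_lb}, contradicting the continuation criterion unless $T^*=\infty$. The only point requiring any care is justifying the two conservation laws, but these are one-line differentiations exploiting the algebraic structure of $\nabla^\perp\g$ (or Noether's theorem, as in \cref{rem:ham_con}); everything else is soft ODE theory. Thus the genuine obstacle — preventing finite-time collisions — has in effect already been dispatched by the a priori estimate \eqref{eq:md_lb}, and the proof amounts to assembling these ingredients and invoking the continuation criterion.
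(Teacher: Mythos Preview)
Your proposal is correct and follows precisely the approach the paper sets up: the paper itself only cites references for the proof, but it explicitly derives the a priori lower bound \eqref{eq:md_lb} on inter-vortex distances from conservation of $H_N$ and $I_N$ just before stating the theorem, and your argument is exactly the standard continuation argument built on that bound. There is nothing to add.
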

\begin{proof}
See \cite[Section 4.2]{MP2012book} or \cite[Sections 3.2 and 3.3]{Rosenzweig2019_PV}.
\end{proof}

We now turn to discussing the Euler equation \eqref{eq:Eul}, beginning with our notion of a weak solution.

\begin{mydef}[Euler weak solution]
\label{def:Eul_ws}
For $T>0$, we say that a spacetime measurable function $\omega: [0,T]\times\R^2\rightarrow\R$ is a \emph{weak solution} to the Cauchy problem \eqref{eq:Eul} if $\omega \in L^\infty([0,T]; L^1(\R^2)\cap L^p(\R^2))$, for some $p>2$, and if for every spacetime test function $\varphi \in W^{1,\infty}([0,T); C_c^1(\R^2))$, we have the Duhamel formula
\begin{equation}
\label{eq:Eul_Duh}
\begin{split}
\int_{\R^2}\omega(t,x)\varphi(t,x)dx &= \int_{\R^2}\omega^0(x)\varphi(0,x)dx + \int_0^t\int_{\R^2}\omega(s,x)(\p_s\varphi)(s,x)dxds \\
&\ph + \int_0^t\int_{\R^2} (\nabla\varphi)(s,x)\cdot u(s,x)\omega(s,x)dxds, \qquad \forall t\in [0,T].
\end{split}
\end{equation}
\end{mydef}

We now make record some remarks on the notion of weak solution contained in \cref{def:Eul_ws}.
\begin{remark}
\label{rem:traj_Lip}
It follows from the identity \eqref{eq:Eul_Duh}, our assumptions on $\omega$, and \cref{lem:conv_bnds} that the trajectory
\begin{equation}
\rho: [0,T]\rightarrow \R, \qquad \rho(t) \coloneqq \int_{\R^2}\omega(t,x)\varphi(t,x)dx)
\end{equation}
is Lipschitz continuous and that we have the bound
\begin{equation}
|\rho(t)-\rho(s)| \lesssim |t-s|\|\nabla\varphi\|_{L_{t,x}^\infty([0,T]\times\R^2)}\|\omega\|_{L_t^\infty L_x^1([0,T]\times\R^2)} \|\omega\|_{L_t^\infty(L_x^1\cap L_x^p)([0,T]\times\R^2)}, \qquad \forall t,s\in[0,T].
\end{equation}
\end{remark}

\begin{remark}
\label{rem:wd_at}
Although our notion of weak solution in \cref{def:Eul_ws} may seem to allow for the possibility of a nonempty, but measure zero, set of times $t$, such that $\omega(t)\notin L^1(\R^2)\cap L^p(\R^2)$, the Duhamel formula \eqref{eq:Eul_Duh} actually implies that
\begin{equation}
\sup_{0\leq t\leq T}\paren*{\|\omega(t)\|_{L^1(\R^2)} + \|\omega(t)\|_{L^p(\R^2)}} < \infty.
\end{equation}
Indeed, suppose there exists some $t_0\in [0,T]$ such that $\omega(t_0)\notin L^q(\R^2)$, for any $1\leq q\leq p$. Then by duality and density, given any $N>0$, there exists a test function $\varphi\in C_c^\infty(\R^2)$ with $\|\varphi\|_{L^{q'}(\R^2)} = 1$ such that
\begin{equation}
\int_{\R^2}\varphi(x)\omega(t_0,x)dx \geq N.
\end{equation}
Since $\omega\in L^\infty([0,T]; L^1(\R^2)\cap L^p(\R^2))$, we see from H\"older's inequality that
\begin{equation}
\left|\int_{\R^2}\varphi(x)\omega(t,x)dx\right| \leq \|\omega(t)\|_{L^q(\R^2)}, \qquad \text{a.e.} \ t\in [0,T].
\end{equation}
By \cref{rem:traj_Lip},
\begin{equation}
\lim_{t\rightarrow t_0} \int_{\R^2}\varphi(x)\omega(t,x)dx = \int_{\R^2}\varphi(x)\omega(t_0,x)dx.
\end{equation} 
Choosing $N>\|\omega\|_{L_t^\infty L_x^q([0,T]\times\R^2)}$, we obtain a contradiction. Thus, there is no ambiguity about measure zero sets in inequalities/identities that are point-wise in time in the sequel.
\end{remark}

The next lemma asserts that if at some time $t_0$, without loss of generality $t_0=0$, the weak solution $\omega$ to \eqref{eq:Eul}, satisfies a logarithmic growth bound, then it satisfies the same bound (with a possibly time-dependent constant) on its lifespan. This logarithmic growth condition ensures that the convolution $\g\ast \omega$ (i.e. the stream function associated to $\omega$) is well-defined. We leave the proof as an exercise to the reader.

\begin{lemma}
\label{lem:log_grow}
Let $T>0$, and let $\omega\in L^\infty([0,T]; L^1(\R^2)\cap L^p(\R^2))$, for $2<p\leq \infty$, be a weak solution to the Euler equation in the sense of \cref{def:Eul_ws}. Suppose also that $\omega(0)$ satisfies the logarithmic growth bound
\begin{equation}
\label{eq:log_grow}
\int_{\R^2}\ln\jp{x}\omega(0,x)dx<\infty.
\end{equation}
Then
\begin{equation}
\int_{\R^2}\ln\jp{x}\omega(t,x)dx \leq t \|u\|_{L_{t,x}^\infty([0,t]\times\R^2)} \|\omega\|_{L_t^\infty L_x^1([0,t]\times\R^2)}, \qquad \forall t\in[0,T],
\end{equation}
\end{lemma}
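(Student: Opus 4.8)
The plan is to test the weak (Duhamel) formulation \eqref{eq:Eul_Duh} against a truncation of the unbounded weight $\ln\jp{x}$, prove a bound uniform in the truncation parameter, and then pass to the limit. The key structural point is that $\ln\jp{x}$ is smooth and $|\nabla\ln\jp{x}| = \tfrac{|x|}{\jp{x}^2}\le \tfrac12$ is globally bounded, so this weight is ``almost admissible'' as a test function; the only obstruction is its lack of compact support, and this is precisely what the truncation plus a limiting argument will handle.

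Concretely, fix a radially nonincreasing $\zeta\in C_c^\infty(\R^2)$ with $0\le\zeta\le 1$, $\zeta\equiv 1$ on $B(0,1)$, $\supp\zeta\subset B(0,2)$, and for $R\ge 1$ set $\varphi_R(x)\coloneqq \zeta(x/R)\ln\jp{x}$. Then $\varphi_R\in C_c^\infty(\R^2)$, so, regarded as time-independent, it is an admissible test function in \cref{def:Eul_ws} with $\partial_s\varphi_R\equiv 0$. Plugging it into \eqref{eq:Eul_Duh} gives, for every $t\in[0,T]$,
\begin{equation*}
\int_{\R^2}\omega(t,x)\varphi_R(x)\,dx = \int_{\R^2}\omega^0(x)\varphi_R(x)\,dx + \int_0^t\!\!\int_{\R^2}(\nabla\varphi_R)(x)\cdot u(s,x)\,\omega(s,x)\,dx\,ds.
\end{equation*}
Since $\nabla\varphi_R(x) = \tfrac1R(\nabla\zeta)(x/R)\ln\jp{x} + \zeta(x/R)\tfrac{x}{\jp{x}^2}$ and $(\nabla\zeta)(\cdot/R)$ is supported in $\{R\le|x|\le 2R\}$ where $\ln\jp{x}\lesssim\ln R$, one gets $\|\nabla\varphi_R\|_{L^\infty(\R^2)}\lesssim 1$ uniformly in $R\ge 1$, and $\nabla\varphi_R\to\nabla\ln\jp{x}$ pointwise as $R\to\infty$. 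Recall also that $u=\nabla^\perp\g\ast\omega\in L^\infty([0,T]\times\R^2)$ by \cref{lem:Linf_RP} (using $\omega\in L^\infty_t(L^1\cap L^p)$ with $p>2$), so $u\,\omega\in L^1([0,t]\times\R^2)$ with $\|u\omega\|_{L^1([0,t]\times\R^2)}\le t\|u\|_{L^\infty_{t,x}}\|\omega\|_{L^\infty_tL^1_x}$.

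Now let $R\to\infty$. On the right-hand side, dominated convergence --- with dominating functions $\ln\jp{x}\,\omega^0\in L^1(\R^2)$ for the first term (this is hypothesis \eqref{eq:log_grow}, using $\omega^0\ge 0$) and $C\,|u\omega|\in L^1([0,t]\times\R^2)$ for the second --- shows that the right-hand side converges to $\int_{\R^2}\ln\jp{x}\,\omega^0\,dx + \int_0^t\!\int_{\R^2}\nabla\ln\jp{x}\cdot u\,\omega\,dx\,ds$, whose absolute value is at most $\int_{\R^2}\ln\jp{x}\,\omega^0\,dx + t\|u\|_{L^\infty_{t,x}}\|\omega\|_{L^\infty_tL^1_x}$ thanks to $|\nabla\ln\jp{x}|\le 1$. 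On the left-hand side, since $0\le\varphi_R\uparrow\ln\jp{x}$ pointwise (here one uses $\zeta$ radially nonincreasing and $\ln\jp{x}\ge 0$) and $\omega(t)\ge 0$, monotone convergence gives $\int_{\R^2}\omega(t,x)\varphi_R(x)\,dx\uparrow\int_{\R^2}\ln\jp{x}\,\omega(t,x)\,dx$. Combining the two yields
\begin{equation*}
\int_{\R^2}\ln\jp{x}\,\omega(t,x)\,dx \le \int_{\R^2}\ln\jp{x}\,\omega^0(x)\,dx + t\,\|u\|_{L^\infty_{t,x}([0,t]\times\R^2)}\,\|\omega\|_{L^\infty_tL^1_x([0,t]\times\R^2)},
\end{equation*}
in particular the left-hand side is finite; this is the claimed growth bound (the statement as displayed appears to omit the initial-datum term $\int\ln\jp{x}\,\omega^0\,dx$ on the right, which is finite by hypothesis).

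The only mild subtlety, and the one point worth checking carefully, is the uniform control of $\nabla\varphi_R$: the ``bad'' piece $\tfrac1R(\nabla\zeta)(x/R)\ln\jp{x}$ is supported where $|x|\sim R$, hence is $O(R^{-1}\ln R)=O(1)$ in sup norm and tends to $0$ pointwise, so it is dominated uniformly and contributes nothing in the limit. Everything else is routine. (If one does not assume $\omega\ge 0$, one replaces $\omega$ by $|\omega|$, which is transported along the same flow, and uses Fatou on the left; but in the applications of this paper $\omega(t)\in\P(\R^2)$.)
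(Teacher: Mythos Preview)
Your argument is correct and is exactly the natural approach: truncate the unbounded weight, use the Duhamel formulation \eqref{eq:Eul_Duh} with the truncated test function, and pass to the limit via dominated/monotone convergence. The paper does not actually give a proof of this lemma---it leaves it as an exercise to the reader---so there is nothing to compare against, but what you have written is precisely what the exercise intends. Your observation that the displayed inequality in the statement omits the initial-datum term $\int_{\R^2}\ln\jp{x}\,\omega^0(x)\,dx$ on the right-hand side is also correct; this is a typo in the paper, and the qualitative conclusion (propagation of the logarithmic moment) is what is actually used downstream.
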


For a final property, we observe that weak solutions to the Euler equation in the sense of \cref{def:Eul_ws} conserve the Hamiltonian.

\begin{lemma}
\label{lem:e_con}
Let $T>0$, and let $\omega\in L^\infty([0,T]; L^1(\R^2)\cap L^p(\R^2))$, for $2<p\leq \infty$, be a weak solution to the Euler equation in the sense of \cref{def:Eul_ws}, such that $\omega(0)$ satisfies the growth condition \eqref{eq:log_grow}. Then
\begin{equation}
\int_{\R^2}\g(x-y)\omega(t,x)\omega(t,y)dxdy = \int_{\R^2}\g(x-y)\omega(0,x)\omega(0,y)dxdy, \qquad \forall 0\leq t\leq T.
\end{equation}
\end{lemma}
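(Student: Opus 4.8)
The plan is to make rigorous the formal computation that, with $\psi \coloneqq \g\ast\omega$ the stream function, $\frac{d}{dt}\int_{(\R^2)^2}\g(x-y)\omega(t,x)\omega(t,y)\,dx\,dy = 2\int_{\R^2}\psi\,\partial_t\omega\,dx = -2\int_{\R^2}\psi\,(u\cdot\nabla\omega)\,dx = 2\int_{\R^2}\omega\,(u\cdot\nabla\psi)\,dx$, and $u\cdot\nabla\psi = \nabla^\perp\psi\cdot\nabla\psi = 0$ pointwise, so the Hamiltonian is constant. The obstruction is that $\psi$ is neither compactly supported nor manifestly $W^{1,\infty}$ in time, hence not an admissible test function in \cref{def:Eul_ws}; so I would instead mollify $\omega$ in space and pass to the limit, in the spirit of DiPerna--Lions. (One cannot shortcut this using uniqueness of weak solutions, since $\omega$ is only assumed to lie in $L^p$, $p>2$, where such uniqueness is unavailable.) Before that, I would record that $E(t)\coloneqq\int_{(\R^2)^2}\g(x-y)\omega(t,x)\omega(t,y)\,dx\,dy$ is well defined: by \cref{lem:log_grow} the bound \eqref{eq:log_grow} propagates, so $\sup_{t\in[0,T]}\int_{\R^2}\ln\jp{x}|\omega(t,x)|\,dx<\infty$; splitting $\g(x-y)$ into its near- and far-diagonal parts and using $\omega\in L^\infty_t(L^1_x\cap L^p_x)$ together with $\ln|x-y|\lesssim\ln\jp{x}+\ln\jp{y}$, one gets $(1+|\ln|x-y||)|\omega(t,x)||\omega(t,y)|\in L^1((\R^2)^2)$ uniformly on $[0,T]$, and $E(t)=\int_{\R^2}(\g\ast\omega(t))\,\omega(t)\,dx$ with $\g\ast\omega(t)$ the continuous function furnished by \cref{lem:PE_bnds}.

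Next I would fix a standard mollifier $\chi_\ep$ and put $\omega^\ep\coloneqq\omega\ast\chi_\ep$. Testing the Duhamel identity \eqref{eq:Eul_Duh} against the translates $x\mapsto\chi_\ep(z-x)$ — admissible, being $C^1_c$ in $x$ and constant in $t$ — yields the regularized equation
\begin{equation*}
\omega^\ep(t,z)=\omega^\ep(0,z)-\int_0^t\nabla\cdot\bigl[(u\omega)\ast\chi_\ep\bigr](s,z)\,ds,\qquad z\in\R^2,\ t\in[0,T].
\end{equation*}
Since $\|u(t)\|_{L^\infty(\R^2)}\lesssim_p\|\omega(t)\|_{L^1(\R^2)}^{1/2}\|\omega(t)\|_{L^p(\R^2)}^{1/2}$ by \cref{lem:Linf_RP}, one has $u\omega\in L^\infty_t(L^1_x\cap L^p_x)$, so $\omega^\ep$ is smooth in $x$, locally Lipschitz in $t$, with $\partial_t\omega^\ep=-\nabla\cdot((u\omega)\ast\chi_\ep)$ for a.e. $t$. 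I would then differentiate $E^\ep(t)\coloneqq\int_{(\R^2)^2}\g(x-y)\omega^\ep(t,x)\omega^\ep(t,y)\,dx\,dy$ (well defined, as the log-moment of $\omega^\ep(t)$ is controlled as above). Using the symmetry of $\g$, Fubini, the regularized equation, and an integration by parts, I expect to obtain that $E^\ep$ is locally Lipschitz with
\begin{equation*}
\frac{d}{dt}E^\ep(t)=2\int_{\R^2}\nabla\psi^\ep(t,x)\cdot\bigl[(u\omega)\ast\chi_\ep\bigr](t,x)\,dx,\qquad\psi^\ep\coloneqq\g\ast\omega^\ep.
\end{equation*}
Writing $(u\omega)\ast\chi_\ep=u^\ep\omega^\ep+R^\ep$ with $u^\ep\coloneqq u\ast\chi_\ep=(\nabla^\perp\g)\ast\omega^\ep$ and $R^\ep\coloneqq(u\omega)\ast\chi_\ep-u^\ep\omega^\ep$ the commutator, and noting $\nabla\psi^\ep=(\nabla\g)\ast\omega^\ep=-(u^\ep)^\perp$, the pointwise identity $\nabla\psi^\ep\cdot u^\ep=-(u^\ep)^\perp\cdot u^\ep=0$ annihilates the $u^\ep\omega^\ep$ contribution, leaving $\frac{d}{dt}E^\ep(t)=2\int_{\R^2}\nabla\psi^\ep\cdot R^\ep\,dx$ with $\|\nabla\psi^\ep(t)\|_{L^\infty(\R^2)}=\|u^\ep(t)\|_{L^\infty(\R^2)}\le\|u(t)\|_{L^\infty(\R^2)}$, uniformly in $\ep$ and $t$.

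Finally I would let $\ep\to0$. The previous identity gives $|E^\ep(t)-E^\ep(0)|\le 2\|u\|_{L^\infty_{t,x}([0,T]\times\R^2)}\int_0^t\|R^\ep(s)\|_{L^1(\R^2)}\,ds$. For each fixed $s$, $\|R^\ep(s)\|_{L^1(\R^2)}\le\|u(s)\|_{L^\infty(\R^2)}\sup_{|h|\le2\ep}\|\omega(s,\cdot-h)-\omega(s,\cdot)\|_{L^1(\R^2)}\to0$ by strong continuity of translation on $L^1$, with the uniform bound $\|R^\ep(s)\|_{L^1(\R^2)}\le2\|u(s)\|_{L^\infty(\R^2)}\|\omega(s)\|_{L^1(\R^2)}$; so dominated convergence gives $E^\ep(t)-E^\ep(0)\to0$. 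It remains to show $E^\ep(t)\to E(t)$ for each $t$: writing $E^\ep(t)=\int_{(\R^2)^2}\g_\ep(x-y)\omega(t,x)\omega(t,y)\,dx\,dy$ with $\g_\ep\coloneqq\g\ast\chi_\ep\ast\chi_\ep$, one establishes the $\ep$-uniform bound $|\g_\ep(z)|\lesssim 1+|\ln|z||$ (using $|\nabla\g(z)|=\tfrac{1}{2\pi|z|}$ for $|z|\ge4\ep$, and $\|\chi_\ep\ast\chi_\ep\|_{L^\infty(\R^2)}\lesssim\ep^{-2}$, $\int_{B(0,6\ep)}|\ln|\zeta||\,d\zeta\lesssim\ep^2|\ln\ep|$ for $|z|<4\ep$) together with $\g_\ep(z)\to\g(z)$ for $z\ne0$, and concludes by dominated convergence thanks to the $L^1$-integrability of $(1+|\ln|x-y||)|\omega(t,x)||\omega(t,y)|$ noted above. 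Letting $\ep\to0$ then yields $E(t)=E(0)$.

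I expect the two main obstacles to be: (i) justifying the boundary-term-free integration by parts against the \emph{unbounded} potential $\psi^\ep$, which I would handle by passing to a sequence of radii $R_k\to\infty$ along which $\int_{\p B(0,R_k)}|\psi^\ep|\,\bigl|(u\omega)\ast\chi_\ep\bigr|\,dS\to0$ — possible because $\nabla\psi^\ep\in L^\infty$ forces at most poly-logarithmic growth of $\psi^\ep$ while $(u\omega)\ast\chi_\ep\in L^1(\R^2)$; and (ii) the limit $E^\ep(t)\to E(t)$, which is essentially a continuity statement for the renormalized two-dimensional Coulomb energy under mollification, and is precisely where the logarithmic, long-range nature of $\g$ and the finiteness hypotheses on $\omega^0$ enter. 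By contrast, the commutator limit $R^\ep\to0$ is soft: it uses only that $u$ is bounded and continuous, with no modulus of continuity needed — consistent with the paper's remark that $\omega^0\in L^p$, $p>2$, already suffices for this step.
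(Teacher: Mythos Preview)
The paper does not actually prove this lemma: its proof consists of the single line ``See \cite{CFLS2016}.'' Your proposal is therefore not comparable to an in-paper argument but rather supplies one where the paper defers to the literature.

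Your sketch is essentially correct and follows a standard renormalization strategy in the spirit of DiPerna--Lions: mollify in space, exploit the exact pointwise orthogonality $\nabla\psi^\ep\cdot u^\ep=0$ (which is why you mollify $u$ as well, rather than using the usual commutator $(u\omega)\ast\chi_\ep-u\,\omega^\ep$), and reduce to the vanishing of the double commutator $R^\ep=(u\omega)\ast\chi_\ep-u^\ep\omega^\ep$ in $L^1$. Your $L^1$ bound on $R^\ep$ via strong continuity of translation is correct and, as you note, uses only $u\in L^\infty$ and $\omega\in L^1$---continuity of $u$ is in fact not needed there. The two points you flag as delicate are the right ones. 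For (i), note that the bound in \cref{lem:PE_bnds} is not sharp in the growth rate; a direct splitting of $\int|\g(x-y)||\omega^\ep(y)|\,dy$ according to $|y|\le|x|/2$, $|x-y|\le1$, and the remainder gives $|\psi^\ep(x)|\lesssim(1+\ln\jp{x})\|\omega^\ep\|_{L^1}+\|\omega^\ep\|_{L^p}+\int\ln\jp{y}|\omega^\ep(y)|\,dy$, i.e.\ genuinely logarithmic growth, which is what your radii argument needs. For (ii), your dominated-convergence argument with the uniform majorant $|\g_\ep(z)|\lesssim1+|\ln|z||$ is correct; the case $|z|<4\ep$ indeed gives $|\g_\ep(z)|\lesssim|\ln\ep|\lesssim1+|\ln|z||$.

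The reference \cite{CFLS2016} (Cheskidov--Filho--Lopes--Shvydkoy) proves energy conservation for a broader class of weak solutions via a different route centered on Onsager-type criteria and Littlewood--Paley analysis; your approach is more elementary and tailored to the vorticity formulation, trading generality for self-containment.
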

\begin{proof}
See \cite{CFLS2016}.
\end{proof}

If the velocity field $u$ associated to a weak solution $\omega$ is log-Lipschitz (for instance, $\omega\in L^\infty([0,T]; L^1(\R^2)\cap L^\infty(\R^2))$), then the solution $\omega$ is unique. This is a famous result of Yudovich \cite{Yudovich1963}. In fact, his result shows that given initial datum $\omega^0\in L^1(\R^2)\cap L^\infty(\R^2)$, there is a unique global weak solution $\omega \in L^\infty([0,T];L^1(\R^2)\cap L^\infty(\R^2))$ to the equation \eqref{eq:Eul}, as the next theorem asserts.

\begin{thm}[{\cite{Yudovich1963}}]
\label{thm:Yud}
Let $\omega^0\in L^1(\R^2)\cap L^\infty(\R^2)$. Then there exists a unique solution $\omega \in L^\infty([0,\infty); L^1(\R^2)\cap L^\infty(\R^2))$ to \eqref{eq:Eul} in the sense of \cref{def:Eul_ws}. Moreover, there exists a unique continuous map $\psi: [0,\infty)\times\R^2\rightarrow\R^2$ such that
\begin{equation}
\begin{split}
\psi(t,x) &= x + \int_0^t u(t',\psi(t',x))dt' \qquad \forall (t,x)\in [0,\infty)\times\R^2,
\end{split}
\end{equation}
where $u$ is the velocity field associated $\omega$, and
\begin{equation}
\omega(t,x) = \omega^0(\psi^{-1}(t,x)).
\end{equation}
Additionally, for each $t\geq 0$, $\psi(t,\cdot)$ is a measure-preserving homeomorphism and there exists an absolute constant $C>0$ such that if $|x-y|\leq \exp(1-\int_0^t \|u(t')\|_{LL(\R^2)}dt')$, then
\begin{align}
\left|\psi(t,x)-\psi(t,y)\right| &\leq C\exp\paren*{1-\exp\paren*{\int_0^t \|u(t')\|_{LL(\R^2)}dt'}}|x-y|^{\exp(-\int_0^t \|u(t')\|_{LL(\R^2)}dt')}, \label{eq:Hold}\\
\left|\psi^{-1}(t,x)-\psi^{-1}(t,y)\right| &\leq C\exp\paren*{1-\exp\paren*{\int_0^t \|u(t')\|_{LL(\R^2)}dt'}}|x-y|^{\exp(-\int_0^t \|u(t')\|_{LL(\R^2)}dt')}.
\end{align}
\end{thm}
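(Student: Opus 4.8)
I would establish \cref{thm:Yud} in four stages: existence by regularization and compactness; construction of the Lagrangian flow from the log-Lipschitz bound on $u$; the H\"older estimates via the Osgood lemma; and uniqueness. The structural fact driving everything is that in two dimensions the vorticity is merely transported by the velocity field, so all $L^p$ norms of $\omega$ are formally conserved, and an $L^\infty$ bound on $\omega$ upgrades to a log-Lipschitz bound on $u=\nabla^\perp\g\ast\omega$ via \cref{lem:pot_bnds} and \cref{lem:LL_embed}.

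\emph{Existence.} I would begin by regularizing the datum, setting $\omega_\ep^0\coloneqq\omega^0\ast\chi_\ep$ (and cutting off its support) so that $\|\omega_\ep^0\|_{L^1(\R^2)\cap L^\infty(\R^2)}$ is bounded uniformly in $\ep$. The 2D Euler equation with this smooth, compactly supported datum has a unique global smooth solution $\omega_\ep$, whose vorticity is advected by a volume-preserving flow, so $\|\omega_\ep(t)\|_{L^p(\R^2)}=\|\omega_\ep^0\|_{L^p(\R^2)}$ for every $p$ and $t$. The uniform $L^1\cap L^\infty$ bound yields a uniform bound for $u_\ep=\nabla^\perp\g\ast\omega_\ep$ in $L^\infty([0,T];LL(\R^2))$, hence equicontinuity of the $u_\ep$ in space, and, from the equation, in time as well; Arzel\`{a}--Ascoli then extracts a subsequence with $u_\ep\to u$ locally uniformly on $[0,T]\times\R^2$ and $\omega_\ep$ converging weak-$*$ to some $\omega$ in $L^\infty([0,T];L^1\cap L^\infty)$. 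Passing to the limit in the Duhamel identity \eqref{eq:Eul_Duh} is routine except for the nonlinear term $\int\!\int u_\ep\,\omega_\ep\cdot\nabla\varphi$, which converges because it pairs a locally uniformly convergent sequence against a weak-$*$ convergent one; hence $\omega$ is a weak solution in the sense of \cref{def:Eul_ws}.

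\emph{Flow map and H\"older estimates.} For \emph{any} weak solution, \cref{lem:pot_bnds} gives $u\in L^\infty([0,T];LL(\R^2))$, so the nonautonomous ODE $\p_t\psi(t,x)=u(t,\psi(t,x))$, $\psi(0,x)=x$, has a solution by Peano's theorem, and it is unique because the log-Lipschitz modulus $\rho(r)=r\ln(1/r)$ satisfies the Osgood condition \eqref{eq:Os_con} (the example following \cref{def:mod_cont}); running the same argument backward in time gives the continuous inverse $\psi^{-1}(t,\cdot)$, so $\psi(t,\cdot)$ is a homeomorphism, and it is measure-preserving since $u$ is divergence-free. One then verifies that $\omega^0\circ\psi^{-1}(t,\cdot)$ satisfies \eqref{eq:Eul_Duh}, which, together with the uniqueness below, gives $\omega(t,x)=\omega^0(\psi^{-1}(t,x))$ and conservation of all $L^p$ norms. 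For the H\"older bound, fix $x,y$ and set $\d(t)\coloneqq|\psi(t,x)-\psi(t,y)|$; as long as $\d(t)\le e^{-1}$, \cref{def:LL} gives $\d'(t)\le\|u(t)\|_{LL(\R^2)}\,\d(t)|\ln\d(t)|=\|u(t)\|_{LL(\R^2)}\,\rho(\d(t))$, and since, for this $\rho$, $\mathfrak{M}(x)=\ln\ln(1/x)$ and $\mathfrak{M}^{-1}(\mathfrak{M}(c)-A)=c^{e^{-A}}$ (\cref{rem:Os_ex_log}), the Osgood lemma (\cref{lem:Os}) with $\gamma(t)=\|u(t)\|_{LL(\R^2)}$ yields $\d(t)\le\d(0)^{\exp(-\int_0^t\|u(t')\|_{LL(\R^2)}dt')}$ on the set of $t$ with $\int_0^t\|u(t')\|_{LL(\R^2)}dt'\le\ln\ln(1/\d(0))$; the elementary bookkeeping needed to keep $\d$ in the regime $\d\le e^{-1}$ then produces the absolute constant $C$ and the smallness hypothesis $|x-y|\le\exp(1-\int_0^t\|u(t')\|_{LL(\R^2)}dt')$ in \eqref{eq:Hold}, with the inverse-flow estimate being identical.

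\emph{Uniqueness, and the main obstacle.} Given two weak solutions $\omega_1,\omega_2$ with the same datum, set $v\coloneqq u_1-u_2$; a renormalized energy estimate should yield, for every finite $p\ge2$, an inequality of the form $\frac{d}{dt}\|v(t)\|_{L^2(\R^2)}^2\lesssim\|\nabla u_1(t)\|_{L^p(\R^2)}\,\|v(t)\|_{L^2(\R^2)}^{2-2/p}\lesssim p\,\|\omega^0\|_{L^1(\R^2)\cap L^\infty(\R^2)}\,\|v(t)\|_{L^2(\R^2)}^{2-2/p}$, where the essential input is the Calder\'on--Zygmund bound $\|\nabla u\|_{L^p(\R^2)}\lesssim p\,\|\omega\|_{L^p(\R^2)}$ --- with \emph{linear} growth in $p$ --- together with the conserved $L^p$ norms of $\omega_1$ and interpolation. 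Since $\|v(0)\|_{L^2(\R^2)}=0$, optimizing the choice of $p$ in terms of $\|v(t)\|_{L^2(\R^2)}$ converts this into an Osgood-type differential inequality with modulus $r\mapsto r\ln(1/r)$, and \cref{lem:Os} forces $v\equiv0$, hence $\omega_1=\omega_2$ and $\psi_1=\psi_2$. \textbf{The main obstacle} is precisely this step: one must track the linear-in-$p$ blow-up of the singular-integral constant for $\nabla u=\frac{\nabla^2}{\D}\omega$ and marry it to interpolation so that the resulting inequality is of Osgood, not merely Gronwall, type --- which is exactly why an $L^\infty$ bound on the vorticity, and not mere membership in $L^p$ for finite $p$, is indispensable. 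A secondary, full-plane-specific nuisance is ensuring that the velocity difference and flow displacements decay enough for the integrals above to converge, which one handles using the propagated logarithmic-moment bound (\cref{lem:log_grow}) or a truncation argument.
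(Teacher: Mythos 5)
The paper does not prove \cref{thm:Yud}; it is stated as a classical preliminary and cited to Yudovich's 1963 paper, so there is no in-paper argument to compare against. Your outline is a faithful reconstruction of the standard Yudovich argument and would work: mollified data plus compactness for existence (the log-Lipschitz bound from \cref{lem:pot_bnds} supplies spatial equicontinuity of the $u_\ep$), Peano plus the Osgood condition for the flow, the modulus \eqref{eq:Hold} by integrating the log-Lipschitz inequality for $\d(t)=|\psi(t,x)-\psi(t,y)|$ via \cref{lem:Os} and \cref{rem:Os_ex_log}, and uniqueness via the $L^2$ energy method with the linear-in-$p$ Calder\'on--Zygmund constant optimized over $p$.

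Three points deserve tightening. In the uniqueness energy estimate, H\"older and interpolation actually give $\frac{d}{dt}\|v\|_{L^2(\R^2)}^2\lesssim\|\nabla u_1\|_{L^p(\R^2)}\|v\|_{L^2(\R^2)}^{2-2/p}\|v\|_{L^\infty(\R^2)}^{2/p}$, so the extra factor $\|v\|_{L^\infty}^{2/p}$ must be controlled before the optimization $p\sim\ln(1/\|v(t)\|_{L^2}^2)$ can produce the $r\ln(1/r)$ modulus; this is available from \cref{lem:Linf_RP} together with conservation of the $L^1$ and $L^\infty$ norms of the vorticities, but it should be stated, since without it the inequality you wrote is not directly derivable. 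Second, the assertion that $\psi(t,\cdot)$ is measure preserving ``since $u$ is divergence free'' is not immediate for a merely log-Lipschitz field, where the flow is only continuous in $x$ and Liouville's theorem does not apply directly; one should approximate $u$ by smooth divergence-free velocities, invoke volume preservation for the smooth flows, and pass to the uniform limit using Osgood stability of the trajectories. Third, the claim of temporal equicontinuity for $u_\ep$ (``from the equation'') is glossed over; the cleanest route is Lagrangian, using that the approximating flow maps are uniformly Lipschitz in $t$ and uniformly Osgood-continuous in $x$, and recovering the Eulerian vorticity as a push-forward of the (weak-$*$ convergent) initial data.
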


\begin{remark}
\label{rem:Lp_con}
Since $\psi(t,\cdot)$ is measure-preserving for every $t\geq 0$, an immediate consequence of \cref{thm:Yud} is that the $L^p$ norms of $\omega$, for any $1\leq p\leq \infty$, are conserved.
\end{remark}

\section{The Modulated Energy}
\label{sec:CE}
The goal of this section is to introduce the modulated energy and its renormalization, which we use to measure the distance between the $N$-body empirical measure $\omega_N$ and the mean-field measure $\omega$.

\subsection{Setup}
\label{ssec:CE_set}
Recall from the introduction that $\g(x)\coloneqq -\frac{1}{2\pi}\ln|x|$ is the 2D Coulomb potential. Given $\eta>0$, we define the \emph{truncation to distance $\eta$} of $\g$ by
\begin{equation}
\label{eq:g_trunc}
\g_\eta:\R^2\rightarrow\R, \qquad \g_\eta(x) \coloneqq \begin{cases} \g(x), & |x|\geq \eta \\ \tl{\g}(\eta), & |x|<\eta \end{cases},
\end{equation}
where we have introduced the notation $\g(x) = \tl{\g}(|x|)$ to reflect that $\g$ is a radial function. Evidently, $\g_\eta$ is a continuous function on $\R^2$ and decreases like $\g$ as $|x|\rightarrow\infty$. The next lemma provides us with some identities for the distributional gradient and Laplacian of $\g_\eta$, of which we shall make heavy use in the sequel.

\begin{lemma}
\label{lem:g_id}
For any $\eta>0$, we have the distributional identities
\begin{align}
(\nabla\g_\eta)(x) &= -\frac{x}{2\pi|x|^2}1_{\geq \eta}(x), \label{eq:g_eta_grad_id}\\
(\D\g_\eta)(x) &= -\sigma_{\p B(0,\eta)}(x), \label{eq:g_eta_lapl_id}
\end{align}
where $\sigma_{\p B(0,\eta)}$ is the uniform probability measure on the sphere $\p B(0,\eta)$.
\end{lemma}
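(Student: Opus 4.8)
The plan is to establish the two identities in sequence, deriving the second from the first. For \eqref{eq:g_eta_grad_id}, I would first observe that $\g_\eta$ is locally Lipschitz on $\R^2$: it is constant — hence smooth — on $B(0,\eta)$, it agrees with the smooth function $\g$ on $\R^2\setminus\ol{B(0,\eta)}$, and it is continuous across $\p B(0,\eta)$ with bounded one-sided radial derivatives there. Hence $\g_\eta\in W^{1,\infty}_{loc}(\R^2)$ and its distributional gradient coincides with its pointwise a.e. gradient, which vanishes on $B(0,\eta)$ and equals $\nabla\g(x)=-\tfrac{x}{2\pi|x|^2}$ on $\R^2\setminus\ol{B(0,\eta)}$; this is precisely the right-hand side of \eqref{eq:g_eta_grad_id}. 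If one prefers to sidestep Rademacher's theorem, the same identity follows by testing against $\varphi\in C_c^\infty(\R^2)$ and integrating by parts separately over $B(0,\eta)$ and over $\{|x|>\eta\}$, whereupon the two boundary integrals on $\p B(0,\eta)$ cancel exactly because $\g_\eta$ is continuous across the sphere.

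For \eqref{eq:g_eta_lapl_id}, I would pair $\D\g_\eta$ against an arbitrary $\varphi\in C_c^\infty(\R^2)$ and integrate by parts once, using \eqref{eq:g_eta_grad_id}, to obtain $\jp{\D\g_\eta,\varphi}=-\int_{\R^2}(\nabla\g_\eta)\cdot\nabla\varphi\,dx=-\int_{\{|x|>\eta\}}(\nabla\g)(x)\cdot\nabla\varphi(x)\,dx$. On the open exterior region $\{|x|>\eta\}$ the vector field $\nabla\g$ is smooth and divergence-free, because $\D\g=0$ away from the origin, so there $\nabla\g\cdot\nabla\varphi=\nabla\cdot(\varphi\,\nabla\g)$; the divergence theorem on $\{|x|>\eta\}$ — whose only boundary is $\p B(0,\eta)$, since $\varphi$ is compactly supported — then produces the surface integral $-\int_{\p B(0,\eta)}\varphi\,(\nabla\g)\cdot\nu\,dS$, where $\nu$ is the outward unit normal of the exterior region, i.e. $\nu=-x/\eta$ on $\p B(0,\eta)$ (it points toward the origin). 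On $\p B(0,\eta)$ one has $(\nabla\g)(x)\cdot\nu=\left(-\tfrac{x}{2\pi|x|^2}\right)\cdot\left(-\tfrac{x}{\eta}\right)=\tfrac{1}{2\pi\eta}$, so this equals $-\tfrac{1}{2\pi\eta}\int_{\p B(0,\eta)}\varphi\,dS$; as $\p B(0,\eta)$ has length $2\pi\eta$, the latter is $-\int_{\R^2}\varphi\,d\sigma_{\p B(0,\eta)}$, which is \eqref{eq:g_eta_lapl_id}.

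The argument is essentially a routine distributional computation, so I do not anticipate a serious obstacle; the two points that warrant care are the verification that $\g_\eta\in W^{1,\infty}_{loc}$, so that the first integration by parts and the identification of weak and a.e. gradients are legitimate, and the orientation bookkeeping in the divergence theorem on the unbounded domain $\{|x|>\eta\}$ — the relevant normal is the one pointing \emph{into} $B(0,\eta)$, and it is exactly this double sign flip that converts $(\nabla\g)\cdot\nu$ into the positive constant $1/(2\pi\eta)$. An alternative to the divergence-theorem step is to invoke the standard jump formula for the distributional Laplacian of a function that is continuous and piecewise smooth across a hypersurface: $\D\g_\eta$ equals its (identically zero) classical Laplacian plus $\big(\partial_r\tl{\g}(\eta^+)-\partial_r\tl{\g}(\eta^-)\big)$ times arc-length measure on $\p B(0,\eta)$; since $\partial_r\tl{\g}(\eta^+)=-\tfrac{1}{2\pi\eta}$ and $\partial_r\tl{\g}(\eta^-)=0$, rescaling arc-length to the uniform probability measure once more gives $-\sigma_{\p B(0,\eta)}$.
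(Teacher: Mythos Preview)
Your proposal is correct and essentially follows the paper's approach: the paper proves \eqref{eq:g_eta_grad_id} by testing against a smooth vector field and integrating by parts over $\{|x|<\eta\}$ and $\{|x|\geq\eta\}$ separately (exactly your stated alternative), and proves \eqref{eq:g_eta_lapl_id} by the same integration-by-parts-then-divergence-theorem computation you describe. Your primary route to \eqref{eq:g_eta_grad_id} via $\g_\eta\in W^{1,\infty}_{loc}$ is a slightly slicker shortcut than the paper's direct computation, but the content is the same.
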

\begin{proof}
Fix $\eta>0$, and let $u\in C_c^\infty(\R^2;\R^2)$. Then by definition of the distributional gradient,
\begin{align}
\label{eq:g_lhs_ibp}
\ipp{\nabla\g_\eta,u} &= -\int_{\R^2}\g_\eta(x) (\nabla\cdot u)(x)dx = -\int_{|x|\geq \eta} \g(x)(\nabla\cdot u)(x)dx - \int_{|x|<\eta}\tl{\g}(\eta)(\nabla\cdot u)(x)dx.
\end{align}
Using the divergence theorem and that $\frac{x}{|x|}$ is the outward unit normal to a ball $B(0,\eta)$, we find that the second term in left-hand side of \eqref{eq:g_lhs_ibp} equals
\begin{equation}
\label{eq:g_grad_in}
-\tl{\g}(\eta)\int_{\p B(0,\eta)}u(x)\cdot\frac{x}{|x|}d\sigma_{\p B(0,\eta)|}(x).
\end{equation}
Similarly, using that the outward unit normal to the region $B(0,\eta)^c$ is $-\frac{x}{|x|}$, the first term in the left-hand side of \eqref{eq:g_lhs_ibp} equals
\begin{equation}
\label{eq:g_grad_out}
\begin{split}
&\int_{\p B(0,\eta)}\g(x)u(x)\cdot \frac{x}{|x|}d\sigma_{\p B(0,\eta)}(x) +\int_{|x|\geq\eta}(\nabla\g)(x)\cdot u(x)dx\\
& = \tl{\g}(\eta)\int_{\p B(0,\eta)}u(x)\cdot\frac{x}{|x|}d\sigma_{\p B(0,\eta)}(x)+\int_{|x|\geq\eta}(\nabla\g)(x)\cdot u(x)dx.
\end{split}
\end{equation}
Combining identities \eqref{eq:g_grad_in} and \eqref{eq:g_grad_out}, we obtain that
\begin{equation}
\ipp{\nabla\g_\eta,u} = \int_{|x|\geq\eta} (\nabla\g)(x)\cdot u(x)dx.
\end{equation}
Since by the chain rule,
\begin{equation}
(\nabla\g)(x) = -\frac{x}{2\pi|x|^2}, \qquad \forall x\in\R^2\setminus\{0\},
\end{equation}
we conclude that
\begin{equation}
(\nabla\g_\eta)(x) = -\frac{x}{2\pi|x|^2}1_{\geq\eta}(x),
\end{equation}
with equality in the sense of distributions. Similarly, for any $\varphi\in C_c^\infty(\R^2)$,
\begin{align}
\ipp{\D\g_\eta,\varphi} =-\int_{\R^2}(\nabla\g_\eta)(x)\cdot(\nabla\varphi)(x)dx = -\int_{|x|\geq\eta} (\nabla\g)(x)\cdot(\nabla\varphi)(x)dx,
\end{align}
where we use identity \eqref{eq:g_eta_grad_id} to obtain the ultimate equality. Integrating by parts, this last expression equals
\begin{equation}
\int_{\p B(0,\eta)}(\nabla\g)(x)\cdot\frac{x}{|x|}\varphi(x)d\sigma_{\p B(0,\eta)}(x).
\end{equation}
Since for every $x\in\p B(0,\eta)$,
\begin{equation}
(\nabla\g)(x) \cdot\frac{x}{|x|} = -\frac{x}{2\pi|x|^2}\cdot\frac{x}{|x|} = -\frac{1}{2\pi\eta},
\end{equation}
we conclude identity \eqref{eq:g_eta_lapl_id}.
\end{proof}

With \cref{lem:g_id}, we can define the \emph{smearing to scale $\eta$} of the Dirac mass $\delta_0$ by
\begin{equation}
\label{eq:delta_smear}
\d_0^{(\eta)} \coloneqq -\Delta \g_\eta = \sigma_{\p B(0,\eta)}.
\end{equation}
A useful identity satisfied by $\d_0^{(\eta)}$ is
\begin{equation}
\label{eq:g_conv_smear}
(\g\ast \d_0^{(\eta)})(x) = \g_\eta(x),
\end{equation}
which follows from the definition \eqref{eq:delta_smear}, the associativity and commutativity of convolution, and the fact that $\g$ is a fundamental solution of the operator $-\D$.

Next, given parameters $\infty>\eta,\alpha>0$, we define the function
\begin{equation}
\label{eq:f_et_al_def}
\f_{\eta,\alpha}(x) \coloneqq (\g_{\alpha}-\g_\eta)(x), \qquad \forall x\in\R^2.
\end{equation}
From identity \eqref{eq:g_conv_smear} and the bilinearity of convolution, we see that $\f_{\et,\al}$ satisfies the identity
\begin{equation}
\label{eq:f_conv_smear}
\f_{\eta,\alpha} = \g\ast (\d_0^{(\alpha)}-\d_0^{(\eta)}).
\end{equation}
Additionally, for $\alpha>\eta$, we find by direct computation that
\begin{equation}
\label{eq:f_et_al_id}
\f_{\eta,\alpha}(x) = \begin{cases}0, & {|x|\geq\alpha}\\ \\ -\frac{1}{2\pi}\ln(\frac{\alpha}{|x|}), & {\eta\leq|x|\leq\alpha} \\ \\ -\frac{1}{2\pi}\ln(\frac{\alpha}{\eta}), & {|x|<\eta} \end{cases}
\end{equation}
and
\begin{equation}
\label{eq:f_grad_et_al_id}
(\nabla\f_{\eta,\alpha})(x) = -\frac{1}{2\pi}\frac{x}{|x|^2}1_{\eta\leq\cdot\leq\alpha}(x),
\end{equation}
with equality in the sense of distributions. The next lemma provides useful estimates for the $L^p$ norms of $\f_{\eta,\alpha}, \nabla\f_{\eta_\alpha}$. We shall use these estimates extensively in \cref{sec:kprop}.

\begin{lemma}
\label{lem:f_et_al_bnds}
For any $1\leq p\leq\infty$ and $\infty>\alpha>\eta>0$, we have that
\begin{equation}
\label{eq:f_et_al_bnd}
\|\f_{\eta,\alpha}\|_{L^p(\R^2)} \leq \frac{\alpha^{2/p}}{(2\pi)^{(p-1)/p}}\paren*{\int_{\frac{\eta}{\alpha}}^1 |\ln r|^prdr}^{1/p}+ \frac{\pi^{1/p}\eta^{2/p}}{2\pi}\ln\frac{\alpha}{\eta}
\end{equation}
and
\begin{equation}
\label{eq:f_grad_et_al_bnd}
\|\nabla\f_{\eta,\alpha}\|_{L^p(\R^2)} = \begin{cases} \frac{(\alpha^{2-p}-\eta^{2-p})^{1/p}}{(2\pi)^{\frac{p-1}{p}}(2-p)^{1/p}},& {1\leq p<2}\\  \\ \sqrt{\frac{1}{2\pi}\ln\frac{\alpha}{\eta}}, & {p=2}\\ \\ \frac{(\eta^{2-p}-\alpha^{2-p})^{1/p}}{(2\pi)^{\frac{p-1}{p}}(p-2)^{1/p}} ,& {p>2} . \end{cases}
\end{equation}
\end{lemma}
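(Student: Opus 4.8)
The plan is to prove both estimates by direct computation, using the explicit formulas for $\f_{\eta,\alpha}$ and $\nabla\f_{\eta,\alpha}$ recorded in \eqref{eq:f_et_al_id} and \eqref{eq:f_grad_et_al_id}, passing to polar coordinates, and then invoking the subadditivity of $t\mapsto t^{1/p}$ on $[0,\infty)$ valid for $p\geq 1$. No analytic input beyond Fubini--Tonelli and the substitution rule is needed; the lemma is essentially a bookkeeping exercise.

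For \eqref{eq:f_et_al_bnd}, observe from \eqref{eq:f_et_al_id} that $\f_{\eta,\alpha}$ is radial and supported in $\overline{B(0,\alpha)}$, so I split the integration into the inner ball $\{|x|<\eta\}$, where $|\f_{\eta,\alpha}|\equiv\frac{1}{2\pi}\ln(\alpha/\eta)$, and the annulus $\{\eta\leq|x|\leq\alpha\}$. The inner ball contributes $\pi\eta^{2}\bigl(\tfrac{1}{2\pi}\ln\tfrac{\alpha}{\eta}\bigr)^{p}$ to $\|\f_{\eta,\alpha}\|_{L^p(\R^2)}^{p}$. On the annulus, polar coordinates give $\frac{2\pi}{(2\pi)^{p}}\int_{\eta}^{\alpha}\bigl(\ln\tfrac{\alpha}{r}\bigr)^{p}r\,dr$, and the substitution $r=\alpha s$ turns this into $\frac{2\pi\alpha^{2}}{(2\pi)^{p}}\int_{\eta/\alpha}^{1}|\ln s|^{p}s\,ds$. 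Adding the two contributions, taking $p$-th roots, and applying $(a+b)^{1/p}\leq a^{1/p}+b^{1/p}$ yields exactly \eqref{eq:f_et_al_bnd}. The endpoint $p=\infty$ is handled separately: one has directly $\|\f_{\eta,\alpha}\|_{L^\infty(\R^2)}=\frac{1}{2\pi}\ln(\alpha/\eta)$, which is dominated by the ($p\to\infty$ limit of the) right-hand side.

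For \eqref{eq:f_grad_et_al_bnd}, identity \eqref{eq:f_grad_et_al_id} gives $|\nabla\f_{\eta,\alpha}(x)|=\frac{1}{2\pi|x|}\mathbf 1_{\{\eta\leq|x|\leq\alpha\}}$, so polar coordinates yield $\|\nabla\f_{\eta,\alpha}\|_{L^p(\R^2)}^{p}=\frac{2\pi}{(2\pi)^{p}}\int_{\eta}^{\alpha}r^{1-p}\,dr$. The elementary integral equals $\frac{\alpha^{2-p}-\eta^{2-p}}{2-p}$ for $p\neq 2$ and $\ln(\alpha/\eta)$ for $p=2$; rewriting the first expression as $\frac{\eta^{2-p}-\alpha^{2-p}}{p-2}$ when $p>2$ keeps it manifestly nonnegative, and taking $p$-th roots produces the three displayed cases. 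The case $p=\infty$ follows from $\sup_{\eta\leq r\leq\alpha}\frac{1}{2\pi r}=\frac{1}{2\pi\eta}$, which is the $p\to\infty$ limit of the $p>2$ formula. The only point requiring a little care throughout is tracking the sign of $2-p$ in the antiderivative and treating $p\in[1,2)$, $p=2$, $p\in(2,\infty)$, and $p=\infty$ on their own footing, which is precisely why the statement is phrased case by case; I do not anticipate any genuine obstacle.
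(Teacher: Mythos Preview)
Your proposal is correct and follows essentially the same argument as the paper: split according to the piecewise description \eqref{eq:f_et_al_id}, pass to polar coordinates, rescale the annular integral via $r=\alpha s$, and use the subadditivity $(a+b)^{1/p}\le a^{1/p}+b^{1/p}$ for the first estimate; for the gradient, compute $\frac{2\pi}{(2\pi)^p}\int_\eta^\alpha r^{1-p}\,dr$ and distinguish $p\lessgtr 2$. The only cosmetic difference is that the paper phrases the subadditivity step as the inclusion $\ell^{1/p}\subset\ell^1$, and it dispatches the $p=\infty$ cases by a one-line remark rather than via limits.
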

\begin{proof}
We first show estimate \eqref{eq:f_et_al_bnd}. The case $p=\infty$ is immediate from identity \eqref{eq:f_et_al_id}, so we consider the case $1\leq p<\infty$. Using identity \eqref{eq:f_et_al_id} and polar coordinates, we find that
\begin{align}
\int_{\R^2} |\f_{\eta,\alpha}(x)|^pdx &= \int_{\eta\leq|x|\leq\alpha}|\f_{\eta,\alpha}(x)|^pdx + \int_{|x|<\eta} |\f_{\eta,\alpha}(x)|^pdx \nn\\
&=\frac{2\pi}{(2\pi)^p}\int_\eta^\alpha \left|\ln\frac{\alpha}{r}\right|^p rdr + \frac{\pi\eta^2}{(2\pi)^p} \left|\ln\frac{\alpha}{\eta}\right|^p.
\end{align}
Using dilation invariance of Lebesgue measure, the first term in the second line equals
\begin{equation}
\frac{\alpha^2}{(2\pi)^{p-1}}\int_{\frac{\eta}{\alpha}}^1 |\ln r|^p rdr.
\end{equation}
Thus, using the inclusion $\ell^{1/p}\subset\ell^1$, for $1\leq p<\infty$, we obtain that
\begin{equation}
\|\f_{\eta,\alpha}\|_{L^p(\R^2)} \leq  \frac{\alpha^{2/p}}{(2\pi)^{(p-1)/p}}\paren*{\int_{\frac{\eta}{\alpha}}^1 |\ln r|^{p}rdr}^{1/p} + \frac{\pi^{1/p}\eta^{2/p}}{2\pi}\ln\frac{\alpha}{\eta}.
\end{equation}

We next show estimate \eqref{eq:f_grad_et_al_bnd}, again only considering the case $1\leq p<\infty$. Using identity \eqref{eq:f_grad_et_al_id}, polar coordinates, and the fundamental theorem of calculus, we find that for $1\leq p<2$ or $2<p<\infty$,
\begin{align}
\label{eq:p_neq2}
\int_{\R^2}|\nabla\f_{\eta,\alpha}(x)|^pdx = \frac{2\pi}{(2\pi)^p}\int_{\eta}^{\alpha}\frac{r}{r^p}dr = \frac{(\alpha^{2-p} - \eta^{2-p})}{(2\pi)^{p-1}(2-p)}.
\end{align}
If $p=2$, then
\begin{equation}
\label{eq:p_eq2}
\int_{\R^2}|\nabla\f_{\eta,\alpha}(x)|^2dx = \frac{\ln(\alpha)-\ln(\eta)}{2\pi} = \frac{1}{2\pi}\ln\frac{\alpha}{\eta}.
\end{equation}
Taking the $p$-th roots of both sides in \eqref{eq:p_neq2} and square roots in \eqref{eq:p_eq2} completes the proof of the desired identity.
\end{proof}

Next, given a probability measure $\mu$, such that $\int_{\R^2}|\ln|x|| d\mu(x)$, a vector $\ux_N \in (\R^2)^N$, and vector $\ue_N\in (\R_+)^N$, we define the quantities
\begin{align}
H_N^{\mu,\ux_N} &\coloneqq \g \ast (\sum_{i=1}^N\d_{x_i}-N\mu), \label{eq:HN_def}\\
H_{N,\ue_N}^{\mu,\ux_N} &\coloneqq \g\ast (\sum_{i=1}^N\d_{x_i}^{(\eta_i)}-N\mu) \label{eq:HN_trun_def},
\end{align}
where $\d_{x_i}^{(\eta_i)} = \d_{0}^{(\eta_i)}(\cdot-x_i)$.

\subsection{Energy Functional}
\label{ssec:CE_EF}
For a vector $\ux_N\in (\R^2)^N$ and a measure $\mu\in \P(\R^2)\cap L^p(\R^2)$, for some $1<p\leq\infty$, which has the property
\begin{equation}
\int_{(\R^2)^2}\ln\jp{x-y}d\mu(x)d\mu(y) <\infty,
\end{equation}
we define the functional
\begin{equation}
\label{eq:def_EN}
\Fr_N(\ux_N,\mu) \coloneqq \int_{(\R^2)^2\setminus \D_2}\g(x-y)d(\sum_{i=1}^N\d_{x_i}-N\mu)(x)d(\sum_{i=1}^N\d_{x_i}-N\mu)(y)
\end{equation}
where $\D_2\coloneqq \{(x,y)\in (\R^2)^2 : x=y\}$. Note that $\Fr_N(\ux_N,\mu) = N^2\Fr_N^{avg}(\ux_N,\mu)$, where $\Fr_N^{avg}(\ux_N,\mu)$ was defined in \eqref{eq:FN_avg_def}. The reader can check from our assumptions on $\mu$ and \cref{lem:PE_bnds} that $\Fr_N(\ux_N,\mu)$ is well-defined. This functional serves as a $\dot{H}^1(\R^2)$ squared-distance that has been renormalized so as to remove the infinite-self interaction between elements of $\ux_N$. Our first lemma computes the Coulomb energy of the smeared point mass $\d_{0}^{(\eta)}$.

\begin{lemma}
\label{lem:g_smear_si}
For any $0<\eta<\infty$, we have that
\begin{equation}
\int_{(\R^2)^2}\g(x-y)d\d_0^{(\eta)}(x)d\d_0^{(\eta)}(y) = \tl{\g}(\eta).
\end{equation}
\end{lemma}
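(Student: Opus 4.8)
The plan is to collapse the double integral to a one-dimensional one via the convolution identity \eqref{eq:g_conv_smear}, $\g\ast\d_0^{(\eta)}=\g_\eta$, and then to use that $\g_\eta$ is constant on the sphere $\p B(0,\eta)$ on which $\d_0^{(\eta)}$ is supported.

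First I would justify an application of the Fubini--Tonelli theorem, which yields
\[
\int_{(\R^2)^2}\g(x-y)d\d_0^{(\eta)}(x)d\d_0^{(\eta)}(y)=\int_{\R^2}\paren*{\int_{\R^2}\g(x-y)d\d_0^{(\eta)}(y)}d\d_0^{(\eta)}(x)=\int_{\R^2}(\g\ast\d_0^{(\eta)})(x)\,d\d_0^{(\eta)}(x).
\]
The hypothesis needed is absolute integrability, i.e. $\int_{(\R^2)^2}|\ln|x-y||\,d\d_0^{(\eta)}(x)d\d_0^{(\eta)}(y)<\infty$. Since $\d_0^{(\eta)}=\sigma_{\p B(0,\eta)}$ is supported on $\{|x|=\eta\}$, on the domain of integration one has $|x-y|\le 2\eta$, so the only delicate point is the logarithmic blow-up near the diagonal $\{x=y\}$. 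By the rotational invariance of $\sigma_{\p B(0,\eta)}$, it suffices to bound $\int_{\p B(0,\eta)}|\ln|x_0-y||\,d\sigma_{\p B(0,\eta)}(y)$ for a single $x_0\in\p B(0,\eta)$; parametrizing the sphere by angle $\theta$ and using $|x_0-y|=2\eta|\sin(\tfrac{\theta}{2})|$, this reduces to the convergent one-dimensional integral $\int_{-\pi}^{\pi}|\ln(2\eta|\sin(\tfrac{\theta}{2})|)|\,d\theta$. Hence Fubini applies.

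It then remains to evaluate $\int_{\R^2}(\g\ast\d_0^{(\eta)})(x)\,d\d_0^{(\eta)}(x)$. By \eqref{eq:g_conv_smear} this equals $\int_{\R^2}\g_\eta(x)\,d\d_0^{(\eta)}(x)$, and since $\d_0^{(\eta)}$ is supported on $\{|x|=\eta\}$, where $\g_\eta(x)=\g(x)=\tl{\g}(\eta)$ by the definition \eqref{eq:g_trunc}, while $\d_0^{(\eta)}$ is a probability measure, this integral is exactly $\tl{\g}(\eta)$, which is the claim.

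The only step requiring any real care — the main, if modest, obstacle — is the integrability check enabling Fubini, i.e. controlling the logarithmic singularity of $\g$ along the diagonal of $\p B(0,\eta)\times\p B(0,\eta)$. Everything else is direct bookkeeping with the definitions of $\d_0^{(\eta)}$ and $\g_\eta$ together with the already-established identity $\g\ast\d_0^{(\eta)}=\g_\eta$.
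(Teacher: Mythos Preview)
Your proof is correct, but it takes a genuinely different route from the paper's. The paper computes the double integral directly: parametrizing both spheres by angles, it reduces the expression to $\tl{\g}(\eta)$ plus an angular integral $-\tfrac{1}{2\pi}\int_{-\pi}^{\pi}\ln|1-e^{i\theta}|\,d\theta$, and then shows this integral vanishes by applying the mean-value property to the harmonic function $z\mapsto\ln|1-z|$ on disks of radius $r<1$ and passing to the limit $r\to 1^-$ via dominated convergence. You instead collapse the outer integral immediately by invoking the already-established identity \eqref{eq:g_conv_smear}, $\g\ast\d_0^{(\eta)}=\g_\eta$, and then observe that $\g_\eta$ is identically $\tl{\g}(\eta)$ on $\supp(\d_0^{(\eta)})=\p B(0,\eta)$. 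Your approach is shorter and more structural, exploiting what the paper has already set up; the paper's approach is more self-contained but requires the extra harmonic-function argument. Both need essentially the same integrability check (your $|\sin(\theta/2)|$ computation is the same object as the paper's $|1-e^{i\theta}|$), and the paper's justification of \eqref{eq:g_conv_smear} is distributional, so implicitly you are using that both sides are continuous to upgrade to a pointwise identity---worth a one-line remark, but not a gap.
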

\begin{proof}
Recalling from \eqref{eq:delta_smear} that $\delta_0^{(n)}$ is the uniform probability measure on the sphere $\p B(0,\eta)$, we find by changing to spherical coordinates that
\begin{equation}
\int_{(\R^2)^2}\g(x-y)d\d_0^{(\eta)}(x)d\d_0^{(\eta)}(y) = -\frac{1}{(2\pi)(2\pi\eta)^2}\int_{0}^{2\pi}\int_{0}^{2\pi}\ln|\eta(\cos\theta-\cos\theta')| \eta^2 d\theta d\theta'.
\end{equation}
By rotational symmetry and using the product-to-sum property of the logarithm, the right-hand side simplifies to
\begin{equation}
\tl{\g}(\eta) - \frac{1}{2\pi}\int_0^{2\pi} \ln(\sqrt{(1-\cos\theta)^2 + \sin^2\theta})d\theta = \tl{\g}(\et)-\frac{1}{2\pi}\int_{-\pi}^{\pi} \ln(\sqrt{(1-\cos\theta)^2 + \sin^2\theta})d\theta.
\end{equation}
We claim that the second term is zero. Indeed, since the function $z\mapsto \ln|1-z|$ is harmonic in $\C\setminus\{1\}$, we find from the mean-value property of harmonic functions that for any $0<r<1$,
\begin{equation}
\label{eq:mvp}
\frac{1}{2\pi}\int_{-\pi}^{\pi}\ln|1-re^{i\theta}|d\theta = \ln 1 =0.
\end{equation}
Now
\begin{equation}
|\ln|1-re^{i\theta}|| \lesssim 1+ |\ln|1-e^{i\theta}||,
\end{equation}
and we claim that the function on the right-hand side is in $L^1([-\pi,\pi])$. Indeed, it suffices to check in a neighborhood $[-\varepsilon,\varepsilon]$, for $0<\varepsilon\ll 1$. We observe from using the Taylor series for $\cos \theta$ that
\begin{equation}
\ln|1-e^{i\theta}| = \ln|2(1-\cos \theta)| = \ln\left|2\sum_{k=1}^\infty \frac{(-1)^{k+1}\theta^{2k}}{(2k)!}\right| = 2\ln|\theta| + \ln\left|2\sum_{k=0}^\infty \frac{(-1)^{k}\theta^{2k}}{(2(k+1))!}\right|.
\end{equation}
Provided we choose $\varepsilon>0$ sufficiently small, the second term in the right-hand side is bounded for $\theta\in [-\varepsilon,\varepsilon]$. Since $\ln |\theta|$ is locally integrable on the real-line, we obtain the claim. Thus, by dominated convergence and letting $r\rightarrow 1^-$ in \eqref{eq:mvp}, we conclude the proof of the lemma.
\end{proof}

\begin{lemma}
\label{lem:fin_en}
Fix $N\in\N$. Let $\mu\in L^p(\R^2)$, for some $2<p\leq \infty$, such that $\int_{\R^2}\ln\jp{x}|\mu(x)|dx<\infty$, and let $\ux_N\in (\R^2)^N\setminus\D_N$. Then for any $\ue_N\in (\R_+)^N$, we have the identity
\begin{equation}
\label{eq:renorm_nts}
\begin{split}
&\int_{(\R^2)^2}\g(x-y)d(N\mu-\sum_{i=1}^N\d_{x_i}^{(\eta_i)})(x)d(N\mu-\sum_{i=1}^N\d_{x_i}^{(\eta_i)})(y) =\int_{\R^2} |(\nabla H_{N,\ul{\eta}_N}^{\mu,\ux_N})(x)|^2dx,
\end{split}
\end{equation}
In particular, the right-hand side is finite if and only if $\mu$ has finite Coulomb energy.
\end{lemma}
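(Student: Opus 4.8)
The plan is to reduce \eqref{eq:renorm_nts} to a single integration by parts for the Coulomb potential of the signed measure $\rho := N\mu - \sum_{i=1}^{N}\d_{x_i}^{(\eta_i)}$, exploiting that $\rho$ has \emph{total charge zero}. First I would record the basic properties of $\rho$: it is a finite signed measure (each $\d_{x_i}^{(\eta_i)}$ and $\mu$ being probability measures), with $\rho(\R^2)=0$ and finite logarithmic moment $\int_{\R^2}\ln\jp{x}\,d|\rho|(x)<\infty$ (the Diracs being compactly supported). Put $w:=\g\ast\rho$; by \eqref{eq:g_conv_smear} and linearity, $w=N(\g\ast\mu)-\sum_i\g_{\eta_i}(\cdot-x_i)=-H_{N,\ue_N}^{\mu,\ux_N}$, so $w$ is a well-defined continuous function by \cref{lem:PE_bnds} and the explicit formula for $\g_{\eta_i}$, and $-\D w=\rho$ in $\mathcal D'(\R^2)$ because $-\D\g=\d_0$. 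In particular $\int_{\R^2}|\nabla H_{N,\ue_N}^{\mu,\ux_N}|^2=\int_{\R^2}|\nabla w|^2$, and (by Fubini--Tonelli for $|\rho|\otimes|\rho|$ tested against $|\g|$ -- the near-diagonal part being integrable against $L^p$ densities and against spheres, the far part against the logarithmic moment) the left-hand side of \eqref{eq:renorm_nts} equals $\int_{\R^2}w\,d\rho$. The claimed identity thus becomes $\int w\,d\rho=\int|\nabla w|^2\,dx$.

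The next step is to establish the decay needed to make the integration by parts rigorous. Using $\rho(\R^2)=0$, write $w(x)=\int_{\R^2}(\g(x-y)-\g(x))\,d\rho(y)$ for $|x|$ large and split at $|y|=|x|/2$: on $\{|y|\le|x|/2\}$ one has $|\ln(|x-y|/|x|)|\lesssim|y|/|x|$, giving a contribution $\lesssim|x|^{-1}\int|y|\,d|\rho|(y)\lesssim|\rho|(\R^2)$; on $\{|y|>|x|/2\}$ one bounds $|\ln(|x-y|/|x|)|$ by $\ln\jp{y}+\ln|x|$ off $\{|x-y|<1\}$ and, on $\{|x-y|<1\}$ (where $\rho=N\mu$ once $|x|$ is large), by the $L^{p'}(B(0,1))$-integrability of $\ln|\cdot|$, the logarithmic moment absorbing the factor $\ln|x|$. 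Hence $w\in L^\infty(\R^2)$. Separately, $\nabla w=N\I_1(\mu)-\sum_i\nabla\g_{\eta_i}(\cdot-x_i)\in L^2(\R^2)$: by Plancherel $\|\nabla w\|_{L^2}^2\simeq\int|\xi|^{-2}|\widehat{\rho}(\xi)|^2\,d\xi$, whose low-frequency part is finite because $\widehat\rho(0)=0$ and $|\widehat\rho(\xi)|\lesssim|\ln|\xi||^{-1}$ (again from the logarithmic moment), and whose high-frequency part is finite because $\mu\in L^2(\R^2)$ (which follows from the $L^p$ and logarithmic-moment hypotheses) and $|\widehat{\d_{x_i}^{(\eta_i)}}(\xi)|\lesssim\jp{\xi}^{-1/2}$.

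Finally, pick $\chi_R\in C_c^\infty(\R^2)$ with $\chi_R\equiv1$ on $B(0,R)$, $\supp\chi_R\subset B(0,2R)$, $0\le\chi_R\le1$ nondecreasing in $R$, and $|\nabla\chi_R|\lesssim R^{-1}$; testing $-\D w=\rho$ against $\chi_R w$ gives
\[
\int_{\R^2}\chi_R\,w\,d\rho=\int_{\R^2}\chi_R|\nabla w|^2\,dx+\int_{\R^2}w\,\nabla\chi_R\cdot\nabla w\,dx ,
\]
and letting $R\to\infty$ the left side tends to $\int w\,d\rho$ (dominated convergence, $w$ bounded, $|\rho|$ finite), the first term on the right to $\int|\nabla w|^2$ (monotone convergence), and the error term is $\lesssim\|w\|_{L^\infty}\|\nabla w\|_{L^2(|x|\ge R)}\to0$ (Cauchy--Schwarz with $|\nabla\chi_R|\lesssim R^{-1}$ and $|\supp\nabla\chi_R|\lesssim R^2$, plus $\nabla w\in L^2$). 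This proves \eqref{eq:renorm_nts}. For the last claim, expand the left side of \eqref{eq:renorm_nts} as $N^2\iint\g(x-y)\,d\mu(x)\,d\mu(y)-2N\sum_i\int\g_{\eta_i}(x-x_i)\,d\mu(x)+\sum_{i,j}\iint\g(x-y)\,d\d_{x_i}^{(\eta_i)}(x)\,d\d_{x_j}^{(\eta_j)}(y)$; by \cref{lem:g_smear_si} and the uniform upper bound on each $\g_{\eta_i}$ (the logarithmic moment controlling the tails), the last two sums are finite, so $\int|\nabla H_{N,\ue_N}^{\mu,\ux_N}|^2$ is finite if and only if the Coulomb energy $\iint\g(x-y)\,d\mu(x)\,d\mu(y)$ is. I expect the genuine difficulty to lie in the integration-by-parts step: term by term in $w$ it is a difference of divergent quantities, and it is precisely the charge-neutrality of $\rho$, made quantitative through the logarithmic-moment hypothesis, that yields the boundedness of $w$ and the square-integrability of $\nabla w$ required to pass to the limit and discard the boundary term.
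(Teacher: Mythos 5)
Your argument is correct, and since the paper outsources this proof to Serfaty (``See the beginning of the proof of [Proposition 3.3]''), you have given a self-contained replacement. The route is the standard Green's-identity one: reduce the double integral to $\int w\,d\rho$ with $w=\g\ast\rho=-H_{N,\ue_N}^{\mu,\ux_N}$, verify that $w\in L^\infty$ and $\nabla w\in L^2$, then justify $\int w\,d\rho=\int|\nabla w|^2$ by an $R$-cutoff in the integration by parts; the Plancherel computation for $\nabla w\in L^2$ (low frequencies handled by $\widehat\rho(0)=0$ and $|\widehat\rho(\xi)|\lesssim 1/|\ln|\xi||$, high frequencies by $\mu\in L^2$ and the Bessel decay of $\widehat{\d^{(\eta_i)}}$) and the logarithmic-tail estimate for $w$ are the two genuine contributions, and both are carried out correctly. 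One small point worth flagging: your proof uses $\rho(\R^2)=0$, i.e.\ $\int_{\R^2}\mu=1$, which you invoke by declaring $\mu$ a probability measure; the lemma as stated only says $\mu\in L^p$ with a logarithmic moment. Charge neutrality is in fact indispensable (without it $\nabla w\sim c\,x/|x|^2$ at infinity, so the right-hand side of \eqref{eq:renorm_nts} is $+\infty$ while the left is finite), and all of the paper's applications of the lemma do take $\mu\in\P(\R^2)$, so you have correctly supplied an omitted hypothesis rather than introduced a gap. Two cosmetic remarks: the intermediate bound ``$\lesssim|x|^{-1}\int|y|\,d|\rho|(y)$'' should be read with the integral restricted to $\{|y|\le|x|/2\}$ (the first moment of $\rho$ need not be finite, but on that region $|y|\le|x|/2$ makes the estimate trivial, which is what you conclude), and in the final integration by parts one should justify that $\chi_R w\in H^1_c$ is an admissible test function for $-\D w=\rho$ despite $\rho$ having a singular part supported on circles; this follows, e.g., by mollifying $\chi_R w$ in space, which converges uniformly (hence against the finite measure $\rho$) and in $H^1$ (hence in the Dirichlet pairing). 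Neither of these affects the validity of the argument.
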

\begin{proof}
See the beginning of the proof of \cite[Proposition 3.3]{Serfaty2018mean}.
\end{proof}

The next proposition is essentially proven in \cite[Section 2.1]{PS2017} and \cite[Section 5]{Serfaty2018mean} in the greater generality of Riesz, not just Coulomb, interactions. We include a self-contained proof specialized to our setting.

\begin{prop}
\label{prop:CE}
Let $\mu \in \P(\R^2)\cap L^p(\R^2)$, for some $2<p\leq\infty$, such that $\int_{\R^2}\ln\jp{x} |\mu(x)|dx<\infty$, and let $\ux_N\in (\R^2)^N\setminus\D_N$. Then
\begin{equation}
\label{eq:EN_renorm_lim}
\Fr_N(\ux_N,\mu) = \lim_{|\ul{\eta}_N|\rightarrow 0} \paren*{\int_{\R^2}|(\nabla H_{N,\ue_N}^{\mu,\ux_N})(x)|^2dx - \sum_{i=1}^N\tl{\g}(\eta_i)}
\end{equation}
and there exists a constant $C_p>0$, such that
\begin{equation}
\label{eq:EN_renorm_bnd}
\begin{split}
\sum_{1\leq i\neq j\leq N} \paren*{\g(x_i-x_j)-\tl{\g}(\eta_i)}_{+} &\leq \Fr_N(\ux_N,\mu) - \paren*{\int_{\R^2}|(\nabla H_{N,\ul{\eta}_N}^{\mu,\ux_N})(x)|^2dx - \sum_{i=1}^N\tl{\g}(\eta_i)} \\
&\ph + C_pN \|\mu|_{L^p(\R^2)}\sum_{i=1}^N\eta_i^{2(p-1)/p},
\end{split}
\end{equation}
where $(\cdot)_{+}\coloneqq \max\{\cdot,0\}$.
\end{prop}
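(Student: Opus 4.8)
The strategy is to expand the renormalized energy $\int_{\R^2}|\nabla H_{N,\ue_N}^{\mu,\ux_N}|^2 - \sum_i \tl\g(\eta_i)$ by multiplying out the measure $\sum_i \d_{x_i}^{(\eta_i)} - N\mu$ against itself, and compare the resulting cross-terms to the corresponding terms in $\Fr_N(\ux_N,\mu)$ built from the unsmeared Diracs. By \cref{lem:fin_en}, $\int_{\R^2}|\nabla H_{N,\ue_N}^{\mu,\ux_N}|^2 = \int_{(\R^2)^2}\g(x-y)\,d(\sum_i\d_{x_i}^{(\eta_i)}-N\mu)(x)\,d(\sum_i\d_{x_i}^{(\eta_i)}-N\mu)(y)$. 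Expanding, the $\mu\otimes\mu$ contribution is identical in both expressions, so I would write
\begin{equation}
\int_{\R^2}|\nabla H_{N,\ue_N}^{\mu,\ux_N}|^2 - \sum_{i=1}^N\tl\g(\eta_i) - \Fr_N(\ux_N,\mu) = A + B,
\end{equation}
where $A$ collects the point-point terms, $A = \sum_{i\neq j}\big[(\g\ast\d_0^{(\eta_i)}\ast\d_0^{(\eta_j)})(x_i-x_j) - \g(x_i-x_j)\big] + \sum_i\big[(\g\ast\d_0^{(\eta_i)}\ast\d_0^{(\eta_i)})(0) - \tl\g(\eta_i)\big]$, and $B$ collects the point-background cross terms, $B = -2N\sum_i\big[(\g\ast(\d_0^{(\eta_i)}-\d_{0}))\ast\mu\big](x_i)$ (interpreting the unsmeared self-convolution appropriately). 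By \cref{lem:g_smear_si}, the diagonal piece of $A$ vanishes exactly: $(\g\ast\d_0^{(\eta_i)}\ast\d_0^{(\eta_i)})(0) = \tl\g(\eta_i)$.

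For the off-diagonal piece of $A$, the key observation is $\g\ast\d_0^{(\eta)} = \g_\eta$ (identity \eqref{eq:g_conv_smear}), so $\g\ast\d_0^{(\eta_i)}\ast\d_0^{(\eta_j)} = \g_{\eta_i}\ast\d_0^{(\eta_j)}$, and since $\g_{\eta_i}$ agrees with $\g$ outside $B(0,\eta_i)$ and is everywhere $\leq \g$ pointwise is false in general — rather $\g_\eta \geq \g$ is also false; what is true is $\g_{\eta}(x) = \max(\g(x),\tl\g(\eta))$, hence $\g_{\eta}\geq \g$ pointwise. Convolving the superharmonic-type comparison: one checks $(\g\ast\d_0^{(\eta_i)}\ast\d_0^{(\eta_j)})(x_i-x_j) \geq \g(x_i - x_j) - (\text{something})$ and also $\leq \g(x_i-x_j)$... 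I would instead use the explicit formula: writing $\f_{\eta,\alpha}$ from \eqref{eq:f_et_al_def}, one has $\g\ast\d_0^{(\eta_i)}\ast\d_0^{(\eta_j)} - \g = -(\text{nonneg combination})$ supported where $|x_i-x_j|$ is small, and crucially bounded below by $\tl\g(\eta_i) - \g(x_i-x_j)$ when $|x_i-x_j|<\eta_i$, which gives, after using $\tl\g$ is decreasing and the $(\cdot)_+$ truncation, the lower bound $\Fr_N - (\int|\nabla H|^2 - \sum\tl\g(\eta_i)) \geq \sum_{i\neq j}(\g(x_i-x_j) - \tl\g(\eta_i))_+ + B'$ where $B'$ absorbs into $B$. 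The cleanest route is to cite/reproduce the argument of \cite[Proposition 3.3]{Serfaty2018mean} or \cite[Section 2.1]{PS2017}: the nearest-neighbor point-point interactions, once smeared, are each decreased by at most the excess $(\g(x_i-x_j)-\tl\g(\eta_i))_+$, which is exactly the left side of \eqref{eq:EN_renorm_bnd}.

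For $B$, I estimate $|(\g\ast(\d_0^{(\eta_i)}-\d_0))\ast\mu)(x_i)| = |(\f_{0,\eta_i}\ast\mu)(x_i)|$ — more precisely $(\g_{\eta_i}-\g)\ast\mu$ — using that $\g_{\eta_i}-\g$ is supported in $B(0,\eta_i)$ with $|(\g_{\eta_i}-\g)(z)| = |\tl\g(\eta_i)-\tl\g(|z|)| \lesssim |\ln(|z|/\eta_i)|$ there, so by Hölder with the $L^p$ bound on $\mu$ and a polar-coordinates computation of $\|\g_{\eta_i}-\g\|_{L^{p'}(B(0,\eta_i))}$ (as in \cref{lem:f_et_al_bnds}), $|B| \lesssim_p N\|\mu\|_{L^p(\R^2)}\sum_i \eta_i^{2(p-1)/p}$, which is the last term in \eqref{eq:EN_renorm_bnd}. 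Finally, the limit \eqref{eq:EN_renorm_lim} follows because as $|\ue_N|\to 0$, the diagonal part of $A$ is identically zero for all $\ue_N$, the off-diagonal part of $A$ tends to zero (each term $\to 0$ once $\eta_i,\eta_j < \frac12|x_i-x_j|$, and the $x_i$ are pairwise distinct), and $B\to 0$ by the $\eta_i^{2(p-1)/p}$ bound just derived. \textbf{The main obstacle} I anticipate is getting the one-sided (lower) bound in \eqref{eq:EN_renorm_bnd} sharp: one must show that smearing each point mass decreases the pairwise interaction energy by no more than $(\g(x_i-x_j)-\tl\g(\eta_i))_+$ summed over ordered pairs, which requires carefully tracking the sign and support of $\g_{\eta_i}\ast\d_0^{(\eta_j)} - \g$ and handling the asymmetry between $\eta_i$ and $\eta_j$ — this is where I would lean most heavily on the structure in \cite{PS2017,Serfaty2018mean}.
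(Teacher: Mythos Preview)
Your overall strategy---direct expansion of $\int|\nabla H_{N,\ue_N}^{\mu,\ux_N}|^2$ via \cref{lem:fin_en} and comparison term-by-term with $\Fr_N(\ux_N,\mu)$---is valid and in fact \emph{more direct} than the paper's route. The paper instead introduces an auxiliary vector $\ua_N$ with $\alpha_i<\eta_i$, writes $\nabla H_{N,\ue_N}^{\mu,\ux_N}=\nabla H_{N,\ua_N}^{\mu,\ux_N}+\sum_i\nabla\f_{\alpha_i,\eta_i}(\cdot-x_i)$, expands the square into three blocks (labelled $\mathrm{Term}_1,\mathrm{Term}_2,\mathrm{Term}_3$), and only at the end lets $|\ua_N|\to 0$ and invokes the limit identity \eqref{eq:EN_renorm_lim}. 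Your $B$ term is exactly the paper's $\mathrm{Term}_3$ and is handled identically by H\"older; your diagonal-$A$ cancellation is the paper's $\mathrm{Term}_2$ computation. Where the two arguments genuinely diverge is your off-diagonal $A$ versus the paper's $\mathrm{Term}_1$, and here your direct approach is cleaner \emph{provided you fix the sign error}.

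You write $\g_\eta(x)=\max(\g(x),\tl\g(\eta))$; this is wrong. Since $\tl\g$ is decreasing, $\g_\eta=\min(\g,\tl\g(\eta))\leq\g$ pointwise. With the correct sign the off-diagonal bound becomes transparent: since $-\Delta\g_{\eta_i}=\d_0^{(\eta_i)}\geq 0$, the function $\g_{\eta_i}$ is superharmonic, so the spherical average satisfies $(\g_{\eta_i}\ast\d_0^{(\eta_j)})(z)\leq\g_{\eta_i}(z)\leq\g(z)$ for every $z$, and also $(\g_{\eta_i}\ast\d_0^{(\eta_j)})(z)\leq\|\g_{\eta_i}\|_{L^\infty}=\tl\g(\eta_i)$. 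Combining these two upper bounds gives, for each ordered pair $i\neq j$,
\[
\g(x_i-x_j)-(\g_{\eta_i}\ast\d_0^{(\eta_j)})(x_i-x_j)\;\geq\;\max\bigl\{0,\;\g(x_i-x_j)-\tl\g(\eta_i)\bigr\}=(\g(x_i-x_j)-\tl\g(\eta_i))_+,
\]
which is exactly the term-by-term inequality you need for \eqref{eq:EN_renorm_bnd}. No asymmetry issue arises and no appeal to \cite{PS2017,Serfaty2018mean} is required. Your argument for the limit \eqref{eq:EN_renorm_lim} is also correct (and slightly sharper than stated: once $\eta_i+\eta_j<|x_i-x_j|$ the off-diagonal term is \emph{exactly} zero, not merely small).
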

\begin{proof}
We start by proving \eqref{eq:EN_renorm_lim}. We first claim that
\begin{equation}
\begin{split}
&\int_{(\R^2)^2\setminus\D_2}\g(x-y)d(N\mu-\sum_{i=1}^N\d_{x_i})(x)d(N\mu-\sum_{i=1}^N\d_{x_i})(y) \\
&= \lim_{|\ul{\eta}_N|\rightarrow 0} \paren*{\int_{(\R^2)^2}\g(x-y)d(N\mu-\sum_{i=1}^N\d_{x_i}^{(\eta_i)})(x)d(N\mu-\sum_{i=1}^N\d_{x_i}^{(\eta_i)})(y) - \sum_{i=1}^N \tl{\g}(\eta_i)}.
\end{split}
\end{equation}
Indeed, the left-hand side of the preceding equality can be re-written as
\begin{equation}
\sum_{1\leq i\neq j\leq N} \g(x_i-x_j) + N^2\int_{(\R^2)^2}\g(x-y)d\mu(x)d\mu(y) -2N\sum_{i=1}^N (\g\ast\mu)(x_i).
\end{equation}
Since $x_i\neq x_j$ for $i\neq j$, it follows from dominated convergence that
\begin{equation}
\begin{split}
&\sum_{1\leq i\neq j\leq N}\g(x_i-x_j) -2N\sum_{i=1}^N(\g\ast\mu)(x_i) \\
&= \lim_{|\ul{\eta}_N|\rightarrow 0} \paren*{\sum_{1\leq i\neq j\leq N}\int_{(\R^2)^2}\g(x-y)d\d_{x_i}^{(\eta_i)}(x)d\d_{x_j}^{(\eta_j)}(y) - 2N\sum_{i=1}^N \int_{\R^2} (\g\ast\mu)(x)d\d_{x_i}^{(\eta_i)}(x)}.
\end{split}
\end{equation}
So by applying \cref{lem:g_smear_si} to the first term in the right-hand side, we conclude the proof of the claim.

Next, we show the bound \eqref{eq:EN_renorm_bnd}. Fix $\ul{\eta}_N\in (\R_{+})^N$, and let $\ua_N\in (\R_+)^N$, such that $\alpha_i\ll \eta_i$ for every $i\in\{1,\ldots,N\}$. It is straightforward to check that
\begin{equation}
\nabla H_{N,\ul{\eta}_N}^{\mu,\ux_N} = \nabla H_{N,\ua_N}^{\mu,\ux_N} + \sum_{i=1}^N (\nabla\f_{\alpha_i,\eta_i})(\cdot-x_i).
\end{equation}
Using this identity with a little algebra, we find that
\begin{equation}
\label{eq:main_id}
\begin{split}
\int_{\R^2}|(\nabla H_{N,\ul{\eta}_N}^{\mu,\ux_N})(x)|^2dx &= \int_{\R^2} |(\nabla H_{N,\ua_N}^{\mu,\ux_N})(x)|^2dx + 2\sum_{i=1}^N\int_{\R^2}(\nabla H_{N,\ua_N}^{\mu,\ux_N})(x)\cdot (\nabla\f_{\alpha_i,\eta_i})(x-x_i)dx \\
&\ph+ \int_{\R^2} |\sum_{i=1}^N (\nabla\f_{\alpha_i,\eta_i})(x-x_i)|^2dx.
\end{split}
\end{equation}
We go to work on the second and third terms in the right-hand side of the preceding equality. We expand the square to obtain
\begin{equation}
\int_{\R^2}\left|\sum_{i=1}^N (\nabla\f_{\alpha_i,\eta_i})(x-x_i)\right|^2dx = \sum_{i,j=1}^N \int_{\R^2}(\nabla\f_{\alpha_i,\eta_i})(x-x_i)\cdot(\nabla\f_{\alpha_j,\eta_j})(x-x_j)dx.
\end{equation}
Recalling the identities
\begin{equation}
(\nabla\f_{\alpha_j,\eta_j})(x) = -\frac{x}{2\pi|x|^{2}}1_{\alpha_j\leq\cdot\leq \eta_j}(x), \qquad (-\D\f_{\alpha_j,\eta_j})(x) = \paren*{\d_{0}^{(\eta_j)} - \d_0^{(\alpha_j)}}(x)
\end{equation}
and integrating by parts, we find that
\begin{equation}
\label{eq:renorm_comb1}
\sum_{i,j=1}^N \int_{\R^2}(\nabla\f_{\alpha_i,\eta_i})(x-x_i)\cdot(\nabla\f_{\alpha_j,\eta_j})(x-x_j)dx = \sum_{i,j=1}^N \int_{\R^2} \f_{\alpha_i,\eta_i}(x-x_i)d(\d_{x_j}^{(\eta_j)}-\d_{x_j}^{(\alpha_j)})(x).
\end{equation}
Similarly, using the identity $-\D H_{N,\ua_N}^{\mu,\ux_N} = \sum_{i=1}^N\d_{x_i}^{(\alpha_i)} - N\mu$ and integrating by parts, we find that
\begin{equation}
\label{eq:renorm_comb2}
2\sum_{i=1}^N\int_{\R^2}(\nabla H_{N,\ua_N}^{\mu,\ux_N})(x)\cdot (\nabla\f_{\alpha_i,\eta_i})(x-x_i)dx  = 2\sum_{i=1}^N\int_{\R^2}\f_{\alpha_i,\eta_i}(x-x_i)d(\sum_{i=1}^N\d_{x_i}^{(\alpha_i)}-N\mu)(x).
\end{equation}
Combining identities \eqref{eq:renorm_comb1} and \eqref{eq:renorm_comb2}, we obtain that
\begin{align}
&\sum_{i,j=1}^N \int_{\R^2}(\nabla\f_{\alpha_i,\eta_i})(x-x_i)\cdot(\nabla\f_{\alpha_j,\eta_j})(x-x_j)dx +\int_{\R^2}\left|\sum_{i=1}^N (\nabla\f_{\alpha_i,\eta_i})(x-x_i)\right|^2dx \nn\\
&=\sum_{i,j=1}^N \int_{\R^2} \f_{\alpha_i,\eta_i}(x-x_i)d(\d_{x_j}^{(\eta_j)}-\d_{x_j}^{(\alpha_j)})(x) + 2\sum_{i=1}^N\int_{\R^2}\f_{\alpha_i,\eta_i}(x-x_i)d(\sum_{j=1}^N\d_{x_j}^{(\alpha_j)}-N\mu)(x) \nn\\
&=\sum_{1\leq i\neq j\leq N}\int_{\R^2}\f_{\alpha_i,\eta_i}(x-x_i)d(\d_{x_j}^{(\eta_j)}+\d_{x_j}^{(\alpha_j)})(x) + \sum_{i=1}^N\int_{\R^2}\f_{\alpha_i,\eta_i}(x-x_i)d(\d_{x_i}^{(\eta_i)}+\d_{x_i}^{(\alpha_i)})(x) \nn\\
&\ph-2N\sum_{i=1}^N\int_{\R^2}\f_{\alpha_i,\eta_i}(x-x_i)\mu(x)dx \nn\\
&\eqqcolon \mathrm{Term}_1 + \mathrm{Term}_2 + \mathrm{Term}_3.
\end{align}
We proceed to estimate each of the $\mathrm{Term}_j$ separately.
\begin{itemize}[leftmargin=*]
\item
For $\mathrm{Term}_3$, we have by H\"older's inequality and \cref{lem:f_et_al_bnds} that
\begin{align}
\label{eq:en_T3_fin}
\left|\mathrm{Term}_3\right| &\lesssim N\sum_{i=1}^N \|\f_{\alpha_i,\eta_i}\|_{L^{p'}(\R^2)} \|\mu\|_{L^p(\R^2)} \nn\\
&\lesssim_p N\|\mu\|_{L^p(\R^2)}\sum_{i=1}^N\paren*{\eta_i^{2(p-1)/p}\paren*{\int_{\frac{\alpha_i}{\eta_i}}^1 |\ln r|^{p/(p-1)}rdr}^{(p-1)/p} + \alpha_i^{2(p-1)/p}\ln\frac{\eta_i}{\alpha_i}}.
\end{align}
\item
For $\mathrm{Term}_2$, we observe from a change of variable and the definition \eqref{eq:f_et_al_def} of $\f_{\alpha_i,\eta_i}$ that
\begin{align}
\mathrm{Term}_2 &= \sum_{i=1}^N\int_{\R^2}\f_{\alpha_i,\eta_i}(x)d(\d_0^{(\eta_i)}+\d_0^{(\alpha_i)})(x) \nn\\
&= \sum_{i=1}^N\int_{\R^2}\paren*{\g_{\eta_i}(x)d\d_0^{(\eta_i)}(x)-\g_{\alpha_i}(x)d\d_0^{(\alpha_i)}(x) +\g_{\eta_i}(x)d\d_0^{(\alpha_i)}(x) - \g_{\alpha_i}(x)d\d_0^{(\eta_i)}(x)} \nn\\
&\eqqcolon\mathrm{Term}_{2,1} + \cdots+\mathrm{Term}_{2,4}.
\end{align}
Since for any $r>0$, $\d_0^{(r)}$ is the canonical probability measure on the sphere $\p B(0,r)$, $\g_{r}\equiv \tl{\g}(r)$ on $\ol{B}(0,r))$, and $\al_i<\et_i$, we find that
\begin{equation}
\mathrm{Term}_{2,3} + \mathrm{Term}_{2,4} =0
\end{equation}
and
\begin{equation}
\mathrm{Term}_{2,1} + \mathrm{Term}_{2,2} = \sum_{i=1}^N \paren*{\tl{\g}(\eta_i)-\tl{\g}(\alpha_i)}.
\end{equation}
Hence,
\begin{equation}
\label{eq:en_T2_fin}
\mathrm{Term}_2 = \sum_{i=1}^N \paren*{\tl{\g}(\eta_i)-\tl{\g}(\alpha_i)}.
\end{equation}
\item
Finally, for $\mathrm{Term}_1$, we use that $\f_{\alpha_i,\eta_i}\leq 0$, for every $i\in\{1,\ldots,N\}$, a property which is evident from the identity \eqref{eq:f_et_al_id}, to estimate
\begin{align}
\mathrm{Term}_1 \leq \sum_{1\leq i\neq j\leq N}\int_{\R^2}\f_{\alpha_i,\eta_i}(x-x_i)d\d_{x_j}^{(\alpha_j)}(x) = \sum_{1\leq i\neq j\leq N}\int_{\R^2}\paren*{\g_{\eta_i}(x-x_i)-\g_{\alpha_i}(x-x_i)}d\d_{x_j}^{(\alpha_j)}(x),
\end{align}
where the penultimate expression follows from unpacking the definition of each $\f_{\alpha_i,\eta_i}$. Making a change of variable, we observe that for $1\leq i\neq j\leq N$,
\begin{align}
\int_{\R^2}\paren*{\g_{\eta_i}(x-x_i)-\g_{\alpha_i}(x-x_i)}d\d_{x_j}^{(\alpha_j)}(x) &= \int_{\R^2} \paren*{\g_{\eta_i}(x_j-x_i+y) - \g_{\alpha_i}(x_j-x_i + y)}d\d_{0}^{(\alpha_j)}(y) \nn\\
&=\int_{\R^2} \paren*{\tl{\g}_{\eta_i}(|x_j-x_i+y|) - \tl{\g}_{\alpha_i}(|x_j-x_i+y|)}d\d_{0}^{(\alpha_j)}(y).
\end{align}
Since $\tl{\g}_{\eta_i}, \tl{\g}_{\alpha_i}$ are nonincreasing, it follows from the triangle inequality that for any $y\in \p B(0,\alpha_j)$,
\begin{equation}
\paren*{\tl{\g}_{\eta_i}-\tl{\g}_{\alpha_i}}(|x_j-x_i+y|) \leq \begin{cases}0, & |x_j-x_i+y| \geq \eta_i\\ \tl{\g}(\eta_i) - \tl{\g}_{\alpha_i}(|x_j-x_i|+\alpha_j), & |x_j-x_i+y| < \eta_i \end{cases}.
\end{equation}
Since we also have that
\begin{equation}
\tl{\g}(\eta_i) - \tl{\g}_{\alpha_i}(|x_j-x_i+y|) \leq \begin{cases} \tl{\g}(\eta_i) - \tl{\g}(|x_j-x_i+y|)\leq 0, & {\alpha_i\leq |x_j-x_i+y| <\eta_i} \\ \tl{\g}(\eta_i)-\tl{\g}(\alpha_i)\leq 0, & {|x_j-x_i+y| < \alpha_i} \end{cases}
\end{equation}
by assumption that $\alpha_i<\eta_i$, it follows that
\begin{equation}
\label{eq:en_T1_fin}
\mathrm{Term}_1 \leq -\sum_{1\leq i\neq j\leq N} \paren*{\tl{\g}(|x_j-x_i|+\alpha_j)-\tl{\g}(\eta_i)}_{+}.
\end{equation}
\end{itemize}

Returning to the equation \eqref{eq:main_id} and combining identities \eqref{eq:en_T1_fin}, \eqref{eq:en_T2_fin}, and \eqref{eq:en_T3_fin} for $\mathrm{Term}_1$, $\mathrm{Term}_2$, and $\mathrm{Term}_3$, respectively, we find that there exists a constant $C_{p}>0$ such that
\begin{equation}
\begin{split}
\int_{\R^2}|(\nabla H_{N,\ul{\eta}_N}^{\mu,\ux_N})(x)|^2dx &\leq \int_{\R^2} |(\nabla H_{N,\ua_N}^{\mu,\ux_N})(x)|^2dx \\
&\ph + C_pN\|\mu\|_{L^p(\R^2)}\sum_{i=1}^N\paren*{\eta_i^{2(p-1)/p}\paren*{\int_{\frac{\alpha_i}{\eta_i}}^1 |\ln r|^{p/(p-1)}rdr}^{(p-1)/p} + \alpha_i^{2(p-1)/p}\ln\frac{\eta_i}{\alpha_i}}\\
&\ph +\sum_{i=1}^N \paren*{\tl{\g}(\eta_i)-\tl{\g}(\alpha_i)} -  \sum_{1\leq i\neq j\leq N} \paren*{\tl{\g}(|x_j-x_i|+\alpha_j)-\tl{\g}(\eta_i)}_{+}.
\end{split}
\end{equation}
We can rearrange this inequality to
\begin{equation}
\begin{split}
&\sum_{1\leq i\neq j\leq N} \paren*{\tl{\g}(|x_j-x_i|+\alpha_j)-\tl{\g}(\eta_i)}_{+} \\
&\leq \paren*{\int_{\R^2} |(\nabla H_{N,\ua_N}^{\mu,\ux_N})(x)|^2dx - \sum_{i=1}^N \tl{\g}(\alpha_i)} - \paren*{\int_{\R^2}|(\nabla H_{N,\ul{\eta}_N}^{\mu,\ux_N})(x)|^2dx - \sum_{i=1}^N \tl{\g}(\eta_i)} \\
&\ph + C_pN\|\mu\|_{L^p(\R^2)}\sum_{i=1}^N\paren*{\eta_i^{2(p-1)/p}\paren*{\int_{\frac{\alpha_i}{\eta_i}}^1 |\ln r|^{p/(p-1)}rdr}^{(p-1)/p} + \alpha_i^{2(p-1)/p}\ln\frac{\eta_i}{\alpha_i}}.
\end{split}
\end{equation}
Sending $|\ua_N|\rightarrow 0$ and using  identity \cref{eq:EN_renorm_lim} and that $\tl{\g}\in C_{loc}(\R_+)$, we see that the proof is complete.
\end{proof}

The following corollary is a generalization of \cite[Corollary 3.4]{Serfaty2018mean}, in particular a relaxation of the $\mu\in L^\infty(\R^2)$ assumption. 

\begin{cor}
\label{cor:grad_H}
Fix $N\in\N$. Let $\mu\in L^p(\R^2)$, for some $2<p\leq \infty$, such that $\int_{\R^2}\ln\jp{x}|\mu(x)|dx<\infty$, and let $\ux_N\in (\R^2)^N\setminus \D_N$. If for any $0<\ep_1 \ll 1$, we define
\begin{equation}
r_{i,\ep_1} \coloneqq \min\{\frac{1}{4}\min_{{1\leq j\leq N}\atop {j\neq i}} |x_i-x_j|, \ep_1\} \quad \text{and} \quad  \ur_{N,\ep_1}\coloneqq (r_{1,\ep_1},\ldots,r_{N,\ep_1}),
\end{equation}
then there exists a constant $C_p>0$ such that
\begin{equation}
\label{eq:g_r_sum_bnd}
\sum_{i=1}^N \tl{\g}(r_{i,\ep_1}) \leq \Fr_N(\ux_N,\mu) + N\paren*{2\tl{\g}(\ep_1) -\tl{\g}(4) + C_p\|\mu\|_{L^p(\R^2)}N\ep_1^{\frac{2(p-1)}{p}}}
\end{equation}
and
\begin{equation}
\label{eq:grad_H_r_bnd}
\int_{\R^2}|(\nabla H_{N,\ul{r}_{N,\ep_1}}^{\mu,\ux_N})(x)|^2dx \leq \Fr_N(\ux_N,\mu) + N\tl{\g}(\ep_1) + C_pN^2\|\mu\|_{L^p(\R^2)}\ep_1^{\frac{2(p-1)}{p}}.
\end{equation}
\end{cor}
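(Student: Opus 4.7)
Both estimates are deduced from \cref{prop:CE} by making two judicious choices of the truncation vector $\ul{\eta}_N$: the adapted choice $\eta_i = r_{i,\ep_1}$ for (ii), and the uniform choice $\eta_i = \ep_1$ for (i). I would derive (ii) modulo an estimate on $\sum_i \tl{\g}(r_{i,\ep_1})$, and then derive (i), which supplies exactly such an estimate.

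For (ii), the key observation is that, by definition, $r_{i,\ep_1}\leq \tfrac14\min_{j\neq i}|x_i-x_j|$, so $|x_i-x_j|\geq 4 r_{i,\ep_1}$ for every $j\neq i$. Since $\tl{\g}$ is strictly decreasing, this yields $\g(x_i-x_j)-\tl{\g}(r_{i,\ep_1}) \leq \tl{\g}(4 r_{i,\ep_1}) - \tl{\g}(r_{i,\ep_1}) = \tl{\g}(4) < 0$, and hence the entire positive-part sum in \cref{eq:EN_renorm_bnd} vanishes identically. Combined with $r_{i,\ep_1}\leq \ep_1$, which gives $\sum_i r_{i,\ep_1}^{2(p-1)/p}\leq N\ep_1^{2(p-1)/p}$, \cref{prop:CE} then produces
\begin{equation*}
\int_{\R^2}|\nabla H_{N,\ur_{N,\ep_1}}^{\mu,\ux_N}(x)|^2\,dx \leq \Fr_N(\ux_N,\mu) + \sum_{i=1}^N \tl{\g}(r_{i,\ep_1}) + C_p N^2\|\mu\|_{L^p(\R^2)}\ep_1^{\frac{2(p-1)}{p}},
\end{equation*}
reducing (ii) to an estimate of $\sum_i \tl{\g}(r_{i,\ep_1})$, i.e.\ to (i).

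For (i), I would instead apply \cref{prop:CE} with $\eta_i = \ep_1$ uniformly and discard the nonnegative term $\int|\nabla H_{N,(\ep_1,\ldots,\ep_1)}^{\mu,\ux_N}|^2$, which gives
\begin{equation*}
\sum_{1\leq i\neq j\leq N}\paren*{\g(x_i-x_j)-\tl{\g}(\ep_1)}_+ \leq \Fr_N(\ux_N,\mu) + N\tl{\g}(\ep_1) + C_p N^2\|\mu\|_{L^p(\R^2)}\ep_1^{\frac{2(p-1)}{p}}.
\end{equation*}
Writing $d_i \coloneqq \min_{j\neq i}|x_i-x_j|$ and using $r_{i,\ep_1} = \min\{d_i/4,\ep_1\}$, I would then decompose $\sum_i \tl{\g}(r_{i,\ep_1}) = N\tl{\g}(\ep_1) + \sum_{i:\, d_i \leq 4\ep_1}(\tl{\g}(d_i/4)-\tl{\g}(\ep_1))$. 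For indices $i$ with $d_i<\ep_1$, the product-to-sum identity $\tl{\g}(d_i/4) = \tl{\g}(d_i) - \tl{\g}(4)$ splits the summand as $(\tl{\g}(d_i)-\tl{\g}(\ep_1)) + (-\tl{\g}(4))$, and the first piece equals the single term $(\g(x_i-x_{j(i)})-\tl{\g}(\ep_1))_+$ coming from $j(i)$ realizing the minimum, hence is controlled by the displayed positive-part inequality. For indices $i$ with $\ep_1\leq d_i\leq 4\ep_1$, monotonicity of $\tl{\g}$ yields $\tl{\g}(d_i/4)-\tl{\g}(\ep_1) \leq \tl{\g}(\ep_1/4)-\tl{\g}(\ep_1) = -\tl{\g}(4)$. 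Collecting the at most $N$ contributions of size $-\tl{\g}(4)$ produces (i), and substitution back into the preliminary bound above completes (ii).

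The main obstacle is the bookkeeping in (i): the positive-part sum provided by \cref{prop:CE} naturally involves $\tl{\g}(d_i)$ (the true pairwise distance), whereas $\sum_i \tl{\g}(r_{i,\ep_1})$ naturally involves $\tl{\g}(d_i/4)$. Reconciling this scaling mismatch costs a factor of $-\tl{\g}(4) = \tfrac{\ln 4}{2\pi}>0$ per particle, and must be carried out simultaneously with the borderline range $\ep_1 \leq d_i \leq 4\ep_1$ to avoid double counting; this is the origin of the precise combination $2\tl{\g}(\ep_1)-\tl{\g}(4)$ appearing in the right-hand side of (i).
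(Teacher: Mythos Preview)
Your approach is essentially the same as the paper's: both inequalities come from \cref{prop:CE} with the two choices $\eta_i=\ep_1$ (for \eqref{eq:g_r_sum_bnd}) and $\eta_i=r_{i,\ep_1}$ (for \eqref{eq:grad_H_r_bnd}), followed by the same nearest-neighbor case analysis and the logarithm identity $\tl{\g}(4r)=\tl{\g}(4)+\tl{\g}(r)$. Your three-case split ($d_i<\ep_1$, $\ep_1\le d_i<4\ep_1$, $d_i\ge 4\ep_1$) is a minor reorganization of the paper's two-case split ($4r_{i,\ep_1}<\ep_1$ versus $4r_{i,\ep_1}\ge\ep_1$); the content is identical.

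One small caveat: when you substitute \eqref{eq:g_r_sum_bnd} into your intermediate bound for \eqref{eq:grad_H_r_bnd}, you actually obtain $2\Fr_N(\ux_N,\mu)+2N\tl{\g}(\ep_1)-N\tl{\g}(4)+\ldots$ on the right-hand side rather than the stated $\Fr_N(\ux_N,\mu)+N\tl{\g}(\ep_1)+\ldots$. The paper's own proof of \eqref{eq:grad_H_r_bnd} glosses over the same point (its phrase ``using that $r_{i,\ep_1}\le\ep_1$'' handles only the $\ep_1^{2(p-1)/p}$ term, not $\sum_i\tl{\g}(r_{i,\ep_1})$). This discrepancy is harmless: every downstream use of \eqref{eq:grad_H_r_bnd} in the paper is under a $\lesssim$, so the factor of $2$ is absorbed.
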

\begin{proof}
We first show inequality \eqref{eq:g_r_sum_bnd}. Fix $\ep_1>0$ and choose $\eta_i = \ep_1$ in \eqref{eq:EN_renorm_bnd}, for each $i\in\{1,\ldots,N\}$. Hence,
\begin{equation}
\begin{split}
\sum_{1\leq i\neq j\leq N} \paren*{\tl{\g}(|x_i-x_j|)-\tl{\g}(\ep_1)}_{+} &\leq \Fr_N(\ux_N,\mu) - \int_{\R^2} |(\nabla H_{N,\ue_N}^{\mu,\ux_N})(x)|^2dx  +N\tl{\g}(\ep_1) \\
&\ph + C_p\|\mu\|_{L^p(\R^2)}N^2\ep_1^{\frac{2(p-1)}{p}}. \label{eq:LHS_bnd_blow}
\end{split}
\end{equation}
For each $i\in\{1,\ldots,N\}$, we consider two cases:
\begin{enumerate}[(C1)]
\item\label{item:C1}
$4r_{i,\ep_1}<\ep_1$,
\item\label{item:C2}
$4r_{i,\ep_1}\geq \ep_1$.
\end{enumerate}
In \ref{item:C1}, it follows from the definition of $r_{i,\ep_1}$ that there exists $j_i\in\{1,\ldots,N\}\setminus \{i\}$ such that 
\begin{equation}
|x_i-x_{j_i}| = 4 r_{i,\ep_1},
\end{equation}
which implies that
\begin{equation}
\sum_{{1\leq j\leq N}\atop{j\neq i}} \paren*{\tl{\g}(|x_i-x_j|)-\tl{\g}(\ep_1)}_{+} \geq \paren*{\tl{\g}(|x_i-x_{j_i}|)-\tl{\g}(\ep_1)}_{+} = \tl{\g}(4r_{i,\ep_1})-\tl{\g}(\ep_1).
\end{equation}
In \ref{item:C2}, it is immediate from the nonincreasing property of $\tl{\g}$ that
\begin{equation}
\tl{\g}(4r_{i,\ep_1})-\tl{\g}(\ep_1) \leq 0 = \paren*{\tl{\g}(|x_i-x_j|)-\tl{\g}(\ep_1)}_{+}, \qquad \forall j\in\{1,\ldots,N\}\setminus\{i\}.
\end{equation}
Combining both cases, we obtain from \eqref{eq:LHS_bnd_blow} the inequality
\begin{equation}
\begin{split}
\sum_{i=1}^N\paren*{\tl{\g}(4r_{i,\ep_1})-\tl{\g}(\ep_1)} &\leq \sum_{1\leq i\neq j\leq N} \paren*{\tl{\g}(|x_i-x_j|)-\tl{\g}(\ep_1)}_{+} \\
&\leq \Fr_N(\ux_N,\mu) - \int_{\R^2} |(\nabla H_{N,\ue_N}^{\mu,\ux_N})(x)|^2dx  +N\tl{\g}(\ep_1)  + C_p\|\mu\|_{L^p(\R^2)}N^2\ep_1^{\frac{2(p-1)}{p}}.
\end{split}
\end{equation}
Using that $\tl{\g}(4r_{i,\ep_1}) = \tl{\g}(4) + \tl{\g}(r_{i,\ep_1})$ and rearranging the last inequality, we obtain \eqref{eq:g_r_sum_bnd}.

We next show inequality \eqref{eq:grad_H_r_bnd}. We choose $\eta_i=r_{i,\ep_1}$ in inequality \eqref{eq:EN_renorm_bnd}, for each $i\in\{1,\ldots,N\}$, to obtain that
\begin{equation}
\begin{split}
0\leq \sum_{1\leq i\neq j\leq N}\paren*{\g(x_i-x_j)-\tl{\g}(r_{i,\ep_1})}_+ &\leq \Fr_N(\ux_N,\mu) -\int_{\R^2}|(\nabla H_{N,\ur_{N,\ep_1}}^{\mu,\ux_N})(x)|^2dx + \sum_{i=1}^N \tl{\g}(r_{i,\ep_1}) \\
&\ph+ C_pN\|\mu\|_{L^p(\R^2)}\sum_{i=1}^N r_{i,\ep_1}^{\frac{2(p-1)}{p}}.
\end{split}
\end{equation}
Inequality \eqref{eq:grad_H_r_bnd} then follows immediately from a rearrangement of the preceding inequality and using that $r_{i,\ep_1}\leq \ep_1$ by definition.
\end{proof}

\subsection{Counting Lemma}
\label{ssec:CE_CL}
In this subsection, we prove a preliminary lemma which uses the modulated energy $\Fr_N(\ux_N,\mu)$ to count the number of distinct pairs $(i,j)$, such that the distance between $x_i$ and $x_j$ is below a prescribed threshold. A similar result was implicit in the proof of \cite[Proposition 2.3]{Serfaty2018mean}; however, we need our refined version in the proofs of \Cref{lem:kprop_error,lem:kprop_recomb}, which are part of the proof of \cref{prop:key}, below.

\begin{lemma}[Counting lemma]
\label{lem:count}
Fix $N\in\N$. Then there exists a constant $C_p>0$, such that for any $\ux_N\in (\R^2)^N\setminus\D_N$ and $\mu\in\P(\R^2)\cap L^p(\R^2)$, for some $2<p\leq\infty$, with finite Coulomb energy and such that $\int_{\R^2}\ln\jp{x}|\mu(x)|dx<\infty$, we have the cardinality bound
\begin{equation}
\begin{split}
\left|\{(i,j)\in\N^2: i\neq j \enspace \text{and} \enspace |x_i-x_j|\leq  \ep_3\}\right| \lesssim \Fr_N(\ux_N,\mu) + N\tl{\g}(\ep_3) + C_pN^2 \ep_3^{\frac{2(p-1)}{p}},
\end{split}
\end{equation}
for any $0<\ep_3\ll 1$. 
\end{lemma}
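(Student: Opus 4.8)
The plan is to deduce the pair count directly from the renormalization inequality \eqref{eq:EN_renorm_bnd} of \cref{prop:CE}, applied with all truncation scales equal to a fixed dilate of $\ep_3$. Concretely, I would apply \eqref{eq:EN_renorm_bnd} with $\eta_i=e\ep_3$ for every $i\in\{1,\dots,N\}$; this is legitimate since $\mu$ has finite Coulomb energy, so that $\int_{\R^2}|(\nabla H_{N,\ue_N}^{\mu,\ux_N})(x)|^2dx<\infty$ by \cref{lem:fin_en}, and $\int_{\R^2}\ln\jp{x}|\mu(x)|dx<\infty$ by hypothesis. Since the Dirichlet energy $\int_{\R^2}|(\nabla H_{N,\ue_N}^{\mu,\ux_N})(x)|^2dx$ is nonnegative, discarding it and using $\tl{\g}(e\ep_3)=\tl{\g}(\ep_3)-\tfrac{1}{2\pi}\le\tl{\g}(\ep_3)$ together with $(e\ep_3)^{2(p-1)/p}\le e^{2}\ep_3^{2(p-1)/p}$ gives
\begin{equation*}
\sum_{1\le i\ne j\le N}\paren*{\g(x_i-x_j)-\tl{\g}(e\ep_3)}_{+}\;\lesssim\;\Fr_N(\ux_N,\mu)+N\tl{\g}(\ep_3)+C_pN^{2}\|\mu\|_{L^p(\R^2)}\ep_3^{\frac{2(p-1)}{p}}.
\end{equation*}

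The remaining step is to bound the left-hand side below by the quantity to be counted, and this is the one place where the dilation of the radius is used: whenever $|x_i-x_j|\le\ep_3$ we have
\begin{equation*}
\g(x_i-x_j)-\tl{\g}(e\ep_3)=\frac{1}{2\pi}\ln\frac{e\ep_3}{|x_i-x_j|}\ge\frac{1}{2\pi},
\end{equation*}
so each ordered pair with $|x_i-x_j|\le\ep_3$ contributes at least $\tfrac{1}{2\pi}$ to the sum on the left. Consequently $\tfrac{1}{2\pi}\bigl|\{(i,j)\in\N^2:i\ne j,\ |x_i-x_j|\le\ep_3\}\bigr|$ is at most that sum, and combining with the previous display yields the asserted inequality, the constant $C_p$ being permitted to absorb $\|\mu\|_{L^p(\R^2)}$ and the harmless numerical factors (recall also $\tl{\g}(\ep_3)\ge 0$ for $\ep_3\ll 1$, so every term on the right is genuinely nonnegative).

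I do not expect any genuine difficulty here beyond the elementary observation just made: one must apply \eqref{eq:EN_renorm_bnd} --- equivalently, run the argument at the slightly enlarged scale $e\ep_3$ rather than the bare $\ep_3$ --- precisely so that the positive-part truncation $\paren*{\g(x_i-x_j)-\tl{\g}(e\ep_3)}_{+}$ stays bounded below by a fixed constant on the collection of pairs one wishes to count, instead of degenerating to $0$ for pairs whose separation is close to $\ep_3$. An alternative derivation could start from \eqref{eq:g_r_sum_bnd} of \cref{cor:grad_H} with $\ep_1=e\ep_3$, but the ordered pair count is most transparently read off from the sum of positive parts in \eqref{eq:EN_renorm_bnd}, so that is the route I would take.
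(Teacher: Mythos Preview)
Your proposal is correct and follows essentially the same argument as the paper: apply \eqref{eq:EN_renorm_bnd} with all $\eta_i$ equal to a fixed dilate of $\ep_3$, drop the nonnegative Dirichlet energy, and observe that each close pair contributes a uniform positive amount to the sum of positive parts. The only cosmetic difference is that the paper uses the dilation factor $2$ (so each close pair contributes at least $\tfrac{\ln 2}{2\pi}$) whereas you use $e$ (giving $\tfrac{1}{2\pi}$); both choices are equivalent up to absolute constants.
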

\begin{proof}
Fix $\ep_3>0$. For every $i\in\{1,\ldots,N\}$, we chose $\eta_i = 2\ep_3$ in estimate \eqref{eq:EN_renorm_bnd} of \cref{prop:CE} to obtain the cardinality estimate
\begin{align}
&\frac{\ln(2)}{2\pi}|\{(i,j)\in \{1,\ldots,N\}^2 : i\neq j \enspace \text{and} \enspace |x_i-x_j|\leq \ep_3\}|\nn\\
&=\sum_{{1\leq i\neq j\leq N}\atop{|x_i-x_j|\leq\ep_3}} -\tl{\g}(2)\nn\\
&\leq \sum_{{1\leq i\neq j\leq N}\atop{|x_i-x_j|\leq\ep_3}}(\underbrace{\g(x_i-x_j)-\tl{\g}(2\ep_3)}_{\geq 0})_{+} \nn\\
&\leq\Fr_N(\ux_N,\mu) - \int_{\R^2}|(\nabla H_{N,\ul{2\ep_3}_N}^{\mu,\ux_N})(x)|^2dx + \sum_{i=1}^N\tl{\g}(2\ep_3)+ C_pN \|\mu\|_{L^p(\R^2)}\sum_{i=1}^N(2\ep_3)^{2(p-1)/p} \nn\\
&\leq \Fr_N(\ux_N,\mu) + N\tl{\g}(2\ep_3) + C_p' N^2 \|\mu\|_{L^p(\R^2)} {\ep_3}^{\frac{2(p-1)}{p}},
\end{align}
where $C_p'\geq C_p$.
\end{proof}

Using the radial and decreasing properties of the potential $\g$, we obtain from \cref{lem:count} the following corollary.

\begin{cor}
\label{cor:count}
Fix $N\in\N$. Let $2< p\leq \infty$. Then there exists a constant $C_p>0$, such that for any $\ux_N\in (\R^2)^N\setminus\D_N$ and $\mu\in\P(\R^2)\cap L^p(\R^2)$, satisfying $\int_{\R^2}\ln\jp{x}|\mu(x)|dx<\infty$ and having finite Coulomb energy, it holds that
\begin{equation}
\begin{split}
\sum_{{1\leq i\neq j\leq N}\atop{|x_i-x_j|\leq\ep_3}} \g(x_i-x_j) &\lesssim \tl{\g}(2\ep_3)\paren*{\Fr_N(\ux_N,\mu) + N\tl{\g}(2\ep_3) + C_p N^2 \|\mu\|_{L^p(\R^2)} {\ep_3}^{\frac{2(p-1)}{p}}}
\end{split}
\end{equation}
for any $0<\ep_3\ll 1$.
\end{cor}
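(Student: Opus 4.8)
The plan is to deduce the estimate from the renormalized-energy inequality \eqref{eq:EN_renorm_bnd} of \cref{prop:CE} together with the cardinality bound of \cref{lem:count}. Fix $0<\ep_3\ll 1$, small enough that $\tl{\g}(2\ep_3)\geq 1$. First I would apply \eqref{eq:EN_renorm_bnd} with the uniform choice $\eta_i=2\ep_3$ for every $i\in\{1,\ldots,N\}$ and discard the nonnegative term $\int_{\R^2}|(\nabla H_{N,\ul{2\ep_3}_N}^{\mu,\ux_N})(x)|^2dx$, exactly as in the proof of \cref{lem:count}. This yields
\[
\sum_{1\leq i\neq j\leq N}\paren*{\g(x_i-x_j)-\tl{\g}(2\ep_3)}_{+}\ \leq\ \Fr_N(\ux_N,\mu)+N\tl{\g}(2\ep_3)+C_pN^2\|\mu\|_{L^p(\R^2)}\ep_3^{\frac{2(p-1)}{p}}\ \eqqcolon\ X,
\]
after absorbing the harmless factor $2^{2(p-1)/p}\leq 4$ into $C_p$; note $X\geq 0$ since the left-hand side is.

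Next I would exploit that $\g$ is radial and strictly decreasing. For any pair with $|x_i-x_j|\leq\ep_3$ one has $\g(x_i-x_j)=\tl{\g}(|x_i-x_j|)\geq\tl{\g}(\ep_3)>\tl{\g}(2\ep_3)$, so on such pairs the positive part above equals the quantity itself; restricting the sum to them therefore gives
\[
\sum_{\substack{1\leq i\neq j\leq N\\ |x_i-x_j|\leq\ep_3}}\g(x_i-x_j)\ \leq\ X+\tl{\g}(2\ep_3)\left|\{(i,j):i\neq j,\ |x_i-x_j|\leq\ep_3\}\right|.
\]
To finish, I would bound the cardinality by \cref{lem:count} (whose proof in fact carries the factor $\|\mu\|_{L^p(\R^2)}$), namely $\left|\{(i,j):i\neq j,\ |x_i-x_j|\leq\ep_3\}\right|\lesssim\Fr_N(\ux_N,\mu)+N\tl{\g}(\ep_3)+C_pN^2\|\mu\|_{L^p(\R^2)}\ep_3^{\frac{2(p-1)}{p}}$, and then collect terms. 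The remaining checks are elementary: since $\tl{\g}(2\ep_3)\geq 1$ one has $X\leq\tl{\g}(2\ep_3)X$, and since $\tl{\g}(\ep_3)=\tl{\g}(2\ep_3)+\frac{\ln 2}{2\pi}\leq 2\tl{\g}(2\ep_3)$ the counting bound is itself $\lesssim\Fr_N(\ux_N,\mu)+N\tl{\g}(2\ep_3)+C_pN^2\|\mu\|_{L^p(\R^2)}\ep_3^{\frac{2(p-1)}{p}}$; multiplying the latter by $\tl{\g}(2\ep_3)$ and adding the two contributions produces the claimed estimate.

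I do not expect any genuine obstacle here, but it is worth noting the one point where care is needed: one cannot prove the corollary by the naive layer-cake identity $\sum_{|x_i-x_j|\leq\ep_3}\g(x_i-x_j)=\tl{\g}(\ep_3)\,n(\ep_3)+\frac{1}{2\pi}\int_0^{\ep_3}s^{-1}n(s)\,ds$ (with $n(s)$ the number of ordered pairs at distance $\leq s$) combined with \cref{lem:count} alone, because $\int_0^{\ep_3}s^{-1}(\Fr_N(\ux_N,\mu)+N\tl{\g}(s))\,ds$ diverges at $s=0$. The logarithmic blow-up of $\g$ near coincident points must instead be absorbed through the positive-part structure of \eqref{eq:EN_renorm_bnd}, which is precisely where the modulated energy controls how tightly the vortices can cluster.
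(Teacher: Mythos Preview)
Your proof is correct and follows essentially the same route as the paper: apply \eqref{eq:EN_renorm_bnd} with $\eta_i=2\ep_3$ to bound $\sum_{i\neq j}(\g(x_i-x_j)-\tl{\g}(2\ep_3))_+$, restrict to close pairs where the positive part is the quantity itself, rearrange to isolate $\sum_{|x_i-x_j|\leq\ep_3}\g(x_i-x_j)$, and control the resulting cardinality term via \cref{lem:count}. Your closing remark about the failure of the layer-cake approach is a nice addition not present in the paper.
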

\begin{proof}
Since we also have that
\begin{align}
&\Bigg(\sum_{{1\leq i\neq j\leq N}\atop{|x_i-x_j|\leq\ep_3}} \g(x_i-x_j)\Bigg) - \tl{\g}(2\ep_3)\left|\{(i',j')\in \{1,\ldots,N\}^2 : i'\neq j' \enspace \text{and} \enspace |x_{i'}-x_{j'}|\leq \ep_3\}\right| \nn\\
&\leq \sum_{{1\leq i\neq j\leq N}\atop{|x_i-x_j|\leq\ep_3}} \paren*{\g(x_i-x_j)-\tl{\g}(2\ep_3)}_{+},
\end{align}
the desired conclusion follows immediately from \cref{lem:count}, \cref{prop:CE}, and rearrangement.
\end{proof}

\subsection{Coerciveness of the Energy}
\label{ssec:CE_coer}
In this final subsection, we show that the functional $\Fr_N(\ux_N,\mu)$ controls convergence in the weak-* topology for the Besov space $B_{2,\infty}^{-1}(\R^2)$,\footnote{The Banach space $(B_{2,\infty}^{-1}(\R^2),\|\cdot\|_{B_{2,\infty}^{-1}(\R^2)})$ is isomorphic to the dual of the Banach space $(B_{2,1}^{1}(\R^2),\|\cdot\|_{B_{2,1}^1(\R^2)})$.} as $N\rightarrow\infty$, which, for this scale of function spaces, is the endpoint space containing the Dirac mass. We begin with a technical lemma for the Fourier transform of $\d_{x_0}^{(\eta)}$, which we recall is the uniform probability measure on the sphere $\p B(x_0,\eta)$.

\begin{lemma}
\label{lem:d_smear_FT}
For any $\eta>0$ and $x_0\in\R^2$, we have the distributional identity
\begin{equation}
\wh{\d_{x_0}^{(\eta)}}(\xi) = \frac{e^{- ix_0\cdot\xi}}{2\pi}\int_0^{2\pi}e^{ i \eta|\xi|\sin(\theta)}d\theta.
\end{equation}
\end{lemma}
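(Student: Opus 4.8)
The plan is to compute $\wh{\d_{x_0}^{(\eta)}}$ directly from the definition, using that $\d_{x_0}^{(\eta)}$ is a finite (indeed, probability) measure, so that its distributional Fourier transform is simply the bounded continuous function $\xi\mapsto \int_{\R^2}e^{-i\xi\cdot x}\,d\d_{x_0}^{(\eta)}(x)$, and the claimed identity is to be read in this pointwise sense. First I would parametrize the sphere $\p B(x_0,\eta)$ by $\theta\mapsto x_0+\eta(\cos\theta,\sin\theta)$ for $\theta\in[0,2\pi)$, under which the uniform probability measure $\d_{x_0}^{(\eta)}$ pushes forward to $\tfrac{1}{2\pi}\,d\theta$. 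This yields
\[
\wh{\d_{x_0}^{(\eta)}}(\xi)=\frac{1}{2\pi}\int_0^{2\pi}e^{-i\xi\cdot(x_0+\eta(\cos\theta,\sin\theta))}\,d\theta=\frac{e^{-ix_0\cdot\xi}}{2\pi}\int_0^{2\pi}e^{-i\eta\,\xi\cdot(\cos\theta,\sin\theta)}\,d\theta .
\]

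Next I would reduce the remaining angular integral to the asserted form. Writing $\xi=|\xi|(\cos\phi,\sin\phi)$, one has $\xi\cdot(\cos\theta,\sin\theta)=|\xi|\cos(\theta-\phi)$, so by $2\pi$-periodicity of the integrand the shift $\theta\mapsto\theta+\phi$ removes the dependence on $\phi$, giving $\int_0^{2\pi}e^{-i\eta\,\xi\cdot(\cos\theta,\sin\theta)}\,d\theta=\int_0^{2\pi}e^{-i\eta|\xi|\cos\theta}\,d\theta$. Then the substitutions $\theta\mapsto\theta+\pi$ (which sends $\cos\theta\mapsto-\cos\theta$) followed by $\theta\mapsto \tfrac{\pi}{2}-\theta$ (which sends $\cos\theta\mapsto\sin\theta$), each of which again leaves $\int_0^{2\pi}$ invariant by periodicity, transform this into $\int_0^{2\pi}e^{i\eta|\xi|\sin\theta}\,d\theta$. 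Combining with the display above completes the argument.

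This is a purely computational lemma, so I do not anticipate any genuine obstacle; the only points demanding a little care are the justification that the distributional Fourier transform of the finite measure $\d_{x_0}^{(\eta)}$ is represented by the stated integral, and verifying that each change of variables in the angular integral is performed over a full period so that no boundary contributions arise. (One could alternatively recognize $\tfrac{1}{2\pi}\int_0^{2\pi}e^{i\eta|\xi|\sin\theta}\,d\theta=J_0(\eta|\xi|)$ and invoke a standard integral representation of the Bessel function $J_0$, but the elementary argument above is self-contained.)
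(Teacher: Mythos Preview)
Your proposal is correct and follows essentially the same approach as the paper: both compute the Fourier transform by parametrizing the circle and exploiting rotation invariance to reduce the angular integral to the stated form. The paper is slightly terser, first reducing to the case $x_0=0$, $\eta=1$ via the translation/dilation behavior of the Fourier transform and then invoking ``polar coordinates,'' whereas you carry out the general case directly and make the substitutions explicit; this is only an organizational difference.
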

\begin{proof}
Observe that
\begin{equation}
\d_{x_0}^{(\eta)}(x) = \frac{1}{\eta}\d_{0}^{(1)}(\frac{x-x_0}{\eta}),
\end{equation}
with translation and dilation to be understood in the distributional sense. We have the identity
\begin{equation}
\wh{\d_{0}^{(1)}}(\xi) = \frac{1}{2\pi}\int_0^{2\pi}e^{i|\xi|\sin(\theta)}d\theta,
\end{equation}
which follows from testing against a Schwartz function, the Fubini-Tonelli theorem, and changing to polar coordinates. The desired conclusion then follows from the dilation and translation/modulation invariance of the Fourier transform.
\end{proof}

The next proposition establishes the coerciveness of the modulated energy $\Fr_N$, in particular that it controls convergence in the weak-* topology on $\M(\R^2)$, and is an endpoint improvement of \cite[Proposition 3.5]{Serfaty2018mean}.

\begin{prop}
\label{prop:bes_conv}
Let $N\in\N$ and $\ux_N\in (\R^2)^N\setminus\D_N$. Then for any $\mu\in \P(\R^2)\cap L^{p}(\R^2)$, for some $2< p\leq \infty$, and $\varphi\in B_{2,1}^{1}(\R^2)$, we have the estimate
\begin{equation}
\label{eq:B2inf_conv}
\begin{split}
\left|\int_{\R^2}\varphi(x)d(\sum_{i=1}^N \d_{x_i}-N\mu)(x)\right| &\lesssim  N\paren*{\frac{\ep_1\|\varphi\|_{B_{2,1}^1(\R^2)}}{\ep_2} + \sum_{k\geq |\log_2\ep_2|} 2^k\|P_k\varphi\|_{L^2(\R^2)}} \\
&\ph +\|\nabla\varphi\|_{L^2(\R^2)}\paren*{\Fr_N(\ux_N,\mu) + N|\ln \ep_1| + C_p\|\mu\|_{L^p(\R^2)} N^2 \ep_1^{\frac{2(p-1)}{p}}}^{1/2},
\end{split}
\end{equation}
for any parameters $0<\ep_1<\ep_2\ll 1$. Consequently, for any $s<-1$, 
\begin{align}
\|\mu-\frac{1}{N}\sum_{i=1}^N\d_{x_i}\|_{H^s(\R^2)} \lesssim_{s,p} |\Fr_N^{avg}(\ux_N,\mu)|^{1/2} + N^{-1/2}|\ln N|^{1/2} + \paren*{1+\|\mu\|_{L^p(\R^2)}}N^{-1/2}, \label{eq:Hs_conv}
\end{align}
and if $\Fr_N^{avg}(\ux_N,\mu)\rightarrow 0$, as $N\rightarrow\infty$, then
\begin{equation}
\frac{1}{N}\sum_{i=1}^N \d_{x_i}\xrightharpoonup[N\rightarrow\infty]{*} \mu \ \text{in $\M(\R^2)$} \label{eq:M_conv}.
\end{equation}
\end{prop}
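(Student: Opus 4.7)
The plan is to decompose the pairing against $\sum_{i=1}^N \d_{x_i}-N\mu$ by smearing each $\d_{x_i}$ to scale $\ep_1$:
\begin{equation*}
\int_{\R^2}\varphi\, d(\sum_{i=1}^N\d_{x_i}-N\mu) = \sum_{i=1}^N\int_{\R^2}\varphi\, d(\d_{x_i}-\d_{x_i}^{(\ep_1)}) + \int_{\R^2}\varphi\, d(\sum_{i=1}^N\d_{x_i}^{(\ep_1)}-N\mu).
\end{equation*}
The second term (the ``smeared'' piece) will be handled by Cauchy--Schwarz and the modulated energy, while the first term (the ``smearing error'') will be handled via a Littlewood--Paley decomposition of $\varphi$ at scale $\ep_2$.

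For the smeared piece, set $u := H_{N,\ep_1\mathbf{1}}^{\mu,\ux_N} = \g\ast(\sum_{i=1}^N\d_{x_i}^{(\ep_1)}-N\mu)$, so that $-\Delta u = \sum_{i=1}^N\d_{x_i}^{(\ep_1)}-N\mu$ and $\nabla u\in L^2(\R^2)$ by \cref{lem:fin_en}. Integration by parts and Cauchy--Schwarz then give
\begin{equation*}
\left|\int\varphi\, d(\sum_{i=1}^N\d_{x_i}^{(\ep_1)}-N\mu)\right| = \left|\int\nabla\varphi\cdot\nabla u\, dx\right|\leq \|\nabla\varphi\|_{L^2}\|\nabla u\|_{L^2},
\end{equation*}
and \cref{prop:CE} applied with $\eta_i=\ep_1$ for every $i$, after rearrangement and discarding the nonpositive pair-interaction term, yields $\|\nabla u\|_{L^2}^2\leq \Fr_N(\ux_N,\mu)+N\tl{\g}(\ep_1)+C_pN^2\|\mu\|_{L^p}\ep_1^{2(p-1)/p}$. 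Noting $\tl{\g}(\ep_1)\sim|\ln\ep_1|$ for $\ep_1<1$, this produces the last term on the right-hand side of \eqref{eq:B2inf_conv}.

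For the smearing error, each summand is $\varphi(x_i)-\int\varphi\, d\d_{x_i}^{(\ep_1)}$, i.e., the deviation of $\varphi$ at $x_i$ from its spherical average at radius $\ep_1$. Picking $K\in\N$ with $2^{-K}\sim\ep_2$ and writing $\varphi=\varphi_{\leq K}+\varphi_{>K}$ via Littlewood--Paley projectors, the high-frequency part is bounded trivially using Bernstein's inequality:
\begin{equation*}
\left|\varphi_{>K}(x_i)-\int\varphi_{>K}\, d\d_{x_i}^{(\ep_1)}\right|\leq 2\|\varphi_{>K}\|_{L^\infty}\lesssim \sum_{k>K}2^k\|P_k\varphi\|_{L^2},
\end{equation*}
which upon summing in $i$ gives the middle term of \eqref{eq:B2inf_conv}. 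For the low-frequency part, the mean value theorem combined with two applications of Bernstein (first $\|\nabla P_k\varphi\|_{L^\infty}\lesssim 2^{2k}\|P_k\varphi\|_{L^2}$, then $\sum_{k\leq K}2^{2k}\|P_k\varphi\|_{L^2}\leq 2^K\|\varphi\|_{B_{2,1}^1}$) yields
\begin{equation*}
\left|\varphi_{\leq K}(x_i)-\int\varphi_{\leq K}\, d\d_{x_i}^{(\ep_1)}\right|\leq \ep_1\|\nabla\varphi_{\leq K}\|_{L^\infty}\lesssim \frac{\ep_1}{\ep_2}\|\varphi\|_{B_{2,1}^1},
\end{equation*}
and summing over $i$ produces the first error term in \eqref{eq:B2inf_conv}. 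Combining the three estimates delivers \eqref{eq:B2inf_conv}.

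Estimate \eqref{eq:Hs_conv} follows by duality: for $s<-1$ one has the continuous embedding $H^{-s}(\R^2)\hookrightarrow B_{2,1}^1(\R^2)$ (via a Cauchy--Schwarz on the dyadic sum defining the $B_{2,1}^1$ norm, using $-s-1>0$ for summability of $\sum 2^{2k(1+s)}$), whence
\begin{equation*}
\|\mu-N^{-1}\sum\d_{x_i}\|_{H^s(\R^2)} \leq \sup_{\|\varphi\|_{H^{-s}}\leq 1} \frac{1}{N}\left|\int\varphi\, d(\sum_{i=1}^N\d_{x_i}-N\mu)\right|
\end{equation*}
is controlled by \eqref{eq:B2inf_conv} divided by $N$. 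A choice such as $\ep_1=N^{-a}$ and $\ep_2=N^{-a+1/2}$ with $a=\max\{p/(2(p-1)),\, 1/2+1/(2|s+1|)\}$ balances all four error terms against the three terms on the right of \eqref{eq:Hs_conv}. Finally, \eqref{eq:M_conv} is proved by noting that any $\phi\in C_c^\infty(\R^2)\subset B_{2,1}^1(\R^2)$; choosing $\ep_1,\ep_2\to 0$ in \eqref{eq:B2inf_conv} at rates slow enough relative to $N$ (and using the hypothesis $\Fr_N^{avg}\to 0$) forces $\int\phi\, d(N^{-1}\sum\d_{x_i}-\mu)\to 0$, and density of $C_c^\infty$ in $C_0(\R^2)$ together with the uniform total-variation bound $\|N^{-1}\sum\d_{x_i}-\mu\|_{\M}\leq 2$ upgrades this to weak-$*$ convergence in $\M(\R^2)$. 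The main technical obstacle is the careful balancing of $(\ep_1,\ep_2)$ in the duality step to simultaneously absorb all four error contributions into the clean form of \eqref{eq:Hs_conv}; the smearing/frequency-split portion of the argument is, by contrast, essentially mechanical once one commits to the decomposition above.
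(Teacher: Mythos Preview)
Your proposal is correct and follows essentially the same approach as the paper: the same smearing decomposition, the same Cauchy--Schwarz plus modulated-energy bound for the smeared piece, and the same low/high frequency split at scale $\ep_2$ for the smearing error. The only cosmetic differences are that you use constant truncation radii $\eta_i=\ep_1$ (invoking \cref{prop:CE} directly) where the paper routes through \cref{cor:grad_H} with the adaptive radii $\ur_{N,\ep_1}$, and you handle the low-frequency smearing error by a physical-space mean value theorem rather than the paper's Fourier-side computation via \cref{lem:d_smear_FT}; both variants yield the same bounds.
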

\begin{proof}
Let $\ue_N\in (\R_+)^N$ be a parameter vector, the precise value of which will be specified momentarily. We write
\begin{equation}
\int_{\R^2}\varphi(x)d(\sum_{j=1}^N \d_{x_j}-N\mu)(x) = \sum_{j=1}^N\int_{\R^2}\varphi(x)d(\d_{x_j}-\d_{x_j}^{(\eta_j)})(x) + \int_{\R^2}\varphi(x)d(\sum_{j=1}^N\d_{x_j}^{(\eta_j)}-N\mu)(x).
\end{equation}
By Plancherel's theorem and \cref{lem:d_smear_FT}, we see that
\begin{equation}
\label{eq:diff_Planc}
\sum_{j=1}^N\int_{\R^2}\varphi(x)d(\d_{x_j}-\d_{x_j}^{(\eta_j)})(x) = \sum_{j=1}^N\int_{\R^2}\wh{\varphi}(\xi)\frac{e^{ ix_j\cdot\xi}}{2\pi}\paren*{\int_{0}^{2\pi}\paren*{1-e^{- i\eta_j|\xi|\sin(\theta)}}d\theta}d\xi.
\end{equation}
By the fundamental theorem of calculus,
\begin{equation}
\label{eq:ftc_app}
\int_{0}^{2\pi}\paren*{1-e^{- i\eta_j|\xi|\sin(\theta)}}d\theta = -\int_0^{2\pi}\paren*{\int_{0}^1 (-i\eta_j|\xi|\sin(\theta))e^{- i s\eta|\xi|\sin(\theta)} ds}d\theta.
\end{equation}
For each $j\in\{1,\ldots,N\}$, we introduce a parameter $0<\al_j\ll 1$ and we decompose
\begin{equation}
\varphi=P_{\leq |\log_2\al_j|}\varphi + P_{>|\log_2\al_j|}\varphi,
\end{equation}
so that
\begin{equation}
\begin{split}
\int_{\R^2}\varphi(x)d(\d_{x_j}-\d_{x_j}^{(\eta_j)})(x) &= \int_{\R^2} (P_{\leq |\log_2\al_j|}\varphi)(x)d(\d_{x_j}-\d_{x_j}^{(\eta_j)})(x) + \int_{\R^2} (P_{>|\log_2\al_j|}\varphi)(x)d(\d_{x_j}-\d_{x_j}^{(\eta_j)})(x).
\end{split}
\end{equation}
For the low-frequency term, we use identity \eqref{eq:ftc_app} and estimate the right-hand side directly to obtain from Plancherel's theorem that
\begin{equation}
\begin{split}
\left|\int_{\R^2} (P_{\leq |\log_2\al_j|}\varphi)(x)d(\d_{x_j}-\d_{x_j}^{(\eta_j)})(x) \right| &\lesssim \et_j\int_{|\xi|\leq 2/\al_j}|\xi| |\wh{\varphi}(\xi)|d\xi.
\end{split}
\end{equation}
By Cauchy-Schwarz, the support of the Littlewood-Paley projectors, and Plancherel's theorem,
\begin{align}
\int_{\R^2} |\xi| |\wh{(P_{\leq 0}\varphi)}(\xi)|d\xi &\lesssim \|P_{\leq 0}\varphi\|_{L^2(\R^2)}, \\
\int_{\R^2} |\xi| |\wh{(P_{k}\varphi)}(\xi)|d\xi &\lesssim 2^{2k}\|P_k\varphi\|_{L^2(\R^2)},
\end{align}
which implies that
\begin{align}
\sum_{j=1}^N \et_j\int_{|\xi|\leq 2/\al_j}|\xi| |\wh{\varphi}(\xi)|d\xi &\lesssim \sum_{j=1}^N \et_j\paren*{\|P_{\leq 0}\varphi\|_{L^2(\R^2)} + \sum_{k=1}^{|\log_2\al_j|} 2^{2k}\|P_k\varphi\|_{L^2(\R^2)}} \nn\\
&\lesssim \sum_{j=1}^N \frac{\eta_j \|\varphi\|_{B_{2,1}^1(\R^2)}}{\al_j}. \label{eq:B21_bnd_lo}
\end{align}
For the high-frequency term, we use identity \eqref{eq:diff_Planc} and Cauchy-Schwarz to crudely estimate
\begin{equation}
\label{eq:B21_bnd_hi}
\begin{split}
\sum_{j=1}^N \left|\int_{\R^2} (P_{>|\log_2\al_j|}\varphi)(x)d(\d_{x_j}-\d_{x_j}^{(\eta_j)})(x)\right| &\lesssim \sum_{j=1}^N \sum_{k\geq|\log_2\al_j|} 2^k \|P_k\varphi\|_{L^2(\R^2)}.
\end{split}
\end{equation}
Note that since $\|\varphi\|_{B_{2,1}^1(\R^2)}<\infty$, this term tends to zero as $|\ua_N|\rightarrow 0$.

Next, we write
\begin{equation}
\int_{\R^2}\varphi(x)d(\sum_{j=1}^N\d_{x_j}^{(\eta_j)}-N\mu)(x) = -\int_{\R^2}\varphi(x)(\D H_{N,\ue_N}^{\mu,\ux_N})(x)dx.
\end{equation}
where the reader will recall the notation $H_{N,\ue_N}^{\mu,\ux_N}$ from \eqref{eq:HN_trun_def}. Integrating by parts once and then using Cauchy-Schwarz, we obtain that
\begin{equation}
\left|-\int_{\R^2}\varphi(x)(\D H_{N,\ue_N}^{\mu,\ux_N})(x)dx\right|  \leq \|\nabla\varphi\|_{L^2(\R^2)} \|\nabla H_{N,\ue_N}^{\mu,\ux_N}\|_{L^2(\R^2)}.
\end{equation}
Choosing $\ue_N=\ur_{N,\ep_1}$, we use estimate \eqref{eq:grad_H_r_bnd} of \cref{cor:grad_H} to obtain that the right-hand side of the preceding inequality is $\lesssim$
\begin{align}
&\|\nabla\varphi\|_{L^2(\R^2)}\paren*{\Fr_N(\ux_N,\mu) + N|\ln \ep_1| + C_p\|\mu\|_{L^p(\R^2)} N^2 \ep_1^{\frac{2(p-1)}{p}}}^{1/2}, \label{eq:H1_bnd}
\end{align}
where the ultimate inequality follows from the fact that $\|\ur_{N,\ep_1}\|_{\ell^\infty} \leq \ep_1$ by definition.

Next, choosing  $\al_j = \ep_2$ every $j\in\{1,\ldots,N\}$, and combining estimates \eqref{eq:B21_bnd_lo}, \eqref{eq:B21_bnd_hi}, and \eqref{eq:H1_bnd}, we conclude that
\begin{equation}
\begin{split}
\left|\int_{\R^2}\varphi(x)d(\sum_{i=1}^N\d_{x_i}-N\mu)(x)\right| &\lesssim N\paren*{\frac{\ep_1\|\varphi\|_{B_{2,1}^1(\R^2)}}{\ep_2} + \sum_{k\geq |\log_2\ep_2|} 2^k\|P_k\varphi\|_{L^2(\R^2)}} \\
&\ph +\|\nabla\varphi\|_{L^2(\R^2)}\paren*{\Fr_N(\ux_N,\mu) + N|\ln \ep_1| + C_p\|\mu\|_{L^p(\R^2)} N^2 \ep_1^{\frac{2(p-1)}{p}}}^{1/2},
\end{split}
\end{equation}
which completes the proof of the estimate \eqref{eq:B2inf_conv}.

For the Sobolev norm bound, observe that $H^{-s}(\R^2)$ is a dense subspace of $B_{2,1}^1(\R^2)$, for $s<-1$. Since $(H^{s}(\R^2))^* \cong H^{-s}(\R^2)$, it follows that
\begin{align}
\|\mu-\frac{1}{N}\sum_{i=1}^N\d_{x_i}\|_{H^{s}(\R^2)} &= \sup_{\|\varphi\|_{H^{-s}(\R^2)}=1} \left|\int_{\R^2}\varphi(x)d(\mu-\frac{1}{N}\sum_{i=1}^N\d_{x_i})(x)\right| \nn\\
&\lesssim \sup_{\|\varphi\|_{H^{-s}(\R^2)}=1} \Bigg(\frac{\ep_1\|\varphi\|_{B_{2,1}^1(\R^2)}}{\ep_2} + \frac{\|\varphi\|_{H^{-s}(\R^2)}}{\ep_2^{s+1}} \nn\\
&\ph + \|\nabla\varphi\|_{L^2(\R^2)}\paren*{\Fr_N^{avg}(\ux_N,\mu) + \frac{|\ln\ep_1|}{N} + C_p\|\mu\|_{L^p(\R^2)}\ep_1^{\frac{2(p-1)}{p}}}^{1/2}\Bigg) \nn\\
&\lesssim \paren*{\Fr_N^{avg}(\ux_N,\mu) + \frac{|\ln\ep_1|}{N} + C_p\|\mu\|_{L^p(\R^2)}\ep_1^{\frac{2(p-1)}{p}}}^{1/2} + \frac{\ep_1}{\ep_2}+ \frac{1}{\ep_2^{s+1}},
\end{align}
where the penultimate inequality follows from \eqref{eq:B2inf_conv} and the definition of the $H^{-s}$ norm and the ultimate inequality from the estimate $\|\varphi\|_{B_{2,1}^1(\R^2)}\lesssim \|\varphi\|_{H^{-s}(\R^2)}$. Next, we choose $\ep_2 =\ep_1^{-1/s}$ and 
\begin{equation}
\ep_1 = \min\{N^{-\frac{p}{2(p-1)}}, N^{-\frac{s}{2(s+1)}}\}
\end{equation}
in order to obtain the inequality
\begin{equation}
\|\mu-\frac{1}{N}\sum_{i=1}^N\d_{x_i}\|_{H^s(\R^2)} \lesssim_{s,p} |\Fr_N^{avg}(\ux_N,\mu)|^{1/2} + N^{-1/2}|\ln N|^{1/2} + \paren*{1+\|\mu\|_{L^p(\R^2)}}N^{-1/2}.
\end{equation}

For the weak-* convergence, we recall that $(\M(\R^2),\|\cdot\|_{TV})$, where $\|\cdot\|_{TV}$ is the total variation norm, is the Banach space dual to $(C_0(\R^2),\|\cdot\|_{\infty})$, the space of continuous function equipped with the uniform norm. Let $\varphi\in C_0(\R^2)$. Since $H^{-s}(\R^2)$, for $s<-1$, is a dense subspace of $C_0(\R^2)$, given any $\vep>0$, there exists $\varphi_\vep\in H^{-s}(\R^2)$ such that
\begin{equation}
\|\varphi-\varphi_\vep\|_{\infty}\leq \frac{\vep}{4}.
\end{equation}
Then by the triangle inequality,
\begin{equation}
\left|\int_{\R^2}\varphi(x)d(\mu-\frac{1}{N}\sum_{i=1}^N\d_{x_i})(x)\right| \leq \left|\int_{\R^2}\varphi_\vep(x)d(\mu-\frac{1}{N}\sum_{i=1}^N\d_{x_i})(x)\right| + \frac{\vep}{2}.
\end{equation}
Using estimate \eqref{eq:Hs_conv}, we see that the right-hand side is bounded by
\begin{equation}
\frac{\vep}{2} + C_{s,p}\|\varphi_{\vep}\|_{H^{-s}(\R^2)}\paren*{|\Fr_N^{avg}(\ux_N,\mu)|^{1/2} + N^{-1/2}|\ln N|^{1/2} + \paren*{1+\|\mu\|_{L^p(\R^2)}}N^{-1/2}},
\end{equation}
where $C_{s,p}>0$ is a constant depending only on $s$ and $p$. It then follows immediately that by choosing $N=N(\vep,s,p)\in\N$ sufficiently large, this expression is $<\vep$, which completes the proof.
\end{proof}

\section{Key Proposition}
\label{sec:kprop}
In this section, we prove \cref{prop:key}, which is the key ingredient for the proof of our main theorem. We recall the statement of the proposition below.
\key*

The method of proof is inspired by that of \cite[Proposition 2.3]{Serfaty2018mean}, but requires a more sophisticated analysis, as we no longer have at our disposal the assumption that $\nabla v\in L^\infty(\R^2;(\R^2)^{\otimes 2})$. To make the overall argument modular and easier for the reader to digest, we have divided the proof into several lemmas corresponding to the main steps of the argument. We recall from \cref{ssec:intro_road} that the main steps are
\begin{enumerate}[(S1)]
\item\label{item:moll}
Mollification,
\item\label{item:renorm}
Renormalization,
\item\label{item:diag}
Analysis of Diagonal Terms,
\item\label{item:recomb}
Recombination,
\item\label{item:conc}
Conclusion.
\end{enumerate}

In the new \ref{item:moll}, we introduce a parameter $0<\ep_2\ll 1$ and replace the vector field $v$ with the mollified vector field $v_{\ep_2} \coloneqq v\ast\chi_{\ep_2}$, for a standard approximate identity $\chi_{\ep_2}(x)\coloneqq {\ep_2}^{-2}\chi({\ep_2}^{-1}x)$. Using the log-Lipschitz regularity of $v$, we have an $L^\infty$ bound for $v-v_{\ep_2}$ (see \cref{lem:conv_bnds}), and we find with \cref{lem:kprop_error} that the error
\begin{equation}
\left|\int_{(\R^2)^2\setminus\D_2}\paren*{v-v_{\ep_2}}(x) \cdot (\nabla\g)(x-y) d(\sum_{i=1}^N\d_{x_i}-N\mu)(x)d(\sum_{i=1}^N\d_{x_i}-N\mu)(y)\right|
\end{equation}
is acceptable. For the remaining steps in the proof of \cref{prop:key}, we work with $v_{\ep_2}$ and now need to bound the quantity
\begin{equation}
\label{eq:intro_ntb}
\int_{(\R^2)^2\setminus\D_2}\paren*{v_{\ep_2}(x)-v_{\ep_2}(y)}\cdot(\nabla\g)(x-y) (\sum_{i=1}^N\d_{x_i}-N\mu)(x)d(\sum_{j=1}^N\d_{x_j}-N\mu)(y).
\end{equation}

In \ref{item:renorm}, we follow the renormalization procedure used in \cite{Serfaty2018mean} and prove \cref{lem:kprop_renorm}. We smear the Dirac masses $\d_{x_i},\d_{x_j}$, so that they are replaced by the probability measure $\d_{x_i}^{(\eta_i)},\d_{x_j}^{(\eta_j)}$ supported on the spheres $\p B(x_i,\eta_i), \p B(x_j,\et_j)$,\footnote{We note that the truncation parameter $\ue_N$ is allowed to vary in $i$. This was an important new contribution from \cite{Serfaty2018mean}, which we shall also make use of in our work.} in the quantity \eqref{eq:intro_ntb} and we then add back in the diagonal $\D_2$. The resulting quantity is divergent, but can be renormalized by subtracting off the self-interaction of the smeared point masses. An integration by parts (see \cref{lem:SE_div}) allows us to conclude that
\begin{equation}
\label{eq:intro_renorm}
\begin{split}
\eqref{eq:intro_ntb} &=\lim_{|\ue_N|\rightarrow 0} \Bigg(\int_{\R^2}(\nabla v_{\ep_2})(x): \comm{H_{N,\ue_N}^{\mu,\ux_N}}{H_{N,\ue_N}^{\mu,\ux_N}}_{SE}(x)dx \\
&\ph\hspace{25mm}- \sum_{i=1}^N\int_{(\R^2)^2} \paren*{v_{\ep_2}(x)-v_{\ep_2}(y)}\cdot(\nabla\g)(x-y)\d_{x_i}^{(\eta_i)}(x)\d_{x_i}^{(\eta_i)}(y) \Bigg),
\end{split}
\end{equation}
where $\comm{\cdot}{\cdot}_{SE}$ denotes the stress-energy tensor defined in \eqref{eq:se_ten_def} below.

In \ref{item:diag}, we consider the divergent diagonal terms in \eqref{eq:intro_renorm} (i.e. the second term in the right-hand side), which correspond to self-interaction of the smeared point masses $\d_{x_i}^{(\eta_i)}$, as the smearing parameter vector $\ue_N$ varies. Here, we can use the analysis from \cite{Serfaty2018mean}, as the proof does not rely on any special assumption for the vector field $v_{\ep_2}$. The conclusion of this step is summarized in \cref{lem:kprop_diag}.

In \ref{item:recomb}, we prove \cref{lem:kprop_recomb}, which together with \cref{lem:kprop_diag} from \ref{item:diag}, shows that the difference between the renormalized expressions \eqref{eq:intro_renorm} obtained for different values of $\ue_N$ equals an acceptable error which depends on $N$ and $\ep_2$. The proof of \cref{lem:kprop_recomb} is quite involved, so we will not go into too much detail here. We only say that it relies crucially on properties of the modulated energy established in \cref{sec:CE}.

Finally, in \ref{item:conc}, we perform a bookkeeping of our work from \ref{item:moll}-\ref{item:recomb} to see that the difference between the right-hand sides of \eqref{eq:intro_renorm} obtained for different values $\ue_N = \ua_N^{(2)}\geq \ua_N^{(1)} \in (\R_+)^N$ is small. By letting $|\ua_N^{(1)}|\rightarrow 0$ and choosing, for each $i\in\{1,\ldots,N\}$,
\begin{equation}
\ua_i^{(2)} = r_{i,\ep_1}\coloneqq \min\{\min_{{1\leq j\leq N}\atop{j\neq i}}|x_i-x_j|, \ep_1\},
\end{equation}
where have introduced an additional parameter $0<\ep_1< \ep_2/2$, we find that expression \eqref{eq:intro_ntb} equals
\begin{equation}
\label{eq:intro_conc_nte}
\begin{split}
&\int_{\R^2}\nabla v_{\ep_2}(x) : \comm{H_{N,\ur_{N,\ep_1}}^{\mu,\ux_N}}{H_{N,\ur_{N,\ep_1}}^{\mu,\ux_N}}_{SE}(x)dx -\sum_{i=1}^N\int_{(\R^2)^2} \paren*{v_{\ep_2}(x)-v_{\ep_2}(y)}\cdot(\nabla\g)(x-y)\d_{x_i}^{(r_{i,\ep_1})}(x)\d_{x_i}^{(r_{i,\ep_1})}(y) \\
&\ph + \mathrm{Error}_{N,\ep_2},
\end{split}
\end{equation}
where $\mathrm{Error}_{N,\ur_{N,\ep_1}}$ is an acceptable error. We estimate the first two terms in \eqref{eq:intro_conc_nte} using the log-Lipschitz regularity of $v_{\ep_2}$ (see \cref{lem:conv_bnds}) and properties of the modulated energy (see \cref{lem:g_smear_si} and \cref{cor:grad_H}). In particular, we rely on the estimate
\begin{equation}
\|\nabla v_{\ep_2}\|_{L^\infty(\R^2)} \lesssim \|v\|_{LL(\R^2)}|\ln\ep_2|,
\end{equation}
which necessitates the initial mollification. So far, the reader may be wondering where the additional parameter $\ep_3$ arise in the proof. It is a parameter needed to measure whether two points $x_i$ and $x_j$ are ``close'' or ``far'' (e.g., see \eqref{eq:moll_cl_far}). The floating parameters $\ep_1,\ep_2,\ep_3$, which are new to our work, will crucially be needed to balance terms in order to conclude the proof of \cref{thm:main}.

\subsection{Stress-Energy Tensor}\label{ssec:kprop_SET}
We begin by quickly recalling the definition of the stress-energy tensor. For functions $\varphi,\psi\in C_{loc}^1(\R^2)$, we define their \emph{stress-energy tensor} $\{\comm{\varphi}{\psi}_{SE}^{ij}\}_{i,j=1}^2$ to be the $2\times 2$ matrix with entries
\begin{equation}
\label{eq:se_ten_def}
\comm{\varphi}{\psi}_{SE}^{ij} \coloneqq \paren*{\p_i\varphi\p_j\psi + \p_j\varphi\p_i\psi} - \d_{ij}\nabla\varphi\cdot\nabla\psi,
\end{equation}
where $\d_{ij}$ is the Kronecker delta function. By a density argument, it follows that the stress-energy tensor is well-defined in $L^2(\R^2; (\R^2)^{\otimes 2})$ for any functions $\varphi,\psi\in \dot{H}^1(\R^2)$. By direct computation, we have the divergence identity
\begin{equation}
\p_i\comm{\varphi}{\psi}_{SE}^{ij} = \D \varphi \p_j\psi + \D\psi\p_j\varphi, \qquad \forall j\in\{1,2\},
\end{equation}
for any $\varphi,\psi \in \dot{H}^1(\R^2)\cap \dot{H}^2(\R^2)$, where we have used the convention of Einstein summation in index $j$. The following lemma from \cite{Serfaty2018mean} is used extensively in the sequel. Note the requirement that the vector field be Lipschitz, not log-Lipschitz, which will necessitate a mollification in order to apply the lemma.

\begin{lemma}[{\cite[Lemma 4.3]{Serfaty2018mean}}]
\label{lem:SE_div}
Let $v\in W^{1,\infty}(\R^2;\R^2)$. For any measures $\mu,\nu\in \M(\R^2)$, such that
\begin{equation}
\int_{\R^2}(|(\nabla\g\ast |\mu|)(x)|^2 + |(\nabla\g\ast|\nu|)(x)|^2)dx<\infty,
\end{equation}
we have the identity
\begin{equation}
\int_{(\R^2)^2} \paren*{v(x)-v(y)}\cdot(\nabla\g)(x-y)d\mu(x)d\nu(y) = \int_{\R^2}(\nabla v)(x) : \comm{\g\ast\mu}{\g\ast\nu}_{SE}(x)dx.
\end{equation}
\end{lemma}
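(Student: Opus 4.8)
The plan is to derive this identity purely from the divergence structure of the stress-energy tensor together with the fact that $\g$ is a fundamental solution of $-\D$ in the plane. Writing $\varphi \coloneqq \g\ast\mu$ and $\psi \coloneqq \g\ast\nu$, the finite-energy hypothesis gives $\nabla\varphi,\nabla\psi \in L^2(\R^2;\R^2)$, so $\comm{\varphi}{\psi}_{SE} \in L^1(\R^2;(\R^2)^{\otimes 2})$ and, since $\nabla v\in L^\infty$, the right-hand side is finite. For the left-hand side I would first record the elementary bound $|v(x)-v(y)|\,|(\nabla\g)(x-y)| \le \frac{\|\nabla v\|_{L^\infty(\R^2)}}{2\pi}$, valid for all $x\ne y$, so that the integrand is bounded and the double integral is absolutely convergent once $\mu,\nu$ have finite total variation (reducing to this case by a truncation if necessary, and then removing the truncation at the end).

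The main step is a reduction to the case of smooth, compactly supported $\mu,\nu$ by mollification: replace $\mu$ by $\mu_\ep\coloneqq \mu\ast\chi_\ep$ and $\nu$ by $\nu_\ep\coloneqq \nu\ast\chi_\ep$ for a standard approximate identity. Since convolution with $\chi_\ep$ commutes with $\g\ast(\cdot)$ and is a contraction on $L^2$, one has $\nabla(\g\ast\mu_\ep) = (\nabla\varphi)\ast\chi_\ep \to \nabla\varphi$ in $L^2(\R^2)$, so the right-hand side of the identity is continuous as $\ep\to 0$; the left-hand side is continuous by dominated convergence, using the uniform bound above and weak-* convergence $\mu_\ep\otimes\nu_\ep \to \mu\otimes\nu$. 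Thus it suffices to prove the identity for $\mu,\nu\in C_c^\infty(\R^2)$, in which case $\varphi,\psi$ are smooth with $\nabla\varphi,\nabla\psi = O(|x|^{-1})$ and $\nabla^2\varphi,\nabla^2\psi = O(|x|^{-2})$ at infinity.

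For smooth data I would then integrate by parts. Using $\D\varphi = -\mu$, $\D\psi = -\nu$ and the divergence identity $\p_i\comm{\varphi}{\psi}_{SE}^{ij} = \D\varphi\,\p_j\psi + \D\psi\,\p_j\varphi$ recorded above, we get
\[
\int_{\R^2}(\nabla v):\comm{\varphi}{\psi}_{SE}\,dx = -\int_{\R^2} v^j\,\p_i\comm{\varphi}{\psi}_{SE}^{ij}\,dx = \int_{\R^2} v^j(x)\bigl(\mu(x)\,\p_j\psi(x) + \nu(x)\,\p_j\varphi(x)\bigr)\,dx,
\]
the boundary terms at infinity vanishing because $v$ is bounded while $\comm{\varphi}{\psi}_{SE} = O(|x|^{-2})$. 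Expanding $\p_j\psi(x) = \int_{\R^2}(\p_j\g)(x-y)\,d\nu(y)$ and $\p_j\varphi(x) = \int_{\R^2}(\p_j\g)(x-y)\,d\mu(y)$, the right-hand side becomes $\int v(x)\cdot(\nabla\g)(x-y)\,d\mu(x)d\nu(y) + \int v(x)\cdot(\nabla\g)(x-y)\,d\nu(x)d\mu(y)$; relabelling $x\leftrightarrow y$ in the second term and using that $\nabla\g$ is odd turns it into $-\int v(y)\cdot(\nabla\g)(x-y)\,d\mu(x)d\nu(y)$, and the two contributions combine into $\int_{(\R^2)^2}(v(x)-v(y))\cdot(\nabla\g)(x-y)\,d\mu(x)d\nu(y)$, which is the claim. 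Passing to the limit as in the previous paragraph completes the proof.

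I expect the only genuinely delicate point to be the justification of the mollification limit for rough $\mu,\nu$ — in particular the absolute convergence of the left-hand side and the $L^2$-convergence $\nabla(\g\ast\mu_\ep) \to \nabla(\g\ast\mu)$ — since the integration-by-parts computation for smooth data is routine. Everything else is bookkeeping with the oddness of $\nabla\g$ and the divergence identity for the stress-energy tensor.
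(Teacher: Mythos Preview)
The paper does not give its own proof of this lemma: it simply records the statement and cites \cite[Lemma 4.3]{Serfaty2018mean}. Your argument is the standard one --- mollify to reduce to smooth densities, integrate by parts using the divergence identity $\p_i\comm{\varphi}{\psi}_{SE}^{ij} = \D\varphi\,\p_j\psi + \D\psi\,\p_j\varphi$, and then symmetrize via the oddness of $\nabla\g$ --- and is correct as written. The only minor remark is that your parenthetical about ``reducing to finite total variation by truncation'' is unnecessary, since $\mu,\nu\in\M(\R^2)$ already have finite total variation by definition; otherwise the justification of the limits (dominated convergence on the left via the Lipschitz bound, $L^2$-convergence of the mollified gradients on the right) is exactly what is needed.
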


\subsection{Step 1: Mollification}
\label{ssec:kprop_moll}
We commence with step 1 of the proof of \cref{prop:key}. Let $\chi\in C_c^\infty(\R^2)$ be a radial, nonincreasing bump function satisfying
\begin{equation}
\label{eq:chi}
\int_{\R^2}\chi(x) dx=1, \quad 0\leq \chi\leq 1, \quad \chi(x) = \begin{cases} 1, & {|x|\leq \frac{1}{4}} \\ 0, & {|x|>1} \end{cases}.
\end{equation}
For $\ep>0$, set
\begin{equation}
\label{eq:chi_v_ep}
\chi_\ep(x)\coloneqq \ep^{-2}\chi(x/\ep) \qquad \text{and} \qquad  v_\ep(x) \coloneqq (\chi_\ep\ast v)(x),
\end{equation}
where the convolution $\chi_{\ep}\ast v$ is performed component-wise. Then $v_\vep$ is $C^\infty(\R^2;\R^2)$ and
\begin{equation}
\|\nabla^k v_\ep\|_{L^\infty(\R^2)} \lesssim_k \ep^{-k}, \qquad \forall k\in\N_0.
\end{equation}
Using the log-Lipschitz regularity of $v$, we can get a bound, which is in terms of the modulated energy $\Fr_N(\ux_N,\mu)$, for the error stemming from replacing $v$ with $v_{\ep}$ in the left-hand side of equation \eqref{eq:prop_key}. The importance of this type of bound will become clear in \cref{ssec:MR_thm} when we allow the mollification parameter to depend on $|\Fr_N(\ux_N,\mu)|$.

\begin{lemma}
\label{lem:kprop_error}
There exists a constant $C_p>0$ such that for every $0<\ep_2,\ep_3\ll 1$, we have the estimate
\begin{equation}
\label{eq:LHS_split}
\begin{split}	
&\left|\int_{(\R^2)^2\setminus\D_2}\paren*{v-v_{\ep_2}}(x) \cdot (\nabla\g)(x-y) d(\sum_{i=1}^N\d_{x_i}-N\mu)(x)d(\sum_{i=1}^N\d_{x_i}-N\mu)(y)\right| \\
&\lesssim \|v\|_{LL(\R^2)}|\ln\ep_3|\paren*{\Fr_N(\ux_N,\mu) + N|\ln\ep_3|+C_pN^2\|\mu\|_{L^p(\R^2)}{\ep_3}^{\frac{2(p-1)}{p}}} \\
&\ph+N^2\|v\|_{LL(\R^2)}\ep_2|\ln\ep_2|\paren*{\frac{1}{\ep_3}+\|\mu\|_{L^p(\R^2)}^{\frac{p}{2(p-1)}}}.
\end{split}
\end{equation}
\end{lemma}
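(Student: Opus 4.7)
The plan is to expand the integral on the left-hand side of \eqref{eq:LHS_split} against $\nu \coloneqq \sum_{i=1}^N \d_{x_i} - N\mu$ into four pieces, using that the discrete and continuous parts of $\nu$ are mutually singular, so $\D_2$ need only be excluded from the pure point--point tensor. Setting $w \coloneqq v - v_{\ep_2}$, the left-hand side becomes
\begin{equation*}
\sum_{1\leq i\neq j\leq N} w(x_i)\cdot(\nabla\g)(x_i-x_j) - N\sum_{i=1}^N w(x_i)\cdot\nabla(\g\ast\mu)(x_i) - N\sum_{j=1}^N \int_{\R^2} w(x)\cdot(\nabla\g)(x-x_j)\mu(x)dx + N^2\int_{(\R^2)^2} w(x)\cdot(\nabla\g)(x-y)\mu(x)\mu(y)dxdy.
\end{equation*}
From \cref{lem:conv_bnds}, together with the elementary observation that mollification does not increase the log-Lipschitz seminorm, one has $\|w\|_{L^\infty(\R^2)} \leq \|v\|_{LL(\R^2)}\ep_2|\ln\ep_2|$ and $\|w\|_{LL(\R^2)} \leq 2\|v\|_{LL(\R^2)}$.

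For the discrete-discrete sum, I would split into close pairs $|x_i-x_j|\leq\ep_3$ and far pairs. On the far set, using the crude bound $|(\nabla\g)(x_i-x_j)|\leq (2\pi\ep_3)^{-1}$ with at most $N^2$ terms gives a contribution of size $N^2\|v\|_{LL(\R^2)}\ep_2|\ln\ep_2|/\ep_3$, supplying the $1/\ep_3$ term on the second line of the RHS. On the close set, I would exploit the oddness of $\nabla\g$ to symmetrize,
\begin{equation*}
\sum_{{1\leq i\neq j\leq N}\atop{|x_i-x_j|\leq\ep_3}} w(x_i)\cdot(\nabla\g)(x_i-x_j) = \tfrac{1}{2}\sum_{{1\leq i\neq j\leq N}\atop{|x_i-x_j|\leq\ep_3}} [w(x_i)-w(x_j)]\cdot(\nabla\g)(x_i-x_j),
\end{equation*}
so that the log-Lipschitz modulus $|w(x_i)-w(x_j)| \lesssim \|v\|_{LL(\R^2)}|x_i-x_j||\ln|x_i-x_j||$ cancels the $(2\pi|x_i-x_j|)^{-1}$ singularity of $\nabla\g$, leaving an integrand of order $\|v\|_{LL(\R^2)}\g(x_i-x_j)$. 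Applying \cref{cor:count} then controls this sum by $\|v\|_{LL(\R^2)}|\ln\ep_3|(\Fr_N(\ux_N,\mu) + N|\ln\ep_3| + C_pN^2\|\mu\|_{L^p(\R^2)}\ep_3^{2(p-1)/p})$, which is exactly the first line of the RHS.

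For each of the three remaining (mixed or fully continuous) terms, I would extract $\|w\|_{L^\infty(\R^2)}$ and recognize that $|\nabla\g|\ast\mu$ coincides, up to a multiplicative constant, with the Riesz potential $\I_1\mu$. \cref{lem:Linf_RP} applied with $n=2$, $s=1$, $p>2$, and $\|\mu\|_{L^1(\R^2)}=1$ yields
\begin{equation*}
\|\I_1\mu\|_{L^\infty(\R^2)}\lesssim_p \|\mu\|_{L^p(\R^2)}^{p/(2(p-1))},
\end{equation*}
so each of these three terms contributes at most $N^2\|v\|_{LL(\R^2)}\ep_2|\ln\ep_2|\|\mu\|_{L^p(\R^2)}^{p/(2(p-1))}$, supplying the second term on the second line of the RHS; summing the three contributions produces \eqref{eq:LHS_split}.

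The principal obstacle is the close discrete-discrete sum: the $1/|x_i-x_j|$ singularity of $\nabla\g$ is only barely tamed by the log-Lipschitz modulus of $w$, producing a logarithmic integrand proportional to $\g(x_i-x_j)$, which is summable only because the modulated energy controls small interparticle distances through the counting estimate \cref{cor:count}. The remaining ingredients---mollification bounds and Riesz-potential $L^\infty$ estimates---then combine in a straightforward manner.
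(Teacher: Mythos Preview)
Your proposal is correct and follows essentially the same route as the paper: the same four-way expansion of the tensor product, the same close/far splitting of the discrete--discrete sum with symmetrization via the oddness of $\nabla\g$ on the close part and the crude $\ep_3^{-1}$ bound on the far part, and the same use of \cref{lem:conv_bnds} together with the Riesz-potential $L^\infty$ estimate from \cref{lem:Linf_RP} for the three remaining terms. The only cosmetic difference is that the paper symmetrizes the entire discrete--discrete sum before splitting into close and far pairs, whereas you split first and symmetrize only on the close part; the resulting estimates are identical.
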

\begin{proof}
We split the left-hand side of \eqref{eq:LHS_split} into a sum of four terms and estimate each separately.
\begin{itemize}[leftmargin=*]
\item
Observe that
\begin{align}
&\left|\int_{(\R^2)^2\setminus\D_2} (v-v_{\ep_2})(x)\cdot(\nabla\g)(x-y)d(\sum_{i=1}^N\d_{x_i})(x)d(\sum_{i=1}^N\d_{x_i})(x)\right| \nn\\
&= \left|\sum_{1\leq i\neq j\leq N} (v-v_{\ep_2})(x_i)\cdot(\nabla\g)(x_i-x_j)\right| \nn\\
&= \left|\frac{1}{2}\sum_{1\leq i\neq j\leq N} \paren*{(v-v_{\ep_2})(x_i)-(v-v_{\ep_2})(x_j)} \cdot(\nabla\g)(x_i-x_j)\right|, \label{eq:ti_cl_far}
\end{align}
where the ultimate equality follows from swapping $i$ and $j$ and using that $(\nabla\g)(x-y)=-(\nabla\g)(y-x)$. By the triangle inequality,
\begin{align}
\label{eq:moll_cl_far}
\eqref{eq:ti_cl_far} &\lesssim \left|\sum_{{1\leq i\neq j\leq N}\atop{|x_i-x_j|\leq \ep_3}}\paren*{(v-v_{\ep_2})(x_i)-(v-v_{\ep_2})(x_j)} \cdot(\nabla\g)(x_i-x_j)\right| \nn\\
&\ph + \left|\sum_{{1\leq i\neq j\leq N}\atop{|x_i-x_j|> \ep_3}}\paren*{(v-v_{\ep_2})(x_i)-(v-v_{\ep_2})(x_j)} \cdot(\nabla\g)(x_i-x_j)\right| \nn\\
&\eqqcolon \mathrm{Term}_1 + \mathrm{Term}_2.
\end{align}
By another application of the triangle inequality, followed by using that $\|v_{\ep_2}\|_{LL(\R^2)}\leq \|v\|_{LL(\R^2)}$, together with the elementary bound
\begin{equation}
|(\nabla\g)(x_i-x_j)| \lesssim \frac{1}{|x_i-x_j|},
\end{equation}
we find that
\begin{align}
\mathrm{Term}_1 &\lesssim \sum_{{1\leq i\neq j\leq N}\atop{|x_i-x_j|\leq\ep_3}} \|v\|_{LL(\R^2)}\g(x_i-x_j)\nn\\
&\lesssim \|v\|_{LL(\R^2)}\tl{\g}(2\ep_3)\paren*{\Fr_N(\ux_N,\mu) + N\tl{\g}(2\ep_3) + C_p N^2 \|\mu\|_{L^p(\R^2)} {\ep_3}^{\frac{2(p-1)}{p}}},
\end{align}
where the ultimate inequality follows from applying \cref{cor:count}. For $\mathrm{Term}_2$, we observe the bound
\begin{equation}
|(\nabla\g)(x_i-x_j)| \lesssim \frac{1}{\ep_3},
\end{equation}
which is immediate from the definition of $\nabla\g$ and the separation of $x_i$ and $x_i$, and the bound
\begin{equation}
|(v-v_{\ep_2})(x_i)-(v-v_{\ep_2})(x_j)| \lesssim \|v\|_{LL(\R^2)}\ep_2|\ln\ep_2|,
\end{equation}
which follows from estimate \eqref{eq:v_diff_Linf} of \cref{lem:conv_bnds} and the triangle inequality, in order to obtain that
\begin{equation}
\mathrm{Term}_2 \lesssim \sum_{{1\leq i\neq j\leq N}\atop {|x_i-x_j|>\ep_3}} \frac{\|v\|_{LL(\R^2)}\ep_2 |\ln\ep_2|}{\ep_3} \leq \frac{N^2\|v\|_{LL(\R^2)}\ep_2|\ln\ep_2|}{\ep_3}.
\end{equation}
\item
Next, observe that by the Fubini-Tonelli theorem followed by H\"older's inequality,
\begin{align}
\left|\int_{(\R^2)^2\setminus\D_2} (v-v_{\ep_2})(x)\cdot(\nabla\g)(x-y)d(N\mu)(x)d(N\mu)(y)\right| &= N^2\left|\int_{\R^2} (v-v_{\ep_2})(x)\cdot(\nabla\g\ast\mu)(x)d\mu(x)\right| \nn\\
&\leq N^2\|v-v_{\ep_2}\|_{L^\infty(\R^2)}\|\nabla\g\ast\mu\|_{L^\infty(\R^2)}.
\end{align}
By \cref{lem:conv_bnds} applied to the first factor and \cref{lem:Linf_RP} applied to the second factor in the ultimate line, we find that
\begin{equation}
\label{eq:L_inf_mu_d}
N^2\|v-v_{\ep_2}\|_{L^\infty(\R^2)}\|\nabla\g\ast\mu\|_{L^\infty(\R^2)} \lesssim N^2\ep_2|\ln\ep_2| \|v\|_{LL(\R^2)} \|\mu\|_{L^p(\R^2)}^{\frac{p}{2(p-1)}}.
\end{equation}
\item
Next, observe that by the Fubini-Tonelli theorem
\begin{align}
\left|\int_{(\R^2)^2\setminus\D_2}(v-v_{\ep_2})(x)\cdot\nabla\g(x-y)d(\sum_{i=1}^N\d_{x_i})(x)d(N\mu)(y)\right| &= N\left|\sum_{i=1}^N (v-v_{\ep_2})(x_i) (\nabla\g\ast\mu)(x_i)\right| \nn\\
&\lesssim N^2\ep_2|\ln\ep_2|\|v\|_{LL(\R^2)} \|\mu\|_{L^{p}(\R^2)}^{\frac{p}{2(p-1)}},
\end{align}
again by using \Cref{lem:Linf_RP,lem:conv_bnds}.
\item
Finally, by proceeding similarly as to the previous case,
\begin{align}
\left|\int_{(\R^2)^2\setminus\D_2}(v-v_{\ep_2})(x)\cdot\nabla\g(x-y)d(\sum_{i=1}^N\d_{x_i})(y)d(N\mu)(x)\right| &= N\left|\sum_{i=1}^N(\nabla\g\ast\paren*{(v-v_{\ep_2})\mu})(x_i)\right| \nn\\
&\lesssim N^2\ep_2|\ln\ep_2|\|v\|_{LL(\R^2)} \|\mu\|_{L^{p}(\R^2)}^{\frac{p}{2(p-1)}}.
\end{align}
\end{itemize}
\end{proof}

\subsection{Step 2: Renormalization}
\label{ssec:kprop_renorm}
We now proceed to step 2 of the proof, working with the mollified vector field $v_{\ep_2}$. Crucially, this step relies on the qualitative assumption that $v_{\ep_2}$ is Lipschitz.

\begin{lemma}\label{lem:kprop_renorm}
It holds that
\begin{equation}
\label{eq:kprop_S2_id}
\begin{split}
&\int_{(\R^2)^2\setminus\D_2} (v_{\ep_2}(x)-v_{\ep_2}(y))\cdot (\nabla\g)(x-y)d(\sum_{i=1}^N\d_{x_i}-N\mu)(x)d(\sum_{i=1}^N\d_{x_i}-N\mu)(y)\\
&=\lim_{|\ue_N|\rightarrow 0} \paren*{\int_{\R^2} (\nabla v_{\ep_2})(x): \comm{H_{N,\ue_N}^{\mu,\ux_N}}{H_{N,\ue_N}^{\mu,\ux_N}}_{SE}(x)dx - \sum_{i=1}^N\int_{(\R^2)^2} (v_{\ep_2}(x)-v_{\ep_2}(y))\cdot(\nabla\g)(x-y)\d_{x_i}^{(\eta_i)}(x)\d_{x_i}^{(\eta_i)}(y)}.
\end{split}
\end{equation}
\end{lemma}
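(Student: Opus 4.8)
The plan is to take the left-hand side and carefully reintroduce the diagonal $\D_2$ after smearing the Dirac masses, keeping track of the divergent self-interaction terms which must be subtracted off. Concretely, fix a parameter vector $\ue_N\in(\R_+)^N$ with $|\ue_N|$ small, and replace each $\d_{x_i}$ by the smeared measure $\d_{x_i}^{(\eta_i)}=\d_0^{(\eta_i)}(\cdot-x_i)$. First I would argue that
\begin{equation}
\int_{(\R^2)^2\setminus\D_2} (v_{\ep_2}(x)-v_{\ep_2}(y))\cdot (\nabla\g)(x-y)d(\textstyle\sum_{i=1}^N\d_{x_i}-N\mu)(x)d(\sum_{i=1}^N\d_{x_i}-N\mu)(y)
\end{equation}
equals the limit as $|\ue_N|\to0$ of the same integral but with each $\d_{x_i}$ replaced by $\d_{x_i}^{(\eta_i)}$, now taken over all of $(\R^2)^2$ (i.e.\ including the diagonal), \emph{minus} the self-interaction sum $\sum_{i=1}^N\int_{(\R^2)^2}(v_{\ep_2}(x)-v_{\ep_2}(y))\cdot(\nabla\g)(x-y)\d_{x_i}^{(\eta_i)}(x)\d_{x_i}^{(\eta_i)}(y)$. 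The point is that when we expand the product of the two signed measures $\sum_i\d_{x_i}^{(\eta_i)}-N\mu$, the ``cross'' terms between distinct smeared masses $\d_{x_i}^{(\eta_i)}$ and $\d_{x_j}^{(\eta_j)}$ ($i\ne j$), the mass–$\mu$ terms, and the $\mu$–$\mu$ term all converge to their unsmeared counterparts over $(\R^2)^2\setminus\D_2$ by dominated convergence (using that $x_i\ne x_j$ and the local integrability of $\nabla\g$ against $\mu\in L^p$, $p>2$, plus the log-growth hypothesis on $\mu$), while the only terms that are genuinely singular are the $i=i$ self-interactions, which are exactly what we peel off. Here the symmetrized form $v_{\ep_2}(x)-v_{\ep_2}(y)$ is essential: it makes the $i$–$i$ term finite relative to a smeared mass, since $(v_{\ep_2}(x)-v_{\ep_2}(y))\cdot(\nabla\g)(x-y)$ is bounded near $x=y$ by $\|\nabla v_{\ep_2}\|_{L^\infty}$, so each self-interaction integral against $\d_{x_i}^{(\eta_i)}\otimes\d_{x_i}^{(\eta_i)}$ is well-defined.

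Next I would apply \cref{lem:SE_div} to the smeared, full-diagonal integral. Taking $v=v_{\ep_2}$ (which is Lipschitz, indeed $C^\infty$, so the hypothesis $v\in W^{1,\infty}(\R^2;\R^2)$ holds — this is precisely why the mollification in Step 1 was needed) and $\mu=\nu=\sum_{i=1}^N\d_{x_i}^{(\eta_i)}-N\mu$, we need the finite-energy condition $\int_{\R^2}|\nabla\g\ast(\sum_i\d_{x_i}^{(\eta_i)}-N\mu)|^2<\infty$; this holds because $\mu$ has finite Coulomb energy (by hypothesis, or by \cref{lem:fin_en}) and each $\d_{x_i}^{(\eta_i)}$ has finite Coulomb energy by \cref{lem:g_smear_si}. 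Since $\g\ast(\sum_i\d_{x_i}^{(\eta_i)}-N\mu)=H_{N,\ue_N}^{\mu,\ux_N}$ by definition \eqref{eq:HN_trun_def}, \cref{lem:SE_div} gives
\begin{equation}
\int_{(\R^2)^2}(v_{\ep_2}(x)-v_{\ep_2}(y))\cdot(\nabla\g)(x-y)d(\textstyle\sum_i\d_{x_i}^{(\eta_i)}-N\mu)(x)d(\sum_i\d_{x_i}^{(\eta_i)}-N\mu)(y)=\int_{\R^2}(\nabla v_{\ep_2})(x):\comm{H_{N,\ue_N}^{\mu,\ux_N}}{H_{N,\ue_N}^{\mu,\ux_N}}_{SE}(x)dx.
\end{equation}
Combining this with the limiting identity from the previous paragraph yields exactly \eqref{eq:kprop_S2_id}.

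The main obstacle I expect is the rigorous justification of the first paragraph's limiting identity — in particular, showing that adding back the diagonal and smearing do not change the cross terms in the limit, and that the subtracted self-interaction terms have a well-defined limit. For the cross terms $i\ne j$, one writes $\int (v_{\ep_2}(x)-v_{\ep_2}(y))\cdot(\nabla\g)(x-y)d\d_{x_i}^{(\eta_i)}(x)d\d_{x_j}^{(\eta_j)}(y)$ and observes that for $|\eta_i|,|\eta_j|$ small relative to $\min_{k\ne l}|x_k-x_l|$ the integrand is continuous and uniformly bounded on the (shrinking) supports, so it converges to $(v_{\ep_2}(x_i)-v_{\ep_2}(x_j))\cdot(\nabla\g)(x_i-x_j)$; similarly for the mass-$\mu$ interactions one uses $\nabla\g\ast\mu\in L^\infty$ (via \cref{lem:Linf_RP}) and dominated convergence as the spheres $\p B(x_i,\eta_i)$ collapse to $x_i$. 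For the self-interaction terms, one should note they need \emph{not} individually converge as $|\ue_N|\to0$ (indeed they diverge like $\|\nabla v_{\ep_2}\|_{L^\infty}\tl\g(\eta_i)$-type quantities), but the combination on the right-hand side of \eqref{eq:kprop_S2_id} does — this is the content of the ``renormalization'' and is exactly parallel to the argument in \cref{prop:CE} and to \cite[Proposition 3.3]{Serfaty2018mean}. Since the proof essentially transcribes the renormalization step of \cite{Serfaty2018mean} with $v$ replaced by the Lipschitz field $v_{\ep_2}$, I would be parsimonious with details and cite the relevant computations there, flagging only that the only new input is that $v_{\ep_2}$, rather than $v$ itself, is the Lipschitz field to which \cref{lem:SE_div} is applied.
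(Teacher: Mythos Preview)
Your proposal is correct and follows essentially the same approach as the paper: first establish the limiting identity \eqref{eq:s2_rhs_sub} by dominated convergence (decomposing into cross, mass--$\mu$, $\mu$--$\mu$, and self-interaction terms exactly as you do), then apply \cref{lem:SE_div} with the Lipschitz field $v_{\ep_2}$ and \cref{lem:fin_en} for the finite-energy hypothesis. One minor inconsistency: you correctly observe that the symmetrized integrand $(v_{\ep_2}(x)-v_{\ep_2}(y))\cdot(\nabla\g)(x-y)$ is bounded by $\|\nabla v_{\ep_2}\|_{L^\infty}$ near the diagonal, so the self-interaction integrals are uniformly bounded in $\eta_i$---they do \emph{not} diverge like $\tl\g(\eta_i)$ as you later state; the subtraction is needed simply because they have no counterpart on the left-hand side (the Dirac self-interaction over $(\R^2)^2\setminus\D_2$ vanishes), not because they blow up.
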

\begin{proof}
The reader can check from an application of dominated convergence and recalling the definition of $\d_{0}^{(\eta)}$ in \eqref{eq:delta_smear} that
\begin{equation}
\label{eq:s2_rhs_sub}
\begin{split}
&\int_{(\R^2)^2\setminus\D_2} \paren*{v_{\ep_2}(x)-v_{\ep_2}(y)}\cdot(\nabla\g)(x-y)d(\sum_{i=1}^N\d_{x_i}-N\mu)(x)d(\sum_{i=1}^N\d_{x_i}-N\mu)(y) \\
&=\lim_{|\ue_N|\rightarrow 0}\left(\int_{(\R^2)^2}\paren*{v_{\ep_2}(x)-v_{\ep_2}(y)}\cdot(\nabla\g)(x-y)d(\sum_{i=1}^N\d_{x_i}^{(\eta_i)}-N\mu)(x)d(\sum_{i=1}^N\d_{x_i}^{(\eta_i)}-N\mu)(y) \right.\\
&\hspace{50mm} \left.- \sum_{i=1}^N\int_{(\R^2)^2} \paren*{v_{\ep_2}(x)-v_{\ep_2}(y)}\cdot(\nabla\g)(x-y)\d_{x_i}^{(\eta_i)}(x)\d_{x_i}^{(\eta_i)}(y)\right).
\end{split}
\end{equation}
Since $v_{\ep_2}\in W^{1,\infty}(\R^2)$ and $H_{N,\ue_N}^{\mu,\ux_N}\in\dot{H}^1(\R^2)$ by \cref{lem:fin_en}, we may apply \cref{lem:SE_div} to obtain
\begin{equation}
\begin{split}
&\int_{(\R^2)^2}\paren*{v_{\ep_2}(x)-v_{\ep_2}(y)}\cdot(\nabla\g)(x-y)d(\sum_{i=1}^N\d_{x_i}^{(\eta_i)}-N\mu)(x)d(\sum_{i=1}^N\d_{x_i}^{(\eta_i)}-N\mu)(y) \\
&= \int_{\R^2}(\nabla v_{\ep_2})(x): \comm{H_{N,\ue_N}^{\mu,\ux_N}}{H_{N,\ue_N}^{\mu,\ux_N}}_{SE}(x)dx.
\end{split}
\end{equation}
Substituting this identity into the right-hand side of \eqref{eq:s2_rhs_sub}, we see that the proof of the lemma is complete.
\end{proof}

\subsection{Step 3: Diagonal Terms}
\label{ssec:kprop_diag}
We proceed to step 3 in the proof, in which we analyze how the diagonal contribution (i.e. the second term in the right-hand side) in identity \eqref{eq:kprop_S2_id} varies as the truncation vector $\ue_N$ varies. The following lemma, which caries over from \cite{Serfaty2018mean}, provides an identity which does just this.

\begin{lemma}
\label{lem:kprop_diag}
Let $\ua_N,\ue_N\in (\R_+)^N$, such that $\ua_i\geq \ue_i$ for every $i\in\{1,\ldots,N\}$. Then for each $i\in\{1,\ldots,N\}$, we have the identity
\begin{equation}
\label{eq:diag_zero_id}
\int_{(\R^2)^2}\paren*{v_{\ep_2}(x)-v_{\ep_2}(y)}\cdot(\nabla\g)(x-y)d\d_{x_i}^{(\alpha_i)}(x)d(\d_{x_i}^{(\eta_i)}-\d_{x_i}^{(\alpha_i)})(y) = 0,
\end{equation}
and consequently,
\begin{equation}
\begin{split}
&\int_{(\R^2)^2}\paren*{v_{\ep_2}(x)-v_{\ep_2}(y)}\cdot(\nabla\g)(x-y)\paren*{d\d_{x_i}^{(\eta_i)}(x)d\d_{x_i}^{(\eta_i)}(y)-d\d_{x_i}^{(\alpha_i)}(x)d\d_{x_i}^{(\alpha_i)}(y)} \\
&=\int_{\R^2}(\nabla v_{\ep_2})(x):\comm{\f_{\alpha_i,\eta_i}(\cdot-x_i)}{\f_{\alpha_i,\eta_i}(\cdot-x_i)}_{SE}(x)dx,
\end{split}
\end{equation}
where we recall the definition of $\f_{\al_i,\et_i}$ from \eqref{eq:f_et_al_def}.
\end{lemma}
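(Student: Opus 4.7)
The plan is to push both integrals into stress-energy form via \cref{lem:SE_div} and then exploit the disjointness (up to a Lebesgue-null set) of the supports of $\nabla\g_{\alpha_i}(\cdot-x_i)$ and $\nabla\f_{\alpha_i,\eta_i}(\cdot-x_i)$. The hypotheses of \cref{lem:SE_div} are met throughout: $v_{\ep_2} = v\ast\chi_{\ep_2}\in W^{1,\infty}(\R^2;\R^2)$ by standard mollifier bounds, and each potential $\g\ast\d_{x_i}^{(\eta_i)} = \g_{\eta_i}(\cdot-x_i)$ has finite Dirichlet energy by \cref{lem:f_et_al_bnds}.

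For the vanishing identity \eqref{eq:diag_zero_id}, I take $\mu = \d_{x_i}^{(\alpha_i)}$ and $\nu = \d_{x_i}^{(\eta_i)} - \d_{x_i}^{(\alpha_i)}$. From \eqref{eq:g_conv_smear} and the definition \eqref{eq:f_et_al_def}, we have $\g\ast\mu = \g_{\alpha_i}(\cdot-x_i)$ and $\g\ast\nu = \f_{\alpha_i,\eta_i}(\cdot-x_i)$, so \cref{lem:SE_div} expresses the left-hand side of \eqref{eq:diag_zero_id} as
\begin{equation*}
\int_{\R^2}(\nabla v_{\ep_2})(x):\comm{\g_{\alpha_i}(\cdot-x_i)}{\f_{\alpha_i,\eta_i}(\cdot-x_i)}_{SE}(x)\,dx.
\end{equation*}
By \eqref{eq:g_eta_grad_id} the factor $\nabla\g_{\alpha_i}(\cdot-x_i)$ is supported in $\{|x-x_i|\geq \alpha_i\}$, whereas by \eqref{eq:f_grad_et_al_id} the factor $\nabla\f_{\alpha_i,\eta_i}(\cdot-x_i)$ is supported in the closed annulus $\{\eta_i\leq|x-x_i|\leq\alpha_i\}$. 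These regions intersect only on the sphere $|x-x_i|=\alpha_i$, which is Lebesgue-null; since every entry of the stress-energy tensor is bilinear in these gradients, it vanishes a.e., and the integral is zero.

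For the consequence, I use the telescoping decomposition
\begin{equation*}
d\d_{x_i}^{(\eta_i)}(x)\,d\d_{x_i}^{(\eta_i)}(y) - d\d_{x_i}^{(\alpha_i)}(x)\,d\d_{x_i}^{(\alpha_i)}(y) = d\d_{x_i}^{(\eta_i)}(x)\,d\sigma(y) + d\sigma(x)\,d\d_{x_i}^{(\alpha_i)}(y),
\end{equation*}
where $\sigma \coloneqq \d_{x_i}^{(\eta_i)} - \d_{x_i}^{(\alpha_i)}$. The second summand, converted by \cref{lem:SE_div} and the symmetry of the stress-energy tensor in its two slots, reduces to the integral handled in \eqref{eq:diag_zero_id} and hence vanishes. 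The first summand becomes $\int_{\R^2}(\nabla v_{\ep_2})(x):\comm{\g_{\eta_i}(\cdot-x_i)}{\f_{\alpha_i,\eta_i}(\cdot-x_i)}_{SE}(x)\,dx$. On the annulus supporting $\nabla\f_{\alpha_i,\eta_i}(\cdot-x_i)$ the identity $\f_{\alpha_i,\eta_i}(x-x_i) = \g(x-x_i) - \tl{\g}(\alpha_i)$ holds, since $\g_{\eta_i}=\g$ outside $B(0,\eta_i)$ and $\g_{\alpha_i}\equiv\tl{\g}(\alpha_i)$ inside $\ol{B}(0,\alpha_i)$; differentiating gives $\nabla\f_{\alpha_i,\eta_i}(\cdot-x_i) = \nabla\g(\cdot-x_i) = \nabla\g_{\eta_i}(\cdot-x_i)$ pointwise there, while outside the annulus $\nabla\f_{\alpha_i,\eta_i}(\cdot-x_i) = 0$ makes the stress-energy tensor vanish. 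Consequently the integrand reduces a.e.\ to $\comm{\f_{\alpha_i,\eta_i}(\cdot-x_i)}{\f_{\alpha_i,\eta_i}(\cdot-x_i)}_{SE}$, yielding the claimed identity.

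The only delicate bookkeeping is matching signs between $\nabla\g_{\eta_i}$ and $\nabla\f_{\alpha_i,\eta_i}$ on the annulus; this is immediate from the explicit piecewise formulas \eqref{eq:g_eta_grad_id} and \eqref{eq:f_et_al_id}, so no further analytic work is needed beyond \cref{lem:SE_div} and the support argument.
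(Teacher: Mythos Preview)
Your strategy is the right one and matches what the paper (via the Serfaty reference) has in mind: pass to stress-energy form and exploit that $\nabla\g_{\alpha_i}(\cdot-x_i)$ and $\nabla\f_{\alpha_i,\eta_i}(\cdot-x_i)$ have essentially disjoint supports, then identify $\nabla\g_{\eta_i}$ with $\nabla\f_{\alpha_i,\eta_i}$ on the annulus.

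There is, however, a genuine gap in your hypothesis check. You assert that ``each potential $\g\ast\d_{x_i}^{(\eta_i)} = \g_{\eta_i}(\cdot-x_i)$ has finite Dirichlet energy by \cref{lem:f_et_al_bnds}.'' This is false: \cref{lem:f_et_al_bnds} concerns $\f_{\eta,\alpha}$, not $\g_\eta$, and in fact $\nabla\g_{\eta}(x) = -\frac{x}{2\pi|x|^2}1_{|x|\geq\eta}$ is \emph{not} in $L^2(\R^2)$ (the integral $\int_{|x|\geq\eta}|x|^{-2}\,dx$ diverges logarithmically). Consequently neither $\mu=\d_{x_i}^{(\alpha_i)}$ nor $|\nu|=\d_{x_i}^{(\eta_i)}+\d_{x_i}^{(\alpha_i)}$ satisfies the finite-energy hypothesis of \cref{lem:SE_div} as stated, so you cannot invoke that lemma off the shelf.

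The repair is routine but must be made. One option is to redo the integration-by-parts argument underlying \cref{lem:SE_div} directly, observing that because $\nabla\f_{\alpha_i,\eta_i}(\cdot-x_i)$ is compactly supported, every term in $\comm{\g_{\alpha_i}(\cdot-x_i)}{\f_{\alpha_i,\eta_i}(\cdot-x_i)}_{SE}$ is compactly supported and the boundary contributions at infinity vanish; the identity then holds even though $\g_{\alpha_i}\notin\dot{H}^1$. Alternatively, replace $\d_{x_i}^{(\alpha_i)}$ by $\d_{x_i}^{(\alpha_i)}-\d_{x_i}^{(R)}$ for large $R$ (which \emph{does} have finite-energy potential $\f_{\alpha_i,R}$), apply \cref{lem:SE_div} legitimately, and send $R\to\infty$. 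Once this is addressed, the rest of your argument goes through as written.
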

\begin{proof}
See \cite[Section 4: Steps 2 and 3]{Serfaty2018mean}.
\end{proof}

\subsection{Step 4: Recombination}
\label{ssec:kprop_recomb}
Combining \cref{lem:kprop_renorm} from step 2 and \cref{lem:kprop_diag} from step 3, we have shown that for any $\ua_N\in (\R_+)^N$, such that $\al_i>\et_i$ for every $i\in\{1,\ldots,N\}$, it holds that
\begin{equation}
\label{eq:S4_pref}
\begin{split}
&\int_{(\R^2)^2\setminus\D_2} (v_{\ep_2}(x)-v_{\ep_2}(y))\cdot (\nabla\g)(x-y)d(\sum_{i=1}^N\d_{x_i}-N\mu)(x)d(\sum_{i=1}^N\d_{x_i}-N\mu)(y)\\
&\ph-\paren*{\int_{\R^2} (\nabla v_{\ep_2})(x): \comm{H_{N,\ua_N}^{\mu,\ux_N}}{H_{N,\ua_N}^{\mu,\ux_N}}_{SE}(x)dx - \sum_{i=1}^N\int_{(\R^2)^2} (v_{\ep_2}(x)-v_{\ep_2}(y))\cdot(\nabla\g)(x-y)\d_{x_i}^{(\al_i)}(x)\d_{x_i}^{(\al_i)}(y)} \\
&=\lim_{|\ue_N|\rightarrow 0} \Bigg(\int_{\R^2} (\nabla v_{\ep_2})(x): \paren*{\comm{H_{N,\ue_N}^{\mu,\ux_N}}{H_{N,\ue_N}^{\mu,\ux_N}}_{SE}-\comm{H_{N,\ua_N}^{\mu,\ux_N}}{H_{N,\ua_N}^{\mu,\ux_N}}_{SE}}(x)dx \\
&\ph \hspace{25mm} - \sum_{i=1}^N\int_{\R^2}(\nabla v_{\ep_2})(x):\comm{\f_{\alpha_i,\eta_i}}{\f_{\alpha_i,\eta_i}}_{SE}(x-x_i)dx\Bigg)
\end{split}
\end{equation}
In step 4, we analyze how the first term in the third line varies as $|\ua_N|, |\ue_N| \rightarrow 0$.

\begin{lemma}
\label{lem:kprop_recomb}
Define the parameters
\begin{equation}
\label{eq:r_def}
r_{i,\ep_1} \coloneqq \min\{\frac{1}{4}\min_{{1\leq j\leq N}\atop{i\neq j}}|x_i-x_j|, \ep_1\}, \qquad \ur_{N,\ep_1} \coloneqq (r_{1,\ep_1},\ldots,r_{N,\ep_1}).
\end{equation}
If $\ua_N,\eta_N\in(\R_+)^N$ are such that $\eta_i<\alpha_i\leq r_{i,\ep_1}$, for every $i\in\{1,\ldots,N\}$, then
\begin{equation}
\label{eq:recomb}
\begin{split}
&\int_{\R^2} (\nabla v_{\ep_2})(x) : \paren*{\comm{H_{N,\ul{\eta}_N}^{\mu,\ux_N}}{H_{N,\ul{\eta}_N}^{\mu,\ux_N}}_{SE} - \comm{H_{N,\ua_N}^{\mu,\ux_N}}{H_{N,\ua_N}^{\mu,\ux_N}}_{SE}}(x)dx \\
&= \sum_{i=1}^N \int_{\R^2} (\nabla v_{\ep_2})(x): \comm{\f_{\al_i,\et_i}}{\f_{\al_i,\et_i}}_{SE}(x-x_i)dx  + \mathrm{Error}_{\ep_2,\ua_N,\ue_N},
\end{split}
\end{equation}
where there exist constants $C_p,C_\infty>0$ such that
\begin{equation}
\label{eq:recomb_err_bnd}
\begin{split}
\left|\mathrm{Error}_{\ep_2,\ua_N,\ue_N}\right| &\lesssim \|v\|_{LL(\R^2)}|\ln\ep_3| \paren*{\Fr_N(\ux_N,\mu) + N|\ln\ep_3| + C_p N^2 \|\mu\|_{L^p(\R^2)} {\ep_3}^{\frac{2(p-1)}{p}}} \\
&\ph + N\sum_{i=1}^N\al_i \paren*{\frac{\|v\|_{L^\infty(\R^2)}}{{\ep_3}^2} +\frac{\|v\|_{{LL}(\R^2)} |\ln \al_i|}{\ep_3}}\\
&\ph +N\|v\|_{{LL}(\R^2)} \|\mu\|_{L^p(\R^2)}^{\frac{p}{2(p-1)}}\sum_{i=1}^N \al_i|\ln\al_i| \\
&\ph + N\|v\|_{L^\infty(\R^2)}\sum_{i=1}^N \paren*{C_p|\al_i|^{1-\frac{2}{p}}\|\mu\|_{L^p(\R^2)} + C_\infty|\al_i||\ln\al_i|\|\mu\|_{L^\infty(\R^2)} 1_{\geq \infty}(p)}.
\end{split}
\end{equation}
\end{lemma}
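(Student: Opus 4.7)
The approach is to expand the stress-energy tensor difference via bilinearity. Setting $\psi_i \coloneqq (\g_{\eta_i} - \g_{\alpha_i})(\cdot - x_i) = \g \ast(\d_{x_i}^{(\eta_i)} - \d_{x_i}^{(\alpha_i)})$, one has $H_{N,\ul{\eta}_N}^{\mu,\ux_N} - H_{N,\ua_N}^{\mu,\ux_N} = \sum_i \psi_i$, so
\begin{equation*}
\comm{H_{N,\ul{\eta}_N}^{\mu,\ux_N}}{H_{N,\ul{\eta}_N}^{\mu,\ux_N}}_{SE} - \comm{H_{N,\ua_N}^{\mu,\ux_N}}{H_{N,\ua_N}^{\mu,\ux_N}}_{SE} = \sum_i \comm{\psi_i}{\psi_i}_{SE} + 2\sum_i \comm{H_{N,\ua_N}^{\mu,\ux_N}}{\psi_i}_{SE} + \sum_{i\neq j}\comm{\psi_i}{\psi_j}_{SE}.
\end{equation*}
The diagonal sum $\sum_i\comm{\psi_i}{\psi_i}_{SE}$, after the change of variables $x \mapsto x+x_i$, yields exactly the principal term on the right-hand side of \eqref{eq:recomb}; the remaining two sums will form $\mathrm{Error}_{\ep_2,\ua_N,\ue_N}$. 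Since $v_{\ep_2} \in W^{1,\infty}(\R^2;\R^2)$ by \cref{lem:conv_bnds}, I can apply \cref{lem:SE_div} to each piece, rewriting it as a double integral of $(v_{\ep_2}(x)-v_{\ep_2}(y))\cdot\nabla\g(x-y)$ against the appropriate pair of measures. Setting $\sigma_i \coloneqq \d_{x_i}^{(\eta_i)} - \d_{x_i}^{(\alpha_i)}$ (zero-mean, supported in $\ol{B}(x_i,\al_i)$), the $j = i$ contribution inside the cross-term vanishes identically by \cref{lem:kprop_diag}, and all surviving off-diagonal pieces merge into a single pair sum over $i \neq j$ plus a mean-field term involving $N\mu$.

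The pair sum is handled by the close/far dichotomy at threshold $\ep_3$, crucially exploiting the separation $|x_i - x_j| \geq 4\max(\al_i,\al_j)$ that follows from $\al_k \leq r_{k,\ep_1}$, which ensures $\supp\sigma_i, \supp\sigma_j$ (and $\supp\d_{x_i}^{(\alpha_i)}, \supp\d_{x_j}^{(\alpha_j)}$) are disjoint and at distance comparable to $|x_i-x_j|$. For distant pairs $|x_i - x_j| > \ep_3$, I use $|\nabla\g(x-y)| \lesssim \ep_3^{-1}$ combined with either $\|v_{\ep_2}\|_{L^\infty} \leq \|v\|_{L^\infty}$ (reproducing the $N\sum_i \al_i \|v\|_{L^\infty}/\ep_3^2$ contribution) or the bound $|v_{\ep_2}(x) - v_{\ep_2}(y)| \lesssim \|v\|_{LL}\al_i|\ln\al_i|$ on $B(x_i,\al_i)$ from \cref{lem:conv_bnds} (reproducing the $\|v\|_{LL}|\ln\al_i|/\ep_3$ term). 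For close pairs $|x_i - x_j| \leq \ep_3$, I instead use $|\nabla\g(x-y)| \lesssim |x-y|^{-1}$ and $|v_{\ep_2}(x)-v_{\ep_2}(y)| \lesssim \|v\|_{LL}|x-y||\ln|x-y||$, producing $\|v\|_{LL}\g(x_i-x_j)$-type summands that \cref{cor:count} bounds by $\|v\|_{LL}|\ln\ep_3|(\Fr_N(\ux_N,\mu) + N|\ln\ep_3| + C_pN^2\|\mu\|_{L^p}\ep_3^{2(p-1)/p})$, the first line of \eqref{eq:recomb_err_bnd}.

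For the mean-field term, the vanishing $\int d\sigma_i = 0$ is exploited twice. First, the piece $N\int v_{\ep_2}(x)(\nabla\g\ast\sigma_i)(x)\,d\mu(x)$ is bounded by H\"older and \cref{lem:f_et_al_bnds} as $N\|v\|_{L^\infty}\|\mu\|_{L^p}\|\nabla(\g_{\eta_i}-\g_{\al_i})\|_{L^{p'}} \lesssim_p N\|v\|_{L^\infty}\|\mu\|_{L^p}\al_i^{1-2/p}$, the $p < \infty$ portion of line~4. Second, the piece $N\int v_{\ep_2}(y)(\nabla\g\ast\mu)(y)\,d\sigma_i(y)$ is split as $(\nabla\g\ast\mu)(x_i)\int v_{\ep_2}\,d\sigma_i$ (with $|\int v_{\ep_2}\,d\sigma_i| = |\int (v_{\ep_2}-v_{\ep_2}(x_i))\,d\sigma_i| \lesssim \|v\|_{LL}\al_i|\ln\al_i|$ paired with $\|\nabla\g\ast\mu\|_{L^\infty} \lesssim \|\mu\|_{L^p}^{p/(2(p-1))}$ from \cref{lem:Linf_RP}, yielding line~3 of \eqref{eq:recomb_err_bnd}) plus a H\"older-oscillation remainder on $B(x_i,\al_i)$, whose size is controlled via \cref{lem:pot_bnds} and produces the $C_\infty\al_i|\ln\al_i|\|\mu\|_\infty$ refinement at $p = \infty$.

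\textbf{Main obstacle.} The technical heart is the bookkeeping --- spotting the $j = i$ cancellation with \cref{lem:kprop_diag}; tuning the $\ep_3$-threshold so that the near-field logarithmic divergence of $\g$ is absorbed by $\Fr_N$ via \cref{cor:count} while the far-field $\ep_3^{-2}$ factor remains tolerable because $\al_i \leq \ep_1 \ll \ep_3$; and separately teasing apart the mean-field term via two complementary uses of $\int d\sigma_i = 0$ (one yielding $L^{p'}$ control of $\nabla\f_{\eta_i,\al_i}$, the other yielding H\"older-oscillation control of $\nabla\g\ast\mu$). Notably, a global Lipschitz bound on the original $v$ is never needed: \cref{lem:SE_div} is invoked only on the mollified $v_{\ep_2}$, while the log-Lipschitz regularity of $v$ itself enters only through pointwise-difference bounds via \cref{lem:conv_bnds}.
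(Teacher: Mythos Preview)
Your proposal is correct and follows essentially the same route as the paper's proof: bilinear expansion of the stress-energy difference, removal of the $j=i$ self-interaction via \cref{lem:kprop_diag}, a close/far dichotomy at threshold $\ep_3$ for the pair sum combined with \cref{cor:count}, and a zero-mean/H\"older splitting of the mean-field contribution. Two cosmetic differences worth noting: (i) the paper observes at the outset that $\comm{\psi_i}{\psi_j}_{SE}\equiv 0$ for $i\neq j$ \emph{pointwise} (since $\nabla\psi_i$ is supported in $\ol{B}(x_i,\al_i)$ and these balls are disjoint), so that term never enters the error; and (ii) in the mean-field piece with $v_{\ep_2}(y)$, the paper freezes $v_{\ep_2}$ at $x_i$ rather than $\nabla\g\ast\mu$ --- your alternative freezing is equally valid and yields the same bounds.
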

\begin{proof}
Since for any $\beta>0$, $\g\ast\delta_{0}^{(\beta)} = \g_\beta$, it follows that
\begin{align}
H_{N,\ua_N}^{\mu,\ux_N} &= (\g\ast N\mu) - \sum_{i=1}^N \g_{\alpha_i}(\cdot-x_i) \nn\\
&= (\g\ast N\mu) - \sum_{i=1}^N \g_{\eta_i}(\cdot-x_i) + \sum_{i=1}^N \f_{\alpha_i,\eta_i}(\cdot-x_i) \nn\\
&=H_{N,\ue_N}^{\mu,\ux_N} +\sum_{i=1}^N \f_{\alpha_i,\eta_i}(\cdot-x_i).
\end{align}
Observe that $\g_{\eta_i}(\cdot-x_i) = \g_{\alpha_i}(\cdot-x_i)$ outside the ball $B(x_i,\alpha_i)$ (recall that $\al_i>\et_i$ by assumption). Also observe that the closed balls $\ol{B}(x_i,\alpha_i)$ are pairwise disjoint. To see this property, observe that
\begin{equation}
\al_i \leq r_{i,\ep_1} \leq \frac{|x_i-x_j|}{4} \quad \text{and} \quad \al_j \leq r_{j,\ep_1} \leq \frac{|x_i-x_j|}{4}, \qquad \forall i\neq j,
\end{equation}
by definition of $r_{i,\ep_1},r_{j,\ep_1}$ and requirement that $\al_i\leq r_{i,\ep_1}$ and $\al_j\leq r_{j,\ep_1}$. Thus, we find that
\begin{equation}
H_{N,\ua_N}^{\mu,\ux_N}(x) = H_{N,\ue_N}^{\mu,\ux_N}(x), \qquad \forall x\in \R^2\setminus\bigcup_{i=1}^N B(x_i,\alpha_i).
\end{equation}
So by the bilinearity of the stress-energy tensor, we have the point-wise a.e. identity
\begin{equation}
\label{eq:se_diff_id}
\begin{split}
\comm{H_{N,\ua_N}^{\mu,\ux_N}}{H_{N,\ua_N}^{\mu,\ux_N}}_{SE} - \comm{H_{N,\ue_N}^{\mu,\ux_N}}{H_{N,\ue_N}^{\mu,\ux_N}}_{SE} &= \sum_{i=1}^N \comm{\f_{\alpha_i,\eta_i}}{\f_{\alpha_i,\eta_i}}_{SE}1_{B(x_i,\alpha_i)} \\
&\ph + 2\sum_{i=1}^N\comm{\f_{\alpha_i,\eta_i}}{H_{N,\ue_N}^{\mu,\ux_N}}_{SE}1_{B(x_i,\alpha_i)},
\end{split}
\end{equation}
where we use that the terms $\comm{\f_{\alpha_i,\eta_i}}{\f_{\alpha_j,\eta_j}}_{SE}$, for $i\neq j$, vanish due to $\ol{B}(x_i,\alpha_i)\cap \ol{B}(x_j,\alpha_j)=\emptyset$ and similarly, $\comm{\f_{\alpha_j,\eta_j}}{H_{N,\ue_N}^{\mu,\ux_N}}_{SE}1_{B(x_i,\alpha_i)}\equiv 0$. Thus, using identity \eqref{eq:se_diff_id}, we see that
\begin{align}
&\int_{\R^2}(\nabla v_{\ep_2})(x) : \paren*{\comm{H_{N,\ua_N}^{\mu,\ux_N}}{H_{N,\ua_N}^{\mu,\ux_N}}_{SE} - \comm{H_{N,\ue_N}^{\mu,\ux_N}}{H_{N,\ue_N}^{\mu,\ux_N}}_{SE}}(x)dx \nn\\
&=\sum_{i=1}^N \int_{B(x_i,\alpha_i)} (\nabla v_{\ep_2})(x): \paren*{\comm{\f_{\alpha_i,\eta_i}}{\f_{\alpha_i,\eta_i}}_{SE}+2\comm{\f_{\alpha_i,\eta_i}}{H_{N,\ue_N}^{\mu,\ux_N}}_{SE}}(x)dx \nn\\
&=\sum_{i=1}^N\int_{\R^2}(\nabla v_{\ep_2})(x) :\paren*{\comm{\f_{\alpha_i,\eta_i}}{\f_{\alpha_i,\eta_i}}_{SE}+2\comm{\f_{\alpha_i,\eta_i}}{H_{N,\ue_N}^{\mu,\ux_N}}_{SE}}(x)dx,  \label{eq:recomb_nte}
\end{align}
where the ultimate line follows from the fact that $\nabla\f_{\alpha_i,\eta_i}$ vanishes outside $\ol{B}(x_i,\alpha_i)$. Comparing this last expression to the right-hand side of identity \eqref{eq:recomb}, we see that we only need to estimate the modulus of
\begin{equation}
\label{eq:recomb_split_pre}
\begin{split}
&2\sum_{i=1}^N \int_{\R^2}(\nabla v_{\ep_2})(x) : \comm{\f_{\alpha_i,\eta_i}}{H_{N,\ue_N}^{\mu,\ux_N}}_{SE}(x)dx\\
&=2\sum_{i=1}^N \int_{(\R^2)^2}\paren*{v_{\ep_2}(x)-v_{\ep_2}(y)}\cdot (\nabla\g)(x-y)d(\sum_{j=1}^N\d_{x_j}^{(\alpha_j)}-N\mu)(x)d(\d_{x_i}^{(\eta_i)}-\d_{x_i}^{(\alpha_i)})(y),
\end{split}
\end{equation}
where the right-hand side follows from an application of \cref{lem:SE_div}. Note that by identity \eqref{eq:diag_zero_id},
\begin{equation}
\label{eq:recomb_split}
\eqref{eq:recomb_split_pre} = 2\sum_{i=1}^N \int_{(\R^2)^2}\paren*{v_{\ep_2}(x)-v_{\ep_2}(y)}\cdot (\nabla\g)(x-y)d(\sum_{{1\leq j\leq N} \atop {j\neq i}}\d_{x_j}^{(\alpha_j)}-N\mu)(x)d(\d_{x_i}^{(\eta_i)}-\d_{x_i}^{(\alpha_i)})(y).
\end{equation}
Thus, we can split the right-hand side of \eqref{eq:recomb_split} into a sum of two terms defined below:
\begin{align}
\mathrm{Term}_1 &\coloneqq 2\sum_{i=1}^N \int_{(\R^2)^2}v_{\ep_2}(x)\cdot(\nabla\g)(x-y)d(\sum_{{1\leq j\leq N}\atop {j\neq i}}\d_{x_j}^{(\alpha_j)}-N\mu)(x)d(\d_{x_i}^{(\eta_i)}-\d_{x_i}^{(\alpha_i)})(y),  \label{eq:recomb_T1_def}\\
\mathrm{Term}_2 &\coloneqq -2\sum_{i=1}^N\int_{(\R^2)^2} v_{\ep_2}(y)\cdot(\nabla\g)(x-y)d(\sum_{{1\leq j\leq N}\atop {j\neq i}}\d_{x_j}^{(\alpha_j)}-N\mu)(x)d(\d_{x_i}^{(\eta_i)}-\d_{x_i}^{(\alpha_i)})(y). \label{eq:recomb_T2_def}
\end{align}

\begin{description}[leftmargin=*]
\item[Estimate for $\mathrm{Term}_1$]
We observe that by using the Fubini-Tonelli theorem to first integrate out the $y$-variable and then applying identity \eqref{eq:f_conv_smear},
\begin{equation}
\mathrm{Term}_1 = 2\sum_{i=1}^N\int_{\R^2}v_{\ep_2}(x)\cdot (\nabla\f_{\alpha_i,\eta_i})(x-x_i)d(\sum_{{1\leq j\leq N}\atop {j\neq i}}\d_{x_j}^{(\alpha_j)}-N\mu)(x).
\end{equation}
Since $\supp(\nabla\f_{\alpha_i,\eta_i}(\cdot-x_i))\subset \ol{B}(x_i,\alpha_i)$ and $\supp(\d_{x_j}^{(\alpha_j)})\subset \ol{B}(x_j,\alpha_j)$, which is disjoint from $\ol{B}(x_i,\alpha_i)$, for $j\neq i$, we observe the cancellation
\begin{equation}
\mathrm{Term}_1 = -2N\sum_{i=1}^N\int_{\R^2}v_{\ep_2}(x)\cdot(\nabla\f_{\alpha_i,\eta_i})(x)d\mu(x).
\end{equation}
By triangle and H\"older's inequalities
\begin{equation}
\label{eq:T1_Hold}
\left|-2N\sum_{i=1}^N\int_{\R^2}v_{\ep_2}(x)\cdot(\nabla\f_{\alpha_i,\eta_i})(x)d\mu(x)\right| \lesssim N\sum_{i=1}^N \|v_{\ep_2}\|_{L^{p_1}(\R^2)} \|\nabla\f_{\alpha_i,\eta_i}\|_{L^{p_2}(\R^2)} \|\mu\|_{L^{p_3}(\R^2)},
\end{equation}
where $1\leq p_1,p_2,p_3\leq \infty$ are H\"older conjugate. Requiring $1\leq p_2< 2$, $p_1=\infty$, and $p_3=p$ and applying \cref{lem:f_et_al_bnds} to $\|\nabla\f_{\alpha_i,\eta_i}\|_{L^{p_2}(\R^2)}$, we find that
\begin{equation}
\eqref{eq:T1_Hold} \lesssim_{p} N\sum_{i=1}^N\alpha_i^{\frac{p-2}{p}} \|v_{\ep_2}\|_{L^\infty(\R^2)} \|\mu\|_{L^{p}(\R^2)}.
\end{equation}
Using that $\|v_{\ep_2}\|_{L^\infty(\R^2)}\leq \|v\|_{L^\infty(\R^2)}$ by \cref{lem:conv_bnds}, we conclude that
\begin{equation}
\label{eq:recomb_T1_fin}
|\mathrm{Term}_1| \lesssim_p N\|v\|_{L^\infty(\R^2)}\|\mu\|_{L^p(\R^2)}\sum_{i=1}^N\al_i^{\frac{p-2}{p}},
\end{equation}
which completes the analysis for $\mathrm{Term}_1$.

\item[Estimate for $\mathrm{Term}_2$]
We proceed to analyze $\mathrm{Term}_2$ from \eqref{eq:recomb_T2_def}. Using the Fubini-Tonelli theorem to first integrate out the $x$-variable together with identity \eqref{eq:g_conv_smear}, we can write
\begin{equation}
\begin{split}
\mathrm{Term}_2 &= 2\sum_{1\leq i\neq j\leq N}\int_{\R^2} v_{\ep_2}(y)\cdot(\nabla\g_{\alpha_j})(y-x_j)d(\d_{x_i}^{(\eta_i)}-\d_{x_i}^{(\alpha_i)})(y) \\
&\ph - 2\sum_{1\leq i\neq j\leq N}\int_{\R^2}v_{\ep_2}(y)\cdot(\nabla\g\ast\mu)(y)d(\d_{x_i}^{(\eta_i)}-\d_{x_i}^{(\alpha_i)})(y).
\end{split}
\end{equation}
Adding and subtracting $v_{\ep_2}(x_i)$ and $v_{\ep_2}(x_j)$, respectively, we can write
\begin{equation}
\begin{split}
&2\sum_{1\leq i\neq j\leq N}\int_{\R^2} v_{\ep_2}(y)\cdot(\nabla\g_{\alpha_j})(y-x_j)d(\d_{x_i}^{(\eta_i)}-\d_{x_i}^{(\alpha_i)})(y) \\
&= \underbrace{2\sum_{1\leq i\neq j\leq N}\int_{\R^2} \paren*{v_{\ep_2}(y)-v_{\ep_2}(x_j)}\cdot(\nabla\g_{\alpha_j})(y-x_j)d(\d_{x_i}^{(\eta_i)}-\d_{x_i}^{(\alpha_i)})(y)}_{\eqqcolon \mathrm{Term}_{2,1}} \\
&\ph + \underbrace{2\sum_{1\leq i\neq j\leq N}\int_{\R^2}v_{\ep_2}(x_j)\cdot(\nabla\g_{\alpha_j})(y-x_j)d(\d_{x_i}^{(\eta_i)}-\d_{x_i}^{(\alpha_i)})(y)}_{\eqqcolon \mathrm{Term}_{2,2}}
\end{split}
\end{equation}
and
\begin{equation}
\begin{split}
&2\sum_{1\leq i\neq j\leq N}\int_{\R^2}v_{\ep_2}(y)\cdot(\nabla\g\ast\mu)(y)d(\d_{x_i}^{(\eta_i)}-\d_{x_i}^{(\alpha_i)})(y)\\
&= \underbrace{2\sum_{1\leq i\neq j\leq N}\int_{\R^2}\paren*{v_{\ep_2}(y)-v_{\ep_2}(x_i)}\cdot(\nabla\g\ast\mu)(y)d(\d_{x_i}^{(\eta_i)}-\d_{x_i}^{(\alpha_i)})(y)}_{\eqqcolon\mathrm{Term}_{2,3}}\\
&\ph +\underbrace{2\sum_{1\leq i\neq j\leq N}\int_{\R^2}v_{\ep_2}(x_i)\cdot(\nabla\g\ast\mu)(y)d(\d_{x_i}^{(\eta_i)}-\d_{x_i}^{(\alpha_i)})(y)}_{\eqqcolon\mathrm{Term}_{2,4}},
\end{split}
\end{equation}
so that
\begin{equation}
\mathrm{Term}_2 = \mathrm{Term}_{2,1} + \mathrm{Term}_{2,2} - \mathrm{Term}_{2,3} - \mathrm{Term}_{2,4}.
\end{equation}
We proceed to estimate $\mathrm{Term}_{2,1},\ldots,\mathrm{Term}_{2,4}$ separately.
\begin{description}[leftmargin=*]
\item[Estimate for $\mathrm{Term}_{2,1}$]
As in the proof of \cref{lem:kprop_error}, we decompose the sum $\sum_{1\leq i\neq j\leq N}$ in $\mathrm{Term}_{2,1}$ in terms of ``close'' and ``far'' vortex pairs $(i,j)$ with distance threshold $\ep_3$:
\begin{align}
\mathrm{Term}_{2,1} &= 2\sum_{{1\leq i\neq j\leq N}\atop {|x_i-x_j|\leq\ep_3}}\int_{\R^2} \paren*{v_{\ep_2}(y)-v_{\ep_2}(x_j)}\cdot(\nabla\g_{\alpha_j})(y-x_j)d(\d_{x_i}^{(\eta_i)}-\d_{x_i}^{(\alpha_i)})(y)\nn\\
&\ph + 2\sum_{{1\leq i\neq j\leq N}\atop{|x_i-x_j|>\ep_3}}\int_{\R^2} \paren*{v_{\ep_2}(y)-v_{\ep_2}(x_j)}\cdot(\nabla\g_{\alpha_j})(y-x_j)d(\d_{x_i}^{(\eta_i)}-\d_{x_i}^{(\alpha_i)})(y) \nn\\
&\eqqcolon \mathrm{Term}_{2,1,1} + \mathrm{Term}_{2,1,2}.
\end{align}

For $\mathrm{Term}_{2,1,1}$, we use that $\|v_{\ep_2}\|_{LL(\R^2)}\leq \|v\|_{LL(\R^2)}$ by \cref{lem:conv_bnds} in order to obtain that
\begin{align}
|\mathrm{Term}_{2,1,1}| &\leq \|v\|_{LL(\R^2)}\sum_{{1\leq i\neq j\leq N}\atop{|x_i-x_j|\leq\ep_3}}\int_{\R^2} |y-x_j||\ln|y-x_j||(\nabla\g_{\alpha_j})(y-x_j)|d(\d_{x_i}^{(\et_i)}+\d_{x_i}^{(\al_i)})(y) \nn\\
&\lesssim \|v\|_{LL(\R^2)}\sum_{{1\leq i\neq j\leq N}\atop {|x_i-x_j|\leq\ep_3}}\int_{\R^2}\g(y-x_j)d(\d_{x_i}^{(\eta_i)}+\d_{x_i}^{(\al_i)})(y),
\end{align}
where the ultimate inequality follows from using the bound
\begin{equation}
|(\nabla\g_{\alpha_j})(y-x_j)| \leq \frac{1}{2\pi|y-x_j|}
\end{equation}
and $\ln|y-x_j|<0$ for $y\in\supp(\d_{x_j}^{(\eta_i)}+\d_{x_j}^{(\al_i)})$. Since for $1\leq i\neq j\leq N$, we have that $|x_i-x_j|\geq 4r_{i,\ep_1}$, by definition of $r_{i,\ep_1}$, and $r_{i,\ep_1}\geq \alpha_i>\eta_i$, by assumption, the reverse triangle inequality and the decreasing property of $\g$ imply that
\begin{equation}
\g(y-x_j) \leq \tl{\g}(|x_i-x_j| - \frac{1}{4}|x_i-x_j|) \leq \tl{\g}(\frac{3}{4}) + \g(x_i-x_j), \qquad \forall y\in \p B(x_i,\alpha_i)\cup\p B(x_i,\eta_i).
\end{equation}
Hence,
\begin{equation}
\|v\|_{LL(\R^2)}\sum_{{1\leq i\neq j\leq N}\atop {|x_i-x_j|\leq\ep_3}}\int_{\R^2}\g(y-x_j)d(\d_{x_i}^{(\eta_i)}+\d_{x_i}^{(\al_i)})(y) \lesssim \|v\|_{LL(\R^2)}\sum_{{1\leq i\neq j\leq N}\atop{|x_i-x_j|\leq\ep_3}} \paren*{\tl{\g}(\frac{3}{4}) + \g(x_i-x_j)}.
\end{equation}
Thus by \cref{cor:count},
\begin{equation}
\label{eq:T_211_fin}
|\mathrm{Term}_{2,1,1}|\lesssim \|v\|_{LL(\R^2)} |\ln\ep_3|\paren*{\Fr_N(\ux_N,\mu) + N|\ln\ep_3| + C_p N^2 \|\mu\|_{L^p(\R^2)} {\ep_3}^{\frac{2(p-1)}{p}}}.
\end{equation}

For $\mathrm{Term}_{2,1,2}$, we make a change of variable to write
\begin{align}
&\int_{\R^2}(v_{\ep_2}(y)-v_{\ep_2}(x_j)) \cdot(\nabla\g_{\alpha_j})(y-x_j)d(\d_{x_i}^{(\eta_i)}-\d_{x_i}^{(\al_i)})(y) \nn\\
&=\int_{\R^2} \paren*{v_{\ep_2}(\eta_i y+x_i-x_j)\cdot(\nabla\g_{\alpha_j})(\eta_i y +x_i-x_j) - v_{\ep_2}(\al_i y+x_i-x_j) \cdot(\nabla\g_{\alpha_j})(\al_i y +x_i-x_j)} d\d_0^{(1)}(y). \label{eq:LL_MVT_bnd}
\end{align}
Since $\ep_3>\ep_2\geq 2\ep_1$ by assumption and since $\eta_i\leq \alpha_i\leq \ep_1$, for every $i\in\{1,\ldots,N\}$, also by assumption, it follows from the log-Lipschitz property of $v_{\ep_2}$ and the (reverse) triangle inequality that
\begin{align}
|\eqref{eq:LL_MVT_bnd}| &\lesssim \int_{\R^2} \frac{\|v_{\ep_2}\|_{{LL}(\R^2)}|\et_i-\al_i| |\ln|\et_i-\al_i||}{|\eta_i y+ x_i-x_j|}d\d_0^{(1)}(y) + \int_{\R^2}\frac{|v_{\ep_2}(\al_i y + x_i-x_j)| \cdot |\et_i-\al_i|}{(|x_i-x_j|-\al_i)(|x_i-x_j|-\et_i)}d\d_0^{(1)}(y) \nn\\
&\lesssim \frac{\|v\|_{{LL}(\R^2)}|\eta_i-\al_i||\ln|\eta_i-\al_i||}{\ep_3} + \frac{\|v\|_{L^\infty(\R^2)}|\et_i-\al_i|}{{\ep_3}^2}.
\end{align}
Hence,
\begin{align}
|\mathrm{Term}_{2,1,2}| &\lesssim \sum_{{1\leq i\neq j\leq N}\atop{|x_i-x_j|>\ep_3}}\Bigg(\frac{\|v\|_{LL(\R^2)}|\eta_i-\al_i| |\ln|\eta_i-\al_i||}{\ep_3} + \frac{\|v\|_{L^\infty(\R^2)}|\et_i-\al_i|}{{\ep_3}^2}\Bigg) \nn\\
&\lesssim N \sum_{i=1}^N|\et_i-\al_i|\paren*{\frac{\|v\|_{L^\infty(\R^2)}}{\ep_3^2} + \frac{\|v\|_{LL(\R^2)}|\ln|\et_i-\al_i||}{\ep_3}}.
\end{align}
Since $x\mapsto |x|\ln|x|$ is increasing in the region $0<|x|<e^{-1}$, we conclude that
\begin{equation}
\label{eq:T_212_fin}
|\mathrm{Term}_{2,1,2}| \lesssim N\sum_{i=1}^N\al_i \paren*{\frac{\|v\|_{L^\infty(\R^2)}}{\ep_3^2} + \frac{\|v\|_{LL(\R^2)} |\ln \al_i|}{\ep_3}}.
\end{equation}

Combining estimates \eqref{eq:T_211_fin} and \eqref{eq:T_212_fin} for $\mathrm{Term}_{2,1,1}$ and $\mathrm{Term}_{2,1,2}$, respectively, we conclude that
\begin{equation}
\label{eq:T_21_fin}
\begin{split}
|\mathrm{Term}_{2,1}| &\lesssim \|v\|_{LL(\R^2)} |\ln\ep_3|\paren*{\Fr_N(\ux_N,\mu) + N|\ln\ep_3| + C_p N^2 \|\mu\|_{L^p(\R^2)} {\ep_3}^{\frac{2(p-1)}{p}}} \\
&\ph+N\sum_{i=1}^N \al_i\paren*{\frac{\|v\|_{L^\infty(\R^2)}}{\ep_3^2} + \frac{\|v\|_{LL(\R^2)} |\ln \al_i|}{\ep_3}}.
\end{split}
\end{equation}

\item[Estimate for $\mathrm{Term}_{2,2}$]

We claim that $\mathrm{Term}_{2,2}=0$. Indeed, since by identity \eqref{eq:f_et_al_id}, we have that
\begin{equation}
\d_{x_i}^{(\eta_i)}-\d_{x_i}^{(\alpha_i)} = (-\D\f_{\alpha_i,\eta_i})(\cdot-x_i),
\end{equation}
we can integrate by parts once, observing that
\begin{equation}
(-\D\g_{\alpha_j})(y-x_j) = \d_{x_j}^{(\alpha_j)}
\end{equation}
by identity \eqref{eq:delta_smear} and a change of variable, in order to obtain
\begin{equation}
\mathrm{Term}_{2,2} = 2\sum_{1\leq i\neq j\leq N}\int_{\R^2}v_{\ep_2}(x_i)\cdot (\nabla\f_{\alpha_i,\eta_i})(y-x_i)d\d_{x_j}^{(\alpha_j)}(y).
\end{equation}
Note that $\supp(\nabla\f_{\al_i,\et_i}(\cdot-x_i))\subset \ol{B}(x_i,\al_i)$ by identity \eqref{eq:f_grad_et_al_id} and $\supp(\d_{x_j}^{(\al_j)})\subset \ol{B}(x_j,\al_j)$, while the balls $B(x_i,\al_i), B(x_j,\al_j)$ are disjoint, for $i\neq j$. Thus, we conclude that each of the summands in $\mathrm{Term}_{2,2}$ vanishes, implying
\begin{equation}
\label{eq:T_22_fin}
\mathrm{Term}_{2,2} = 0.
\end{equation}

\item[Estimate for $\mathrm{Term}_{2,3}$]
Using that $\|v_{\ep_2}\|_{{LL}(\R^2)}\leq \|v\|_{{LL}(\R^2)}$ by \cref{lem:conv_bnds}, we find that
\begin{equation}
|\mathrm{Term}_{2,3}| \lesssim \|v\|_{{LL}(\R^2)}\sum_{1\leq i\neq j\leq N} \int_{\R^2}|x-x_i| |\ln|x-x_i|| |(\nabla\g\ast\mu)(x)| d(\d_{x_i}^{(\eta_i)}+\d_{x_i}^{(\al_i)})(x).
\end{equation}
By \cref{lem:Linf_RP} (note that $\|\mu\|_{L^1(\R^2)}=1$),
\begin{equation}
\|\nabla\g\ast\mu\|_{L^\infty(\R^2)} \lesssim_p \|\mu\|_{L^p(\R^2)}^{\frac{p}{2(p-1)}}.
\end{equation}
Since $x\mapsto |x||\ln|x||$ is increasing for $0<|x|<e^{-1}$ and $\eta_i<\al_i$, it follows from the supports of $\d_{x_i}^{(\al_i)}, \d_{x_i}^{(\eta_i)}$ that
\begin{align}
|\mathrm{Term}_{2,3}| &\lesssim_p \|v\|_{{LL}(\R^2)}\|\mu\|_{L^p(\R^2)}^{\frac{p}{2(p-1)}}\sum_{1\leq i\neq j\leq N}\al_i|\ln\al_i| \nn\\
&\leq N\|v\|_{{LL}(\R^2)} \|\mu\|_{L^p(\R^2)}^{\frac{p}{2(p-1)}}\sum_{i=1}^N \al_i |\ln\al_i|, \label{eq:T_23_fin}
\end{align}
which completes the analysis for $\mathrm{Term}_{2,3}$.

\item[Estimate for $\mathrm{Term}_{2,4}$]
We first observe that by a change of variable,
\begin{equation}
\label{eq:mod_cont_app}
\begin{split}
&\int_{\R^2}v_{\ep_2}(x_i)\cdot(\nabla\g\ast\mu)(x) d(\d_{x_i}^{(\et_i)}-\d_{x_i}^{(\al_i)})(x) \\
&=\int_{\R^2}v_{\ep_2}(x_i)\cdot\paren*{(\nabla\g\ast\mu)(x_i+\et_iy)-(\nabla\g\ast\mu)(x_i+\al_iy)}d\d_0^{(1)}(y),
\end{split}
\end{equation}
for every $i\in\{1,\ldots,N\}$. By \cref{lem:pot_bnds}, we have the modulus of continuity estimates
\begin{align}
|(\nabla\g\ast\mu)(x_i+\et_iy)-(\nabla\g\ast\mu)(x_i+\al_iy)| &\lesssim \|(\nabla\g\ast\mu)\|_{\dot{C}^{1-\frac{2}{p}}(\R^2) }|\et_i y - \al_i y|^{1-\frac{2}{p}} \nn\\
&\lesssim_p \|\mu\|_{L^p(\R^2)} |\et_i-\al_i|^{1-\frac{2}{p}}, \qquad \forall |y|=1, \ 2<p<\infty
\end{align}
and
\begin{align}
|(\nabla\g\ast\mu)(x_i+\et_iy)-(\nabla\g\ast\mu)(x_i+\al_iy)| &\lesssim \|(\nabla\g\ast\mu)\|_{{LL}(\R^2)} |\et_i y-\al_i y||\ln|\et_iy-\al_iy|| \nn\\
&\lesssim \|\mu\|_{L^\infty(\R^2)} |\et_i-\al_i||\ln|\et_i-\al_i||, \qquad \forall |y|=1, \ p=\infty.
\end{align}
Applying these estimates and using that $x\mapsto |x|^{1-\frac{2}{p}}$ and $x\mapsto |x||\ln|x||$ are increasing for $0<|x|<e^{-1}$, we find that
\begin{align}
|\eqref{eq:mod_cont_app}| &\leq \begin{cases} C_p\|v\|_{L^\infty(\R^2)}\|\mu\|_{L^p(\R^2)}|\al_i|^{1-\frac{2}{p}}, & {2<p<\infty} \\
C_\infty\|v\|_{L^\infty(\R^2)}\|\mu\|_{L^\infty(\R^2)}|\al_i||\ln|\al_i||, & {p=\infty}
\end{cases},
\end{align}
where $C_p, C_\infty>0$ are absolute constants. Hence,
\begin{equation}
\label{eq:T_24_fin}
|\mathrm{Term}_{2,4}| \leq \begin{cases} NC_p\|v\|_{L^\infty(\R^2)}\|\mu\|_{L^p(\R^2)}\sum_{i=1}^N|\al_i|^{1-\frac{2}{p}}, & {2<p<\infty} \\
N C_\infty\|v\|_{L^\infty(\R^2)}\|\mu\|_{L^\infty(\R^2)}\sum_{i=1}^N|\al_i||\ln|\al_i||, & {p=\infty}
\end{cases},
\end{equation}
which completes the analysis for $\mathrm{Term}_{2,4}$.
\end{description}
Combining estimates \eqref{eq:T_21_fin}, \eqref{eq:T_22_fin}, \eqref{eq:T_23_fin}, and \eqref{eq:T_24_fin} for $\mathrm{Term}_{2,1}$, $\mathrm{Term}_{2,2}$, $\mathrm{Term}_{2,3}$, and $\mathrm{Term}_{2,4}$, respectively, we see that
\begin{equation}
\label{eq:recomb_T2_fin}
\begin{split}
|\mathrm{Term}_2| &\lesssim  \|v\|_{LL(\R^2)} |\ln\ep_3|\paren*{\Fr_N(\ux_N,\mu) + N|\ln\ep_3| + C_p N^2 \|\mu\|_{L^p(\R^2)} {\ep_3}^{\frac{2(p-1)}{p}}}\\
&\ph +N\sum_{i=1}^N \al_i \paren*{\frac{\|v\|_{L^\infty(\R^2)}}{{\ep_3}^2} +\frac{\|v\|_{{LL}(\R^2)} |\ln \al_i|}{\ep_3}}+N\|v\|_{{LL}(\R^2)} \|\mu\|_{L^p(\R^2)}^{\frac{p}{2(p-1)}}\sum_{i=1}^N \al_i|\ln\al_i| \\
&\ph + N\|v\|_{L^\infty(\R^2)}\sum_{i=1}^N \paren*{C_p\|\mu\|_{L^p(\R^2)}|\al_i|^{1-\frac{2}{p}}1_{2<\cdot<\infty}(p) + C_\infty\|\mu\|_{L^\infty(\R^2)}\al_i||\ln|\al_i||1_{\geq \infty}(p)}.
\end{split}
\end{equation}
This completes the analysis for $\mathrm{Term}_2$.
\end{description}

With estimate \eqref{eq:recomb_T1_fin} for $\mathrm{Term}_1$, the definition of which we recall from \eqref{eq:recomb_T1_def}, and estimate \eqref{eq:recomb_T2_fin} for $\mathrm{Term}_2$, the definition of which we recall from \eqref{eq:recomb_T2_def}, respectively, together with our starting identity \eqref{eq:recomb_nte}, we conclude that
\begin{equation}
\begin{split}
&\left|\int_{\R^2} (\nabla v_{\ep_2})(x) : \paren*{\comm{H_{N,\ua_N}^{\mu,\ux_N}}{H_{N,\ua_N}^{\mu,\ux_N}}_{SE} - \comm{H_{N,\ue_N}^{\mu,\ux_N}}{H_{N,\ue_N}^{\mu,\ux_N}}_{SE} - \sum_{i=1}^N\comm{\f_{\al_i,\et_i}}{\f_{\al_i,\et_i}}_{SE}(\cdot-x_i)}(x)dx\right| \\
&\lesssim   \|v\|_{LL(\R^2)}|\ln\ep_3| \paren*{\Fr_N(\ux_N,\mu) + N|\ln\ep_3| + C_p N^2 \|\mu\|_{L^p(\R^2)} {\ep_3}^{\frac{2(p-1)}{p}}} \\
&\ph + N\sum_{i=1}^N \al_i\paren*{\frac{\|v\|_{L^\infty(\R^2)}}{{\ep_3}^2} +\frac{\|v\|_{{LL}(\R^2)} |\ln \al_i|}{\ep_3}} +N\|v\|_{{LL}(\R^2)} \|\mu\|_{L^p(\R^2)}^{\frac{p}{2(p-1)}}\sum_{i=1}^N \al_i|\ln\al_i| \\
&\ph + N\|v\|_{L^\infty(\R^2)}\sum_{i=1}^N \paren*{C_p\|\mu\|_{L^p(\R^2)}|\al_i|^{1-\frac{2}{p}} + \|\mu\|_{L^\infty(\R^2)}|\al_i||\ln \al_i|1_{\geq \infty}(p)}.
\end{split}
\end{equation}
Comparing the preceding inequality with the statement of \cref{lem:kprop_recomb}, we see that the proof of the lemma is complete.
\end{proof}

\subsection{Step 5: Conclusion}
\label{ssec:kprop_conc}
Returning to the identity \eqref{eq:S4_pref} and applying \cref{lem:kprop_recomb}, we have shown that for any $\ua_N\in (\R_+)^N$ with $\alpha_i\leq r_{i,\ep_1}$ for each $i\in\{1,\ldots,N\}$, it holds that
\begin{equation}
\label{eq:conc_RHS}
\begin{split}
&\int_{(\R^2)^2\setminus\D_2} \paren*{v_{\ep_2}(x)-v_{\ep_2}(y)}\cdot(\nabla\g)(x-y)d(\sum_{i=1}^N\d_{x_i}-N\mu)(x)d(\sum_{i=1}^N\d_{x_i}-N\mu)(y)\\
&=\mathrm{Error}_{\ep_2,\ua_N}+\int_{\R^2} (\nabla v_{\ep_2})(x) : \comm{H_{N,\ua_N}^{\mu,\ux_N}}{H_{N,\ua_N}^{\mu,\ux_N}}_{SE}(x)dx \\
&\ph -\sum_{i=1}^N\int_{(\R^2)^2} \paren*{v_{\ep_2}(x)-v_{\ep_2}(y)}\cdot(\nabla\g)(x-y)d\d_{x_i}^{(\alpha_i)}(x)d\d_{x_i}^{(\alpha_i)}(y),
\end{split}
\end{equation}
where $\mathrm{Error}_{\ep_2,\ua_N}$ satisfies the bound \eqref{eq:recomb_err_bnd}. We choose $\ua_N=\ur_{N,\ep_1}$, where the reader will recall the definition of $\ur_{N,\ep_1}$ from \eqref{eq:r_def}.

Using the error bound \eqref{eq:recomb_err_bnd} with $\ua_N=\ur_{N,\ep_1}$, that $r_{i,\ep_1}\leq \ep_1$ by definition, the increasing property of $r\mapsto r|\ln r|$ for $r\in (0,e^{-1})$, and some algebra, we find that
\begin{equation}
\label{eq:conc_err_fin}
\begin{split}
\left|\mathrm{Error}_{\ep_2,\ur_{N,\ep_1}}\right| &\lesssim \|v\|_{LL(\R^2)}|\ln\ep_3|\paren*{\Fr_N(\ux_N,\mu)+N|\ln\ep_3| + C_pN^2\|\mu\|_{L^p(\R^2)}{\ep_3}^{\frac{2(p-1)}{p}}} \\
&\ph + \frac{\ep_1 N^2}{\ep_3}\paren*{\frac{\|v\|_{L^\infty(\R^2)}}{\ep_3} + \|v\|_{LL(\R^2)}|\ln\ep_1|} \\
&\ph + N^{2}\ep_1|\ln\ep_1| C_p\|v\|_{LL(\R^2)}\|\mu\|_{L^p(\R^2)}^{\frac{p}{2(p-1)}} \\
&\ph + N^{2}\|v\|_{L^\infty(\R^2)}\paren*{C_p\|\mu\|_{L^p(\R^2)}\ep_1^{\frac{p-2}{p}} + C_\infty \|\mu\|_{L^\infty(\R^2)}\ep_1 |\ln\ep_1| 1_{\geq \infty}(p)}.
\end{split}
\end{equation}

For the second term in the right-hand side of \eqref{eq:conc_RHS}, we use that $\|v_{\ep_2}\|_{LL(\R^2)}\leq \|v\|_{LL(\R^2)}$ by \cref{lem:conv_bnds} and that
\begin{equation}
(\nabla\g)(x-y) = -\frac{1}{2\pi|x-y|} \qquad \forall x\neq y\in\R^2
\end{equation}
in order to obtain the estimate
\begin{align}
&\left|\sum_{i=1}^N\int_{(\R^2)^2} \paren*{v_{\ep_2}(x)-v_{\ep_2}(y)}\cdot(\nabla\g)(x-y)d\d_{x_i}^{(r_{i,\ep_1})}(x)d\d_{x_i}^{(r_{i,\ep_1})}(y)\right| \nn\\
&\lesssim \|v_{\ep_2}\|_{LL(\R^2)}\sum_{i=1}^N \int_{(\R^2)^2}|\ln|x-y||d\d_{x_i}^{(r_{i,\ep_1})}(x)d\d_{x_i}^{(r_{i,\ep_1})}(y) \nn\\
&\lesssim \|v\|_{LL(\R^2)}\sum_{i=1}^N\tl{\g}(r_{i,\ep_1}),
\end{align}
where the ultimate inequality follows from \cref{lem:g_smear_si} and that $|\ln|x-y||=-\ln|x-y|$ for $x,y\in\supp(\d_{x_i}^{(r_{i,\ep_1})})$. Using estimate \eqref{eq:g_r_sum_bnd} of \cref{cor:grad_H}, we conclude that
\begin{equation}
\begin{split}
&\left|\sum_{i=1}^N\int_{(\R^2)^2} \paren*{v_{\ep_2}(x)-v_{\ep_2}(y)}\cdot(\nabla\g)(x-y)d\d_{x_i}^{(r_{i,\ep_1})}(x)d\d_{x_i}^{(r_{i,\ep_1})}(y)\right|\\
&\lesssim \|v\|_{LL(\R^2)}\paren*{\Fr_N(\ux_N,\mu) + N\paren*{|\ln\ep_1|+ C_p\|\mu\|_{L^p(\R^2)}N\ep_1^{\frac{2(p-1)}{p}}}}. \label{eq:conc_T2_fin}
\end{split}
\end{equation}

For the first term in the right-hand side of \eqref{eq:conc_RHS}, we use H\"older's inequality, \cref{lem:conv_bnds}, and the point-wise bound
\begin{equation}
|\comm{H_{N,\ur_{N,\ep_1}}^{\mu,\ux_N}}{H_{N,\ur_{N,\ep_1}}^{\mu,\ux_N}}_{SE}(x)| \lesssim |(\nabla H_{N,\ur_{N,\ep_1}}^{\mu,\ux_N})(x)|^2 \qquad \forall x\in\R^2,
\end{equation}
which is immediate from the definition \eqref{eq:se_ten_def} of the stress-energy tensor and Young's inequality for products, in order to obtain
\begin{align}
\left|\int_{\R^2} (\nabla v_{\ep_2})(x) :\comm{H_{N,\ur_{N,\ep_1}}^{\mu,\ux_N}}{H_{N,\ur_{N,\ep_1}}^{\mu,\ux_N}}_{SE}(x)dx\right| &\lesssim \|\nabla v_{\ep_2}\|_{L^\infty(\R^2)}\int_{\R^2} |(\nabla H_{N,\ur_{N,\ep_1}}^{\mu,\ux_N})(x)|^2dx \nn\\
&\lesssim |\ln\ep_2|\|v\|_{LL(\R^2)}\int_{\R^2} |(\nabla H_{N,\ur_{N,\ep_1}}^{\mu,\ux_N})(x)|^2dx.\footnotemark \label{eq:v_ep_Lip}
\end{align}
\footnotetext{This inequality is precisely where we use that $v_{\ep_2}$ is Lipschitz.}
Now using estimate \eqref{eq:grad_H_r_bnd} from \cref{cor:grad_H}, we conclude that
\begin{equation}
\label{eq:conc_T1_fin}
\begin{split}
&\left|\int_{\R^2}(\nabla v_{\ep_2})(x) :\comm{H_{N,\ur_{N,\ep_1}}^{\mu,\ux_N}}{H_{N,\ur_{N,\ep_1}}^{\mu,\ux_N}}_{SE}(x)dx\right|\\
&\lesssim |\ln\ep_2|\|v\|_{LL(\R^2)}\paren*{\Fr_N(\ux_N,\mu) + N|\ln\ep_1| + C_pN^2\|\mu\|_{L^p(\R^2)}\ep_1^{\frac{2(p-1)}{p}}}. 
\end{split}
\end{equation}

Combining estimates \eqref{eq:conc_err_fin}, \eqref{eq:conc_T2_fin}, and \eqref{eq:conc_T1_fin}, we obtain that
\begin{equation}
\label{eq:conc_v_ep_fin}
\begin{split}
&\left|\int_{(\R^2)^2\setminus\D_2} \paren*{v_{\ep_2}(x)-v_{\ep_2}(y)}\cdot(\nabla\g)(x-y)d(\sum_{i=1}^N\d_{x_i}-N\mu)(x)d(\sum_{i=1}^N\d_{x_i}-N\mu)(y)\right| \\
&\lesssim \paren*{1+|\ln\ep_2|}\|v\|_{LL(\R^2)}\paren*{\Fr_N(\ux_N,\mu) + N\paren*{|\ln\ep_1|+ C_p\|\mu\|_{L^p(\R^2)}N\ep_1^{\frac{2(p-1)}{p}}}}\\
&\ph +|\ln\ep_3|\|v\|_{LL(\R^2)}\paren*{\Fr_N(\ux_N,\mu)+N\paren*{|\ln\ep_3| + C_pN\|\mu\|_{L^p(\R^2)}{\ep_3}^{\frac{2(p-1)}{p}}}} \\
&\ph + \frac{\ep_1 N^2}{\ep_3}\paren*{\frac{\|v\|_{L^\infty(\R^2)}}{\ep_3} + \|v\|_{LL(\R^2)}|\ln\ep_1|} + N^{2}\ep_1|\ln\ep_1| C_p\|v\|_{LL(\R^2)}\|\mu\|_{L^p(\R^2)}^{\frac{p}{2(p-1)}} \\
&\ph + N^{2}\|v\|_{L^\infty(\R^2)}\paren*{C_p\|\mu\|_{L^p(\R^2)}\ep_1^{\frac{p-2}{p}} + C_\infty \|\mu\|_{L^\infty(\R^2)}\ep_1 |\ln\ep_1| 1_{\geq \infty}(p)}.
\end{split}
\end{equation}
Combining estimate \eqref{eq:conc_v_ep_fin} with \cref{lem:kprop_error} and using that $\ep_3>\ep_2>2\ep_1$ in order to simplify, we finally conclude that
\begin{equation}
\begin{split}
&\left|\int_{(\R^2)^2\setminus\D_2} \paren*{v(x)-v(y)}\cdot(\nabla\g)(x-y)d(\sum_{i=1}^N\d_{x_i}-N\mu)(x)d(\sum_{i=1}^N\d_{x_i}-N\mu)(y)\right| \\
&\leq \|v\|_{LL(\R^2)}|\ln\ep_2| |\Fr_N(\ux_N,\mu)| + \|v\|_{LL(\R^2)}N|\ln\ep_1| |\ln\ep_2| + C_pN^2\|v\|_{LL(\R^2)}\ep_3^{\frac{2(p-1)}{p}}|\ln\ep_3| \\
&\ph + N^2\paren*{\frac{\ep_1\|v\|_{L^\infty(\R^2)}}{\ep_3^2}+ \frac{\ep_2|\ln\ep_2|\|v\|_{LL(\R^2)}}{\ep_3} + \ep_2|\ln\ep_2| C_p\|v\|_{LL(\R^2)}\|\mu\|_{L^p(\R^2)}^{\frac{p}{2(p-1)}}} \\
&\ph + N^{2}\|v\|_{L^\infty(\R^2)}\paren*{C_p\|\mu\|_{L^p(\R^2)}\ep_1^{\frac{p-2}{p}} + C_\infty \|\mu\|_{L^\infty(\R^2)}\ep_1 |\ln\ep_1| 1_{\geq \infty}(p)},
\end{split}
\end{equation}
which completes the proof of \cref{prop:key}.

\section{Proof of Main Results}
\label{sec:MR}
In this section, we give the proofs of \cref{thm:main} and \cref{cor:main} using an energy estimate, \cref{prop:key}, and \cref{prop:bes_conv}, as sketched in the introduction.

\subsection{Modulated Energy Derivative}\label{ssec:MR_EI}
We begin by showing that the function $t\mapsto \Fr_N^{avg}(\ux_N(t),\omega(t))$, where we recall that $\ux_N(t)$ is the solution to the point vortex model \eqref{eq:PVM} and $\omega(t)$ is the solution to the Euler equation \eqref{eq:Eul}, respectively, at time $t$, is locally Lipschitz continuous and compute its time derivative. The identity (see \eqref{eq:en_ineq} below) is well-known, but a rigorous proof taking into consideration that we are only working with weak solutions to the Euler equation \eqref{eq:Eul} does not seem to exist in the literature. Thus, we provide a proof in the interests of completeness.

\begin{lemma}[Modulated energy derivative]
\label{lem:en_ineq}
Fix $N\in\N$. Let $\ux_N\in C^\infty([0,\infty);\R^2\setminus\D_N)$ be a solution to the system \eqref{eq:PVM}, and let $\omega \in L^\infty([0,T]; L^1(\R^2)\cap L^p(\R^2))$, for some $2<p\leq\infty$ be a weak solution to the equation \eqref{eq:Eul}, satisfying the logarithmic bound \eqref{eq:log_grow} uniformly in time. Then $\Fr_N^{avg}(\ux_N,\mu): [0,\infty)\rightarrow [0,\infty)$ is locally Lipschitz continuous, and for a.e. $t\in [0,T]$, we have the identity
\begin{equation}
\label{eq:en_ineq}
\frac{d}{dt}\Fr_N^{avg}(\ux_N(t),\mu(t)) = \int_{(\R^2)^2\setminus\D_2} (\nabla\g)(x-y)\cdot\paren*{u(t,x)-u(t,y)}d(\omega_N-\omega)(t,x)d(\omega_N-\omega)(t,y),
\end{equation}
where $u$ is the velocity field associated to $\omega$ through the Biot-Savart law and $\omega_N\coloneqq \frac{1}{N}\sum_{i=1}^N \d_{x_i}$ is the empirical measure associated to $\ux_N$.
\end{lemma}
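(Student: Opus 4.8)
The plan is to first establish the Lipschitz regularity of $t\mapsto \Fr_N^{avg}(\ux_N(t),\omega(t))$ and then differentiate. I would decompose the modulated energy using the bilinearity of the singular integral defining $\Fr_N$. Writing $\Fr_N^{avg} = \frac{1}{N^2}\Fr_N$ and expanding $d(\omega_N - \omega) = \frac{1}{N}\sum_i \d_{x_i} - \omega$, we get three types of contributions: (i) the \emph{particle-particle} term $\frac{1}{N^2}\sum_{i\neq j}\g(x_i(t)-x_j(t))$; (ii) the \emph{mixed} term $-\frac{2}{N}\sum_i (\g\ast\omega)(t,x_i(t))$; and (iii) the \emph{fluid-fluid} term $\int_{(\R^2)^2}\g(x-y)\omega(t,x)\omega(t,y)\,dxdy$. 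For (i), since $\ux_N\in C^\infty([0,\infty);(\R^2)^N\setminus\D_N)$ and the minimal separation is bounded below on compact time intervals (as recorded in \eqref{eq:md_lb}), $\g(x_i(t)-x_j(t))$ is a smooth function of $t$, and we may differentiate directly, using the point vortex ODEs \eqref{eq:PVM} with $a_j = 1/N$. For (iii), \cref{lem:e_con} tells us this term is \emph{constant} in $t$, so its derivative vanishes; this is the key simplification that lets us avoid differentiating under the singular integral on the fluid side.

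The main work is term (ii). Here I would argue that $t\mapsto (\g\ast\omega)(t,x_i(t))$ is Lipschitz by combining two facts: the spatial regularity $\g\ast\omega(t)\in LL(\R^2)$ uniformly in $t$ (from \cref{lem:pot_bnds} and the assumed logarithmic bound \eqref{eq:log_grow} via \cref{lem:log_grow,lem:PE_bnds}), which controls the $x_i(t)$-dependence since $t\mapsto x_i(t)$ is smooth, and the temporal regularity of the stream function itself. For the latter, I would test the weak formulation \eqref{eq:Eul_Duh} against a fixed mollification of $\g$ near the evaluation point — more precisely, one writes $(\g\ast\omega)(t,x) = \int \omega(t,y)\g(x-y)\,dy$ and, since $\g(x-\cdot)\notin C_c^1$, splits into a near-field piece (handled by the $L^p$ bound on $\omega(t)$ uniformly in $t$, which is genuinely time-independent by \cref{rem:wd_at}) and a far-field piece where $\g(x-\cdot)$ is smooth and one may legitimately apply \cref{rem:traj_Lip} to get Lipschitz-in-$t$ control. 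Differentiating, $\frac{d}{dt}(\g\ast\omega)(t,x) = \int (\nabla\g)(x-y)\cdot u(t,y)\omega(t,y)\,dy + (\nabla\g\ast\omega)(t,x)\cdot\dot{x}$ when $x = x_i(t)$, using $u = \nabla^\perp\g\ast\omega$ and the transport structure $\p_t\omega = -\nabla\cdot(u\omega)$ encoded in \eqref{eq:Eul_Duh}.

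Once all three pieces are differentiated, the final step is bookkeeping: collect the derivative of (i), namely $\frac{1}{N^2}\sum_{i\neq j}(\nabla\g)(x_i-x_j)\cdot(\dot{x}_i - \dot{x}_j)$ where by \eqref{eq:PVM} one has $\dot{x}_i = \frac{1}{N}\sum_{k\neq i}(\nabla^\perp\g)(x_i-x_k)$; the derivative of (ii), which produces a $\sum_i (\nabla\g\ast\omega)(x_i)\cdot\dot{x}_i$ piece and a $\sum_i\int(\nabla\g)(x_i-y)\cdot u(t,y)\omega(t,y)$ piece; and the vanishing derivative of (iii). Symmetrizing in the manner used throughout \cref{sec:kprop} — exploiting $(\nabla\g)(x-y) = -(\nabla\g)(y-x)$ and that $\dot{x}_i$ is tangent to the level sets of $\g$ at coincidence (so diagonal self-interaction terms drop) — one recognizes the resulting expression as precisely the right-hand side of \eqref{eq:en_ineq}, i.e. $\int_{(\R^2)^2\setminus\D_2}(\nabla\g)(x-y)\cdot(u(t,x)-u(t,y))\,d(\omega_N-\omega)(t,x)\,d(\omega_N-\omega)(t,y)$, after noting that the $\nabla^\perp\g$ terms arising from $\dot{x}_i$ pair with $\nabla\g$ via the identity $v^\perp\cdot w = -v\cdot w^\perp$ and that $u(t,x_i) = (\nabla^\perp\g\ast\omega)(t,x_i)$. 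The main obstacle, as flagged, is the rigorous justification of term (ii)'s differentiability: one must carefully handle the singularity of $\g$ and the fact that we only have a weak solution, so that interchanging $\frac{d}{dt}$ with the spatial integral is not automatic — this is where the uniform-in-time $L^1\cap L^p$ bound and the log-Lipschitz regularity of the velocity field do the essential work, and it is the reason the lemma assumes the logarithmic growth bound \eqref{eq:log_grow} holds uniformly in time.
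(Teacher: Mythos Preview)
Your proposal is correct and follows essentially the same architecture as the paper: decompose $\Fr_N^{avg}$ into the particle--particle, mixed, and fluid--fluid pieces; kill the fluid--fluid piece via \cref{lem:e_con} and the particle--particle piece via Hamiltonian conservation (the paper invokes \cref{rem:ham_con} directly rather than differentiating, but your direct computation gives zero for the same reason); concentrate the work on $t\mapsto (\g\ast\omega)(t,x_i(t))$; and then do algebraic bookkeeping using $(\nabla\g)(x-y)=-(\nabla\g)(y-x)$ and $v^\perp\cdot w=-v\cdot w^\perp$ to recognize the symmetrized right-hand side.

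The one genuine technical difference is how the mixed term is made rigorous. You propose a \emph{spatial} splitting (near-field controlled by the uniform $L^p$ bound, far-field smooth enough to feed into the weak formulation via \cref{rem:traj_Lip}). The paper instead establishes the distributional identity $\p_t(\g\ast\omega)=-\nabla\g\ast\cdot(u\omega)$, observes that the right-hand side lies in $L^\infty_t C^{\alpha(p)}_x$ by \cref{lem:pot_bnds}, and then \emph{temporally} mollifies $\g\ast\omega$ so that the chain rule applies to the smooth composite $s\mapsto(\gamma_\ep\ast_t(\g\ast\omega))(s,x_i(s))$, finally passing $\ep\to 0$ by dominated convergence. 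Both routes are valid; the temporal mollification has the slight advantage that it yields the derivative formula directly from the mollified transport identity without having to separately argue that the near-field contribution to the \emph{derivative} (not just the value) vanishes in the limit, which is the step your sketch leaves implicit.
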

\begin{proof}
Fix $N\in\N$. By the Fubini-Tonelli theorem, we see that
\begin{align}
\Fr_N(\ux_N(t),\omega(t)) &= N^2\int_{\R^2}(\g\ast\omega(t))(x)\omega(t,x)dx + \sum_{1\leq i\neq j\leq N} \g(x_i(t)-x_j(t)) \nn\\
&\ph - 2N\sum_{i=1}^N (\g\ast\omega(t))(x_i(t)) \nn\\
&\eqqcolon \mathrm{Term}_1(t) + \mathrm{Term}_2(t) - \mathrm{Term}_{3}(t).
\end{align}
By \cref{lem:e_con} and \cref{rem:ham_con}, respectively,
\begin{equation}
\frac{d\mathrm{Term}_1}{dt} \equiv 0 \quad \text{and} \quad \frac{d\mathrm{Term}_2}{dt} \equiv 0.
\end{equation}

For $\mathrm{Term}_3(t)$, the reader can check from the Duhamel formula \eqref{eq:Eul_Duh}, the distributional calculus, and the Fubini-Tonelli theorem that
\begin{equation}
\label{eq:pt_id}
(\p_t(\g\ast\omega))(t) = -\nabla\g\ast\cdot (u(t)\omega(t)), \quad \text{a.e. } t\in [0,T].
\end{equation}
Since $u\omega\in L^\infty([0,T]; L^1(\R^2;\R^2)\cap L^p(\R^2;\R^2))$, it follows from \cref{lem:pot_bnds} that
\begin{equation}
\label{eq:pt_fs}
\p_t(\g\ast\omega)\in L^\infty([0,T]; C^{\al(p)}(\R^2)),
\end{equation}
for some $\al(p)>0$. Let $\rho\in C_c^\infty(0,T)$ be a test function. Then by definition of distributional derivative,
\begin{equation}
\int_{0}^T \rho(s) \p_t( (\g\ast\omega)(x_i(\cdot)))(s)ds = -\int_{0}^T (\p_t\rho)(s) (\g\ast\omega(s))(x_i(s))ds.
\end{equation}
Let $\gamma\in C_c^\infty(\R)$ be a standard bump function, and for $\ep>0$, set $\gamma_\ep(s)\coloneqq \ep^{-1}\gamma(s/\ep)$. We mollify $\g\ast\omega$ in time by defining
\begin{equation}
(\gamma_\ep \ast_t (\g\ast\omega))(s,x) \coloneqq \int_{\R} \gamma_\ep(s-s')(\g\ast\omega(s'))(x)ds'.
\end{equation}
By \cref{lem:pot_bnds}, we have that this temporally mollified function belongs to $C_{loc}^1(\R\times\R^2)$. Also, note from identity \eqref{eq:pt_id} that
\begin{equation}
\label{eq:pt_moll_id}
\p_t(\gamma_\ep \ast_t (\g\ast\omega)) = \gamma_\ep\ast_t(\nabla\g\ast\cdot (u\omega)).
\end{equation}
It now follows from \eqref{eq:pt_fs} and dominated convergence that
\begin{equation}
-\int_{0}^T (\p_t\rho)(s) (\g\ast\omega(s))(x_i(s))ds = -\lim_{\ep\rightarrow 0^+} \int_0^T (\p_t\rho)(s)(\gamma_\ep \ast_t (\g\ast\omega))(s,x_i(s)).
\end{equation}
Since $x_i([0,T])$ is a compact subset of $\R^2$, it follows from integration by parts, the chain rule, and the identity \eqref{eq:pt_moll_id} that
\begin{equation}
\begin{split}
-\int_0^T (\p_t\rho)(s)(\gamma_\ep \ast_t (\g\ast\omega))(s,x_i(s)) &= -\int_0^T\rho(s)(\gamma_\ep\ast_t(\nabla\g\ast\cdot (u\omega)))(s,x_i(s))ds \\
&\ph+ \int_0^T\rho(s)(\gamma_\ep\ast_t (\nabla\g\ast\omega))(s,x_i(s))\cdot \frac{1}{N}\sum_{{1\leq j\leq N}\atop {i\neq j}} (\nabla^\perp\g)(x_i(s)-x_j(s))ds.
\end{split}
\end{equation}
Taking the limit as $\ep\rightarrow 0^+$ of both sides and applying dominated convergence again, we conclude that
\begin{equation}
\begin{split}
\int_{0}^T \rho(s) \p_t( (\g\ast\omega)(x_i(\cdot)))(s)ds &= -\int_0^T\rho(s)(\nabla\g\ast\cdot (u(s)\omega(s)))(x_i(s))ds \\
&\ph + \int_0^T \rho(s)(\nabla\g\ast\omega(s))(x_i(s))\cdot \frac{1}{N}\sum_{{1\leq j\leq N}\atop {i\neq j}} (\nabla^\perp\g)(x_i(s)-x_j(s))ds.
\end{split}
\end{equation}
Since $\rho\in C_c^\infty(0,T)$ was an arbitrary test function, we conclude the distributional identity
\begin{equation}
\p_t((\g\ast\omega)(x_i(\cdot)))(s) = -(\nabla\g\ast\cdot (u(s)\omega(s)))(x_i(s)) + (\nabla\g\ast\omega(s))(x_i(s))\cdot \frac{1}{N}\sum_{{1\leq j\leq N}\atop {i\neq j}} (\nabla^\perp\g)(x_i(s)-x_j(s)).
\end{equation}
After a little bookkeeping, we conclude that
\begin{equation}
\begin{split}
\frac{d\mathrm{Term}_3}{dt}(t) &= 2N\sum_{i=1}^N \Bigg(-\int_{\R^2}(\nabla\g)(x_i(t)-y)\cdot u(t,y)\omega(t,y)dy \\
&\ph \hspace{25mm} + \frac{1}{N}\int_{\R^2} \sum_{{1\leq j\leq N}\atop{j\neq i}} (\nabla\g)(x_i(t)-y)\cdot (\nabla^\perp\g)(x_i(t)-x_j(t))\omega(t,y)dy\Bigg),
\end{split}
\end{equation}
which implies that
\begin{equation}
\begin{split}
\frac{d}{dt}\Fr_N(\ux_N(t),\omega(t)) &= -2N\sum_{i=1}^N\Bigg(\int_{\R^2}(\nabla\g)(x_i(t)-y)\cdot u(t,y)\omega(t,y)dy \\
&\ph \hspace{25mm}- \frac{1}{N}\int_{\R^2} \sum_{{1\leq j\leq N}\atop{j\neq i}} (\nabla\g)(x_i(t)-y)\cdot (\nabla^\perp\g)(x_i(t)-x_j(t))\omega(t,y)dy\Bigg). \label{eq:F_td}
\end{split}
\end{equation}

Next, using the Fubini-Tonelli theorem, we can write
\begin{align}
2N\sum_{i=1}^N  \int_{\R^2}(\nabla\g)(x_i(t)-y)\cdot u(t,y)\omega(t,y)dy &= 2N^2\int_{(\R^2)^2} (\nabla\g)(x-y)\cdot u(t,y)d\omega(t,y)d\omega_N(t,x). \label{eq:sing_sym_app}
\end{align}
Similarly, writing $\nabla^\perp\g = \J\nabla\g$ (recall the definition of the $2\times2$ matrix $\J$ from \eqref{eq:J_def}), so that by the anti-symmetry of $\J$,
\begin{equation}
(\nabla\g)(x_i(t)-y) \cdot (\nabla^\perp\g)(x_i(t)-x_j(t)) = (\nabla^\perp\g)(x_i(t)-y)\cdot(\nabla\g)(x_i(t)-x_j(t)),
\end{equation}
we observe that
\begin{align}
&-2\sum_{1\leq i\neq j\leq N}\int_{\R^2} (\nabla\g)(x_i(t)-y)\cdot (\nabla^\perp\g)(x_i(t)-x_j(t))\omega(t,y)dy\nn\\
&= 2\sum_{1\leq i\neq j\leq N} u(x_i(t))\cdot (\nabla\g)(x_i(t)-x_j(t)) \nn\\
&=2N^2\int_{(\R^2)^2\setminus\D_2} (\nabla\g)(x-y)\cdot u(t,x)d\omega_N(t,x)d\omega_{N}(t,y). \label{eq:dub_sym_app}
\end{align}

Swapping $x$ and $y$ and using that $(\nabla\g)(x-y)=-(\nabla\g)(y-x)$, we find that
\begin{equation}
\eqref{eq:dub_sym_app} = N^2\int_{(\R^2)^2\setminus\D_2} (\nabla\g)(x-y)\cdot\paren*{u(t,x)-u(t,y)}d\omega_N(t,x)d\omega_N(t,y). \label{eq:dub_sym}
\end{equation}
Similarly,
\begin{equation}
\eqref{eq:sing_sym_app} = -2N^2\int_{(\R^2)^2\setminus\D_2} (\nabla\g)(x-y)\cdot u(t,x)d\omega(t,x)d\omega_N(t,y). \label{eq:sing_sym}
\end{equation}
Now observing the cancellations
\begin{equation}
\begin{split}
&N^2\int_{(\R^2)^2\setminus\D_2} (\nabla\g)(x-y)\cdot u(t,x) d\omega(t,y)d\omega_N(t,x)\\
&= -N^2\int_{(\R^2)^2\setminus\D_2}(\nabla\g)(x-y)\cdot u(t,y)d\omega(t,x)d\omega_N(t,y) \\
&=N^2\int_{(\R^2)^2\setminus\D_2} (\nabla\g)(x-y)\cdot\paren*{u(t,x)-u(t,y)}d\omega(t,x)d\omega_N(t,y)\\
&=0,
\end{split}
\end{equation}
which follows from the Fubini-Tonelli theorem and the orthogonality identity
\begin{equation}
(\nabla\g\ast\omega)(t,x) \perp u(t,x) \qquad \forall (t,x)\in [0,T]\times\R^2,
\end{equation}
and also the cancellation
\begin{equation}
0=N^2\int_{(\R^2)^2}(\nabla\g)(x-y)\cdot\paren*{u(t,x)-u(t,y)}d\omega(t,x)d\omega(t,y),
\end{equation}
which follows from considerations of anti-symmetry, it follows from \eqref{eq:F_td}, \eqref{eq:dub_sym}, and \eqref{eq:sing_sym} that
\begin{equation}
\frac{d}{dt}\Fr_N(\ux_N(t),\omega(t)) = N^2\int_{(\R^2)^2\setminus\D_2} (\nabla\g)(x-y)\cdot\paren*{u(t,x)-u(t,y)}d(\omega-\omega_N)(t,x)d(\omega-\omega_N)(t,y)s.
\end{equation}
Dividing both sides by $N^2$ completes the proof of the lemma.
\end{proof}

\subsection{Proof of \cref{thm:main}} \label{ssec:MR_thm}
We now are prepared to prove \cref{thm:main}, which we do in this subsection. By the fundamental theorem of calculus for Lebesgue integrals, \cref{lem:en_ineq}, and the triangle inequality,
\begin{equation}
\begin{split}
|\Fr_N^{avg}(\ux_N(t),\omega(t))| &\leq |\Fr_N^{avg}(\ux_N^0,\omega^0)| \\
&\ph + \int_0^t\left|\int_{(\R^2)^2\setminus\D_2} (\nabla\g)(x-y)\cdot\paren*{u(s,x)-u(s,y)}d(\omega-\omega_N)(s,x)d(\omega-\omega_N)(s,y)\right|ds.
\end{split}
\end{equation}
Applying \cref{prop:key} with $p=\infty$ to the integrand in the right-hand side of the preceding inequality point-wise in $s$ and using that
\begin{equation}
\|u\|_{L^\infty([0,\infty); L^\infty(\R^2))} \lesssim \|\omega^0\|_{L^\infty(\R^2)}^{1/2} \quad \text{and} \quad \|u\|_{L^\infty([0,\infty); LL(\R^2))} \lesssim \|\omega^0\|_{L^\infty(\R^2)},
\end{equation}
by \Cref{lem:Linf_RP,lem:pot_bnds}, respectively, and conservation of the $L^\infty$ norm, we find that there exists a constant $C_1>0$ such that
\begin{equation}
\label{eq:ep1_ep2_pre}
\begin{split}
&|\Fr_N^{avg}(\ux_N(t),\omega(t))|\\
&\leq |\Fr_N^{avg}(\ux_N^0,\omega^0)| + C_1\|\omega^0\|_{L^\infty(\R^2)}^{3/2}\int_0^t \ep_2(s)|\ln\ep_2(s)| ds\\
&\ph + C_1\|\omega^0\|_{L^\infty(\R^2)} \int_0^t \paren*{|\ln\ep_2(s)| |\Fr_N^{avg}(\ux_N(s),\omega(s))| + \frac{|\ln\ep_1(s)| |\ln\ep_2(s)|}{N} +\ep_3(s)^2|\ln\ep_3(s)|}ds \\
&\ph +C_1\int_0^t\paren*{\frac{\ep_1(s)\|\omega^0\|_{L^\infty(\R^2)}^{1/2}}{\ep_3(s)^2}+ \frac{\ep_2(s)|\ln\ep_2(s)|\|\omega^0\|_{L^\infty(\R^2)}}{\ep_3(s)}}ds,
\end{split}
\end{equation}
where $\ep_1,\ep_2,\ep_3: [0,\infty)$ are measurable functions such that $1\gg\ep_3(s)>\ep_2(s)\geq 2\ep_1(s)>0$ for every $s\in [0,\infty)$. We choose $\ep_1,\ep_2$ as follows:
\begin{align}
\ep_1(s) &\coloneqq \ep_3(s)^3, \label{eq:ep1_choice}\\
\ep_2(s) &\coloneqq \ep_3(s)^2. \label{eq:ep2_choice}
\end{align}
Substituting these choices for $\ep_1(s),\ep_2(s)$ into the right-hand side of inequality \eqref{eq:ep1_ep2_pre} and simplifying, we find that
\begin{equation}
\label{eq:ep3_pre_max}
\begin{split}
|\Fr_N^{avg}(\ux_N(t),\omega(t))| &\leq |\Fr_N^{avg}(\ux_N^0,\omega^0)| +  C_2\|\omega^0\|_{L^\infty(\R^2)}\int_0^t \paren*{|\ln\ep_3(s)| |\Fr_N^{avg}(\ux_N(s),\omega(s))| + \frac{|\ln\ep_3(s)|^2}{N}}ds\\
&\ph + C_2\|\omega^0\|_{L^\infty(\R^2)}\int_0^t \ep_3(s)|\ln\ep_3(s)| ds\\
&\ph + C_2\int_0^t \paren*{\ep_3(s)\|\omega^0\|_{L^\infty(\R^2)}^{1/2} + \ep_3(s)^2|\ln\ep_3(s)|\paren*{\|\omega^0\|_{L^\infty(\R^2)} +\|\omega^0\|_{L^\infty(\R^2)}^{3/2}}}ds,
\end{split}
\end{equation}
where $C_2\geq C_1$ is an absolute constant. Note that if we define $\Fbr_N^{avg}:[0,\infty)\rightarrow[0,\infty)$ by
\begin{equation}
\Fbr_{N}^{avg}(\ux_N(s),\omega(s)) \coloneqq \sup_{0\leq s'\leq s}|\Fr_{N}^{avg}(\ux_N(s'),\omega(s'))|,
\end{equation}
which is tautologically a nondecreasing function, then it follows from \eqref{eq:ep3_pre_max} that
\begin{equation}
\label{eq:ep3_pre}
\begin{split}
\Fbr_{N}^{avg}(\ux_N(t),\omega(t)) &\leq |\Fr_N^{avg}(\ux_N^0,\omega^0)| +  C_2\|\omega^0\|_{L^\infty(\R^2)}\int_0^t \paren*{|\ln\ep_3(s)| \Fbr_{N}^{avg}(\ux_N(s),\omega(s)) + \frac{|\ln\ep_3(s)|^2}{N}}ds\\
&\ph + C_2\|\omega^0\|_{L^\infty(\R^2)}\int_0^t \ep_3(s)|\ln\ep_3(s)| ds\\
&\ph + C_2\int_0^t \paren*{\ep_3(s)\|\omega^0\|_{L^\infty(\R^2)}^{1/2} + \ep_3(s)^2|\ln\ep_3(s)|\paren*{\|\omega^0\|_{L^\infty(\R^2)} +\|\omega^0\|_{L^\infty(\R^2)}^{3/2}}}ds.
\end{split}
\end{equation}

It remains to choose $\ep_3$, which we do now in a piece-wise fashion:
\begin{equation}
\label{eq:ep3_choice}
\forall s\in [0,\infty), \qquad \ep_3(s) \coloneqq \begin{cases} \frac{\ln N}{N}, & {\Fbr_N^{avg}(\ux_N(s),\omega(s)) \leq \frac{\ln N}{N}} \\ \Fbr_N^{avg}(\ux_N(s),\omega(s)) , & {\frac{\ln N}{N} < \Fbr_N^{avg}(\ux_N(s),\omega(s)) < e^{-1}} \\ e^{-1} , & {\Fbr_N^{avg}(\ux_N(s),\omega(s))\geq e^{-1}} \end{cases}.
\end{equation}
Substituting this choice for $\ep_3(s)$ into the right-hand side of inequality \eqref{eq:ep3_pre} and partitioning the interval $[0,t]$ according to the values of $\Fbr_N^{avg}(\ux_N(s),\omega(s))$, as in the definition \eqref{eq:ep3_choice}, we find that
\begin{equation}
\label{eq:ep3_post}
\begin{split}
\Fbr_N^{avg}(\ux_N(t),\omega(t)) &\leq |\Fr_N^{avg}(\ux_N^0,\omega^0)|  + \frac{C_3\|\omega^0\|_{L^\infty(\R^2)}^{1/2}(1+\|\omega^0\|_{L^\infty(\R^2)}) |\ln N|^2 t}{N} \\
&\ph + C_3\|\omega^0\|_{L^\infty(\R^2)}^{1/2}\paren*{1+\|\omega^0\|_{L^\infty(\R^2)}}\int_0^t \Fbr_N^{avg}(\ux_N(s),\omega(s)) |\ln \Fbr_N^{avg}(\ux_N(s),\omega(s))|ds \\
&\ph + C_3\|\omega^0\|_{L^\infty(\R^2)}^{1/2}\paren*{1 + \|\omega^0\|_{L^\infty(\R^2)}} \int_0^t 1_{>e^{-1}}(\Fbr_N^{avg}(\ux_N(s),\omega(s)))ds,
\end{split}
\end{equation}
where $C_3\geq C_2$ is an absolute constant.

We now want to close the estimate \eqref{eq:ep3_post} by using the Osgood lemma (recall \cref{lem:Os}).  It is not hard to check from the continuity of the map $s\mapsto \Fr_N^{avg}(\ux_N(s),\omega(s))$ that the map $s\mapsto \Fbr_N^{avg}(\ux_N(s),\omega(s))$ is also continuous (see the proof of \cite[Lemma 5.4]{Rosenzweig2019_PV}). Now fix a time $t\in (0,\infty)$. Let $N\in\N$ be sufficiently large so that
\begin{equation}
\begin{split}
\label{eq:t_cond}
C_3\|\omega^0\|_{L^\infty(\R^2)}^{1/2}\paren*{1+\|\omega^0\|_{L^\infty(\R^2)}}t &<\ln\ln\paren*{|\Fr_N^{avg}(\ux_N^0,\omega^0)|  + \frac{C_3t\|\omega^0\|_{L^\infty(\R^2)}^{1/2}(1+\|\omega^0\|_{L^\infty(\R^2)})|\ln N|^2}{N}}^{-1}.
\end{split}
\end{equation}
By continuity, there exists a minimal time $0<t_N^*\leq t$ (we adopt the convention that $t_N^*=t$ if no such time exists) such that
\begin{equation}
\label{eq:tN*_def}
\Fbr_N^{avg}(\ux_N(s),\omega(s)) < e^{-1}, \enspace \forall 0\leq s< t_N^* \qquad \text{and} \qquad \Fbr_N^{avg}(\ux_N(t_N^*),\omega(t_N^*)) = e^{-1}.
\end{equation}
Applying \cref{lem:Os} with modulus of continuity $r \mapsto r\ln(1/r)$ on the interval $[0,e^{-1}]$ and using \cref{rem:Os_ex_log}, we find from condition \eqref{eq:t_cond} that for every $0\leq s\leq t_N^*$,
\begin{align}
&\Fbr_N^{avg}(\ux_N(s),\omega(s))\nn\\
&\leq \exp\Bigg(-\exp\Bigg(\ln\ln\paren*{|\Fr_N^{avg}(\ux_N^0,\omega^0)|  + \frac{C_3s\|\omega^0\|_{L^\infty(\R^2)}^{1/2}(1+\|\omega^0\|_{L^\infty(\R^2)})|\ln N|^2}{N}}^{-1} \nn\\
&\ph \hspace{32mm} - C_3s\|\omega^0\|_{L^\infty(\R^2)}^{1/2}\paren*{1+\|\omega^0\|_{L^\infty(\R^2)}}\Bigg)\Bigg) \nn\\
&=\paren*{|\Fr_N^{avg}(\ux_N^0,\omega^0)|  + \frac{C_3s\|\omega^0\|_{L^\infty(\R^2)}^{1/2}(1+\|\omega^0\|_{L^\infty(\R^2)})|\ln N|^2}{N}}^{e^{-C_3s\|\omega^0\|_{L^\infty(\R^2)}^{1/2}(1+\|\omega^0\|_{L^\infty(\R^2)})}}.
\end{align}
It then follows from \eqref{eq:t_cond} that the expression in the ultimate line is $<e^{-1}$, which implies that $t_N^*=t$. Thus, the proof of \cref{thm:main} is complete.

\subsection{Proof of \cref{cor:main}}\label{ssec:MR_cor}
We now show how to obtain \cref{cor:main} from \cref{thm:main}. Fix $s<-1$ and $T>0$, and let $N\in\N$ be sufficiently large so as to satisfy the condition \eqref{eq:N_cond}. Then applying estimate \eqref{eq:Hs_conv} of \cref{prop:bes_conv} with $p=\infty$ for each $t\in [0,T]$ fixed and also using conservation of the $L^\infty$ norm, followed by applying \cref{thm:main}, we find that there exists constants $C>0$, such that
\begin{equation}
\begin{split}
&\|\omega(t)-\omega_N(t)\|_{H^{s}(\R^2)} \\
&\lesssim_{s} \paren*{|\Fr_N^{avg}(\ux_N^0,\omega^0)| + \frac{Ct\|\omega^0\|_{L^\infty(\R^2)}^{1/2}(1+\|\omega^0\|_{L^\infty(\R^2)}) |\ln N|^2}{N}}^{e^{-Ct\|\omega^0\|_{L^\infty(\R^2)}^{1/2}(1+\|\omega^0\|_{L^\infty(\R^2)})}} \\
&\ph + \frac{|\ln N|^{1/2} + \|\omega^0\|_{L^\infty(\R^2)}}{N^{1/2}}.
\end{split}
\end{equation}
Taking the supremum over $t\in [0,T]$ of both sides yields \eqref{eq:cor_main_est}. The assertion regarding weak-* convergence in $\M(\R^2)$ follows by using \eqref{eq:M_conv}. Thus, the proof of \cref{cor:main} is complete.

\bibliographystyle{siam}
\bibliography{PointVortex}
\end{document}